\documentclass[11pt,a4paper,reqno]{amsbook}

\usepackage[latin1]{inputenc}
\usepackage[T1]{fontenc}
\usepackage[left=3.8cm, right=3.8cm, top=3cm, bottom=3cm]{geometry}
\usepackage{graphicx}
\usepackage{amssymb}
\usepackage{amsthm}
\usepackage{amsmath}
\usepackage{hyperref}
\usepackage{physics2}
\usephysicsmodule{braket}
\usepackage{enumerate}
\usepackage{xcolor}
\usepackage[abs]{overpic}
\usepackage{pict2e}
\usepackage{changepage}
\usepackage{pdfpages}

\newtheorem{theorem}{Theorem}[chapter]
\newtheorem{lemma}[theorem]{Lemma}

\newtheorem{proposition}[theorem]{Proposition}
\newtheorem{corollary}[theorem]{Corollary}
\newtheorem{conjecture}[theorem]{Conjecture}
\newtheorem*{informaltheorem}{Theorem (Informal)}
\theoremstyle{definition}
\newtheorem{definition}[theorem]{Definition}
\newtheorem{remark}[theorem]{Remark}

\newcommand{\Hil}{\mathcal{H}}
\newcommand{\K}{\mathbb{K}}
\newcommand{\R}{\mathbb{R}}
\newcommand{\C}{\mathbb{C}}
\newcommand{\N}{\mathbb{N}}
\newcommand{\Q}{\mathbb Q}
\newcommand{\T}{\mathbb{T}}
\newcommand{\Cha}{\mathcal{C}}
\newcommand{\Lin}{\mathcal{L}}

\newcommand{\ketbrar}[1]{\ketbra{#1}{#1}}
\newcommand{\tr}{\mathrm{Tr}}
\newcommand{\mc}[1]{\mathcal{#1}}
\newcommand{\mb}[1]{\mathbb{#1}}

\newcommand{\subeq}{\textrm{`}\!\!\subseteq\!\!\text{'}}
\newcommand{\supeq}{\textrm{`}\!\!\supseteq\!\!\text{'}}
\newcommand{\eq}{\textrm{`}\!\!=\!\!\text{'}}
\newcommand{\no}{\textrm{`}\!\!\times\!\!\text{'}}

\definecolor{lighter}{RGB}{10.0,146.0,35.0}
\definecolor{darker}{RGB}{2.0,100.0,20.0}
\definecolor{sininen}{RGB}{89.5,101.8,238.1}
\definecolor{punainen}{RGB}{227.0,12.0,12.0}

\newcommand{\lighter}[1]{\textcolor{lighter}{#1}}
\newcommand{\darker}[1]{\textcolor{darker}{#1}}
\newcommand{\punainen}[1]{\textcolor{punainen}{#1}}
\newcommand{\sininen}[1]{\textcolor{sininen}{#1}}

\newcommand{\fii}{\varphi}
\newcommand{\tj}{\vartheta}
\newcommand{\V}{\mathcal{V}}
\DeclareMathOperator*{\esssup}{ess\,sup}
\def\<{\langle}
\def\>{\rangle}
\newcommand{\rleq}{\preceq}
\newcommand{\rgeq}{\succeq}

\makeatletter
\DeclareRobustCommand\bigop[1]{%
	\mathop{\vphantom{\sum}\mathpalette\bigop@{#1}}\slimits@
}
\newcommand{\bigop@}[2]{%
	\vcenter{%
		\sbox\z@{$#1\sum$}%
		\hbox{\resizebox{\ifx#1\displaystyle.9\fi\dimexpr\ht\z@+\dp\z@}{!}{$\m@th#2$}}%
	}%
}
\makeatother
\newcommand{\bigboxplus}{\DOTSB\bigop{\boxplus}}

\numberwithin{figure}{chapter}

\newcommand\invisiblechapter[1]{
    \phantomsection
    \addcontentsline{toc}{chapter}{#1}
    \chaptermark{#1}}

\begin{document}

\begin{titlepage}
    \vspace*{0.8cm}
    
    \centering
    
    {\Large\bfseries\textsc{Classical and Quantum Information Measures}}
    
    \vspace{5mm}
    
    {\Large\bfseries\textsc{Characterizing Convertibility:}}
    
    \vspace{5mm}
    
    {\Large\bfseries\textsc{An Axiomatic Approach}}

    \vspace{17mm}

    {\large\bfseries\textsc{Frits Verhagen}}

    \vspace{5mm}

    {\large\itshape\textsc{(M.Sc, Utrecht University)}}

    \vspace{17mm}

    {\large\textsc{A thesis submitted for the degree of}}

    \vspace{5mm}

    {\large\textsc{Doctor of Philosophy}}

    \vspace{17mm}

    {\large\textsc{Centre for Quantum Technologies}}

    \vspace{5mm}

    {\large\textsc{National University of Singapore}}

    \vspace{17mm}

    {\large\textsc{Main Advisor: Professor Marco Tomamichel}}

    \vspace{5mm}

    {\large\textsc{Co-Advisor: Dr. Erkka Haapasalo}}

    \vspace{17mm}

    {\large\textsc{Examiners:}}

    \vspace{5mm}

    {\large\textsc{Professor Vincent Y.F. Tan}}
    
    \vspace{5mm}

    {\large\textsc{Associate Professor Divesh Aggarwal}}
    
    \vspace{5mm}

    {\large\textsc{Associate Professor P\'eter Vrana,}\\
    \textsc{Budapest University of Technology and Economics}}

    \vspace{16mm}       

    {\large\textsc{2026}}
\end{titlepage}

\setcounter{page}{2}

\cleardoublepage

\invisiblechapter{Declaration}

\includepdf{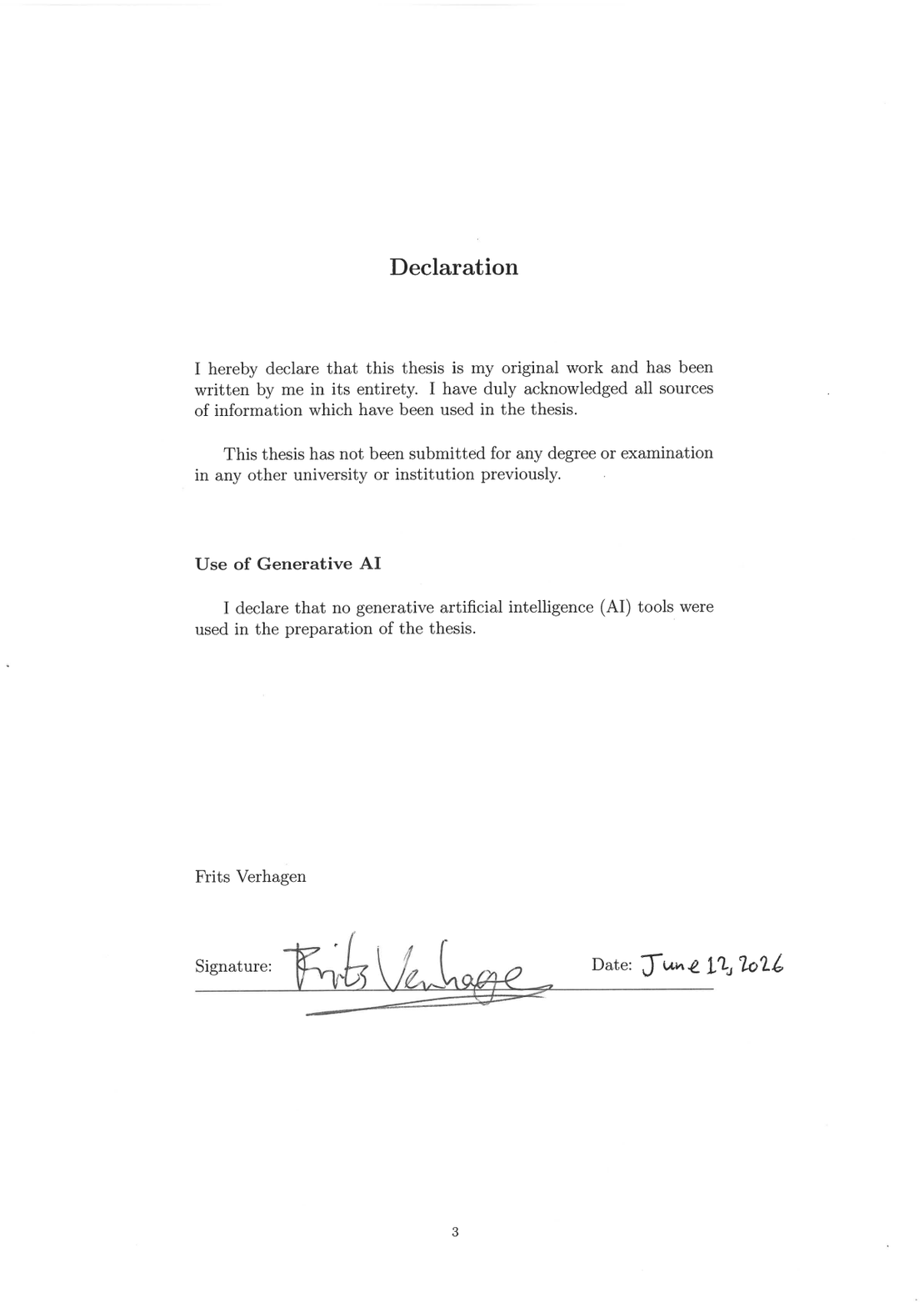}

\chapter*{Acknowledgments}

\begin{adjustwidth}{1cm}{1cm}
First of all, I want to thank Marco Tomamichel and Erkka Haapasalo, my PhD supervisor and co-supervisor. I really liked white board discussions with Marco where we would often have very useful insights, and was always amazed how well he remembered the details of my projects. I'm also very happy that he offered me the opportunity to join his group after I had already started my PhD. Erkka was always very enthusiastic to discuss new ideas with me, and I learned a lot about doing research by working with him closely. He was always willing to take the time to go through the details of my work and I benefited from his impressive grasp of math. 

I also want to thank everyone who contributed to my research by giving feedback or having discussions. Thanks to Jiang Zhang for helping me and others in the semiring group go through the proofs of the Vergleichsstellens\"atze. Now that my thesis is done, I look forward to continuing collaborating with people in the group and learning new things. 

Thanks as well to all reviewers of my publications and the examiners of this thesis for their interesting suggestions and ideas for improvement.  

My PhD was also made very enjoyable by the other PhD students, postdocs, and many interns and visitors in Marco's group. The atmosphere in the group and at the Centre for Quantum Technologies made my PhD a very social experience. Thanks to all the friends I met at CQT and Singapore, the last five years has been very memorable. 

Finally, I'd like to thank my parents and sister back in The Netherlands for always supporting and encouraging me during my time in Singapore! 
\end{adjustwidth}

\tableofcontents
	
\chapter*{Summary}

\begin{adjustwidth}{1cm}{1cm}
We take an axiomatic approach to studying multivariate divergences, which are measures of the information content or distinguishability of a tuple of classical or quantum states. Naturally, these divergences cannot increase after applying any physically allowed operation. Additionally, they should be additive when the states describe independent systems. The collection of all divergences satisfying these two axioms of monotonicity and additivity consists of the physically relevant measures of information. 

We study these divergences in the context of convertibility of tuples of states. More precisely, we ask if one tuple of states can be converted into another by a single channel if we have an arbitrarily large number of independent copies of the states, or if we allow the conversion to be aided by catalyst states. In these settings, convertibility is characterized by the ordering of an axiomatically defined family of divergences. In addition to monotonicity and additivity, a third axiom singles out this family of divergences, which can be shown to be minimal in certain settings. 

In this thesis, we identify this family of divergences explicitly for two scenarios, one classical and one quantum. In the classical setting, we consider tuples of $d\geq2$ finite probability distributions. We show that the multivariate divergences that characterize large-sample and catalytic majorization are generalizations of the bivariate R\'enyi divergences. Unlike previous studies, we do not require the distributions to have the same support, yielding interesting new divergences. In the quantum setting, we consider pairs of classical-quantum states with pure component states, and find that majorization is characterized by all $\alpha$-$z$ relative entropies for $\alpha<1$. This demonstrates the first operational interpretation of these relative entropies. In both settings, we also find results on majorization where the conversion is allowed to be approximate with vanishing error. 

Our methods utilize real-algebraic techniques recently developed by T. Fritz. This involves studying preordered semirings representing the tuples of states, and identifying certain monotone homomorphisms and derivations on them, which correspond to the divergences characterizing large-sample and catalytic majorization. 
\end{adjustwidth}

\chapter*{Publications}

This thesis is based on the following publications by the author: \\

\begin{itemize}
    \item F. Verhagen, M. Tomamichel, and E. Haapasalo,\\
    \emph{Matrix Majorization in Large Samples With Varying}
    \emph{Support Restrictions},\\
    IEEE Transactions on Information Theory, vol. 71, no. 9,\\
    pp. 6517-6545 (2025) \cite{verhagen2025}\\
    \item F. Verhagen, M. Tomamichel, and E. Haapasalo,\\
    \emph{Conditions for Large-Sample Majorization of Pairs of Flat States}\\
    \emph{in Terms of $\alpha$-$z$ Relative Entropies},\\
    Communications in Mathematical Physics, vol. 407, art. 157 (2026) \cite{verhagen2025alphaz}\\
\end{itemize}

\chapter{Introduction}\label{cha:introduction}

We investigate the connection between convertibility of tuples of quantum or classical states, and axiomatically defined divergences, which are measures of the distinguishability between states. Consider a tuple of $d\geq2$ states $\left(\rho^{(1)},\ldots,\rho^{(d)}\right)$ on a Hilbert space $\Hil_{\rm in}$ and another tuple $\left(\sigma^{(1)},\ldots,\sigma^{(d)}\right)$ on $\Hil_{\rm out}$. We say that the former tuple of states can be \emph{converted} to the latter if there exists a single quantum channel $\Cha$ that maps each $\rho^{(k)}$ to the corresponding $\sigma^{(k)}$, i.e. 
\begin{equation}\label{eq:convertible}
    \Cha\left(\rho^{(k)}\right)=\sigma^{(k)} 
\end{equation}
for all $k=1,\ldots,d$. One also says that the former tuple \emph{majorizes} the latter. Usually, we are interested in studying only specific types of states in a $d$-tuple: for instance, only classical states or quantum states of a specific form. We denote the tuples of interest by $S$. 

Intuitively, acting on a tuple of states by some (quantum) channel will make the output states more similar, hence harder to distinguish. Divergences allow us to quantify this distinguishability of states: a \emph{(multivariate) divergence} is a real-valued map $D:S\rightarrow\R$ on tuples of states, satisfying the axiom of \emph{monotonicity}, also known as the \emph{data-processing inequality}: for any channel $\Cha$ and any tuple $\left(\rho^{(1)},\ldots,\rho^{(d)}\right)\in S$ we require 
\begin{equation}
    D\left(\Cha\left(\rho^{(1)}\right),\ldots,\Cha\left(\rho^{(d)}\right)\right)\leq D\left(\rho^{(1)},\ldots,\rho^{(d)}\right). 
\end{equation}

The tensor product between the states in two tuples $\left(\rho^{(1)},\ldots,\rho^{(d)}\right)$ and $\left(\sigma^{(1)},\ldots,\sigma^{(d)}\right)$ in $S$ defines a new tuple on the composite system: 
\begin{equation}
    \left(\rho^{(1)}\otimes\sigma^{(1)},\ldots,\rho^{(d)}\otimes\sigma^{(d)}\right), 
\end{equation}
which we require to be in $S$, as well. In this thesis, we focus on the convertibility of tuples of states in the large-sample or the catalytic setting. For \emph{large-sample} convertibility, we ask whether for all $n$ large enough there exists a channel $\Cha_n$ such that for $k=1,\ldots,d$, 
\begin{equation}\label{eq:LS}
    \Cha_n\left(\left(\rho^{(k)}\right)^{\otimes n}\right)=\left(\sigma^{(k)}\right)^{\otimes n}. 
\end{equation}
Conversion between the two tuples may be impossible in the one-shot setting as in \eqref{eq:convertible}, but is potentially possible only for large number $n$ of copies of the states. For \emph{catalytic} convertibility, we ask if there exist so-called catalyst states $\left(\tau^{(1)},\ldots,\tau^{(d)}\right)$ such that for $k=1,\ldots,d$,
\begin{equation}\label{eq:cat}
    \Cha\left(\tau^{(k)}\otimes\rho^{(k)}\right)=\tau^{(k)}\otimes\sigma^{(k)}. 
\end{equation}
Here, the catalyst states aid in making the conversion possible, and are returned unaltered after applying the channel. 

In the context of large-sample and catalytic convertibility, where we consider tensor products of states, a second natural axiom one expects a divergence to satisfy is \emph{additivity under the tensor product}, i.e. 
\begin{equation}\label{eq:axiomproduct}
    D\left(\rho^{(1)}\otimes\sigma^{(1)},\ldots,\rho^{(d)}\otimes\sigma^{(d)}\right)=D\left(\rho^{(1)},\ldots,\rho^{(d)}\right)+D\left(\sigma^{(1)},\ldots,\sigma^{(d)}\right). 
\end{equation}

When $S$ consists of tuples of classical states, i.e. the $d$ states within a tuple commute with each other pairwise, a tuple can be written as 
\begin{equation}\label{eq:classical}
    \left(p^{(1)},\ldots,p^{(d)}\right), 
\end{equation}
where for each $k=1,\ldots,d$, $p^{(k)}$ is a finite probability distribution of some size $n\geq1$, i.e. $p^{(k)}\in\R_{\geq0}^n$ and $\sum_{i=1}^n(p^{(k)})_i=1$. The convertibility between such tuples is also known as matrix majorization. One can interpret these tuples as statistical experiments \cite{Blackwell51,Blackwell53}: here, $k=1,\ldots,d$ represents the $d$ different states a certain physical system can be in, and after performing a measurement with $n$ outcomes on the system, each $p^{(k)}$ represents the corresponding state's statistical distribution of outcomes. A channel applied to a tuple can now be seen as post-processing the measurements outcomes. In this way, divergences on such tuples quantify the information content of statistical experiments. For $d=2$, the study of convertibility of pairs of finite probability distributions $(p,q)$ is also known as relative majorization. A well-known family of divergences on such pairs are the R\'enyi divergences \cite{Renyi61}, parameterized by a real number $\alpha$: 
\begin{equation}\label{eq:Renyi}
    D_\alpha(p\|q):=\frac{1}{\alpha-1}\log{\sum_{i=1}^n (p_i)^\alpha (q_i)^{1-\alpha}}, 
\end{equation}
along with limits $\alpha\rightarrow0$, $\alpha\rightarrow1$ (the Kullback--Leibler divergence) and $\alpha\rightarrow\infty$, with some restrictions on the supports of the two distributions. The R\'enyi divergences find applications in information processing tasks, such as hypothesis testing: here, one samples from a large number of copies of either distribution $p$ or $q$, without knowing which one, and is asked to guess the correct distribution. The R\'enyi divergences characterize the rate, or error exponent, at which the guessing error vanishes when $n$ grows larger \cite{Csiszar1995}. 

When $S$ consists of pairs of quantum states $(\rho,\sigma)$, one rather large family of divergences are the $\alpha$-$z$ relative entropies $D_{\alpha,z}$ \cite{Jaksic2012,Audenaert_Datta_2015}, defined by 
\begin{equation}\label{eq:introductionalphazdivergence}
    D_{\alpha,z}(\rho\|\sigma):=\frac{1}{\alpha-1}\log\tr\left[\left(\sigma^\frac{1-\alpha}{2z}\rho^\frac{\alpha}{z}\sigma^\frac{1-\alpha}{2z}\right)^z\right], 
\end{equation}
as well as certain limits $\alpha\rightarrow1$ (the Umegaki relative entropy) and $z\rightarrow\infty$ of this expression, and with some restrictions on the supports of $\rho$, $\sigma$. Note that when $\rho$, $\sigma$ commute, then $D_{\alpha,z}$ reduces to the classical R\'enyi divergence $D_\alpha$ in \ref{eq:Renyi}, for any $z$. When $z=1$, the $D_{\alpha,z}$ reduce to the Petz relative entropies, which find an operational interpretation as error exponents in quantum hypothesis testing \cite{Mosonyi2011}. When $\alpha=z$, the $D_{\alpha,z}$ reduce to the sandwiched relative entropies, which appear in the strong converse bound in quantum hypothesis testing \cite{Mosonyi2013}. The Umegaki relative entropy appears in the quantum Stein's Lemma, also in the setting of hypothesis testing \cite{Hiai1991,Ogawa2000}. 

Consider now a general collection $S$ of tuples of states. If one tuple of states can be converted to another tuple either in the large-sample setting as in \eqref{eq:LS} or in the catalytic setting as in \eqref{eq:cat}, then the axioms of monotonicity and additivity under tensor products imply, for any divergence $D$, that 
\begin{equation}\label{eq:ordering}
    D\left(\rho^{(1)},\ldots,\rho^{(d)}\right)\geq D\left(\sigma^{(1)},\ldots,\sigma^{(d)}\right). 
\end{equation}
Hence, satisfying the inequality above for any divergence $D$ is a necessary condition for large-sample or catalytic convertibility to hold. In this thesis, we show that, essentially, this condition is also \emph{sufficient}. More precisely, if the inequalities in \eqref{eq:ordering} are strict, then large-sample and catalytic convertibility is possible. In this work, we will demonstrate this for different types of tuples of states $S$, both classical and quantum, establishing for each of these scenarios a result of the following form, that we state informally below. 

\begin{informaltheorem}
    Let $\left(\rho^{(1)},\ldots,\rho^{(d)}\right),\left(\sigma^{(1)},\ldots,\sigma^{(d)}\right)$ be two tuples of states in $S$. If for all divergences $D$, 
    \begin{equation}\label{eq:informaltheoremcompare}
        D\left(\rho^{(1)},\ldots,\rho^{(d)}\right)>D\left(\sigma^{(1)},\ldots,\sigma^{(d)}\right), 
    \end{equation}
    then 
    \begin{enumerate}[(a)]
        \item there exist a channel $\Cha$ and catalyst states $\left(\tau^{(1)},\ldots,\tau^{(d)}\right)$ such that\\
        for $k=1,\ldots,d$,         
        \begin{equation}
            \Cha\left(\tau^{(k)}\otimes\rho^{(k)}\right)=\tau^{(k)}\otimes\sigma^{(k)}, 
        \end{equation}
        \item for sufficiently large $n$, there exists a channel $\Cha_n$ such that for $k=1,\ldots,d$, 
        \begin{equation}
            \Cha_n\left(\left(\rho^{(k)}\right)^{\otimes n}\right)=\left(\sigma^{(k)}\right)^{\otimes n}. 
        \end{equation}
    \end{enumerate}
    Conversely, if either (a) or (b) holds, then also \eqref{eq:informaltheoremcompare} holds for all divergences $D$ with non-strict inequality $\geq$ instead of $>$. 
\end{informaltheorem}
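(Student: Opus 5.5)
The converse direction follows directly from the two axioms. Suppose (b) holds, i.e.\ $\Cha_n((\rho^{(k)})^{\otimes n})=(\sigma^{(k)})^{\otimes n}$ for some $n$. Let $D$ be any monotone additive divergence. Monotonicity gives $D((\rho^{(k)})^{\otimes n})_k\geq D((\sigma^{(k)})^{\otimes n})_k$. Applying additivity \eqref{eq:axiomproduct} $n$ times turns this into $nD(\rho)\geq nD(\sigma)$, and dividing by $n$ gives the claim.

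Now suppose (a) holds. Monotonicity applied to the channel $\Cha$ gives $D(\tau\otimes\rho)\geq D(\tau\otimes\sigma)$. By additivity this reads $D(\tau)+D(\rho)\geq D(\tau)+D(\sigma)$. Since $D$ is real-valued, we can cancel $D(\tau)$ and obtain $D(\rho)\geq D(\sigma)$.

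For the forward direction, I would follow Fritz's real-algebraic approach, as the Summary indicates.

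\textbf{Step 1: the preordered semiring.} Encode the tuples in $S$ as elements of a preordered semiring. Addition is the direct sum of tuples, which I would allow as unnormalized or weighted tuples so that addition is available. Multiplication is the tensor product. The preorder is one-shot convertibility \eqref{eq:convertible}. I would then check the hypotheses of the relevant Vergleichsstellensatz: zero-sum-freeness, the existence of a power universal element, and polynomial growth.

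\textbf{Step 2: the Vergleichsstellensatz.} Invoke Fritz's theorem. Suppose $\Phi(x)>\Phi(y)$ for every monotone homomorphism $\Phi$ into $\R_+$ or its tropical variants, and, at degenerate points, suitable monotone derivations also satisfy the corresponding strict inequalities. Then two conclusions follow:
\begin{itemize}
\item $x^{n}\geq y^{n}$ for all large $n$, which is large-sample convertibility (b);
\item $ax\geq ay$ for some catalyst $a$, which is the catalytic statement (a), provided the catalyst can be normalized to a state.
\end{itemize}

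\textbf{Step 3: homomorphisms as divergences.} Identify the monotone homomorphisms, and any needed derivations, with the additive monotone divergences $D$. Restricted to normalized tuples, their logarithms are exactly such divergences. The hypothesis \eqref{eq:informaltheoremcompare} then supplies the required strict inequalities. Explicit families appear here, for example the multivariate Rényi divergences or $D_{\alpha,z}$, but for this informal theorem one only needs the abstract correspondence.

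\textbf{Main obstacle.} I expect the hardest part to be the degenerate case. The Vergleichsstellensatz needs extra derivation conditions at homomorphisms where $\Phi(x)=\Phi(y)$, for instance the trivial ones arising from normalization. I would handle this by showing that these derivations correspond to limits such as the Kullback--Leibler or Umegaki divergence, and are therefore covered by the strict hypothesis. Support mismatches may create further degenerate homomorphisms. I would first try to treat these with a power-universal element built from a tuple of full-support states.
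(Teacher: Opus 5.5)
Your converse argument is correct and is the one the paper gives. The architecture of your forward direction also matches the paper: unnormalized tuples under $\boxplus,\boxtimes$, then Theorem~\ref{thm:Fritz2022}, with divergences $\log\Phi/\log\Phi(u)$ and $\Delta/\|\cdot\|_{(k)}$.

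The gap is in Step~2. Part (b) of the Vergleichsstellensatz yields $x^n\succeq y^n$ only under the extra hypothesis that $x$ \emph{itself} is power universal. Having some power universal $u$ in the semiring (polynomial growth) is a different requirement. It makes the theorem applicable and normalizes the divergences, but it does not by itself give large-sample convertibility of the particular input; without it you only obtain the catalytic statement (a). In the paper, this is where setting-specific work is needed:
\begin{itemize}
\item For matrix majorization, the strict inequalities force $P$ to be power universal: $\Phi(P)<\Phi(Q)\le 1$ for every non-degenerate $\Phi$, and then Proposition~\ref{Mlem:poweruniversalconditions}, (iv)$\Rightarrow$(i), applies (analogously Proposition~\ref{Mpro:poweruniversaldcgeneral}).
\item For pure-component cq pairs, the paper cannot derive power universality from the inequalities. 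It assumes \eqref{eq:pu2} separately in Theorem~\ref{thm:LS}, whose catalytic part holds without that assumption.
\end{itemize}

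Your suggested remedy points the wrong way. In the semirings that allow support mismatches, full-support tuples are not power universal at all. In $S^d$, a normalized $U$ with all columns of full support has $\Phi_{e_k,\{k,k'\}}(U)=1$ for the non-degenerate homomorphism $\Phi_{e_k,\{k,k'\}}$, whereas a power universal must satisfy $\Phi(U)<1$. There the power universals are exactly the tuples whose column supports are pairwise non-nested, and the power universals the paper exhibits in $S_{\rm m.r.}$ need an orthogonal block.

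Conversely, the derivation obstacle you anticipate is not one. $D(x)=\Delta(x)/\|x\|_{(k)}$ is itself monotone and tensor-additive, so your hypothesis over all divergences already covers condition (ii). There is no need to identify derivations as Kullback--Leibler or Umegaki limits. Add to your checklist instead the semidomain property and the norm homomorphism $\|\cdot\|$ with $\|a\|=\|b\|\Rightarrow a\sim b$; the latter is what confines the derivation conditions to the trace components.
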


Catalytic convertibility as in (a) and large-sample convertibility as in (b) are not equivalent in general. However, large-sample convertibility implies catalytic convertibility (see \cite{Duan2005}) by using catalyst states of the form 
\begin{equation}
    \tau^{(k)}=\frac{1}{n}\bigoplus_{\ell=0}^{n-1} \left(\rho^{(k)}\right)^{\otimes\ell}\otimes\left(\sigma^{(k)}\right)^{\otimes(n-1-\ell)}, 
\end{equation}
for $k=1,\ldots,d$. The converse implication does not always hold according to \cite{Feng_et_al_2006}. 

We will also study asymptotic convertibility with vanishing error. In the scenarios that we consider, a result along the same lines as in the informal theorem above holds. More precisely, for every error $\varepsilon>0$, there exist some $\sigma_\varepsilon^{(k)}$, $k=1,\ldots,d$, that are each $\varepsilon$ close to their exact counterpart according to some distance measure, such that in the large-sample setting, $n_\varepsilon$ tensor powers of each exact $\rho^{(k)}$ is converted to $n_\varepsilon$ tensor powers of the approximated $\sigma_\varepsilon^{(k)}$. Note that the power $n_\varepsilon$ depends on the error $\varepsilon$. Similarly, in the catalytic setting, for every $\varepsilon>0$ there exist catalyst states $\tau_\varepsilon^{(k)}$, depending on $\varepsilon$, and approximated target states $\sigma_\varepsilon^{(k)}$ such that $\tau_\varepsilon^{(k)}\otimes\rho^{(k)}$ converts to $\tau_\varepsilon^{(k)}\otimes\sigma_\varepsilon^{(k)}$ for each $k=1,\ldots,d$. In this case, the inequalities in \eqref{eq:informaltheoremcompare} are only required to be satisfied non-strictly. In this way, large-sample and catalytic majorization occurs in the asymptotic sense if and only if the divergences are ordered non-strictly. From a practical perspective, asymptotic convertibility is the more natural notion to consider, since obtaining the desired output states $\sigma^{(k)}$ up to a small error is usually sufficient. 

It turns out that it is not necessary to check \eqref{eq:informaltheoremcompare} for every divergence, but only those that satisfy an additional, third axiom, namely \emph{additivity under the direct sum} of states. Informally, this axiom can be stated as follows: a divergence $D$ must be of the form 
\begin{equation}\label{eq:thirdaxiom1}
    D\left(\rho^{(1)},\ldots,\rho^{(d)}\right)=c\log Q\left(\rho^{(1)},\ldots,\rho^{(d)}\right), 
\end{equation}
for some constant $c\in\R$ and a function $Q:S\rightarrow\R_{\geq0}$ which is additive under the direct sum, i.e. 
\begin{equation}\label{eq:thirdaxiom2}
    Q\left(\rho^{(1)}\oplus\sigma^{(1)},\ldots,\rho^{(d)}\oplus\sigma^{(d)}\right)=Q\left(\rho^{(1)},\ldots,\rho^{(d)}\right)+Q\left(\sigma^{(1)},\ldots,\sigma^{(d)}\right). 
\end{equation}
Note that taking the direct sum of two normalized states yields a state that is not normalized. Therefore, we allow our set of state tuples $S$ to include tuples of states with any normalization.

Denote by $\mc D$ the set of divergences satisfying all three axioms of monotonicity and additivity under tensor products and direct sums. The mathematical framework developed in \cite{fritz2022}, which we will use extensively in this thesis, shows that adding this third axiom of additivity under direct sums singles out those divergences characterizing large-sample and catalytic convertibility: \eqref{eq:informaltheoremcompare} needs only to hold for those divergences $D$ in $\mc D$. A recent paper \cite{Haapasalo2025} establishes that $\mc D$ is essentially the set of extreme points in the convex set of all divergences that satisfy at least the two axioms of monotonicity and tensor product additivity. 

In some scenarios, we will show that $\mc D$ is also the smallest set of divergences possible that give sufficient conditions \eqref{eq:informaltheoremcompare} for large-sample and catalytic convertibility. Namely, if one removes an open subset from $\mc D$, then satisfying the conditions \eqref{eq:informaltheoremcompare} for the remaining divergences is not sufficient for (a) and (b) in the informal theorem to hold. 

The optimal rate of converting a tuple $\left(\rho^{(1)},\ldots,\rho^{(d)}\right)$ to another tuple\\
$\left(\sigma^{(1)},\ldots,\sigma^{(d)}\right)$ in $S$ can be (informally) defined as the largest $r\in\R$ such that 
\begin{equation}
    \Cha\Big(\left(\rho^{(k)}\right)^{\otimes n}\Big)=\left(\sigma^{(k)}\right)^{\otimes rn}, 
\end{equation}
for some channel $\Cha$ and all $k=1,\ldots,d$, in the limit $n\rightarrow\infty$. We will prove the following expression for the optimal conversion rate in Section \ref{sec:optimalconversionrate}, in terms of the divergences $\mc D$: 
\begin{equation}
    \min_{D\in\mc D}\frac{D\left(\rho^{(1)},\ldots,\rho^{(d)}\right)}{D\left(\sigma^{(1)},\ldots,\sigma^{(d)}\right)}. 
\end{equation}

In this thesis, we will analyze two different scenarios, one classical and one quantum, of convertibility of tuples of states. In both cases, we will identify the precise set of divergences $\mc D$ characterizing large-sample and catalytic convertibility. For the classical scenario of matrix majorization in Chapter \ref{cha:matrix}, we consider $d$-tuples of finite probability distributions as in \eqref{eq:classical}. The set of divergences characterizing convertibility in this setting when all distributions $p^{(k)}$ have equal support was studied in \cite{farooq2024}. In this thesis, we generalize these results by weakening this strict assumption on the supports. Namely, we consider the case where we make no assumptions on the supports, other than having at least some common support among the $d$ distributions (Section \ref{Msec:findings}). We also consider another case, where the support of one distribution $p^{(d)}$ dominates, or contains, the supports of all other $p^{(k)}$, $k\neq d$. (Section \ref{Msec:Asymmetric}). This setting is applicable to convertibility of Gibbs-preserving states in quantum thermodynamics: here, we take $d=2$ and fix $\rho^{(2)}=\sigma^{(2)}$ to be the Gibbs state of a thermodynamic system, and ask if some state $\rho^{(1)}$ of this system can be converted to another $\sigma^{(1)}$. This is detailed in Section \ref{Msec:thermal}, where we derive conditions for convertibility in terms of generalized free energies. This establishes the main conclusion from \cite{Brandao2014}, while avoiding some technical issues in that paper's analysis. 

The divergences encountered in the setting of matrix majorization are of the form 
\begin{equation}\label{eq:introductionmatrixdivergence}
    D_{\underline{\alpha}}(P)=\frac{1}{\max_{1\leq k\leq d}\alpha_k-1}\log\sum_{i=1}^n \left(p^{(1)}_i\right)^{\alpha_1}\cdots\left(p^{(d)}_i\right)^{\alpha_d} 
\end{equation}
where $\underline{\alpha}\in\R^d$ satisfy $\alpha_1+\ldots+\alpha_d=1$, as well as certain point-wise limits in $\underline{\alpha}$. This was initially shown in \cite{farooq2024} for distributions of equal support. Note that these divergences reduce to the R\'enyi divergences of \eqref{eq:Renyi} when $d=2$. We will show that considering more general supports of the distributions gives rise to variations of the above divergences where only a subset of the distribution's outcomes $i=1,\ldots,n$ is summed over in \eqref{eq:introductionmatrixdivergence}. Additionally, the range that the vector $\underline{\alpha}$ is allowed to be in depends strongly on the precise relations of the supports of the $d$ distributions. 

The second scenario, analyzed in Chapter \ref{cha:flat}, concerns quantum states. Ideally, we would like to consider arbitrary tuples of two or more general finite-dimensional quantum states. However, identifying all relevant divergences is not straightforward, due to reasons we will now explain. In the case of matrix majorization, a tuple of probability distributions $\left(p^{(1)},\ldots,p^{(d)}\right)$ can be seen as $d$ commuting density operators, which may be trivially decomposed into $1$-dimensional Hilbert spaces using the direct sum: 
\begin{equation}
    p^{(k)}=\left(p^{(k)}\right)_1\oplus\ldots\oplus\left(p^{(k)}\right)_n. 
\end{equation}
The axiom of direct sum additivity given by \eqref{eq:thirdaxiom1} and \eqref{eq:thirdaxiom2} then allows us to reduce the problem of identifying all divergences to considering only tuples of $1$-dimensional (potentially non-normalized) states, which will be explained in detail in Chapter \ref{cha:matrix}. In the setting of a tuple of $d$ general quantum states, a decomposition into smaller-sized density matrices is not possible, in general, due to the non-commutativity of the states, even for the case of $d=2$. However, in some settings we may proceed in a way similar to the classical matrix majorization case. Namely, Jordan's Lemma \cite{Jordan1875} says that a pair of projections can be simultaneously block-diagonalized such that the component blocks are of rank at most 1: 

\begin{lemma}[Jordan's Lemma]
    Let $\Hil$ be a finite-dimensional linear space and $A,B$ projections in $\Hil$ (i.e. two operators on $\Hil$ such that $A^2=A=A^*$, $B^2=B=B^*$). Then $A,B$ can be simultaneously block-diagonalized with blocks of dimension $1$ or $2$. In the case of dimension $2$, the corresponding blocks in $A,B$ have rank $1$. More precisely, there exists a decomposition $\Hil=\bigoplus_{i=1}^n\Hil_i$, for some $n\geq1$, where the $\Hil_i$ are at most $2$ dimensional linear spaces, such that 
    \begin{equation}\label{eq:Jordanform}
        A=\bigoplus_{i=1}^n A_i,\quad B=\bigoplus_{i=1}^n B_i, 
    \end{equation}
    where the $A_i$, $B_i$ are projections of rank at most $1$ in $\Hil_i$. 
\end{lemma}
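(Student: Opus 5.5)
The plan is to use the self-adjoint operator $T:=A+B$, or better the operator $BAB$ restricted to the range of $B$, whose spectral decomposition will produce the blocks. First, split off the trivial parts. Let $\Hil_{11}:=\mathrm{ran}A\cap\mathrm{ran}B$, $\Hil_{10}:=\mathrm{ran}A\cap\ker B$, $\Hil_{01}:=\ker A\cap\mathrm{ran}B$ and $\Hil_{00}:=\ker A\cap\ker B$. Each of these subspaces is invariant under both $A$ and $B$, and on each of them $A$ and $B$ act as $0$ or $\mathrm{id}$. Choosing orthonormal bases in them gives $1$-dimensional blocks on which $A_i,B_i\in\{0,1\}$, which have rank at most $1$. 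Let $\mathcal K$ be the orthogonal complement of the sum of these four spaces. Since $A$ and $B$ are self-adjoint and the sum is invariant, $\mathcal K$ is also invariant under $A$ and $B$. It therefore suffices to decompose $\mathcal K$, on which $A$ and $B$ have no common eigenvectors.

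Next, on $\mathcal K$ consider the positive operator $BAB$ restricted to $\mathcal K_B:=\mathrm{ran}(B|_{\mathcal K})$. Diagonalize it: $BABv=\lambda v$ with $v\in\mathcal K_B$ a unit vector. We claim $0<\lambda<1$. If $\lambda=0$ then $\|ABv\|^2=\|Av\|^2=0$, so $v\in\ker A\cap\mathrm{ran}B$, which is excluded. If $\lambda=1$ then $\|Av\|=1=\|v\|$, so $v\in\mathrm{ran}A\cap\mathrm{ran}B$, also excluded. Set $\Hil_v:=\mathrm{span}\{v,Av\}$. This space is $2$-dimensional, since $Av$ is not parallel to $v$ (otherwise $v$ would be a common eigenvector). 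We check invariance: $Bv=v$, and $BAv=BABv=\lambda v$, so $B$ maps $\Hil_v$ into itself. Likewise $A(Av)=Av$ and $Av\in\Hil_v$, so $A$ does too. On $\Hil_v$, $A$ is a projection with $Av\neq0$ and $v\notin\mathrm{ran}A$, so it has rank exactly $1$. The same argument shows $B|_{\Hil_v}$ has rank $1$, with range $\C v$.

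Then we show that, for an orthonormal eigenbasis $v_1,\dots,v_m$ of $BAB|_{\mathcal K_B}$, the planes $\Hil_{v_j}$ are mutually orthogonal and together span $\mathcal K$. Orthogonality follows from a direct computation. We have $\langle v_j,v_k\rangle=0$ and $\langle Av_j,v_k\rangle=\langle ABv_j,Bv_k\rangle=\langle BABv_j,v_k\rangle=0$. Also $\langle Av_j,Av_k\rangle=\langle BABv_j,v_k\rangle=0$ for $j\neq k$. For spanning, let $w\in\mathcal K$ be orthogonal to every $\Hil_{v_j}$. Then $w\perp\mathcal K_B$, so $Bw=0$, and $Aw\perp v_j$ for all $j$, so $BAw=0$. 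The space of such $w$ is invariant under $A$ (it is the orthocomplement of an $A,B$-invariant subspace). Hence $Aw$ lies in it with $BAw=0$ and $A(Aw)=Aw$, which places $Aw$ in $\mathrm{ran}A\cap\ker B\cap\mathcal K=\{0\}$. So $Aw=0=Bw$, meaning $w\in\ker A\cap\ker B$, and therefore $w=0$. Collecting the $1$-dimensional blocks with the planes $\Hil_{v_j}$ gives the decomposition \eqref{eq:Jordanform}.

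The main obstacle is the spanning and orthogonality bookkeeping in the last step, in particular making sure that degenerate eigenvalues of $BAB$ cause no problem. Choosing an orthonormal eigenbasis handles this, as the inner-product computations above show. The rest of the argument is routine.
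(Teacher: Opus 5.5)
The paper gives no proof of this lemma; it quotes it from \cite{Jordan1875} as a classical result, used only to motivate pure-component cq-state pairs. So there is no argument to compare against. Your proposal stands on its own as a self-contained proof, and it is correct.

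\textbf{Checking the steps.}

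\emph{Reduction to $\mathcal K$.} This is sound. One fact you use without stating is that $\Hil_{11},\Hil_{10},\Hil_{01},\Hil_{00}$ are mutually orthogonal. This follows at once from $\mathrm{ran}A\perp\ker A$ and $\mathrm{ran}B\perp\ker B$, and it is what makes the $1$-dimensional blocks part of an orthogonal decomposition.

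\emph{Eigenvalues and invariance.} On $\mathcal K$, the identity $\lambda=\langle v,BABv\rangle=\|Av\|^2$ correctly rules out $\lambda\in\{0,1\}$. Invariance of $\mathrm{span}\{v,Av\}$ follows from $BAv=BABv=\lambda v$.

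\emph{Orthogonality.} The computations use only $\langle v_j,v_k\rangle=\delta_{jk}$, so degenerate eigenvalues of $BAB$ are indeed harmless.

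\emph{Spanning.} Let $W$ be the orthocomplement of the planes inside $\mathcal K$. It is invariant under $A$ and $B$ and contained in $\ker B$. Hence $A|_W$ has range in $\mathrm{ran}A\cap\ker B\cap\mathcal K=\{0\}$, which forces $W=\{0\}$. This also covers the edge case $\mathcal K_B=\{0\}$.

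\textbf{Comparison with other proofs.} This is the standard ``eigenvectors of $BAB$ on $\mathrm{ran}B$'' proof. A common alternative observes that $(A-B)^2$ commutes with both $A$ and $B$, then splits its eigenspaces.

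Your route has a by-product worth noting in the context of this paper. In the block $\mathrm{span}\{v_j,Av_j\}$, the rank-one projections $B_j$ and $A_j$ have ranges $\C v_j$ and $\C Av_j$. Their squared overlap is $|\langle v_j,Av_j\rangle|^2/\|Av_j\|^2=\lambda_j$. So the fidelities of the pure components into which a pair of flat states decomposes are exactly the eigenvalues of $BAB$ in $(0,1)$, plus fidelity $1$ from $\Hil_{11}$. This matches the paper's later expression $\Phi^{\T}(A,B)=\|(\mathrm{supp}\,A)(\mathrm{supp}\,B)\|_\infty^2=\|BAB\|$.
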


We call a finite-dimensional quantum state $\rho$ a \emph{flat state} if it has a single repeated eigenvalue. In other words, $\rho=\tr[P]^{-1}P$ for some projection $P$. Jordan's Lemma then lets us decompose pairs of such states into smaller pairs of states of at most rank $1$, which are pure states. This motivates us to consider a generalization of such pairs: pairs of classical-quantum (cq) states with pure components. These are pairs $(\rho,\sigma)$ such that there exists a classical register given by fixed orthonormal flags $\ket{i}$, for $i=1,\ldots,n$, $n\geq1$, such that 
\begin{equation}
    \rho=\sum_{i=1}^n p_i \ketbrar{i}\otimes\ketbrar{\alpha_i},\quad\sigma=\sum_{i=1}^n q_i \ketbrar{i}\otimes\ketbrar{\beta_i}, 
\end{equation}
where $(p_1,\ldots,p_n)$ is a probability distribution and $\ket{\alpha_i}$ is a normalized vector, and similarly for the $q_i$ and $\ket{\beta_i}$. We show that in this setting the divergences characterizing large-sample and catalytic convertibility are a subset of the $\alpha$-$z$ relative entropies $D_{\alpha,z}$ from \eqref{eq:introductionalphazdivergence}. This result establishes the first operational interpretation of these relative entropies, where the $\alpha$ and $z$ parameters are truly independent from each other. 

The axioms of additivity under tensor products \eqref{eq:axiomproduct} and the additivity under direct sums as in \eqref{eq:thirdaxiom1} and \eqref{eq:thirdaxiom2} involve two operations $\otimes$ and $\oplus$ on (possibly non-normalized) states that induce operations of ``multiplication'' $\boxtimes$ and ``addition'' $\boxplus$ on tuples of states in a straightforward way: 
\begin{align}
    &\left(\rho^{(1)},\ldots,\rho^{(d)}\right)\boxtimes\left(\sigma^{(1)},\ldots,\sigma^{(d)}\right):=\left(\rho^{(1)}\otimes\sigma^{(1)},\ldots,\rho^{(d)}\otimes\sigma^{(d)}\right),\\
    &\left(\rho^{(1)},\ldots,\rho^{(d)}\right)\boxplus\left(\sigma^{(1)},\ldots,\sigma^{(d)}\right):=\left(\rho^{(1)}\oplus\sigma^{(1)},\ldots,\rho^{(d)}\oplus\sigma^{(d)}\right). 
\end{align}
One can check that $S$ together with the operations $\boxplus$, $\boxtimes$ defined above has the structure of a semiring. Additionally, a preorder $\succeq$ can be defined on $S$ by setting 
\begin{equation}
    \left(\rho^{(1)},\ldots,\rho^{(d)}\right)\succeq\left(\sigma^{(1)},\ldots,\sigma^{(d)}\right) 
\end{equation}
if and only if the first tuple of states can be converted to the second by a channel $\Cha$. 

In the recent works \cite{fritz2023,fritz2022} by Tobias Fritz, a real-algebraic theory of preordered semirings was developed in the context of large-sample and catalytic majorization/convertibility.  The central results of these papers, called the Vergleichsstellens\"atze, give conditions for large-sample or catalytic majorization between two elements in the semiring in terms of the ordering of certain monotone homomorphisms (essentially the functions $Q$ as in \eqref{eq:thirdaxiom2}) and monotone derivations on the semiring, which correspond to the divergences. A particular version of such a Vergleichsstellensatz gives rise to the informal theorem discussed above. These techniques not only apply directly to the preordered semirings of tuples of states we are interested in in this thesis, but are applicable to much more general semirings, as well. 

Fritz' methods can be seen as an extension of the theory of asymptotic spectra previously developed by Volker Strassen. In \cite{Strassen1969}, Strassen introduced the first subcubic algorithm for matrix multiplication, by reducing the problem to $2\times2$-matrices and showing that these require only $7<2^3$ operations. Since multiplying two matrices is bilinear, the operation can be represented by a $3$-dimensional tensor, whose asymptotic rank equals the computational complexity of matrix multiplication. This led Strassen to develop a theory of preordered semirings of tensors in general \cite{strassen1986,strassen1987,strassen1988,strassen1991}. The theory was subsequently further generalized to preordered semirings dealing with other objects than just tensors. Strassen's theory and related works have found applications in probability and information theory, and computer science. Another extension of this theory was presented in \cite{Vrana2022}. Besides being applied in the author's works \cite{verhagen2025,verhagen2025alphaz} on which this thesis is based, and the previous matrix majorization study \cite{farooq2024}, these techniques have also been useful recently in other settings in (quantum) information theory \cite{bunth2021asymptotic,perry2022semiring,bunth2023,bunth2024,Rubboli2026,Haapasalo2025}. 

This thesis is organized as follows. In Chapter \ref{cha:VSS}, we introduce the necessary mathematical background on the Vergleichsstellens\"atze from \cite{fritz2022}, and use this to prove an expression for the optimal conversion rate. In Chapters \ref{cha:matrix} and \ref{cha:flat}, we apply the techniques from Chapter \ref{cha:VSS} to the setting of matrix majorization, and subsequently to pairs of pure-component cq-states. This is where we derive sufficient and (almost) necessary conditions for large-sample and catalytic convertibility, and show how the set of relevant divergences is minimal in certain settings. Chapter \ref{cha:matrix} also contains a brief discussion on how our results apply to Gibbs-preserving convertibility of states. Finally, we summarize some open problems in Chapter \ref{cha:open}.




\chapter{Mathematical Techniques}\label{cha:VSS}

\section{A Vergleichsstellensatz}
To derive our results, we need some mathematical machinery that comes in a real-algebraic form and involves preordered semirings in particular. The minimal background required to use these techniques will be presented in this section. All definitions and results presented here are from \cite{fritz2022,fritz2023}, where one can find further details. 

Throughout this work, we follow the convention $\N=\{1,2,\ldots\}$ and denote \sloppy{${[n]:=\{1,\ldots,n\}}$} for all $n\in\N$. We will only study \emph{commutative} algebraic structures. A tuple $(S,+,\cdot,0,1,\preceq)$ is called a {\it preordered semiring} if $(S,+,0)$ and $(S,1,\cdot)$ are commutative semigroups where the multiplication distributes over the addition and $\preceq$ is a preorder (a binary relation which is reflexive and transitive) satisfying
\begin{equation}
x\preceq y\ \Rightarrow\ \left\{\begin{array}{rcl}
x+a&\preceq&y+a,\\
xa&\preceq&ya,
\end{array}\right.
\end{equation}
for all $a\in S$. As above, we usually omit the multiplication dot between elements in the semiring. We denote by $\sim$ the equivalence relation generated by $\preceq$, i.e. $x\sim y$ if and only if there are $z_1,\ldots,z_n\in S$ such that
\begin{equation}
x\preceq z_1\succeq z_2\preceq\cdots\succeq z_n\preceq y.
\end{equation}
The preordered semiring $S$ is of {\it polynomial growth} if it has a {\it power universal} $u\in S$, i.e., for non-zero $x,y\in S$, 
\begin{equation}\label{eq:puproperty}
x\preceq y\quad\Rightarrow\quad\exists k\in\N:\ y\preceq xu^k.
\end{equation}
A preordered semiring $S$ is a {\it preordered semidomain} if
\begin{equation}
\begin{array}{rcl}
xy=0&\Rightarrow&x=0\ {\rm or}\ y=0,\\
0\preceq x\preceq 0&\Rightarrow&x=0.
\end{array}
\end{equation}
Moreover, $S$ is {\it zerosumfree} if $x+y=0$ implies $x=0=y$.
	
Given preordered semirings $S$ and $T$, we say that a map $\Phi:S\to T$ is a {\it monotone homomorphism} if
\begin{itemize}
\item $\Phi(x+y)=\Phi(x)+\Phi(y)$ for all $x,y\in S$ (additivity),
\item $\Phi(xy)=\Phi(x)\Phi(y)$ for all $x,y\in S$ (multiplicativity),
\item $x\preceq y$ $\Rightarrow$ $\Phi(x)\preceq\Phi(y)$ (monotonicity), and
\item $\Phi(0)=0$ and $\Phi(1)=1$.
\end{itemize}
We say that such a monotone homomorphism is {\it degenerate} if $x\preceq y$ implies $\Phi(x)=\Phi(y)$. Otherwise $\Phi$ is {\it non-degenerate}. For our results we need monotone homomorphisms with values in special semirings. These are the following:
\begin{itemize}
\item $\R_+$: The half-line $[0,\infty)$ equipped with the natural addition, multiplication, and total order.
\item $\R_+^{\rm op}$: The same as above but with the reversed order. Together $\R_+$ and $\R_+^{\rm op}$ are often called {\it temperate reals}.
\item $\T\R_+$: The half-line $[0,\infty)$ equipped with the natural multiplication, total order, and the tropical sum $x+y=\max\{x,y\}$.
\item $\T\R_+^{\rm op}$: The same as above but with the reversed order. Together $\T\R_+$ and $\T\R_+^{\rm op}$ are often called {\it tropical reals}.
\end{itemize}
Suppose that $S$ is a preordered semiring and that $\Phi:S\to\R_+$ is a monotone homomorphism. We say that an additive map $\Delta:S\to\R$ is a {\it derivation at $\Phi$} or a {\it $\Phi$-derivation} if it satisfies the {\it Leibniz rule}
\begin{equation}
\Delta(xy)=\Delta(x)\Phi(y)+\Phi(x)\Delta(y)
\end{equation}
for all $x,y\in S$. We are mainly only interested in derivations at degenerate homomorphisms that are also monotone, i.e. satisfy 
\begin{equation}
    x\preceq y\quad\Longrightarrow\quad\Delta(x)\leq\Delta(y). 
\end{equation}

Several results collectively called the ``Vergleichsstellens\"{a}tze'' have been derived in \cite{fritz2022}. Of all of them, we will need the following version:
	
\begin{theorem}[Based on Theorem 8.6 in \cite{fritz2022}]\label{thm:Fritz2022}
Let $S$ be a zerosumfree preordered semidomain with a power universal element $u$. Assume that for some $d\in\N$ there is a surjective homomorphism (called the \emph{norm}) $\|\cdot\|:S\to\R_{>0}^d\cup\{(0,\ldots,0)\}$\footnote{Note that $\R_{>0}^d\cup\{(0,\ldots,0)\}$ with component-wise addition and multiplication forms a semiring.} with trivial kernel and such that
\begin{equation}\label{eq:surjectivehomomorphismproperties}
a\succeq b\ \Rightarrow\ \|a\|=\|b\|\quad {\rm and} \quad \|a\|=\|b\|\ \Rightarrow\ a\sim b.
\end{equation}
Denote the component homomorphisms of $\|\cdot\|$ by $\|\cdot\|_{(j)}$, $j=1,\ldots,d$. Let $x,y\in S\setminus\{0\}$ with $\|x\|=\|y\|$. If 
\begin{enumerate}[(i)]
\item for every $\mb K\in\{\R_+,\R_+^{\rm op},\T\R_+,\T\R_+^{\rm op}\}$ and every non-degenerate monotone homomorphism $\Phi : S \to \mb K$ with trivial kernel, we have $\Phi(x) > \Phi(y)$, and
\item $\Delta(x) > \Delta(y)$ for every monotone $\|\cdot\|_{(j)}$-derivation\footnote{\label{ftn:interchangeability}In fact, this inequality only has to be verified for every derivation up to interchangeability: two derivations of some degenerate monotone homomorphism are called \emph{interchangeable} if their difference takes a (possibly different) constant value on every equivalence class of the relation $\sim$.} $\Delta : S \to \R$ with $\Delta(u) = 1$ for all component indices $j = 1,\ldots,d$,
\end{enumerate}
then
\begin{enumerate}
\item[(a)] there exists a nonzero $a\in S$ such that $ax\succeq ay$, and 
\item[(b)] if additionally $x$ is power universal, then $x^n \succeq y^n$ for all sufficiently large $n\in\N$. 
\end{enumerate}
Conversely, if either of these properties (a) or (b) holds for any $n$ or $a$, then the inequalities (i) and (ii) hold non-strictly.
\end{theorem}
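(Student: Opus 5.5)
This theorem is stated as a consequence of Theorem 8.6 in \cite{fritz2022}. The plan is therefore not to redo Fritz's real-algebraic machinery, but to check that our hypotheses match his and to translate his conclusion into (a) and (b). First I check the hypotheses. $S$ is a zerosumfree preordered semidomain of polynomial growth with power universal $u$. The surjective homomorphism $\|\cdot\|:S\to\R_{>0}^d\cup\{0\}$ has trivial kernel and satisfies \eqref{eq:surjectivehomomorphismproperties}, which is exactly Fritz's assumption that $S/\!\sim$ is identified with $\R_{>0}^d\cup\{0\}$. Under these assumptions Fritz's theorem takes the test maps to be all monotone homomorphisms into $\R_+,\R_+^{\rm op},\T\R_+,\T\R_+^{\rm op}$ with trivial kernel, together with monotone derivations at the degenerate homomorphisms.

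Next I identify the degenerate homomorphisms. A degenerate monotone homomorphism $\Phi:S\to\R_+$ is constant on $\sim$-classes, so it factors through $\|\cdot\|$. It therefore equals a semiring homomorphism $\R_{>0}^d\cup\{0\}\to\R_+$. Such a map is additive and multiplicative and sends $1$ to $1$, so it is linear and multiplicative on the positive cone. This forces it to be a coordinate projection $\|\cdot\|_{(j)}$. One checks this by noting that a positive linear functional $\sum_j c_j t_j$ is multiplicative only if exactly one $c_j=1$ and the rest vanish. This is why condition (ii) ranges only over $\|\cdot\|_{(j)}$-derivations. The normalization $\Delta(u)=1$ is harmless: monotone derivations form a cone, and a nonzero monotone one has $\Delta(u)>0$ by the power-universal property. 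Interchangeable derivations give the same strict inequality because $\|x\|=\|y\|$ implies $x\sim y$.

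Then I invoke Fritz's conclusion. Strict inequalities for all test maps give (a), an element $a\neq0$ with $ax\succeq ay$ (catalytic ordering). Under the extra assumption that $x$ is power universal, they give (b), namely $x^n\succeq y^n$ for all large $n$.

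The converse is direct. If $ax\succeq ay$, then for a monotone homomorphism $\Phi$ with trivial kernel we get $\Phi(a)\Phi(x)\geq\Phi(a)\Phi(y)$ with $\Phi(a)>0$, hence $\Phi(x)\ge\Phi(y)$ (with the reversed inequality for the ${\rm op}$ targets). For a derivation, the Leibniz rule together with $\|a\|_{(j)}>0$ and $\|x\|=\|y\|$ gives $\Delta(x)\ge\Delta(y)$. For (b), $\Phi(x)^n\ge\Phi(y)^n$ gives the claim by taking roots. For derivations, $\Delta(x^n)=n\|x\|_{(j)}^{n-1}\Delta(x)$ gives the claim after dividing by the positive factor. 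The main obstacle is matching the exact form of Fritz's Theorem 8.6, in particular that its degenerate-homomorphism condition reduces to the components of the norm.
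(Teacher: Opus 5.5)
Your proposal takes the same route as the paper, which gives no separate proof and simply imports Fritz's Theorem~8.6; your hypothesis check, your identification of the degenerate monotone homomorphisms into $\R_+$ with the components $\|\cdot\|_{(j)}$, and your direct verification of the converse are all fine. One side remark is false, although it does not affect the argument because the normalization $\Delta(u)=1$ is part of the statement: a nonzero monotone derivation need not have $\Delta(u)>0$. For example, take a nonzero (necessarily non-measurable) derivation $\delta$ of the field $\R$. Then $\Delta:=\delta\circ\|\cdot\|_{(j)}$ is monotone and nonzero, yet $\Delta(u)=\delta(1)=0$, since power universality applied to $1\preceq1$ gives $1\preceq u^k$ and hence $\|u\|=(1,\ldots,1)$. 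Power universality only yields $\Delta(u)\geq0$, and $\Delta(u)=0$ forces $\Delta$ to be constant on $\sim$-classes. Such derivations can never satisfy $\Delta(x)>\Delta(y)$ for $x\sim y$, so the normalization is imposed precisely to exclude them; it is not a harmless rescaling.
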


As discussed in the Introduction, large-sample ordering as in (b) in the above theorem implies catalytic ordering as in (a), with a catalyst of the form $a=\sum_{\ell=0}^{n-1} x^\ell y^{n-1-\ell}$ for a sufficiently large $n\in\N$. This was shown in \cite{Duan2005} when $x,y$ are probability vectors, but the proof extends directly to the more abstract setting considered here. The converse implication typically does not hold; Theorem 3 of \cite{Feng_et_al_2006} provides a recipe for deriving counterexamples in a particular setting.

\section{Optimal Conversion Rate}\label{sec:optimalconversionrate}

Here we prove an expression for the optimal rate at which one can ``convert'' some $x\in S$ into another $y\in S$, where $x,y$ satisfy $x,y\succeq1$. This result can then be directly applied to the setting of conversion between tuples of quantum or classical states considered in this thesis. We call $r\geq0$ an \emph{achievable conversion rate} if $r\leq\liminf_{n\to\infty}m_n/n$ where $(m_n)_n$ is a sequence of natural numbers such that $x^n\succeq y^{m_n}$ for $n\in\N$ large enough. The \emph{optimal conversion rate} $r(x\to y)$ is defined as the supremum of all the achievable conversion rates. 

When the semiring $S$ describes tuples of states, it is more natural from an information-theoretic perspective to consider the \emph{divergences} associated with the non-degenerate monotone homomorphisms and monotone derivations of $S$, which we now define explicitly. Fix a power universal $u\in S$. For any non-degenerate homomorphism $\Phi:S\to\mb K$, $\mb K\in\{\R_+,\R_+^{\rm op},\T\R_+,\T\R_+^{\rm op}\}$, with trivial kernel, the associated divergence is
\begin{equation}
D(x)=\frac{\log{\Phi(x)}}{\log{\Phi(u)}},\qquad\forall x\in S\setminus\{0\}.
\end{equation}
For any $k\in\{1,\ldots,d\}$ and any derivation $\Delta$ at $\|\cdot\|_{(k)}$ normalized according to $\Delta(u)=1$, the corresponding divergence is
\begin{equation}
D(x)=\frac{\Delta(x)}{\|x\|_{(k)}},\qquad\forall x\in S\setminus\{0\}.
\end{equation}
Denote by $\mc D(S)$ the set of all the divergences defined in this way. We note that, whenever $x\succeq y$, then $D(x)\geq D(y)$ for all $D\in\mc D(S)$, i.e. the data-processing inequality holds. Moreover, $D(xy)=D(x)+D(y)$ for any $x,y\in S\setminus\{0\}$ and $D\in \mc D(S)$, i.e. additivity holds. 

According to Proposition 8.5 of \cite{fritz2022}, the set $\mc D(S)$ is a compact Hausdorff space with respect to the coarsest topology where the comparison maps
\begin{equation}
\mc D(S)\ni D\mapsto D(x)-D(y)\in\R,\qquad\forall x,y\in S\setminus\{0\},\ x\sim y,
\end{equation}
are continuous. Furthermore, note that the conditions (i) and (ii) of Theorem \ref{thm:Fritz2022} are now equivalent to the more concise condition that $D(x)>D(y)$ for all $D\in\mc D(S)$. In the setting of semirings describing tuples of states, i.e. $x\succeq1$, we have for any $D\in\mc D(S)$, using the data-processing inequality and additivity, that $D(x)\geq D(1)=0$. We now state and prove the following result on the optimal conversion rate, which is an immediate consequence of Theorem \ref{thm:Fritz2022}. 

\begin{proposition}\label{prop:optimalconversionrate}
    Consider a preordered semiring $S$ as in Theorem \ref{thm:Fritz2022}, with associated divergences $\mc D(S)$. Let $x,y\in S$, where $x$ is power universal, satisfy $x,y\succeq1$ and $D(x)>0$ for all $D\in\mc D(S)$. The optimal rate of conversion is given by 
    \begin{equation}\label{eq:optimalrategeneral}
        r(x\to y)=\min_{D\in\mc D(S)}\frac{D(x)}{D(y)}. 
    \end{equation}
\end{proposition}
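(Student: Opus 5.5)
We prove two inequalities, writing $r^*:=\inf_{D\in\mc D(S)} D(x)/D(y)$ (with the convention $D(x)/D(y)=+\infty$ when $D(y)=0$). First note that the infimum is attained. Each $D\mapsto D(x)$ and $D\mapsto D(y)$ is continuous on the compact space $\mc D(S)$, since $D(x)=D(x)-D(1)$ and $x\sim 1$. Moreover, $D(x)>0$ on $\mc D(S)$, so by compactness $D(x)\geq c>0$ uniformly. Hence $D\mapsto D(y)/D(x)$ is continuous and attains its maximum, and $r^*$ is its reciprocal; in particular $r^*>0$ and the minimum in \eqref{eq:optimalrategeneral} exists. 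If $D(y)=0$ for all $D$, then $r^*=\infty$, and we should check that arbitrarily large rates are achievable; the argument below handles this case too.

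\textbf{Converse ($r(x\to y)\leq r^*$).} Suppose $x^n\succeq y^{m_n}$ for all large $n$. By the data-processing inequality and additivity, $nD(x)\geq m_nD(y)$ for every $D\in\mc D(S)$. Hence $m_n/n\leq D(x)/D(y)$ whenever $D(y)>0$. Taking $\liminf$ and then the minimum over $D$ shows that every achievable rate is at most $r^*$.

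\textbf{Achievability.} Fix a rational $p/q<r^*$ with $p,q\in\N$. (If $r^*<\infty$ we may take $p/q$ arbitrarily close to $r^*$; otherwise take it arbitrarily large.) For every $D$ we then have $qD(x)>pD(y)$, because either $D(y)=0<D(x)$ or $D(x)/D(y)\geq r^*>p/q$. Additivity gives $D(x^q)>D(y^p)$ for all $D\in\mc D(S)$, which is exactly conditions (i) and (ii) of Theorem \ref{thm:Fritz2022} for the pair $x^q,y^p$. Two hypotheses must be checked. The first is $\|x^q\|=\|y^p\|$: since $x,y\succeq1$, \eqref{eq:surjectivehomomorphismproperties} gives $\|x\|=\|y\|=\|1\|$, so both norms equal $\|1\|$. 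The second is that $x^q$ is power universal, which follows from $x$ being power universal: $x^q\succeq x$ and $x^k\preceq x^{q\lceil k/q\rceil}$ up to factors $\succeq 1$, using $x\succeq 1$. Part (b) then yields $N$ with $x^{qN'}\succeq y^{pN'}$ for all $N'\geq N$.

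\textbf{Interpolation.} We extend this to all $n$. Write $n=qN'+s$ with $0\leq s<q$ and set $m_n=pN'$. Then $x^n=x^{qN'}x^s\succeq y^{pN'}\cdot 1^s=y^{m_n}$, using $x\succeq1$ and the compatibility of the preorder with multiplication. Since $m_n/n\to p/q$, the rate $p/q$ is achievable. Letting $p/q\uparrow r^*$ gives $r(x\to y)\geq r^*$, which completes the proof.

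The main obstacle I expect is verifying that the hypotheses of Theorem \ref{thm:Fritz2022} transfer to $x^q,y^p$, in particular the power universality of $x^q$, and the compactness argument ensuring that the minimum is attained and strictly positive. The remaining steps are bookkeeping.
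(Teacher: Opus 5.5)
Your proposal is correct and follows the paper's proof essentially step for step. The converse uses data processing and additivity; achievability applies part (b) of Theorem~\ref{thm:Fritz2022} to $x^q,y^p$ for rational $p/q$ below the minimum; and the interpolation writes $n=qN'+s$ and uses $x\succeq 1$. You also check $\|x^q\|=\|y^p\|$, the power universality of $x^q$, and the case $r^*=\infty$ explicitly, which the paper leaves implicit.
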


Note that the quotient $D(x)/D(y)$ above may be infinite for some $D\in\mc D(S)$. In case it is infinite for all $D\in\mc D(S)$, we consider the minimum on the right-hand side above to be infinite. In all other cases, according to the compactness result referred to previously, the minimum of these quotients exists and is finite. Indeed, it follows from $x\succeq1$ that $x\sim1$. Hence, the function $\mc D(S)\ni D\mapsto D(x)=D(x)-D(1)$ is continuous. Similarly, $D\mapsto D(y)$ is continuous. Since $D(x)>0$, the function $D\mapsto D(y)/D(x)$ is continuous. Using the compactness of $\mc D(S)$ and considering the inverse of this function it now follows that the minimum above exists. One can also show that the condition $D(x)>0$ for all $D\in\mc D(S)$ appearing in the Proposition follows from the fact that $x$ is power universal. 

\begin{proof}
    We first show that 
    \begin{equation}\label{eq:upperboundrate}
        r(x\to y)\leq\min_{D\in\mc D(S)}\frac{D(x)}{D(y)}. 
    \end{equation}
    The proof for this is completely standard and does not use the fact that the divergences in $\mc D(S)$ characterize large-sample ordering. Let $D\in\mc D(S)$, and $r$ an achievable rate, i.e. there exists a sequence of natural numbers $(m_n)_{n=1}^\infty$ such that $x^n\succeq y^{m_n}$ for $n$ large enough and $r\leq\liminf_{n\rightarrow\infty}m_n/n$. By monotonicity of $D$, we have $D(x^n)\geq D(y^{m_n})$ for $n$ large enough. Then by additivity this implies that $m_n/n\leq D(x)/D(y)$ for $n$ large enough, where we note that $D(x)/D(y)$ may be infinite. Therefore, 
    \begin{equation}
        r\leq\liminf_{n\rightarrow\infty}\frac{m_n}{n}\leq\frac{D(x)}{D(y)}. 
    \end{equation}
    By taking the supremum over all achievable rates $r$ and then the minimum over all $D\in\mc D(S)$, we arrive at \eqref{eq:upperboundrate}. 

    Next, using Theorem \ref{thm:Fritz2022}, we prove that any $r<\min_{D\in\mc D(S)}D(x)/D(y)$ is an achievable rate. Together with \eqref{eq:upperboundrate}, this implies \eqref{eq:optimalrategeneral}. Consider any $r<\min_{D\in\mc D(S)}D(x)/D(y)$ and let $p,q\in\N$ be such that 
    \begin{equation}
        r\leq\frac{p}{q}<\min_{D\in\mc D(S)}\frac{D(x)}{D(y)}. 
    \end{equation}
    From this, it follows that $qD(x)>pD(y)$ for any $D\in\mc D(S)$. Then, by additivity, 
    \begin{equation}
        D(x^q)=qD(x)>pD(y)=D(y^p). 
    \end{equation}
    Note that since $x$ is power universal and $x\succeq1$, $x^q$ is also power universal. It now follows from Theorem \ref{thm:Fritz2022} that $x^{nq}\succeq y^{np}$ for all $n$ larger than or equal to some $n_0\in\N$. Any $k\in\N$ can be written as $k=n_kq+l_k$ with unique $n_k\in\N$ and $l_k\in\{0,\ldots,q-1\}$. This allows us to define $m_k:=n_kp$. Then, using that $x\succeq1$, we have for $k\geq n_0q$, 
    \begin{equation}
        x^k=x^{n_kq+l_k}=x^{n_kq}x^{l_k}\succeq x^{n_kq}\succeq y^{n_kp}=y^{m_k}. 
    \end{equation}
    Also, 
    \begin{equation}
        \liminf_{k\rightarrow\infty}\frac{m_k}{k}=\liminf_{k\rightarrow\infty}\frac{p}{q+l_k/n_k}=\frac{p}{q}\geq r. 
    \end{equation}
    This establishes that $r$ is an achievable rate. 
\end{proof}

\chapter{Matrix Majorization}\label{cha:matrix}

This chapter is based on the author's work \cite{verhagen2025}, with an additional result on the minimality of the set of monotones characterizing large-sample and catalytic majorization in some settings (Section \ref{Msec:minimality}), and additional discussion on defining a matrix majorization semiring based on the supports of a matrix' columns (Section \ref{Msec:pusupports}). 

\section{Introduction}

{\it Statistical experiments} are a mathematical description of the distributions of outcomes from an experiment on some physical system. They are described by a family $P=\{p^\tj\}_{\tj\in\Theta}$ of probability measures or vectors $p^\tj$ on the space of outcomes of the experiment, where the index set $\Theta$ represents the different states the system can be in. To avoid redundancy, it is important to characterize when one experiment is more informative than another, i.e.,\ we need to be able to compare experiments. Due to results by Blackwell \cite{Blackwell51,Blackwell53}, the information order of experiments can be characterized in an intuitive way: An experiment $P=\{p^\tj\}_{\tj\in\Theta}$ is more informative than an experiment $Q=\{q^\tj\}_{\tj\in \Theta}$, or $P$ {\it majorizes} $Q$, if and only if there is a stochastic map $T$ such that $q^\tj=Tp^\tj$ for all $\tj\in\Theta$. In this case, we write $P\rgeq Q$.

Majorization of one experiment $P=\{p^{\tj}\}_{\tj\in\Theta}$ over another experiment $Q=\{q^\tj\}_{\tj\in\Theta}$ can also be aided by having many copies of the experiments or by catalysis. The former situation means that
\begin{equation}
\left\{\big(p^\tj\big)^{\otimes n}\right\}_{\tj\in\Theta}\rgeq\left\{\big(q^\tj\big)^{\otimes n}\right\}_{\tj\in\Theta}
\end{equation}
for sufficiently large $n\in\N$. Note that $p^{\otimes n}$ denotes the $n$-th Kronecker product power of $p$. We call this {\it majorization in large samples}. The latter situation means that there exists a third experiment $R=\{r^\tj\}_{\tj\in\Theta}$ (a catalyst) such that
\begin{equation}
\big\{p^\tj\otimes r^\tj\big\}_{\tj\in\Theta}\rgeq\big\{q^\tj\otimes r^\tj\big\}_{\tj\in\Theta}.
\end{equation}
This is called {\it catalytic majorization}. Large-sample majorization implies catalytic majorization \cite{Duan2005}, but the reverse implication is false in general \cite{Feng_et_al_2006}. However, the exact difference between large-sample and catalytic majorization is still unclear. In this work, we are particularly interested in large-sample and catalytic majorization of finite experiments \cite{Heyer1982} where $\Theta=\{1,\ldots,d\}$ is a finite set and the probability measures of all experiments are finite probability vectors. Majorization in this case is also called {\it matrix majorization} \cite{dahl99}. When we identify a finite experiment $P=\big(p^{(1)},\ldots,p^{(d)}\big)$ with a matrix with columns $p^{(k)}$, majorization $P\rgeq Q$ is equivalent to the existence of a stochastic matrix $T$ such that $Q=TP$. Note that a stochastic matrix is a matrix with non-negative entries where all the columns are probability vectors.

In the case $\Theta=\{1,2\}$, majorization in large samples (under certain regularity conditions) is characterized by inequalities involving the R\'{e}nyi divergences \cite{Mu_et_al_2020}. In \cite{farooq2024}, this result was extended to the general case $\Theta=\{1,\ldots,d\}$ for any $d\in\N$ in the case of finite experiments. According to \cite{farooq2024}, matrix majorization in large samples is characterized by multivariate generalizations of the R\'{e}nyi divergences: For any $\underline{\alpha}=(\alpha_1,\ldots,\alpha_d)\in\R^d$ such that $\max_{1\leq k\leq d}\alpha_k\neq 1$, we define the quantity $D_{\underline{\alpha}}(P)$,
\begin{equation}\label{Meq:multivarD}
D_{\underline{\alpha}}(P)=\frac{1}{\max_{1\leq k\leq d}\alpha_k-1}\log\sum_{i=1}^n \big(p^{(1)}_i\big)^{\alpha_1}\cdots\big(p^{(d)}_i\big)^{\alpha_d}, 
\end{equation}
where $P=\big(p^{(1)},\ldots,p^{(d)}\big)$ and $p^{(k)}=\big(p^{(k)}_1,\ldots,p^{(k)}_n\big)$ for $k=1,\ldots,d$. The relevant quantities are those $\underline{\alpha}\in\R^d$ with 
\begin{equation}\label{Meq:A+}
\underline{\alpha}\in A_+:=\{(\alpha_1,\ldots,\alpha_d)\in\R^d\,|\,\alpha_1+\cdots+\alpha_d=1,\ \alpha_k\geq0,\ k=1,\ldots,d\}
\end{equation}
or
\begin{equation}\label{Meq:A-}
\underline{\alpha}\in A_-:=\bigcup_{k=1}^d \{(\alpha_1,\ldots,\alpha_d)\in\R^d\,|\,\alpha_1+\cdots+\alpha_d=1,\ \alpha_k>0,\ \alpha_\ell\leq0,\ \ell\neq k\}
\end{equation}
and not equal to one of the basis vectors $e_k$, $k=1,\ldots,d$. Also, one has to consider certain pointwise limits of $D_{\underline{\alpha}}$ in the $\underline{\alpha}$-parameter space. Roughly speaking, if $P$ attains a higher value than $Q$ in all these quantities, $P$ majorizes $Q$ in large samples. For illustration of the quantities in \eqref{Meq:multivarD} and their pointwise limits in the case $d=3$, see Figure \ref{Mfig:Restricted}.

\begin{figure}
\begin{center}
\begin{overpic}[scale=0.45,unit=1mm]{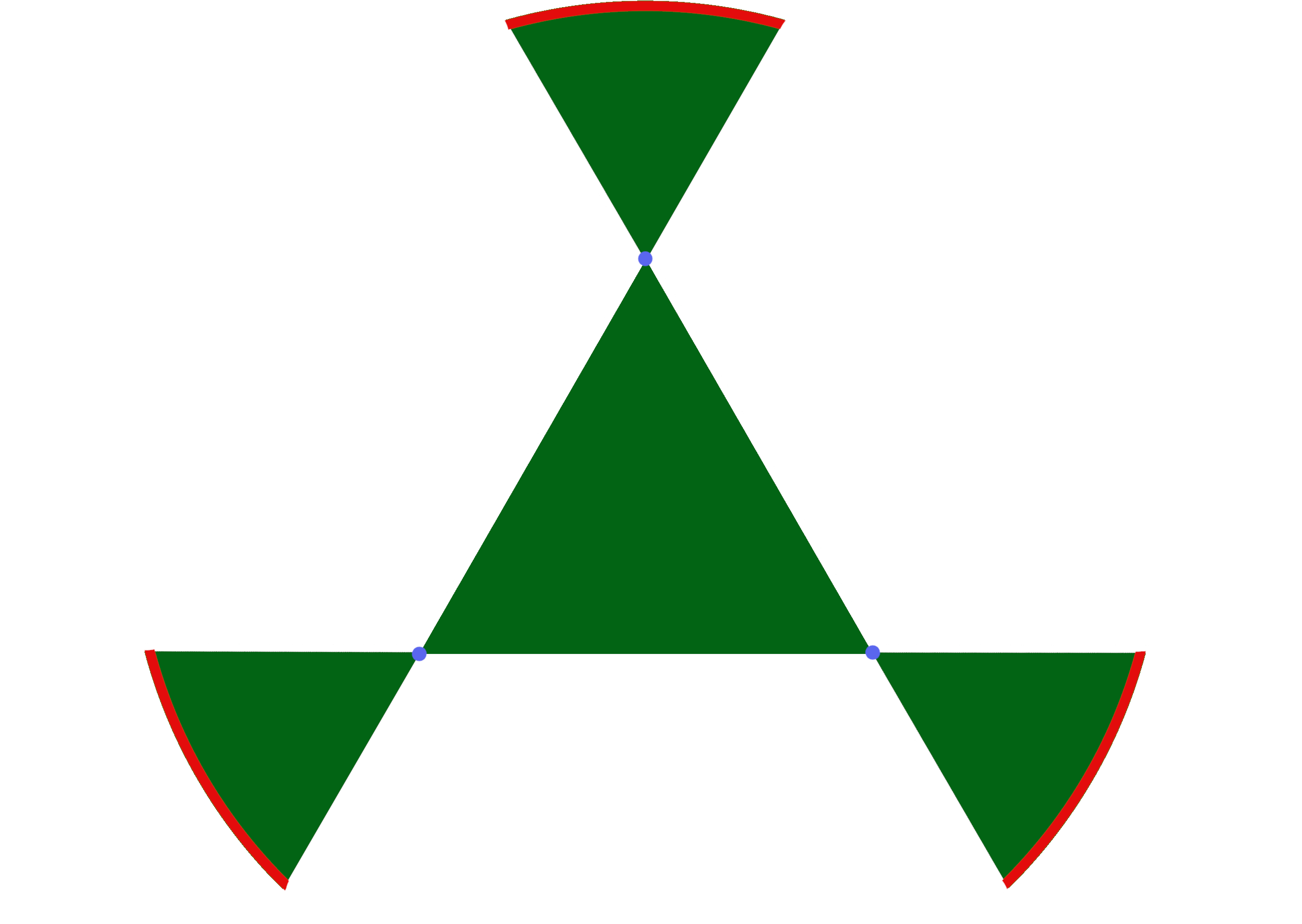}
\put(45,30){\begin{Huge}
${\color{white} A_+}$
\end{Huge}}
\put(17,12){\begin{Large}
${\color{white} A_-}$
\end{Large}}
\put(77,12){\begin{Large}
${\color{white} A_-}$
\end{Large}}
\put(47,62){\begin{Large}
${\color{white} A_-}$
\end{Large}}
\put(31,37){\begin{Large}
\darker{$D_{\underline{\alpha}}$}
\end{Large}}
\put(25,7){\begin{Large}
\darker{$D_{\underline{\alpha}}$}
\end{Large}}
\put(68,7){\begin{Large}
\darker{$D_{\underline{\alpha}}$}
\end{Large}}
\put(37,60){\begin{Large}
\darker{$D_{\underline{\alpha}}$}
\end{Large}}
\put(25,25){\begin{large}
\sininen{$\Delta^{(1)}_{\underline{\gamma}}$}
\end{large}}
\put(68,25){\begin{large}
\sininen{$\Delta^{(2)}_{\underline{\gamma}}$}
\end{large}}
\put(52,52){\begin{large}
\sininen{$\Delta^{(3)}_{\underline{\gamma}}$}
\end{large}}
\put(5,8){\begin{Large}
\punainen{$D^\T_{\underline{\beta}}$}
\end{Large}}
\put(86,8){\begin{Large}
\punainen{$D^\T_{\underline{\beta}}$}
\end{Large}}
\put(62,70){\begin{Large}
\punainen{$D^\T_{\underline{\beta}}$}
\end{Large}}
\end{overpic}
\caption{\label{Mfig:Restricted} The functions of \eqref{Meq:multivarD} together with their pointwise limits depicted in the case $d=3$. The figure is contained in the affine plane of $\R^3$ of those points whose coordinates sum up to 1. The maps of \eqref{Meq:multivarD} are associated with the areas $A_+$ and $A_-$ and are depicted in green. Note that, e.g.,\ in the lower-left part of $A_-$, we have $\alpha_1>1$ and $\alpha_2,\alpha_3\leq0$. The red arcs represent points at infinity and they correspond to the functions introduced later in \eqref{Meq:DbetaTropic}. The blue points correspond to the standard basis vectors $e_1=(1,0,0)$, $e_2=(0,1,0)$, and $e_3=(0,0,1)$. The pointwise limits of the maps in \eqref{Meq:multivarD} as $\underline{\alpha}\to e_1$ are $\Delta^{(1)}_{\underline{\gamma}}(P):=\gamma_2 D_1\big(p^{(1)}\big\|p^{(2)}\big)+\gamma_3 D_1\big(p^{(1)}\big\|p^{(3)}\big)$ with $\gamma_2,\gamma_3\geq0$ which depend on the angle of approach. Similar limits exist for $\underline{\alpha}\to e_2$ or $e_3$. Here $D_1$ is the Kullback--Leibler divergence. Interestingly, as shown in \cite{Csiszar2003}, we have $\min_{p^{(1)}}\Delta^{(1)}_{\underline{\gamma}}(P)\propto D_{\frac{\gamma_2}{\gamma_2+\gamma_3}}(p^{(2)}\|p^{(3)})$. In \cite{mosonyi2024geometric}, generalizations to larger $d$ were derived and involve the quantities $D_{\underline{\alpha}}$ in \ref{Meq:multivarD}. The situation depicted in this figure generalizes to the general $d\in\N$ case in a straightforward manner: the parameter area $A_+$ corresponds to the $(d-1)$-dimensional probability simplex and $A_-$ extends from the extreme points of the probability simplex as cones reaching to infinity.}
\end{center}
\end{figure}

Both in \cite{Mu_et_al_2020} and \cite{farooq2024} it was assumed that, within each statistical experiment, all the probability measures or vectors share the same support. In the finite case this means that, for an experiment $P=\big(p^{(1)},\ldots,p^{(d)}\big)$ consisting of probability vectors $p^{(k)}$, we have
\begin{equation}
{\rm supp}\,p^{(1)}=\cdots={\rm supp}\,p^{(d)}, 
\end{equation}
where ${\rm supp}\,p$ is the set of entries where $p$ is non-zero (the support of $p$). In this work, we consider more general support conditions. We first study the case where there are no support conditions except for the baseline
\begin{equation}\label{Meq:baselinecond}
\bigcap_{k=1}^d {\rm supp}\,p^{(k)}\neq\emptyset
\end{equation}
within each experiment $P=\big(p^{(1)},\ldots,p^{(d)}\big)$. We call this the \emph{minimal restrictions} case. Note that one could consider settings with even less restrictions by dropping the condition \eqref{Meq:baselinecond} and requiring instead that, for instance, the supports of each pair of probability vectors $p^{(k)}$, $p^{(k^\prime)}$ must have a non-empty intersection. However, this would restrict the divergences that characterize large-sample and catalytic majorization to just the bivariate R\'enyi divergences between each pair of probability vectors. Since we want to study their multivariate generalizations of the type \eqref{Meq:multivarD}, we impose slightly stronger conditions in order for these divergences to make an appearance. We will return to this issue in Remark \ref{Mrmk:minimalrestrictions}. 

The minimal restrictions case is analyzed in Section \ref{Msec:findings}, which culminates in the main results Theorem \ref{Mthm:majorization} for exact large-sample or catalytic majorization, and Theorem \ref{Mthm:genasymptocatalytic} for the asymptotic case where we consider approximate large-sample or catalytic majorization with vanishing error. 

In the second case, we study the case where the support of the final column dominates the others, i.e.,\ in addition to \eqref{Meq:baselinecond}, we require
\begin{equation}
{\rm supp}\,p^{(k)}\subseteq{\rm supp}\,p^{(d)}
\end{equation}
for $k=1,\ldots,d-1$. We call this the \emph{dominating column} case and discuss it in Section \ref{Msec:Asymmetric}. The main results here are Theorem \ref{Mthm:majorizationdominatingcolumn} for exact large-sample or catalytic majorization, and Theorem \ref{Mthm:asymptocatalytic} for asymptotic large-sample or catalytic majorization for $d=2$. 


In both the minimal restrictions and dominating column cases mentioned above, matrix majorization in large samples is characterized by inequalities involving subsets of the functions $D_{\underline{\alpha}}$. In the first case, only those $D_{\underline{\alpha}}$ where $\underline{\alpha}\in A_+$ appear in the sufficient conditions for majorization in large samples. However, on the boundary of $A_+$ some additional quantities surface. Simply put, in each lower-dimensional face of the simplex $A_+$, a set of additional quantities have to be considered. Note that the faces of $A_+$ are of the form
\begin{align}
A_C:=\{(\alpha_1,\ldots,\alpha_d)\in A_+\,|\,\alpha_k=0,\ k\in \{1,\ldots,d\}\setminus C\}
\end{align}
for some $C\subseteq\{1,\ldots,d\}$. With $C\subsetneq\{1,\ldots,d\}$, the new quantities arising from $A_C$ are essentially the original quantities when we restrict the vectors of the experiment $P=\big(p^{(1)},\ldots,p^{(d)}\big)$ onto $\bigcap_{k\in C}{\rm supp}\,p^{(k)}$. The occurrence of new quantities can be seen as a consequence of the non-commutativity of the limits $x\downarrow0$ and $\alpha\rightarrow0$: 
\begin{equation}
\lim_{x\downarrow0}x^\alpha=0\quad\text{for fixed }\alpha>0,\quad\quad\quad\lim_{\alpha\rightarrow0}x^\alpha=1\quad\text{for fixed }x>0.
\end{equation}
To be exact, the new quantities arise when ${\rm supp}\,\underline{\alpha}\subsetneq\{1,\ldots,d\}$, in which case we consider all ${\rm supp}\,\underline{\alpha}\subseteq C\subseteq\{1,\ldots,d\}$ and the new quantities are given by essentially the same formula as \eqref{Meq:multivarD} with the exception that the $i$-sum runs over only the set $\bigcap_{c\in C}{\rm supp}\,p^{(c)}$. In the case of one dominating column, the set of relevant $\underline{\alpha}$ is extended as the one-parameter families
\begin{equation}
D_{\underline{\alpha}(\lambda)},\quad \underline{\alpha}(\lambda)=(0,\ldots,0,\underbrace{1+\lambda}_{k{\rm th\ entry}},0,\ldots,0,-\lambda),\qquad\lambda\geq0,\quad k=1,\ldots,d-1,
\end{equation}
including the points at infinity where $\lambda\to\infty$ also have to be considered.

This work complements earlier work particularly in \cite{farooq2024} by filling in the missing cases that could not be dealt with in \cite{farooq2024} because of the support assumptions imposed there. The simplest complete results we obtain deals with dichotomies. By dichotomy, we mean a pair $\big(p^{(1)},p^{(2)}\big)$ of probability vectors where the second dominates the first, i.e.,\ ${\rm supp}\,p^{(1)}\subseteq{\rm supp}\,p^{(2)}$. Given dichotomies $\big(p^{(1)},p^{(2)}\big)$ and $\big(q^{(1)},q^{(2)}\big)$ with the innocuous assumption that $p^{(1)}\neq p^{(2)}$, we have that $\big(p^{(1)},p^{(2)}\big)$ majorizes $\big(q^{(1)},q^{(2)}\big)$ in large samples approximately with vanishing error if and only if
\begin{equation}
\label{Meq:dichotomiesconditionsintroduction}
D_\alpha\big(p^{(1)}\big\|p^{(2)}\big)\geq D_\alpha\big(q^{(1)}\big\|q^{(2)}\big),\qquad D_\alpha\big(p^{(2)}\big\|p^{(1)}\big)\geq D_\alpha\big(q^{(2)}\big\|q^{(1)}\big)
\end{equation}
for all $\alpha\geq 1/2$ where $D_\alpha(\cdot\|\cdot)$ are the R\'{e}nyi divergences (see \eqref{Meq:RenyiDivergences} for a definition of the R\'enyi divergences). The same conditions \eqref{Meq:dichotomiesconditionsintroduction} apply in the case of approximate catalytic majorization with vanishing error. See Corollary \ref{Mcor:asymptocatalytic} for an exact formulation of this result. We have similar results for general finite statistical experiments for $d\in\N$ (dichotomies being an example of the $d=2$ case) involving the multivariate generalizations $D_{\underline{\alpha}}$ of the R\'{e}nyi divergences in addition to the new quantities arising in the case where ${\rm supp}\,\underline{\alpha}\subsetneq\{1,\ldots,d\}$ as described earlier. 

The simplest case where both the multivariate divergences and these new quantities come into play is when we consider majorization of tuples of three probability vectors $P=(p^{(1)},p^{(2)},p^{(3)})$. In the following, we will illustrate the conditions for \emph{exact} large-sample or catalytic majorization of $P$ over another tuple $Q=(q^{(1)},q^{(2)},q^{(3)})$ in the setting of one dominating column, say the last, i.e. ${\rm supp}\,p^{(1)},{\rm supp}\,p^{(2)}\subseteq{\rm supp}\,p^{(3)}$. This is the case $d=3$ in Theorem \ref{Mthm:majorizationdominatingcolumn}. The majorization conditions can be formulated in terms of the divergences $D_{\underline{\alpha}}$ and the R\'enyi divergences $D_\alpha$. Figure \ref{Mfig:DominatingColumn} gives a depiction of the parameter vectors $\underline{\alpha}$ and parameters $\alpha$ that are applicable. The precise conditions for large-sample and catalytic majorization are: 

\begin{align}
    &D_{\underline{\alpha}}(P)>D_{\underline{\alpha}}(Q)&\forall\underline{\alpha}\in{\rm int}\,A_+,\label{Meq:d3condition1}\\
    &D_\alpha\big(p^{(1)}\|p^{(3)}\big)>D_\alpha\big(q^{(1)}\|q^{(3)}\big)&\forall\alpha\in[0,\infty],\label{Meq:d3condition2}\\
    &D_\alpha\big(p^{(1)}|_{{\rm supp}\,p^{(2)}}\|p^{(3)}|_{{\rm supp}\,p^{(2)}}\big)&\forall\alpha\in[0,1),\label{Meq:d3condition3}\\
    &\qquad\qquad>D_\alpha\big(q^{(1)}|_{{\rm supp}\,q^{(2)}}\|q^{(3)}|_{{\rm supp}\,q^{(2)}}\big)&\nonumber\\
    &D_\alpha\big(p^{(2)}\|p^{(3)}\big)>D_\alpha\big(q^{(2)}\|q^{(3)}\big)&\forall\alpha\in[0,\infty],\label{Meq:d3condition4}\\
    &D_\alpha\big(p^{(2)}|_{{\rm supp}\,p^{(1)}}\|p^{(3)}|_{{\rm supp}\,p^{(1)}}\big)&\forall\alpha\in[0,1),\label{Meq:d3condition5}\\
    &\qquad\qquad>D_\alpha\big(q^{(2)}|_{{\rm supp}\,q^{(1)}}\|q^{(3)}|_{{\rm supp}\,q^{(1)}}\big)&\nonumber\\
    &D_\alpha\big(p^{(1)}\|p^{(2)}\big)>D_\alpha\big(q^{(1)}\|q^{(2)}\big)&\forall\alpha\in[0,1/2],\label{Meq:d3condition6}\\
    &D_\alpha\big(p^{(2)}\|p^{(1)}\big)>D_\alpha\big(q^{(2)}\|q^{(1)}\big)&\forall\alpha\in[0,1/2].\label{Meq:d3condition7}
\end{align}

For the first condition \eqref{Meq:d3condition1} we consider $D_{\underline{\alpha}}$ for all parameter vectors $\underline{\alpha}$ in ${\rm int}\,A_+$, the interior of $A_+$ (defined in \eqref{Meq:A+}), i.e. all $\underline{\alpha}=(\alpha_1,\alpha_2,\alpha_3)\in\R^3$ such that $\alpha_1+\alpha_2+\alpha_3=1$ and $0<\alpha_k<1$ for $k=1,2,3$. Condition \eqref{Meq:d3condition2} is in terms of the R\'enyi divergence for $\alpha\in[0,\infty]$ between the first and third vectors. This coincides with the half-line parameterized by $(\alpha,0,1-\alpha)$ in Figure \ref{Mfig:DominatingColumn} starting from the top vertex of the triangle (corresponding to $\alpha=0$) and passing through the lower left vertex (corresponding to $\alpha=1$, i.e. the Kullback--Leibler divergence). In the limit $\alpha\rightarrow\infty$ this gives the max-divergence $D_\infty$. In condition \eqref{Meq:d3condition3}, we denote by $r|_{{\rm supp}\,s}$ the restriction of the vector $r$ to the support of $s$, i.e. $\big(r|_{{\rm supp}\,s}\big)_i=r_i$ if $i\in{\rm supp}\,s$ and $r|_{{\rm supp}\,s}=0$ otherwise. These divergences can be seen as point-wise limits of $D_{(\alpha_1,\alpha_2,\alpha_3)}$ when $\alpha_2\rightarrow0$, similarly as the R\'enyi divergence $D_0$ is the point-wise limit of $D_\alpha$ when $\alpha\rightarrow0$. Conditions \eqref{Meq:d3condition4} and \eqref{Meq:d3condition5} are similar to the preceding two, and correspond to the other half-line in Figure \ref{Mfig:DominatingColumn}. Finally, \eqref{Meq:d3condition6} and \eqref{Meq:d3condition7} correspond to the lower line segment of the triangle parameterized by $(\alpha,1-\alpha,0)$ for $\alpha\in[0,1]$. 

\begin{figure}
\vspace{5mm}
\begin{center}
\begin{overpic}[scale=0.4,unit=1mm]{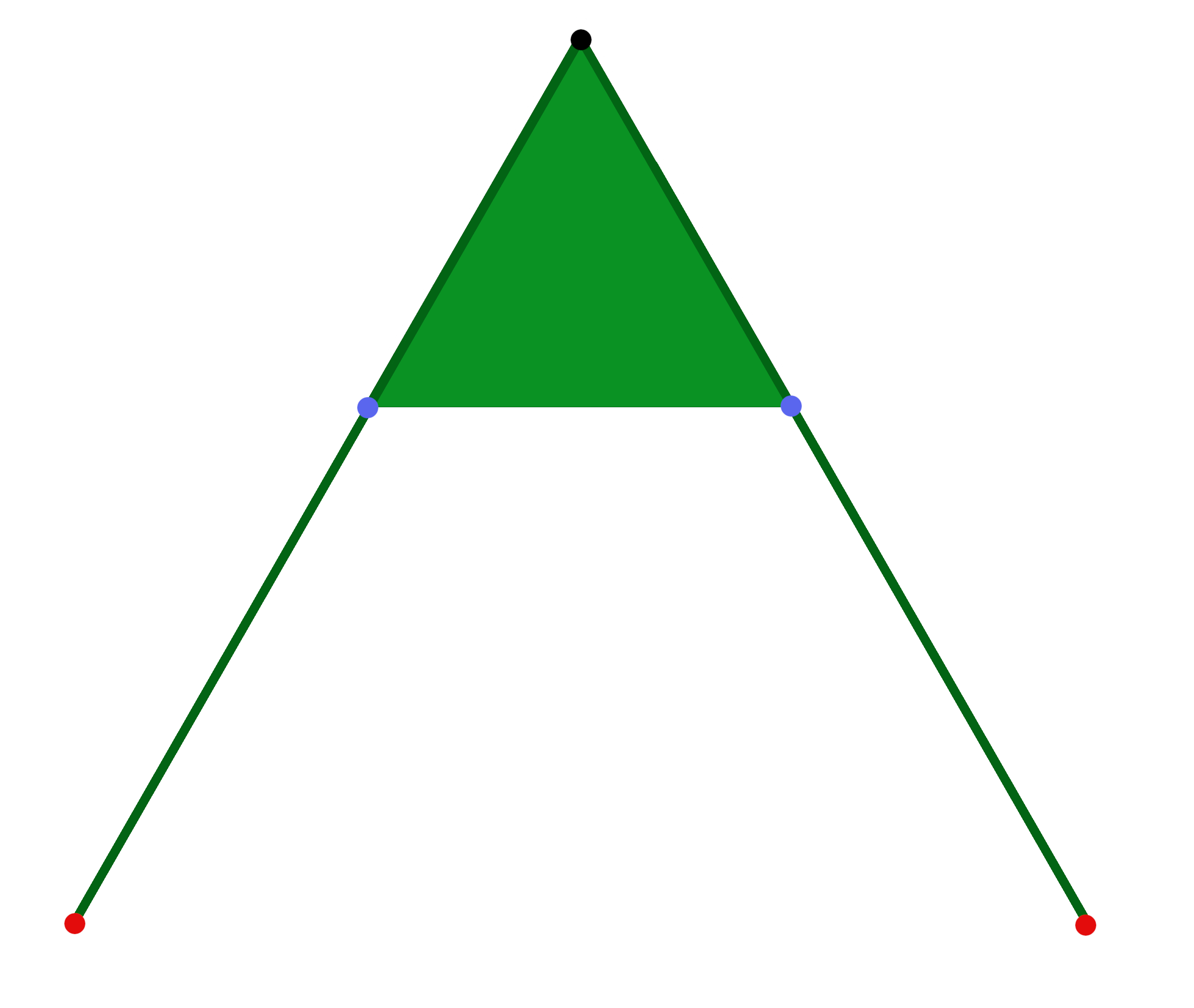}
\put(42,60){\begin{Huge}
$A_+$
\end{Huge}}
\put(-3,86){
\lighter{$D_0\big(p^{(1)}|_{{\rm supp}\,p^{(2)}}\|p^{(3)}|_{{\rm supp}\,p^{(2)}}\big),$} \darker{$D_0\big(p^{(2)}\big\|p^{(3)}\big)$, $D_0\big(p^{(1)}\big\|p^{(3)}\big)$}
}
\put(29,67){
\textbf{(a1)}
}
\put(59,67){
\textbf{(b1)}
}
\put(8,30){
\textbf{(a2)}
}
\put(80,30){
\textbf{(b2)}
}
\put(2,53){
\sininen{$D_1\big(p^{(1)}\big\|p^{(3)}\big)$}
}
\put(2,47){
\darker{$D_0\big(p^{(2)}\big\|p^{(1)}\big)$}
}
\put(70,53){
\sininen{$D_1\big(p^{(2)}\big\|p^{(3)}\big)$}
}
\put(70,47){
\darker{$D_0\big(p^{(1)}\big\|p^{(2)}\big)$}
}
\put(-12,1){
\punainen{$D_\infty\big(p^{(1)}\big\|p^{(3)}\big)$}
}
\put(78,1){
\punainen{$D_\infty\big(p^{(2)}\big\|p^{(3)}\big)$}
}
\put(0,67){
\lighter{$D_{\underline{\alpha},\{1,2,3\}}(P)$}
}
\thicklines
\put(24,68){\vector(1,-0.7){17}}
\put(35,46){
\textbf{(c1)}
}
\put(53,46){
\textbf{(c2)}
}
\end{overpic}
\caption{\label{Mfig:DominatingColumn} A subset of the set of parameter vectors that are depicted in Figure \ref{Mfig:Restricted}. These represent the family of divergences that appear in the conditions for exact large-sample and catalytic majorization for matrices $P=(p^{(1)},p^{(2)},p^{(3)})$ where the support of the last column dominates the support of the other two. The interior of the triangle $A_+$ corresponds to the trivariate divergences $D_{\underline{\alpha}}(P)$ given by \eqref{Meq:multivarD}. The two half lines beginning at $e_3$, going through $e_1$, respectively $e_2$, and extending to infinity represent the R\'enyi divergences $D_\alpha\big(p^{(1)}\|p^{(3)}\big)$ (\textbf{(a1), (a2)}), respectively $D_\alpha\big(p^{(2)}\|p^{(3)}\big)$ (\textbf{(b1), (b2)}), for $\alpha\in[0,\infty]$. The side of the triangle between $e_1$ and $e_2$ is split in half at $\underline{\alpha}=(1/2,1/2,0)$: the left half \textbf{(c1)} represents $D_\alpha\big(p^{(2)}\|p^{(1)}\big)$ for $\alpha\in[0,1/2]$ and the right half \textbf{(c2)} represents $D_\alpha\big(p^{(1)}\|p^{(2}\big)$ for $\alpha\in[0,1/2]$. Additionally, on the two sides of the triangle between $e_3$ and $e_1$, respectively $e_2$, we have the R\'enyi divergences between two columns when restricted to the support of the remaining column: $D_\alpha\big(p^{(1)}|_{{\rm supp}\,p^{(2)}}\|p^{(3)}|_{{\rm supp}\,p^{(2)}}\big)$ \textbf{(a1)}, respectively $D_\alpha\big(p^{(2)}|_{{\rm supp}\,p^{(1)}}\|p^{(3)}|_{{\rm supp}\,p^{(1)}}\big)$ \textbf{(b1)}. 
}
\end{center}
\end{figure}

At first glance, Theorems \ref{Mthm:majorization} and \ref{Mthm:majorizationdominatingcolumn} for the minimal restrictions and dominating column case, respectively, are very similar and the former seems to apply to a larger set of matrices than the latter. On top of that, the former theorem requires less inequalities to be satisfied. Therefore, one may wonder why we would need the latter theorem at all. The important thing to note here is that the strict inequalities in the conditions that the matrices $P$ and $Q$ have to satisfy force $P$ to be of a very specific form, it has to be \emph{power universal} (more on this later). In the case of Theorem \ref{Mthm:majorization}, these conditions require the support of $P=(p^{(1)},\ldots,p^{(d)})$ to satisfy the following: 
\begin{equation}
    {\rm supp}\,p^{(k)}\nsubseteq{\rm supp}\,p^{(k^\prime)}\text{ for all }k,k^\prime\in\{1,\ldots,d\}\text{ with }k\neq k^\prime. 
\end{equation}
However, for Theorem \ref{Mthm:majorizationdominatingcolumn}, $P$ has to satisfy 
\begin{equation}
    {\rm supp}\,p^{(k)}\subsetneq{\rm supp}\,p^{(d)}\text{ for all }k\in\{1,\ldots,d-1\}, 
\end{equation}
and 
\begin{equation}
    {\rm supp}\,p^{(k)}\nsubseteq{\rm supp}\,p^{(k^\prime)}\text{ for all }k,k^\prime\in\{1,\ldots,d-1\}\text{ with }k\neq k^\prime. 
\end{equation}
Therefore, both theorems apply to two disjoint types of matrices, depending on their support structure. In principle, one could derive similar theorems for matrices with different kinds of support structures, where the family of divergences that appear in the conditions of the theorem are different for each type of support structure. We will discuss this point in more detail in the Conclusion and outlook section. To understand why the type of matrices in Theorem \ref{Mthm:majorization} require less inequalities to be satisfied than those in Theorem \ref{Mthm:majorizationdominatingcolumn}, one can consider the catalytic case: both theorems guarantee the existence of a catalyst that is taken from the same set of matrices that $P$ and $Q$ are from. Hence, when this set is larger (which it is for Theorem \ref{Mthm:majorization} compared to Theorem \ref{Mthm:majorizationdominatingcolumn}), it intuitively makes sense that we require less conditions on $P$ and $Q$. 


Theorems \ref{Mthm:majorization} and \ref{Mthm:majorizationdominatingcolumn} also give an expression for the maximal achievable rate at which a tuple of probability distributions $P=(p^{(1)},\ldots,p^{(d)})$ can be transformed into another tuple $Q=(q^{(1)},\ldots,q^{(d)})$ using the same stochastic map for each distribution. This follows directly from Proposition \ref{prop:optimalconversionrate}. Denoting by $\mathcal{D}$ set of divergences associated with the monotone homomorphisms and derivations appearing in the conditions for large-sample and catalytic majorization, the maximal achievable rate can be expressed as 
\begin{equation}
r(P\to Q)=\min_{D\in\mathcal{D}}\,\frac{D(P)}{D(Q)}. 
\end{equation}

One can wonder whether the set of divergences that characterize large-sample and catalytic majorization is larger than necessary. In principle, there might be a smaller subset that, when properly ordered when evaluated on two matrices, guarantees large-sample or catalytic majorization. For the minimal restrictions case, we show that the set of divergences found is indeed minimal, in the following sense: one can define a natural topology on the set of divergences, and when any non-empty open subset is removed from this set, the remaining divergences no longer guarantee large-sample or catalytic majorization. This result is stated precisely in Theorem \ref{Mthm:minimal}. For the dominating column case, we expect a similar result to hold, but showing this analogously to the minimal restrictions case is not possible. We will discuss this issue later in Chapter \ref{cha:open}. 

One immediate application of the results obtained in this work is in the field of quantum thermodynamics. In thermodynamical equilibrium, the system is in a stable or thermal state $\gamma_\beta$ which is typically the Gibbs state of inverse temperature $\beta$ (or a KMS-state in general). A quantum channel $\Phi$ (a completely positive trace-preserving affine map on the states) is {\it Gibbs-preserving} if $\Phi(\gamma_\beta)=\gamma_\beta$. We denote $\rho\rgeq_\beta\sigma$ for quantum states $\rho$ and $\sigma$ if there is a Gibbs-preserving channel $\Phi$ such that $\Phi(\rho)=\sigma$. When we require the states $\rho$ and $\sigma$ to be diagonalizable in the eigenbasis of the thermal state $\gamma_\beta$, $\rho\rgeq_\beta\sigma$ means that
\begin{equation}
\big(\lambda^\rho,\lambda^\beta\big)\rgeq\big(\lambda^\sigma,\lambda^\beta\big), 
\end{equation}
where $\lambda^\rho$ and $\lambda^\sigma$ are the vectors of eigenvalues of $\rho$ and $\sigma$ where the eigenvalues appear in the order fixed by the thermal eigenbasis and $\lambda^\beta:=\lambda^{\gamma_\beta}$. This thermal majorization was characterized in the asymptotic catalytic regime in \cite{Brandao2014} using the results on catalytic vector majorization derived by Klimesh in \cite{klimesh2007inequalities}. To apply the vector majorization results, particular embedding methods were used in \cite{Brandao2014} approximating the thermal spectrum $\lambda^\beta$. These approximations introduce some convergence issues which were not completely taken into account in \cite{Brandao2014}, as was also pointed out in \cite{GourTomamichel2021}. Our method bypasses these issues and finally resolves the problem of catalytic thermal majorization also in the case when the states are not of full support. We may also now investigate multiple thermal majorization $(\rho_1,\ldots\rho_m)\rgeq_\beta(\sigma_1,\ldots,\sigma_m)$ i.e.,\ the existence of a Gibbs-preserving channel $\Phi$ such that $\Phi(\rho_i)=\sigma_i$ for $i=1,\ldots,m$.

This chapter is organized as follows. In Section \ref{Msec:background}, we set up basic notation and definitions, and review previous results in majorization in its various forms. In Section \ref{Msec:findings}, we define the preordered semiring for matrix majorization with minimal support conditions, identify the monotone homomorphisms and derivations relevant for large-sample and catalytic majorization, and show that the set of these monotones is minimal, in a certain sense. We partially build on earlier work in \cite{farooq2024} but see that lifting the assumption of mutual support of the columns in each matrix requires a dramatically different approach. In Section \ref{Msec:Asymmetric}, we study the case where the final column supports the others and in Sections \ref{Msec:2column} and \ref{Msec:thermal} we concentrate on the two-column case whose results we apply to the problem of thermal majorization briefly presented above. Section \ref{Msec:rates} presents results on the optimal conversion rate, and Section \ref{Msec:pusupports} touches briefly on the possibility of defining semirings with other conditions on the supports than considered in this chapter. 

\section{Background}\label{Msec:background}

\subsection{Basic definitions}
	
We follow the convention $\N=\{1,2,\ldots\}$. When $d\in\N$, we define $[d]:=\{1,\ldots,d\}$. For any $n\in\N$, we define the 1-norm in $\R^n$ as
\begin{equation}
\|(x_1,\ldots,x_n)\|_1=|x_1|+\cdots+|x_n|.
\end{equation}
Instead of full $\R$, we usually concentrate on the half-line $\R_+:=[0,\infty)$ or $\R_{>0}:=(0,\infty)$.
	
For any $n\in\N$, we say that $p=(p_1,\ldots,p_n)\in\R_+^n$ is a probability vector when $p_1+\cdots+p_n=1$, i.e.,\ $\|p\|_1=1$. Let us denote the set of all $n$-component probability vectors by $\mc P_n$. This set is convex, i.e.,\ when $p,q\in\mc P_n$ and $t\in[0,1]$, then $tp+(1-t)q\in\mc P_n$. Fix $m,n\in\N$. We say that $T:\mc P_m\to\mc P_n$ is a stochastic map if it is affine, i.e.,\ $T\big(tp+(1-t)q\big)=tTp+(1-t)Tq$ for all $p,q\in\mc P_m$ and $t\in[0,1]$. Equivalently, we may view $T$ as a linear map $T:\R^m\to\R^n$ such that $T\R_+^m\subseteq \R_+^n$ and $\|Tx\|_1=\|x\|_1$ for all $x\in\R^m$. This means that we may view $T$ as an $(n\times m)$-matrix $T=(T_{i,j})_{i,j}$ with entries in $\R_+$ such that $T_{1,j}+\cdots+T_{n,j}=1$ for all $j=1,\ldots,m$. We say that $T$ is bistochastic if also $T_{i,1}+\cdots+T_{i,m}=1$ for all $i=1,\ldots,n$.
	
\subsection{Majorization}
	
Majorization is a word commonly used when an object (often a probability measure or vector or a collection of them) can be reached from another object using a map of a relevant type (often a stochastic map). The arguably simplest notion of majorization deals with finite probability vectors. This is why we call this type of majorization {\it vector majorization}. In this mode of majorization, we say that a probability vector $p=(p_1,\ldots,p_n)$ majorizes another probability vector $q=(q_1,\ldots,q_n)$ if
\begin{equation}
\sum_{i=1}^k p_i^\downarrow\geq\sum_{i=1}^k q_i^\downarrow,\qquad k=1,\ldots,n-1,
\end{equation}
where $p^\downarrow=(p_1^\downarrow,\ldots,p_n^\downarrow)$ and $q^\downarrow=(q_1^\downarrow,\ldots,q_n^\downarrow)$ are $p$ and $q$ with their entries ordered in a non-increasing way. We write $p\rgeq q$ in this situation. This is equivalent with the existence of a bistochastic map $T$ such that $Tp=q$. A stochastic map $T$ is also bistochastic if and only if for the uniform probability vector $u=(1/n,\ldots,1/n)$, we have $Tu=u$. We subsequently generalize this notion of majorization in a straightforward manner.
	
\begin{definition}\label{Mdef:matrixmaj}
Fix $d\in\N$ and consider collections $P=\left(p^{(1)},\ldots,p^{(d)}\right)$ and $Q=\left(q^{(1)},\ldots,q^{(d)}\right)$ of probability vectors. We say that $P$ majorizes $Q$ and denote $P\rgeq Q$ if there is a stochastic map $T$ such that $Tp^{(k)}=q^{(k)}$ for $k=1,\ldots,d$. We call this form of majorization {\it matrix majorization}.
\end{definition}
	
We immediately notice that the preceding vector majorization is an example of this matrix majorization by noting that a probability vector $p=(p_1,\ldots,p_n)$ majorizes another probability vector $q=(q_1,\ldots,q_n)$ if $(p,u)\rgeq(q,u)$ where $u=(1/n,\ldots,1/n)$. Just like matrix majorization, also vector majorization can be studied in the large-sample and catalytic regimes.

Also the special case $d=2$ of matrix majorization is of particular interest. This is called {\it relative majorization}. Relative majorization is often restricted to the case of {\it dichotomies}, i.e.,\ pairs $(p,q)$ of probability measures or vectors where $q$ dominates $p$ (${\rm supp}\,p\subseteq{\rm supp}\,q$), or, symbolically, $p\ll q$.
	
\subsection{Earlier results}

The field of vector majorization is probably the best known mode of majorization. In \cite{klimesh2007inequalities}, necessary and sufficient conditions for catalytic vector majorization were derived. These conditions can be written using the {\it R\'{e}nyi divergences} $D_\alpha(\cdot\|\cdot)$ \cite{Renyi61}, $0\leq\alpha\leq\infty$, defined on any pair $(p,q)$ of probability measures with supports ${\rm supp}\,p=:I$ and ${\rm supp}\,q=:J$,
\begin{equation}\label{Meq:RenyiDivergences}
D_\alpha(p\|q)=\left\{\begin{array}{ll}
-\log{\sum_{i\in I\cap J}q_i}&\text{if}\ \alpha=0,\ {\rm and}\ I\cap J\neq\emptyset\\
\frac{1}{\alpha-1}\log{\sum_{i\in I\cap J}p_i^\alpha q_i^{1-\alpha}}&\text{if}\ \alpha\in(0,1),\ {\rm and}\ I\cap J\neq\emptyset\\
\sum_{i\in I}p_i\log{\frac{p_i}{q_i}}&\text{if}\ \alpha=1,\ {\rm and}\ I\subseteq J\\
\frac{1}{\alpha-1}\log{\sum_{i\in I}p_i^\alpha q_i^{1-\alpha}}&\text{if}\ \alpha\in(1,\infty),\ {\rm and}\ I\subseteq J\\
\max_{i\in I}\log{\frac{p_i}{q_i}}&\text{if}\ \alpha=\infty\ {\rm and}\ I\subseteq J\\
\infty&{\rm otherwise}
\end{array}\right.,
\end{equation}
in the following way: For probability vectors $p$ and $q$ such that $p\neq q$, there is a catalyst probability vector $r$ such that $p\otimes r\rgeq q\otimes r$ if and only if, denoting the uniform distribution on ${\rm supp}\,p\cup{\rm supp}\,q$ by $u$,
\begin{equation}\label{Meq:KlimeshCond}
D_\alpha(p\|u)>D_\alpha(q\|u),\quad D_\alpha(u\|p)>D_\alpha(u\|q),\qquad\forall \alpha\in[1/2,\infty).
\end{equation}

Conditions for vector majorization in large samples were derived in \cite{Jensen2019} which were further extended in \cite{farooq2024}. The latter extended result can be stated as follows: given probability vectors $p$ and $q$, there is $n\in\N$ such that $p^{\otimes n}\rgeq q^{\otimes n}$ if conditions \eqref{Meq:KlimeshCond} hold including the case $\alpha=\infty$.

In \cite{AubrunNechita2007}, asymptotic large-sample and catalytic vector majorization were studied. It was found that, given again probability vectors $p$ and $q$, the following conditions are equivalent: 
\begin{itemize}
\item[(i)] For all $\varepsilon>0$, there is $n_\varepsilon\in\N$ and a probability vector $q_\varepsilon$ such that $\|q-q_\varepsilon\|_1\leq\varepsilon$ and $p^{\otimes n}\rgeq q_\varepsilon^{\otimes n}$ for all $n\geq n_\varepsilon$.
\item[(ii)] For all $\varepsilon>0$, there are probability vectors $q_\varepsilon$ and $r_\varepsilon$ such that $\|q-q_\varepsilon\|_1\leq\varepsilon$ and $p\otimes r_\varepsilon\rgeq q_\varepsilon\otimes r_\varepsilon$.
\item[(iii)] $\|p\|_\alpha\geq \|q\|_\alpha$ for all $\alpha\geq1$, where
\begin{equation}
\|(p_1,\ldots,p_n)\|_\alpha:=\left(p_1^\alpha+\cdots+p_n^\alpha\right)^{1/\alpha}.
\end{equation}
\end{itemize}
Also this result was derived using real-algebraic methods in \cite{farooq2024}.

The problem of relative majorization in large samples was studied in \cite{Mu_et_al_2020}. This work studied dichotomies $(p,q)$ of probability measures over a standard Borel space $(X,\mc A)$ where $p\sim q$, i.e.,\ $p\ll q$ and $q\ll p$. In this case, the relevant R\'{e}nyi divergences $D_\alpha$, $\alpha\in[1/2,\infty]$ are given by
\begin{equation}
D_\alpha(p\|q)=\left\{\begin{array}{ll}
\frac{1}{\alpha-1}\int_X \frac{dp}{dq}(x)^\alpha\,dq(x)&\text{if}\ \alpha\in[1/2,\infty)\setminus\{1\}\\
\int_X \log{\frac{dp}{dq}(x)}\,dq(x)&\text{if}\ \alpha=1\\
q\text{-}\esssup_{x\in X}\log{\frac{dp}{dq}(x)}&\text{if}\ \alpha=\infty
\end{array}\right..
\end{equation}
It was found in \cite{Mu_et_al_2020} that a dichotomy $\big(p^{(1)},p^{(2)}\big)$, where $p^{(1)}\sim p^{(2)}$, majorizes $\big(q^{(1)},q^{(2)}\big)$, where $q^{(1)}\sim q^{(2)}$, if for all $\alpha\in[1/2,\infty]$, 
\begin{equation}
D_\alpha\big(p^{(1)}\big\|p^{(2)}\big)>D_\alpha\big(q^{(1)}\big\|q^{(2)}\big),\  D_\alpha\big(p^{(2)}\big\|p^{(1)}\big)>D_\alpha\big(q^{(2)}\big\|q^{(1)}\big).
\end{equation}

Large-sample and catalytic matrix majorization in the case of matrices (finite experiments) $P=\big(p^{(1)},\ldots,p^{(d)}\big)$ where the supports ${\rm supp}\,p^{(k)}$, $k=1,\ldots,d$, coincide was characterized in \cite{farooq2024}. For these results, let us define $A_+$ and $A_-$ as in \eqref{Meq:A+} and \eqref{Meq:A-}, and
\begin{align}
&B:=\{(\beta_1,\ldots,\beta_d)\in\R^d\,|\,\beta_1+\cdots+\beta_d=0\},\label{Meq:B}\\ 
&B_-:=\bigcup_{k=1}^d\{(\beta_1,\ldots,\beta_d)\in B\,|\,\beta_k>0,\ \beta_\ell\leq 0,\ \ell\neq k\}.\label{Meq:B-}
\end{align}
We also define the natural basis vectors $e_k:=(\delta_{\ell,k})_{\ell=1}^d$. We define the maps $D_{\underline{\alpha}}$ and $D_{\underline{\beta}}$ through
\begin{align}
D_{\underline{\alpha}}(P)&=\frac{1}{\max_{1\leq k\leq d}\alpha_k-1}\log{\sum_{i=1}^n \prod_{k=1}^d \big(p^{(k)}_i\big)^{\alpha_k}},\label{Meq:DalphaTemperate}\\
D_{\underline{\beta}}^\T(P)&=\frac{1}{\max_{1\leq k\leq d}\beta_k}\log{\max_{1\leq i\leq n}\prod_{i=1}^n \big(p^{(k)}_i\big)^{\beta_k}}\label{Meq:DbetaTropic},
\end{align}
for all for $P=\big(p^{(1)},\ldots,p^{(d)}\big)$ where $p^{(k)}=\big(p^{(k)}_1,\ldots,p^{(k)}_n\big)$, $k=1,\ldots,d$. The areas $A_+$ and $A_-$ together with the functions $D_{\underline{\alpha}}$ and $D^\T_{\underline{\beta}}$ are depicted in Figure \ref{Mfig:Restricted}. It follows that, if
\begin{equation}
\begin{array}{rcll}
D_{\underline{\alpha}}(P)&>&D_{\underline{\alpha}}(Q),&\forall\underline{\alpha}\in A_+\cup A_-\setminus\{e_1,\ldots,e_d\},\\
D_{\underline{\beta}}^\T(P)&>&D_{\underline{\beta}}^\T(Q),&\forall\underline{\beta}\in B_-,\\
D_1\big(p^{(k)}\big\|p^{(\ell)}\big)&>&D_1\big(q^{(k)}\big\|q^{(\ell)}\big),&\forall k,\ell\in\{1,\ldots,d\},\ k\neq\ell,
\end{array}
\end{equation}
then $P=\big(p^{(1)},\ldots,p^{(d)}\big)$ majorizes $Q=\big(q^{(1)},\ldots,q^{(d)}\big)$ in large samples. We also recover the earlier result of \cite{Mu_et_al_2020} for dichotomies of probability vectors as a special case. Also asymptotic large-sample and catalytic majorization can be characterized with similar non-strict inequalities involving $D_{\underline{\alpha}}$ with $\underline{\alpha}\in A_+\cup A_-\setminus\{e_1,\ldots,e_d\}$.

The above result crucially assumes that the columns of each matrix share a common support. In this work, we remove this condition and open the possibility for different support restrictions. This requires modifying the methodology used in \cite{farooq2024}. 

\section{The minimal restrictions case}\label{Msec:findings}

In order to derive conditions for large-sample majorization with varying support conditions for the columns of the matrices, we apply Theorem \ref{thm:Fritz2022}. To this end, we need to define semirings preordered by the extension of the matrix majorization ordering $\rgeq$ defined in Definition \ref{Mdef:matrixmaj} and find their monotone homomorphisms and derivations into each of $\R_+$, $\R_+^{\rm op}$, $\T\R_+$, and $\T\R_+^{\rm op}$. In this section, we look at the case where the only condition for the columns of the matrices is that their supports have a non-empty intersection (or they are all zero columns). 
	
\subsection{Minimal restrictions semiring}

We now define the semiring to which we will apply the Vergleichsstellensatz in Theorem \ref{thm:Fritz2022} in order to obtain large-sample majorization results for matrices whose columns need not be mutually dominating. Fix an integer $d\geq2$, which is the number of probability distributions in a tuple. For any integer $n\geq1$, write
\begin{align}
\V_n^d&:=\Bigg\{\,\left(p^{(1)},\ldots,p^{(d)}\right)\in\left(\R_+^n\right)^d\\
&\qquad\qquad\qquad\Bigg|\;\bigcap_{k=1}^d{\rm supp}\,p^{(k)}\neq\emptyset\textrm{ or }p^{(k)}=(0,\ldots,0)\;\forall k\in[d]\,\Bigg\},\\
\V^d&:=\bigcup_{n=1}^\infty \V_n^d. 
\end{align}
For any $P=(p^{(1)},\ldots,p^{(d)})\in \V_n^d$ we write $p^{(k)}=(p_1^{(k)},\ldots,p_n^{(k)})$ for $k\in[d]$ and view $P$ as an $n\times d$-matrix. 

We define two binary operations $\boxplus,\boxtimes$ on matrices in $\V^d$. Let $P,Q\in\V^d$. Then 
\begin{equation}
P\boxplus Q:=(p^{(1)}\oplus q^{(1)},\ldots,p^{(d)}\oplus q^{(d)}), 
\end{equation}
where $\oplus$ denotes the concatenation of two vectors. One can view this as stacking the matrices $P$ and $Q$: 
\begin{equation}
P\boxplus Q=\left(\frac{P}{Q}\right). 
\end{equation}
Also, we define 
\begin{equation}
P\boxtimes Q:=(p^{(1)}\otimes q^{(1)},\ldots,p^{(d)}\otimes q^{(d)}), 
\end{equation}
where $\otimes$ is the Kronecker product of two vectors. 

Next, we define a preorder $\rgeq$ on $\V^d$ as follows: for $P,Q\in\V^d$, we declare $P\rgeq Q$ if there exists a stochastic matrix $T$ such that $TP=Q$. 

The set $\V^d$ together with the operations $\boxplus$, $\boxtimes$ and the preorder $\rgeq$ do not yet form a preordered semiring. For instance, $P\boxplus Q$ and $Q\boxplus P$ have the same rows, but possibly in a different order. We now define an equivalence relation $\approx$ on $\V^d$ such that it becomes a preordered semiring together with the induced operations and preorder. Write $P\approx Q$ when one matrix can be obtained from the other by permuting the rows and adding rows of zeros $(0,\ldots,0)$. Denote by $[P]$ the equivalence class of $P\in\V^d$. The operations $\boxplus$ and $\boxtimes$ induce well-defined operations on $\V^d/\approx$ that are also commutative, associative, and distributive. The preorder $\rgeq$ on $\V^d$ induces a well-defined preorder on $\V^d/\approx$ which we still denote by $\rgeq$. Defining $0:=[(0,\ldots,0)]$, $1:=[(1,\ldots,1)]$, one checks that $S^d:=(\V^d/\approx,\boxplus,\boxtimes,0,1,\rgeq)$ is a preordered semiring, which we call the \emph{minimal restrictions matrix majorization semiring of length} $d$. 

In the rest of this chapter, we will mostly denote any element $[P]\in\V^d/\approx$ by just one representative matrix $P\in\V^d$, and do calculations with this matrix while freely changing the ordering of the rows and adding or removing rows of zeros. 

Let $P=(p^{(1)},\ldots,p^{(d)}),Q=(q^{(1)},\ldots,q^{(d)})\in\V_n^d$. Since a stochastic map preserves the $1$-norm of vectors, one sees that $P\rleq Q$ implies $\|p^{(k)}\|_1=\|q^{(k)}\|_1$ for all $k\in[d]$. Next, assume $\|p^{(k)}\|_1=\|q^{(k)}\|_1$ for all $k\in[d]$, and let $T$ be the stochastic map that takes the $1$-norm of every column. Then $TP=(\|p^{(1)}\|_1,\ldots,\|p^{(d)}\|_1)=TQ$, hence $P\sim Q$. We conclude that the conditions on $\|\cdot\|:S^d\rightarrow\R_{>0}^d\cup\{0\}$ in Theorem \ref{thm:Fritz2022} are met when we choose $\|P\|:=(\|p^{(1)}\|_1,\ldots,\|p^{(d)}\|_1)$ for $P\in\V^d$, which is clearly a surjective homomorphism. Later, in subsection \ref{Msubsec:poweruniversalnorestr}, we will show that $S^d$ is of polynomial growth by characterizing the power universals of $S^d$.
	
\subsection{Monotone homomorphisms}
\label{Msubsec:monotonehomomorphisms}
	
In this section, we characterize all monotone homomorphisms $\Phi:S^d\to\mathbb{K}$, where $\mathbb{K}\in\{\R_+,\R_+^{\rm op},\T\R_+,\T\R_+^{\rm op}\}$. It will turn out that only for $\mathbb{K}=\R_+^{\rm op}$ there exist non-degenerate $\Phi$. 

Define the one-row function $\fii:\R_{>0}^d\cup\{(0,\ldots,0)\}\to\mathbb{K}$, where $\fii(x_1,\ldots,x_d):=\Phi(x_1\,\cdots\,x_d)$. Note that $\fii$ is only defined on those rows of a matrix $P\in S^d$ that contain only positive entries, and the row containing only zeros. Exactly as in the proof of Proposition 13 in \cite{farooq2024}, we can show that 
\begin{equation}
\label{Meq:singlerowfunction}
\fii(x_1,\ldots,x_d)=x_1^{\alpha_1}\cdots x_d^{\alpha_d}, 
\end{equation}
where
\begin{equation}
\underline{\alpha}=(\alpha_1,\ldots,\alpha_d)\in\left\{\begin{array}{ll}
A_-&\text{if}\quad\mathbb{K}=\R_+\\
A_+&\text{if}\quad\mathbb{K}=\R_+^{\rm op}\\
B_-&\text{if}\quad\mathbb{K}=\T\R_+\\
B&\text{if}\quad\mathbb{K}=\T\R_+^{\rm op}
\end{array}\right..
\end{equation}
Here, the sets $A_+,A_-,B,B_-$ are as in \eqref{Meq:A+}, \eqref{Meq:A-}, \eqref{Meq:B}, \eqref{Meq:B-}, respectively. Note that any $P\in S^d$, $P\neq0$, has at least one row with all positive entries, since the columns' supports have a non-empty intersection. We will say that $\Phi$ \emph{is associated with} $\underline{\alpha}$. 

In \cite{farooq2024}, where only matrices $P$ with common support were considered, it was possible to break down $P$ into a $\boxplus$-sum of its rows which can all be viewed as elements of the semiring because each row contains only positive entries, or contains only zeros. In that case, denoting the rows of a matrix $P$ by $p_i$, we may write, for any monotone homomorphism $\Phi$ with codomain $\mathbb{K}$,
\begin{equation}
\Phi(P)=\left\{\begin{array}{ll}
\sum_i \fii(p_i)&\text{if}\quad\mathbb{K}\in\{\R_+,\R_+^{\rm op}\}\\
\max_i \fii(p_i)&\text{if}\quad\mathbb{K}\in\{\T\R_+,\T\R_+^{\rm op}\}
\end{array}\right..
\end{equation}
Thus, it is enough to characterize this one-row function $\fii$ in the common support case. In the semiring we are considering now, this no longer works because a row with both vanishing and positive entries cannot be seen as an element of our semiring; recall that the columns of any matrix in our semiring must have a non-empty intersection of their supports. This is why we have to consider the value of a homomorphism at matrices with a full-support row and different types of rows with vanishing entries. However, it turns out that we may characterize the monotone homomorphisms considering its values at only two-row matrices where the first row has full support and the second has some vanishing entries. 


We first prove a useful result for the monotone homomorphisms. For any $B\subseteq[d]$ we denote by $(\;B\;)$ the row of length $d$ whose $k$-th entry is 1 if $k\in B$ and otherwise the entry is 0. For any monotone homomorphism $\Phi:S^d\to\mathbb{K}$ for $\mathbb{K}\in\{\R_+,\R_+^{\rm op},\T\R_+,\T\R_+^{\rm op}\}$ denote
\begin{equation}
\Phi[B]:=\Phi\left(\begin{array}{ccc}
1&\cdots&1\\
&B&
\end{array}\right). 
\end{equation}

\begin{lemma}
\label{Mlem:1or2}
Let $B\subseteq[d]$, and $\Phi:S^d\to\mathbb{K}$ a monotone homomorphism for some $\mathbb{K}\in\{\R_+,\R_+^{\rm op},\T\R_+,\T\R_+^{\rm op}\}$. Then the following holds: 
\begin{enumerate}[(i)]
\item We have 
\begin{equation}
\label{Meq:1or2}
\Phi[B]=\left\{\begin{array}{ll}
1\text{ or }2&\text{if}\quad\mathbb{K}\in\{\R_+,\R_+^{\rm op}\}\\
1&\text{if}\quad\mathbb{K}\in\{\T\R_+,\T\R_+^{\rm op}\}
\end{array}\right.. 
\end{equation}
\item Suppose that $\mathbb{K}\in\{\R_+,\R_+^{\rm op}\}$ and that $\Phi$ is associated with $\underline{\alpha}=(\alpha_1,\ldots,\alpha_d)\in A_+\cup A_-$. If $\Phi[B]$ equals $2$, then $\alpha_k=0$ for all $k\notin B$. 
\end{enumerate}		
\end{lemma}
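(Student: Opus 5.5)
The plan is to study the whole one-parameter family
\[
f(t):=\Phi\left(\begin{array}{ccc}1&\cdots&1\\ &tB&\end{array}\right),\qquad t>0,
\]
where $tB$ denotes the row $t\cdot(\;B\;)$, so that $\Phi[B]=f(1)$. I will derive two functional equations for $f$ from row operations that are invertible by stochastic maps. Since both directions of such an operation are available, monotonicity of $\Phi$ gives exact equalities. Two facts are used throughout.
\begin{enumerate}[(1)]
\item Merging rows $x,y$ into $x+y$, and splitting a row $x$ into $\lambda x,(1-\lambda)x$, are mutually inverse stochastic operations. Hence they do not change $\Phi$.
\item On a constant row, \eqref{Meq:singlerowfunction} gives $\Phi(c,\ldots,c)=c^{\sum_k\alpha_k}$. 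This equals $c$ in the temperate case, where $\sum_k\alpha_k=1$, and equals $1$ in the tropical case, where $\sum_k\alpha_k=0$. More generally, $\Phi(cX)=c\,\Phi(X)$, respectively $\Phi(cX)=\Phi(X)$, where $cX$ means multiplying the matrix by the constant row $(c,\ldots,c)$ via $\boxtimes$.
\end{enumerate}

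\textbf{Multiplicative equation.} The rows of
\[
\left(\begin{array}{c}1\\ sB\end{array}\right)\boxtimes\left(\begin{array}{c}1\\ tB\end{array}\right)
\]
are $1$, $tB$, $sB$ and $stB$, since $B\cdot B=B$ entrywise. Merging the last three rows shows this product is $\sim$-equivalent, in both directions, to the matrix with rows $1$ and $(s+t+st)B$. Therefore
\[
f(s)f(t)=f(s+t+st).
\]
Equivalently, $h(x):=f(x-1)$ is multiplicative on $(1,\infty)$.

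\textbf{Additive equation.} Splitting the full row of the matrix with rows $1,sB,tB$ into two halves gives
\[
\left(\begin{array}{c}1\\ sB\\ tB\end{array}\right)\sim\left(\begin{array}{c}\tfrac12\cdot 1\\ sB\end{array}\right)\boxplus\left(\begin{array}{c}\tfrac12\cdot 1\\ tB\end{array}\right)=\tfrac12\left(\begin{array}{c}1\\ 2sB\end{array}\right)\boxplus\tfrac12\left(\begin{array}{c}1\\ 2tB\end{array}\right).
\]
The left-hand side is also $\sim$ the matrix with rows $1$ and $(s+t)B$, by merging. Using fact (2), this yields two equations according to the codomain.
\begin{itemize}
\item Temperate case: $f(s+t)=\tfrac12 f(2s)+\tfrac12 f(2t)$, which is Jensen's equation.
\item Tropical case: $f(s+t)=\max\{f(2s),f(2t)\}$.
\end{itemize}

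\textbf{Regularity from a one-sided bound.} To solve these equations I need one-sided boundedness of $f$ on some interval. Merging the two rows of the matrix defining $f(t)$ gives a single positive row $1+t\,1_B$. Hence the matrix dominates this row in the majorization order, so $f(t)$ is compared one-sidedly with
\[
\fii(1+t\,1_B)=(1+t)^{\sum_{k\in B}\alpha_k},
\]
which is locally bounded. In $\R_+$ this bounds $f(t)$ above, and in $\R_+^{\rm op}$ it bounds $f(t)$ below. A Jensen solution bounded on one side on an interval is affine, so in the temperate case $f(t)=at+b$.

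\textbf{Solving in the temperate case.} Inserting $f(t)=at+b$ into the multiplicative equation forces $a^2=a$, $a(b-a)=0$ and $(b-a)^2=b-a$. The solutions are:
\begin{itemize}
\item $a=0$, $b\in\{0,1\}$;
\item $a=b=1$, that is, $f(t)=1+t$.
\end{itemize}
The case $f\equiv0$ must be excluded. I expect this to follow from the trivial-kernel or multiplicativity considerations, and otherwise from the fact that the bound above is strictly positive in the $\R_+^{\rm op}$ case. The remaining solutions give $f(1)\in\{1,2\}$.

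\textbf{Solving in the tropical case.} The max-equation together with multiplicativity should force $f$ to be constant. This is the part I am least sure of: I would first try to show that $f$ is constant on a ray using the one-sided bound, and then use $f(s)^2=f(2s+s^2)$ to conclude that $f\equiv1$.

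\textbf{Part (ii).} If $\Phi[B]=2$, then $f(t)=1+t$ for all $t>0$. Comparing, in the appropriate direction of monotonicity, with the merged row gives the inequality $1+t\gtrless(1+t)^{\sum_{k\in B}\alpha_k}$ for all $t>0$. For $\underline{\alpha}\in A_+$, the condition $\alpha_k\ge0$ together with letting $t\to\infty$ or $t\to0$ forces $\sum_{k\in B}\alpha_k=1$. Hence $\alpha_k=0$ for all $k\notin B$. For $\underline{\alpha}\in A_-$, in $\R_+$, I would use the opposite inequality. There the negative entries off $B$ make the exponent comparison fail unless those entries vanish.

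\textbf{Main obstacle.} I expect the hardest step to be justifying the regularity of $f$, that is, getting the one-sided bound in the correct direction for each $\mathbb K$, together with the tropical case.
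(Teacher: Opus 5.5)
\textbf{There are genuine gaps in the tropical case, in excluding $\Phi[B]=0$, and in part (ii) for $A_-$.}

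Your merged-row comparison has the wrong orientation, and two later steps depend on it. Let $M_t$ be the matrix with rows $(1,\ldots,1)$ and $t\,(\;B\;)$, and put $\sigma:=\sum_{k\in B}\alpha_k$. Merging the two rows shows $M_t\succeq(1,\ldots,1)+t\,(\;B\;)$. Hence $f(t)\geq(1+t)^{\sigma}$ in $\R_+$, and $f(t)\leq(1+t)^{\sigma}$ in $\R_+^{\rm op}$.
\begin{enumerate}
\item Your Jensen step survives, since $f\geq0$ alone is a one-sided bound.
\item Nothing now excludes $f\equiv0$ in $\R_+^{\rm op}$ or $\T\R_+^{\rm op}$. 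It solves both of your functional equations and the upper bound, and the lemma does not assume a trivial kernel. So $\Phi[B]=0$ is left open.
\item In (ii), the case $\underline{\alpha}\in A_-$ is the $\R_+$ case. There the correct inequality $1+t\geq(1+t)^{\sigma}$ only gives $\sigma\leq1$. This is vacuous when the unique positive coordinate of $\underline{\alpha}$ lies outside $B$ (then $\sigma\leq0$), so that case is never ruled out. The $A_+$ case does work once the orientation is fixed.
\item The tropical half of (i) is not proved at all. Invoking the one-sided bound and $f(s)^2=f(2s+s^2)$ is not an argument.
\end{enumerate}

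The missing idea is to use exact semiring identities that contain extra full rows $(1,\ldots,1)$; these supply the normalization and remove any need for functional-equation theory. The paper notes that $M_1^{\boxtimes2}\boxplus(1,\ldots,1)\boxplus(1,\ldots,1)$ and $M_1\boxplus M_1\boxplus M_1$ have literally the same rows. Without using monotonicity at all, this gives $\Phi[B]^2+2=3\Phi[B]$, resp.\ $\max\{\Phi[B]^2,1\}=\Phi[B]$, which settles (i) at once, zero included.

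For (ii), let $x$ be the row equal to $2$ on $B$ and $1$ off $B$. Merging or splitting the two copies of $(\;B\;)$ gives $M_1\boxplus M_1\boxplus(x)\sim(x\boxtimes M_1)\boxplus(1,\ldots,1)\boxplus(1,\ldots,1)$. Hence $4+2^{\sigma}=2\cdot2^{\sigma}+2$, i.e.\ $\sigma=1$, uniformly for $A_+$ and $A_-$.

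If you want to keep your own framework, three repairs are needed:
\begin{enumerate}
\item Take $s=0$ in your splitting relation, using $f(0)=1$. This gives $f(t)=\tfrac12+\tfrac12 f(2t)$, resp.\ $f(t)=\max\{1,f(2t)\}$, which forces $b=1$, resp.\ $f\geq1$.
\item Show that $f(\tfrac{u+v}{2})=\max\{f(u),f(v)\}$ makes $f$ non-decreasing and constant on every $[x,3x)$, hence constant. With $f\geq1$ and multiplicativity this gives $f\equiv1$.
\item For (ii), let $x_c$ be the row equal to $c$ on $B$ and $1$ off $B$. Then $M_{cs}\boxplus(x_c)$ and $(x_c\boxtimes M_s)\boxplus(1,\ldots,1)$ have the same rows, so $f(t)=1+t$ forces $c^{\sigma}=c$, i.e.\ $\sigma=1$.
\end{enumerate}
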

	
\begin{proof}
To prove (i), notice 
\begin{align}
&\left(\begin{array}{ccc}
1&\cdots&1\\
&B&
\end{array}\right)^{\boxtimes2}\boxplus\left(\begin{array}{ccc}
1&\cdots&1\\
1&\cdots&1
\end{array}\right)&\label{Meq:basicmatrixequation1}\\
&\qquad\qquad=\left(\begin{array}{ccc}
1&\cdots&1\\
&B&
\end{array}\right)\boxplus\left(\begin{array}{ccc}
1&\cdots&1\\
&B&
\end{array}\right)\boxplus\left(\begin{array}{ccc}
1&\cdots&1\\
&B&
\end{array}\right).&\label{Meq:basicmatrixequation2}
\end{align}
Using additivity and multiplicativity of $\Phi$, and the fact that we know that $\Phi$ evaluated on single rows with only positive entries is given by \eqref{Meq:singlerowfunction}, the above equality, for $\mathbb{K}\in\{\R_+,\R_+^{\rm op}\}$, yields the equation 
\begin{equation}
\left(\Phi[B]\right)^2+2=3\Phi[B], 
\end{equation}
and for $\mathbb{K}\in\{\T\R_+,\T\R_+^{\rm op}\}$, yields the equation 
\begin{equation}
\max\left\{\left(\Phi[B]\right)^2,1\right\}=\Phi[B]. 
\end{equation}
The solutions to the first equation are $1$ and $2$, and to the second the unique solution is $1$, proving \eqref{Meq:1or2}. 
		
To prove (ii), define the numbers $x_k$ coinciding with $2$ when $k\in B$ and otherwise $x_k=1$ and $y_k$ coinciding with $2$ when $k\in B$ and otherwise $y_k=0$. Denoting by $\substack{\rgeq\\\rleq}$ majorization in both directions simultaneously, we have
\begin{align}
&\left(\begin{array}{ccc}
1&\cdots&1\\
&B&
\end{array}\right)\boxplus\left(\begin{array}{ccc}
1&\cdots&1\\
&B&
\end{array}\right)\boxplus (x_1\,\cdots\,x_d)&\\
&\qquad\qquad\substack{\rgeq\\\rleq}\left(\begin{array}{ccc}
x_1&\cdots&x_d\\
y_1&\cdots&y_d
\end{array}\right)\boxplus\left(\begin{array}{ccc}
1&\cdots&1\\
1&\cdots&1
\end{array}\right),&
\end{align}
where the LHS majorizes the RHS because there is a stochastic matrix that adds the two rows $(\;B\;)$ to obtain the row $(y_1\,\cdots\,y_d)$, and the RHS majorizes the LHS because there is a stochastic matrix that splits the row $(y_1\,\cdots\,y_d)$ into two copies of the row $(\;B\;)$. Next, notice that the RHS is equal to
\begin{equation}
\label{Meq:1or2b}
    (x_1\,\cdots\,x_d)\boxtimes\left(\begin{array}{ccc}
        1&\cdots&1\\
        &B&
    \end{array}\right)\boxplus\left(\begin{array}{ccc}
        1&\cdots&1\\
        1&\cdots&1
    \end{array}\right). 
\end{equation}
This yields the equation 
\begin{equation}
2+2+2^{\sum_{k\in B}\alpha_k}=2^{\sum_{k\in B}\alpha_k}\cdot2+2, 
\end{equation}
where we used the monotonicity, additivity and multiplicativity of $\Phi$, and the single-row function given by \eqref{Meq:singlerowfunction} above. Therefore, $\sum_{k\in B}\alpha_k=1$. This implies $\alpha_k=0$ for all $k\notin B$, where for the case $\mathbb{K}=\R_+$ we use the fact that $\underline{\alpha}\in A_-$ and therefore only one of the coordinates $\alpha_k$ is positive. 
\end{proof}

Consider a monotone homomorphism $\Phi:S^d\to\mathbb{K}$ for $\mathbb{K}\in\{\R_+,\R_+^{\rm op},\T\R_+,\T\R_+^{\rm op}\}$. We call $B\subseteq[d]$ \emph{$\Phi$-trivial} if $\Phi[B]=1$ and \emph{$\Phi$-nontrivial} if $\Phi[B]=2$. Note that if $\mathbb{K}\in\{\T\R_+,\T\R_+^{\rm op}\}$, all sets $B$ are $\Phi$-trivial. Furthermore, when $\mathbb{K}\in\{\R_+,\R_+^{\rm op}\}$, the empty set is $\Phi$-trivial and $[d]$ is $\Phi$-nontrivial. 

\begin{lemma}
\label{Mclm:nontrivialB}
Fix a monotone homomorphism $\Phi:S^d\to\mathbb{K}$ for $\mathbb{K}\in\{\R_+,\R_+^{\rm op}\}$. Let $B_1,B_2\subseteq[d]$. The following holds: 
\begin{enumerate}[(i)]
\item If $B_1$ and $B_2$ are $\Phi$-nontrivial, then $B_1\cap B_2$ is also $\Phi$-nontrivial. 
\item If $B_1\subseteq B_2$ and $B_1$ is $\Phi$-nontrivial, then $B_2$ is also $\Phi$-nontrivial. 
\end{enumerate}
\end{lemma}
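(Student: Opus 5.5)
The plan is to derive a single algebraic identity linking $\Phi[B_1]$, $\Phi[B_2]$ and $\Phi[B_1\cap B_2]$, in the spirit of the matrix identity \eqref{Meq:basicmatrixequation1}--\eqref{Meq:basicmatrixequation2} from the proof of Lemma \ref{Mlem:1or2}. Both claims should then follow by plugging in the possible values $1$ or $2$ given by Lemma \ref{Mlem:1or2}(i). No majorization argument is needed, only additivity and multiplicativity of $\Phi$ together with the identification of matrices up to row permutations.

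\textbf{Step 1 (the product).} Write $M_B$ for the two-row matrix with rows $(1\,\cdots\,1)$ and $(\;B\;)$, so that $\Phi[B]=\Phi(M_B)$. The Kronecker product $M_{B_1}\boxtimes M_{B_2}$ has as rows the entrywise products of the rows. These are $(1\,\cdots\,1)$, $(\;B_2\;)$, $(\;B_1\;)$ and $(\;B_1\cap B_2\;)$, because the entrywise product of the indicator rows of $B_1$ and $B_2$ is the indicator row of $B_1\cap B_2$. This matrix lies in $\V^d$, since its first row is strictly positive. By multiplicativity, $\Phi$ takes the value $\Phi[B_1]\Phi[B_2]$ on it.

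\textbf{Step 2 (the key identity).} Compare the two matrices
\begin{equation*}
\big(M_{B_1}\boxtimes M_{B_2}\big)\boxplus M_{[d]}\qquad\text{and}\qquad M_{B_1}\boxplus M_{B_2}\boxplus M_{B_1\cap B_2}.
\end{equation*}
Each of them has the multiset of rows consisting of three copies of $(1\,\cdots\,1)$ together with $(\;B_1\;)$, $(\;B_2\;)$ and $(\;B_1\cap B_2\;)$. Hence they coincide as elements of $S^d$, since the relation $\approx$ allows rows to be permuted. We have $\Phi[[d]]=\varphi(2,\ldots,2)=2^{\sum_k\alpha_k}=2$ for $\mathbb{K}\in\{\R_+,\R_+^{\rm op}\}$. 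Additivity and multiplicativity of $\Phi$ therefore give
\begin{equation*}
\Phi[B_1]\,\Phi[B_2]+2=\Phi[B_1]+\Phi[B_2]+\Phi[B_1\cap B_2].
\end{equation*}

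\textbf{Step 3 (conclusion).} For (i), substitute $\Phi[B_1]=\Phi[B_2]=2$. This gives $6=4+\Phi[B_1\cap B_2]$, so $\Phi[B_1\cap B_2]=2$. For (ii), we have $B_1\cap B_2=B_1$, and $\Phi[B_1]=2$. The identity becomes $2\Phi[B_2]+2=4+\Phi[B_2]$, so $\Phi[B_2]=2$.

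The only point requiring care, and the step I expect to be the main obstacle, is Step 2. One must check that the rows of the product are exactly the four listed rows, with the indicator of $B_1\cap B_2$ correctly identified. One must also check that both sides are legitimate elements of $S^d$, i.e. they have a common positive row, and that they are equal modulo $\approx$ rather than merely majorization-equivalent. Once this is in place, the rest is arithmetic.
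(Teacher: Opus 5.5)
Your proof is correct and is essentially the paper's argument: the same six-row identity $M_{[d]}\boxplus\big(M_{B_1}\boxtimes M_{B_2}\big)=M_{B_1}\boxplus M_{B_2}\boxplus M_{B_1\cap B_2}$ gives $2+\Phi[B_1]\Phi[B_2]=\Phi[B_1]+\Phi[B_2]+\Phi[B_1\cap B_2]$, and (ii) is the special case $B_1\cap B_2=B_1$. One small point: $\Phi[[d]]=2$ follows directly from additivity, $\Phi(1\,\cdots\,1)+\Phi(1\,\cdots\,1)=2$. By contrast, identifying $M_{[d]}$ with the single row $(2\,\cdots\,2)$ is only a two-sided majorization, which uses monotonicity, not an equality modulo $\approx$.
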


Note that this Lemma implies that when $\mathbb{K}\in\{\R_+,\R_+^{\rm op}\}$, the set of all $\Phi$-nontrivial sets is a filter of the power set of $[d]$. 

\begin{proof}
For the proof of (i), assume that $B_1$ and $B_2$ are $\Phi$-nontrivial. Notice 
\begin{equation}
\left(\begin{array}{ccc}
1&\cdots&1\\
1&\cdots&1
\end{array}\right)\boxplus\left(\begin{array}{ccc}
1&\cdots&1\\
&B_1&
\end{array}\right)\boxtimes\left(\begin{array}{ccc}
1&\cdots&1\\
&B_2&
\end{array}\right)=\left(\begin{array}{ccc}
1&\cdots&1\\
1&\cdots&1\\
1&\cdots&1\\
&B_1&\\
&B_2&\\
&B_1\cap B_2&
\end{array}\right). 
\end{equation}
This yields 
\begin{equation}
2+\Phi[B_1]\Phi[B_2]=\Phi[B_1]+\Phi[B_2]+\Phi[B_1\cap B_2]. 
\end{equation}
Since $\Phi[B_1]=2=\Phi[B_2]$, we conclude that $\Phi[B_1\cap B_2]=2$, hence $B_1\cap B_2$ is $\Phi$-nontrivial. 
	
Next, for the proof of (ii), assume that $B_1$ is $\Phi$-nontrivial. Notice 
\begin{equation}
\left(\begin{array}{ccc}
1&\cdots&1\\
1&\cdots&1
\end{array}\right)\boxplus\left(\begin{array}{ccc}
1&\cdots&1\\
&B_1&
\end{array}\right)\boxtimes\left(\begin{array}{ccc}
1&\cdots&1\\
&B_2&
\end{array}\right)=\left(\begin{array}{ccc}
1&\cdots&1\\
1&\cdots&1\\
1&\cdots&1\\
&B_1&\\
&B_2&\\
&B_1&
\end{array}\right). 
\end{equation}
This yields 
\begin{equation}
2+\Phi[B_1]\Phi[B_2]=2\Phi[B_1]+\Phi[B_2]. 
\end{equation}
Since $\Phi[B_1]=2$, we conclude that $\Phi[B_2]=2$, hence $B_2$ is $\Phi$-nontrivial.
\end{proof}

For any homomorphism $\Phi:S^d\to\mathbb{K}\in\{\R_+,\R_+^{\rm op}\}$, we define the {\it character} $C\subseteq[d]$ as the intersection of all $\Phi$-nontrivial subsets of $[d]$. $C$ is well-defined since $[d]$ itself is always $\Phi$-nontrivial. According to Lemma \ref{Mclm:nontrivialB}, $C$ is $\Phi$-nontrivial and, by construction, it is the smallest $\Phi$-nontrivial subset of $[d]$. Moreover, $C$ is non-empty, since the empty set is always $\Phi$-trivial. Additionally, observe that $B\subseteq[d]$ is $\Phi$-nontrivial if and only if $C\subseteq B$. 

Let $\underline{\alpha}=(\alpha_1,\ldots,\alpha_d)\in\R_+^d$ with $\alpha_1+\ldots+\alpha_d=1$, and $C=\{c_1,\ldots,c_\ell\}\subseteq[d]$, $C\neq\emptyset$, such that $C$ contains (at least) all columns $c$ with $\alpha_c\neq0$. Denote by $C^\prime:=\{c^\prime_1,\ldots,c^\prime_{\ell^\prime}\}\subseteq C$ the subset of columns $c\in C$ that satisfy $\alpha_c\neq0$. Define $\varphi_{\underline{\alpha},C}:\R_+^{\ell^\prime}\to\R$ by 
\begin{equation}
\varphi_{\underline{\alpha},C}(x_{c^\prime_1},\ldots,x_{c^\prime_{\ell^\prime}}):=x_{c^\prime_1}^{\alpha_{c^\prime_1}}\cdots x_{c^\prime_{\ell^\prime}}^{\alpha_{c^\prime_{\ell^\prime}}}. 
\end{equation}
One can show that $\varphi_{\underline{\alpha},C}$ is concave on $\R_{>0}^{\ell^\prime}$ (see the proof of Proposition 13 in \cite{farooq2024}). Then, by continuity of $\varphi_{\underline{\alpha},C}$ and the fact that all exponents $\alpha_{c^\prime}>0$ for $c^\prime\in C^\prime$, using that $\lim_{x\downarrow0}x^\alpha=0$ for $\alpha>0$, one concludes that $\varphi_{\underline{\alpha},C}$ is concave on its entire domain $\R_+^{\ell^\prime}$. In the following, we will abuse notation by writing $\varphi_{\underline{\alpha},C}(x)$ for $x\in\R_+^d$ instead of $x\in\R_+^{\ell^\prime}$, where it is understood that we ignore the coordinates not in $C^\prime$. 
	
We now define the homomorphism $\Phi_{\underline{\alpha},C}:S^d\to\R_+^{\rm op}$ by 
\begin{equation}\label{Meq:homomorphismsum}
\Phi_{\underline{\alpha},C}(P):=\sum_{i\in I(P,C)}\varphi_{\underline{\alpha},C}\left(p_i^{(1)},\ldots,p_i^{(d)}\right),\quad\quad P=(p^{(1)},\ldots,p^{(d)})\in \V_{n}^d, 
\end{equation}
where 
\begin{equation}\label{Meq:Iset}
I(P,C):=\left\{\,i\in[n]\,\middle|\,p_i^{(c)}>0\quad\forall c\in C\,\right\}=\bigcap_{c\in C}{\rm supp}\,p^{(c)}.
\end{equation}
One can easily check that these are indeed homomorphisms and we will later see that they are also monotone. By considering the rows $I(P,C)$ of $P$ that $\Phi_{\underline{\alpha},C}$ sums over, one sees that $C$ is the smallest $\Phi_{\underline{\alpha},C}$-nontrivial subset of $[d]$, and therefore coincides with the character of the homomorphism $\Phi_{\underline{\alpha},C}$ as defined above. The degenerate ones among these homomorphisms are exactly all $\Phi_{\underline{\alpha},C}=:\Phi^{(k)}$ with $\alpha_k=1$ for some $k$ and $\alpha_{k^\prime}=0$ for all $k^\prime\neq k$, and $C=\{k\}$; indeed, these give the same value for $P$ and $Q$ whenever $P\sim Q$. Denote all the non-degenerate homomorphisms $\Phi_{\underline{\alpha},C}$ by $\mathcal{H}^{\rm n.d.}(S^d,\R_+^{\rm op})$, i.e., denoting by ${\rm supp}\,\underline{\alpha}$ for all $\underline{\alpha}=(\alpha_1,\ldots,\alpha_d)\in A_+$ (where $A_+\subset\R^d$ is given in \eqref{Meq:A+}) the subset of those $k\in[d]$ such that $\alpha_k>0$,
\begin{equation}\label{Meq:temperateopposite}
\mathcal{H}^{\rm n.d.}(S^d,\R_+^{\rm op})=\left\{\Phi_{\underline{\alpha},C}\,\middle|\,\underline{\alpha}\in A_+,\ {\rm supp}\,\underline{\alpha}\subseteq C\subseteq[d]\right\}\setminus\big\{\Phi^{(k)}\big\}_{k=1}^d.
\end{equation}
	
Note that if $C$ consists of all $c\in[d]$ that satisfy $\alpha_c>0$, and no other $c\in[d]$, then $\Phi_{\underline{\alpha},C}$ essentially sums over all rows of any matrix in $S^d$. Hence a homomorphism with this property is continuous in its argument. From now on, we simply call elements of $\mathcal{H}^{\rm n.d.}(S^d,\R_+^{\rm op})$ with this property continuous, i.e. 
\begin{equation}\label{Meq:continuity}
\Phi_{\underline{\alpha},C}\ {\rm continuous}\quad \iff \quad C={\rm supp}\,\underline{\alpha}.
\end{equation}
A homomorphism not satisfying this property is always discontinuous. For example in the case $d=2$ and for the homomorphism $\Phi_{(0,1),\{1,2\}}$, this can be seen by considering 
\begin{equation}
P_\varepsilon:=\left(\begin{array}{ccc}
1-\varepsilon&1/2\\
\varepsilon&1/2
\end{array}\right), 
\end{equation}
where $\varepsilon\geq0$, and noting that there is a gap between $\Phi_{(0,1),\{1,2\}}(P_0)=1/2$ and $\Phi_{(0,1),\{1,2\}}(P_\varepsilon)=1$ for $\varepsilon>0$, although $\lim_{\varepsilon\rightarrow0}P_\varepsilon=P_0$. 
	
We first show that all homomorphisms in $\mathcal{H}^{\rm n.d.}(S^d,\R_+^{\rm op})$ are monotone, and then prove that these are in fact all non-degenerate monotone homomorphisms. 
	
\begin{proposition}
All homomorphisms in $\mathcal{H}^{\rm n.d.}(S^d,\R_+^{\rm op})$ are monotone. 
\end{proposition}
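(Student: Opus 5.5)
We must show: if $P\rgeq Q$, i.e. $Q=TP$ for a stochastic $T$, then $\Phi_{\underline{\alpha},C}(Q)\geq\Phi_{\underline{\alpha},C}(P)$ in $\R_+$ (monotone into $\R_+^{\rm op}$ means the value does not decrease under processing). The plan is to verify this row by row, using concavity and positive homogeneity of $\fii_{\underline{\alpha},C}$ on $\R_+^d$.

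Write the rows of $P$ as $p_j$, $j\in[n]$, and the rows of $Q$ as $q_i=\sum_j T_{ij}p_j$. Since $\fii:=\fii_{\underline{\alpha},C}$ is concave and homogeneous of degree $\sum_k\alpha_k=1$ on $\R_+^d$, it is superadditive: $\fii(x+y)\geq\fii(x)+\fii(y)$. Hence $\fii(q_i)\geq\sum_j T_{ij}\fii(p_j)$ for every row $i$ of $Q$.

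The key point concerns the index sets $I(P,C)$ and $I(Q,C)$. If $j\in I(P,C)$ and $T_{ij}>0$, then every entry $q_i^{(c)}\geq T_{ij}p_j^{(c)}>0$ for $c\in C$, so $i\in I(Q,C)$. Hence
\begin{equation}
\Phi_{\underline{\alpha},C}(Q)=\sum_{i\in I(Q,C)}\fii(q_i)\geq\sum_{i\in I(Q,C)}\sum_{j}T_{ij}\fii(p_j)\geq\sum_{i\in I(Q,C)}\sum_{j\in I(P,C)}T_{ij}\fii(p_j),
\end{equation}
using $\fii\geq0$ to drop the terms $j\notin I(P,C)$. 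The second sum equals $\sum_{j\in I(P,C)}\fii(p_j)\sum_{i\in I(Q,C)}T_{ij}$. For $j\in I(P,C)$, every $i$ with $T_{ij}>0$ lies in $I(Q,C)$, so the inner sum is the full column sum of $T$, which is $1$. Therefore $\Phi_{\underline{\alpha},C}(Q)\geq\Phi_{\underline{\alpha},C}(P)$.

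The main subtlety is the support bookkeeping. The superadditivity step uses concavity on the closed orthant, which the text already established by continuity, including points where coordinates outside $C'={\rm supp}\,\underline{\alpha}$ are ignored. The restriction to $I(\cdot,C)$ is handled by the observation that mass from rows in $I(P,C)$ only flows into rows of $I(Q,C)$. Rows of $Q$ in $I(Q,C)$ that receive mass from outside $I(P,C)$ only add nonnegative terms, which is harmless. Finally, I would check that $\Phi_{\underline{\alpha},C}$ is well defined on equivalence classes, since zero rows never lie in $I(\cdot,C)$ because $C\neq\emptyset$; this completes the argument.
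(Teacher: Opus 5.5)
Your proof is correct and is essentially the paper's argument. The paper normalizes each row of $T$ by $T_i=\sum_j T_{i,j}$ and applies Jensen's inequality for the concave $\varphi_{\underline{\alpha},C}$, which is equivalent to your superadditivity-plus-homogeneity step; your version also avoids the paper's tacit assumption $T_i>0$. It then uses the same observation that $T_{i,j}=0$ whenever $j\in I(P,C)$ and $i\notin I(Q,C)$, so the relevant column sums equal $1$, and it discards the remaining nonnegative terms.
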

	
\begin{proof}
Let $\underline{\alpha}\in\R_+^d$ such that $\alpha_1+\ldots+\alpha_d=1$, and $C\in\mathcal{P}([d])\setminus\emptyset$ such that $C$ contains all columns $c$ with $\alpha_c\neq0$. Consider $P\in\V_n^d$ with rows $p_1,\ldots,p_n\in\R_+^d$ and a stochastic $m\times n$-matrix $T$, and write $Q:=TP\in\V_m^d$ with rows $q_1,\ldots,q_m\in\R_+^d$, and $T_i:=\sum_{j=1}^n T_{i,j}>0$. Using the concavity of $\varphi_{\underline{\alpha},C}$ in the first inequality, we find 
\begin{align}
\Phi_{\underline{\alpha},C}(Q)&=\sum_{i\in I(Q,C)}\varphi_{\underline{\alpha},C}(q_i)\\
&=\sum_{i\in I(Q,C)}\varphi_{\underline{\alpha},C}\left(\sum_{j=1}^n T_{i,j}p_j\right)=\sum_{i\in I(Q,C)}T_i\varphi_{\underline{\alpha},C}\left(\sum_{j=1}^n \frac{T_{i,j}}{T_i}p_j\right)\\
&\geq\sum_{i\in I(Q,C)}T_i\sum_{j=1}^n \frac{T_{i,j}}{T_i}\varphi_{\underline{\alpha},C}(p_j)=\sum_{j=1}^n \varphi_{\underline{\alpha},C}(p_j)\sum_{i\in I(Q,C)}T_{i,j}\\
&=\sum_{j\in I(P,C)} \varphi_{\underline{\alpha},C}(p_j)\underbrace{\sum_{i\in I(Q,C)}T_{i,j}}_{=1}+\sum_{j\notin I(P,C)} \varphi_{\underline{\alpha},C}(p_j)\sum_{i\in I(Q,C)}T_{i,j}\\
&\geq\sum_{j\in I(P,C)} \varphi_{\underline{\alpha},C}(p_j)=\Phi_{\underline{\alpha},C}(P). 
\end{align}
To see why $\sum_{i\in I(Q,C)}T_{i,j}=1$ for all $j\in I(P,C)$, consider any $j\in I(P,C)$ and $i\notin I(Q,C)$. Then there exists a column $k\in C$ such that $\sum_{j^\prime=1}^n T_{i,j^\prime}(p_{j^\prime})_k=0$, hence $T_{i,j}(p_j)_k=0$. Since $j\in I(P,C)$ implies that $(p_j)_k>0$, we conclude that $T_{i,j}=0$. This means that $\sum_{i\in I(Q,C)}T_{i,j}=\sum_{i=1}^m T_{i,j}=1$ for all $j\in I(P,C)$. 
\end{proof}
	
\begin{proposition}\label{Mpro:monhommin}
The non-degenerate monotone homomorphisms $S^d\rightarrow\R_+^{\rm op}$ are exactly those in $\mathcal{H}^{\rm n.d.}(S^d,\R_+^{\rm op})$. For $\mathbb{K}$ equal to $\R_+$, $\T\R_+$ or $\T\R_+^{\rm op}$, there are no non-degenerate monotone homomorphisms $S^d\rightarrow\mathbb{K}$. 
\end{proposition}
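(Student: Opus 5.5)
The proof has two parts. The first is to show that every monotone homomorphism $\Phi:S^d\to\mathbb{K}$ has the form $\Phi_{\underline{\alpha},C}$ of \eqref{Meq:homomorphismsum}, with $C$ its character. The second is to use monotonicity to exclude every parameter outside $A_+$. The tools are the single-row formula \eqref{Meq:singlerowfunction}, Lemma \ref{Mlem:1or2} and Lemma \ref{Mclm:nontrivialB}.

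\textbf{Reduction to two-row matrices.} Take any $P\neq0$ and fix a row $r$ of $P$ with full support; such a row exists because the column supports have common intersection. Let $x_1,\ldots,x_m$ be the remaining rows. Up to row permutation and zero rows,
\begin{equation}
P\boxplus r^{\boxplus(m-1)}\approx\bigboxplus_{i=1}^m\left(\begin{array}{c} r\\ x_i\end{array}\right),
\end{equation}
and each block on the right lies in $\V^d$. Additivity then reduces the computation of $\Phi(P)$ to two-row matrices $(r;x)$ with $r$ strictly positive. For such a matrix, write $B={\rm supp}\,x$, and let $y$ be the positive row that equals $x$ on $B$ and equals $r$ off $B$. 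Then, after rescaling by the positive single row $(r_1\cdots r_d)$,
\begin{equation}
(r\,;y)\boxtimes\left(\begin{array}{ccc}1&\cdots&1\\&B&\end{array}\right)\approx (r;x)\boxplus(r;\,r\cdot\mathbb{1}_B) \quad\text{and}\quad (r;y)\approx r\boxplus y .
\end{equation}
Here I will adjust which rows are paired so that every block used is a valid element of $\V^d$.

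\textbf{Temperate case: deriving the formula.} For $\mathbb{K}\in\{\R_+,\R_+^{\rm op}\}$, expand the identity above using additivity, multiplicativity, \eqref{Meq:singlerowfunction} and the dichotomy $\Phi[B]\in\{1,2\}$ from Lemma \ref{Mlem:1or2}(i). The expected result is
\begin{equation}
\Phi(r;x)=\fii(r)+[C\subseteq B]\,\fii_{\underline{\alpha},C}(x).
\end{equation}
When $C\subseteq B$, Lemma \ref{Mlem:1or2}(ii) gives $\alpha_k=0$ off $B$, so the coordinates outside $B$ do not matter. Substituting back into the reduction gives $\Phi=\Phi_{\underline{\alpha},C}$, with ${\rm supp}\,\underline{\alpha}\subseteq C$ again by Lemma \ref{Mlem:1or2}(ii).

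\textbf{Tropical case: deriving the formula.} For $\mathbb{K}\in\{\T\R_+,\T\R_+^{\rm op}\}$, every $B$ is trivial, and the same computation gives $\Phi(P)=\max_{i\in I(P,[d])}\fii(p_i)$ with $\underline{\beta}\in B$ (respectively $B_-$).

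\textbf{Excluding parameters by monotonicity.} The test matrices are $P_0=(1,\ldots,1;\,z)$ with $z$ having a zero in a chosen coordinate $l$, and $P_\varepsilon=TP_0$, where $T$ moves an $\varepsilon$-fraction of the first row into the second. The new second row has $l$-th entry exactly $\varepsilon$, so $P_0\succeq P_\varepsilon$, while $\Phi(P_\varepsilon)$ picks up a term of order $\varepsilon^{\alpha_l}$ (or $\varepsilon^{\beta_l}$).
\begin{itemize}
\item For $\R_+$ with $\underline{\alpha}\in A_-\setminus\{e_k\}$, some $\alpha_l<0$. Then $l\in C$, and the new term blows up as $\varepsilon\to0$, contradicting $\Phi(P_\varepsilon)\le\Phi(P_0)$. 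If instead $\underline{\alpha}=e_k$, then $C=\{k\}$ and $\Phi=\Phi^{(k)}$ is degenerate.
\item For $\T\R_+$, a negative $\beta_l$ gives the same blow-up.
\item For $\T\R_+^{\rm op}$, I will use a positive coordinate $l$ and choose $z$ large on the coordinates where $\beta_k<0$. Then the maximum over full rows of $P_\varepsilon$ tends to a value strictly below $\Phi(P_0)$, violating $\Phi(P_\varepsilon)\geq\Phi(P_0)$. The only way to avoid this is $\underline{\beta}=0$, which gives a degenerate homomorphism.
\end{itemize}
For $\R_+^{\rm op}$, the case analysis leaves exactly the family \eqref{Meq:temperateopposite}, since the $\Phi^{(k)}$ are the degenerate ones.

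\textbf{Main obstacle.} I expect the hardest step to be the $\T\R_+^{\rm op}$ exclusion. The naive $\varepsilon$-perturbation only adds a small term to a maximum, so the test matrices must be chosen so that the perturbation decreases the maximum. The first thing I would try is to make $z$ dominate only on the negative coordinates of $\underline{\beta}$, and to verify directly that the $\varepsilon$-transfer strictly lowers the maximal row value.
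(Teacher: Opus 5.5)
Your temperate computation matches the paper's once the displayed identity is repaired. Its left side contains the row $y$, not a second copy of $r$. What you need is $(y;x)=y\boxtimes(\mathbf 1;\,\text{indicator row of }B)$ together with $(r;x)\boxplus y\approx(y;x)\boxplus r$, which is exactly the paper's $\tilde x$ computation. Your $\varepsilon$-blow-up is a valid alternative to the paper's arguments for excluding negative exponents over $\R_+$ and $\T\R_+$, because every row of $P_\varepsilon$ has full support. For $\R_+$ you still need $\underline\alpha=e_k\Rightarrow C=\{k\}$. The same test with $l\in C\setminus\{k\}$ gives $\Phi(P_0)=1<1+z_k=\Phi(P_\varepsilon)$.

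\textbf{The tropical formula.} The ``same computation'' does not give $\Phi(P)=\max_{i\in I(P,[d])}\varphi(p_i)$. Tropical addition is $\max$, which is not cancellative. From $\Phi(y;x)=\varphi(y)\Phi[B]=\varphi(y)$ you only obtain $\max\{\Phi(r;x),\varphi(y)\}=\max\{\varphi(r),\varphi(y)\}$. Using $(r;x)=(r/2)\boxplus(r/2;x)$, this yields merely $\varphi(r)\le\Phi(r;x)\le\max\{\varphi(r),\varphi(y)\}$. No argument of this type can produce your formula: all the identities you use also hold in $S^d_{\rm d.c.}$, where $\Phi^\T_c$ does see rows containing zeros. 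The sandwich is enough to get $\Phi\equiv1$ once $\underline\beta=0$, so your $\T\R_+$ case can be rescued.

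\textbf{The genuine gap: $\T\R_+^{\rm op}$.} Your plan here cannot work. Since $\sum_k\beta_k=0$, the row $(1-\varepsilon)\mathbf 1$ of $P_\varepsilon$ contributes $1$, so $\Phi(P_\varepsilon)\ge1$. Your own formula gives $\Phi(P_0)=1$, and nothing you have bounds $\Phi(P_0)$ from below by more than $1$. Hence you can never exhibit a violation of $\Phi(P_\varepsilon)\ge\Phi(P_0)$, whatever $z$ is. Moving mass from the full row into the sparse one is the wrong direction.

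The missing idea is to start from a matrix whose value is known exactly via $\Phi[\{k\}]=1$, and then merge rows. Take $\beta_k<0$ and any strictly positive row $x$. The matrix $x\boxtimes(\mathbf 1;e_k)$, with rows $x$ and $x_ke_k$, has $\Phi$-value $\varphi(x)$ by Lemma \ref{Mlem:1or2}(i). Merging its two rows gives $x+x_ke_k$, which it majorizes. Monotonicity into $\T\R_+^{\rm op}$ then requires $\varphi(x)\le\Phi(x+x_ke_k)=2^{\beta_k}\varphi(x)$, which fails.

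This is what the paper's chain \eqref{Meq:tropicalhomomorphismcontradiction} does with the inequalities reversed. It uses $\bigl(\begin{smallmatrix}1&1/2\\0&1/2\end{smallmatrix}\bigr)$ to push mass from the sparse row into the full one. The leftover case $\underline\beta=0$ is degenerate; the paper shows this by comparing $P$ with $(\|p^{(1)}\|_1\,\cdots\,\|p^{(d)}\|_1)$ and with a row-split of $P$.
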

	
\begin{proof}
Let us first consider the case $\mathbb{K}\in\{\R_+,\R_+^{\rm op}\}$ and fix a monotone homomorphism $\Phi:S^d\to\mathbb{K}$. Let $\underline{\alpha}=(\alpha_1,\ldots,\alpha_d)\in A_+\cup A_-$ be the parameter vector which $\Phi$ is associated with and $C\subseteq[d]$ be the character of $\Phi$. 
For any $x\in\R_+^d$, denote by $\tilde{x}\in\R_{>0}^d$ the vector obtained from $x$ by replacing all vanishing entries with $1$. We now extend the domain $\R_{>0}^d\cup\{(0,\ldots,0)\}$ of the map $\fii$ defined at the beginning of subsection \ref{Msubsec:monotonehomomorphisms} to $\R_+^d$ through
\begin{equation}
\fii(x):=\Phi\big(\tilde{x}^T\boxplus x^T\big)-\fii(\tilde{x})
\end{equation}
for all $x\in\R_+^d$. Note that for $x=(x_1,\ldots,x_d)\in\R_+^d$, the vector $\tilde{x}$ has no vanishing entries, and hence 
\begin{equation}
\fii(\tilde{x})=\prod_{k\in{\rm supp}\,x}x_k^{\alpha_k}. 
\end{equation}
We see that $\fii$ coincides with its earlier definition on vectors in $\R_{>0}^d\cup\{(0,\ldots,0)\}$. Now we have, for all $x=(x_1,\ldots,x_k)\in\R_+^d$,
\begin{align}
\fii(x)&=\Phi\big(\tilde{x}^T\boxplus x^T\big)-\fii(\tilde{x})=\Phi\left(\tilde{x}^T\boxtimes\left(\begin{array}{ccc}
1&\cdots&1\\
&{\rm supp}\,x&
\end{array}\right)\right)-\fii(\tilde{x})\label{Meq:Phihajotus1}\\
&=\fii(\tilde{x})\left(\Phi\left(\begin{array}{ccc}
1&\cdots&1\\
&{\rm supp}\,x&
\end{array}\right)-1\right)=\fii(\tilde{x})\big(\Phi[{\rm supp}\,x]-1\big)\label{Meq:Phihajotus2}\\
&=\left\{\begin{array}{ll}
\fii(\tilde{x})&C\subseteq {\rm supp}\,x\\
0&{\rm otherwise}
\end{array}\right.,\label{Meq:Phihajotus3}
\end{align}
where the last equality is due to the fact that $\Phi[B]=2$ if and only if $C\subseteq B$ and $\Phi[B]=1$ otherwise.

Let $P=\big(p^{(1)},\ldots,p^{(d)}\big)\in\mc V^d$, $p^{(k)}=\big(p^{(k)}_1,\ldots,p^{(k)}_n\big)$ for $k=1,\ldots,d$. Define $p_i:=\big(p^{(1)}_i,\ldots,p^{(d)}_i\big)$ for $i=1,\ldots,n$. We may write
\begin{align}
\Phi(P)&=\Phi\left(\bigboxplus_{i=1}^n p_i^T\right)=\Phi\left(\bigboxplus_{i=1}^n \big(\tilde{p}_i^T\boxplus p_i^T\big)\right)-\sum_{i=1}^n\fii(\tilde{p}_i)\label{Meq:PhisumsCrowsa}\\
&=\sum_{i=1}^n\left(\Phi\big(\tilde{p}_i^T\boxplus p_i^T\big)-\fii(\tilde{p}_i)\right)=\sum_{i=1}^n \fii(p_i)\label{Meq:PhisumsCrowsb}\\
&=\sum_{i:\,C\subseteq{\rm supp}\,p_i}\prod_{k\in{\rm supp}\,p_i}\big(p^{(k)}_i\big)^{\alpha_k}.\label{Meq:PhisumsCrowsc}
\end{align}
It follows from Lemma \ref{Mlem:1or2}(ii) that if $k\notin C$, then $\alpha_k=0$. 

Thus, the form of $\Phi$ is the same as those in $\mathcal{H}^{\rm n.d.}(S^d,\R^{\rm op}_+)$ but with $\underline{\alpha}$ not necessarily in $A_+$. We already know that those within $\mathcal{H}^{\rm n.d.}(S^d,\R_+^{\rm op})$ are monotone in the case $\mathbb{K}=\R_+^{\rm op}$. We know from \cite{farooq2024} that, if $\underline{\alpha}\in A_-\setminus\{e_1,\ldots,e_d\}$, then $\Phi$ is monotone in the case $\mathbb{K}=\R_+$ when restricted on matrices whose columns have common support. Thus, in the case $\mathbb{K}=\R_+^{\rm op}$, the nondegenerate monotone homomorphisms are exactly those in $\mathcal{H}^{\rm n.d.}(S^d,\R^{\rm op})$. We next show that there are actually no nondegenerate monotone homomorphisms in the case $\mathbb{K}=\R_+$.

Assume $\mathbb{K}=\R_+$. It follows from \eqref{Meq:PhisumsCrowsa}, \eqref{Meq:PhisumsCrowsb} and \eqref{Meq:PhisumsCrowsc} that 
\begin{equation}\label{Meq:Phisinglex}
\Phi\left(\begin{array}{ccccccc}
1&\cdots&1&1&1&\cdots&1\\
0&\cdots&0&\smash{\underbrace{x}_{k\text{-th entry}}}&0&\cdots&0
\end{array}\right) 
\vspace{0.5cm}
\end{equation}
equals either $1+x$ or $1$ for all $x\geq0$ depending on whether $C=\{k\}$, respectively $C\neq\{k\}$. Furthermore, in the former case, \eqref{Meq:Phisinglex} equals $1$ for all other $k^\prime\neq k$, and since $k^\prime\notin C$, $\alpha_{k^\prime}=0$ for all $k^\prime\neq k$, and hence $\alpha_k=1$. 


Let $P=(p^{(1)},\ldots,p^{(d)})\in\mc V_n^d$. Notice the following inequalities, which follow from monotonicity, and for the second inequality by inserting a row of $1$'s for every non-zero entry in $P$ and putting every entry on its own row: 
\begin{align}
&\Phi(\|p^{(1)}\|_1\;\cdots\;\|p^{(d)}\|_1)\leq\Phi(P)&\label{Meq:Phiis01}\\
&\qquad\qquad\leq\sum_{i=1}^n\sum_{k=1}^d\left(\Phi\left(\begin{array}{ccccccc}
1&\cdots&1&1&1&\cdots&1\\
0&\cdots&0&\smash{\underbrace{p_i^{(k)}}_{k\text{-th entry}}}&0&\cdots&0 
\end{array}\right)-1\right).&\label{Meq:Phiis02}
\end{align}
Assume \eqref{Meq:Phisinglex} equals $1+x$ for some $k$. Then it equals $1$ for all $k^\prime\neq k$, and the RHS equals $\|p^{(k)}\|_1$. Also, since in this case $\alpha_k=1$ and $\alpha_{k^\prime}=0$ for all $k^\prime\neq k$, the LHS is equal to $\|p^{(k)}\|_1$. Hence $\Phi(P)=\|p^{(k)}\|_1$ for all $P$, meaning that $\Phi$ is degenerate. Assume \eqref{Meq:Phisinglex} equals $1$ for all $k$. Then it follows from the second inequality in \eqref{Meq:Phiis01}, \eqref{Meq:Phiis02} that $\Phi(P)=0$ for all $P$, which is not possible. In conclusion, no non-degenerate monotone homomorphism $S^d\to\R_+$ exists. 
		
Next, we consider the cases $\mathbb{K}$ equal to $\T\R_+$ or $\T\R_+^{\rm op}$. Let $\Phi:S^d\to\mathbb{K}$, where $\mathbb{K}\in\{\T\R_+,\T\R_+^{\rm op}\}$, be a non-degenerate monotone homomorphism, and $\underline{\beta}=(\beta_1,\ldots,\beta_d)\in B$ the parameter vector which $\Phi$ is associated with.

Let $\ell\in[d]$ and define $B_\ell:=\{\ell\}$ and $x^{(\ell)}:=(0,\ldots,x,\ldots,0)$ for some $x>0$. Then it follows from Lemma \ref{Mlem:1or2}(i) that 
\begin{align}
\Phi\big(\big(\widetilde{x^{(\ell)}}\big)^T\boxplus \big(x^{(\ell)}\big)^T\big)&=\Phi\left(\big(\widetilde{x^{(\ell)}}\big)^T\boxtimes\left(\begin{array}{ccc}
1&\cdots&1\\
&B_\ell&
\end{array}\right)\right)\\
&=\Phi\big((\widetilde{x^{(\ell)}}\big)^T)\Phi[B_\ell]=\Phi\big((\widetilde{x^{(\ell)}}\big)^T). 
\end{align}

Consider now the case $\mathbb{K}=\T\R_+$. Let $P\in\V_n^d$, $P\neq0$. By inserting rows of $1$'s and placing each non-zero entry in $P$ on its own row, we have 
\begin{align}
&P\boxplus\bigboxplus_{i,k\,:\,p_i^{(k)}>0}(1\,\cdots\,1\,p_i^{(k)}\,1\,\cdots\,1)&\\
&\qquad\preceq\bigboxplus_{i,k\,:\,p_i^{(k)}>0}\left(\begin{array}{ccccccc}
1&\cdots&1&p_i^{(k)}&1&\cdots&1\\[0.1cm]
0&\cdots&0&\smash{\underbrace{p_i^{(k)}}_{k\text{-th entry}}}&0&\cdots&0 
\end{array}\right).&
\vspace{0.5cm}
\end{align}
Assume all $\beta_k=0$.	Then from the majorization above and the fact that we know the value of $\Phi$ on single rows \eqref{Meq:singlerowfunction} it follows that 
\begin{equation}
\Phi(P)\leq\max\{\Phi(P),1\}\leq1. 
\end{equation}
Since also $1=\Phi(\|p^{(1)}\|_1\;\cdots\;\|p^{(d)}\|_1)\leq\Phi(P)$, we find that $\Phi(P)=1$. Hence $\Phi$ is degenerate, which leads to a contradiction. Thus there exists $k$ such that $\beta_k>0$. Note that 
\begin{align}
&\left(\begin{array}{cc}
1&\frac{1}{2}\\[0.1cm]
0&\frac{1}{2}
\end{array}\right)\left(\begin{array}{ccccccc}
\frac{3}{2}&\cdots&\frac{3}{2}&1&\frac{3}{2}&\cdots&\frac{3}{2}\\[0.1cm]
0&\cdots&0&\smash{\underbrace{1}_{k\text{-th entry}}}&0&\cdots&0
\end{array}\right)&\\\nonumber\\
&\qquad\qquad=\left(\begin{array}{ccccccc}
\frac{3}{2}&\cdots&\frac{3}{2}&\frac{3}{2}&\frac{3}{2}&\cdots&\frac{3}{2}\\[0.1cm]
0&\cdots&0&\smash{\underbrace{\textstyle\frac{1}{2}}_{k\text{-th entry}}}&0&\cdots&0
\end{array}\right).&
\end{align}
Using this and the fact that $\sum_{k^\prime\neq k}\beta_{k^\prime}<0$ it follows that
\begin{align}\label{Meq:tropicalhomomorphismcontradiction}
1&>\left(\frac{3}{2}\right)^{\sum_{k^\prime\neq k}\beta_{k^\prime}}\\
&=\Phi\bigg(\left(\frac{3}{2}\;\cdots\;\frac{3}{2}\;\smash{\underbrace{1}_{k\text{-th entry}}}\;\frac{3}{2}\;\cdots\;\frac{3}{2}\right)\\\nonumber\\
&\qquad\qquad\boxtimes\left(\begin{array}{ccccccc}
1&\cdots&1&1&1&\cdots&1\\[0.1cm]
0&\cdots&0&\smash{\underbrace{1}_{k\text{-th entry}}}&0&\cdots&0
\end{array}\right)\bigg)\\\nonumber\\
&=\Phi\left(\begin{array}{ccccccc}
\frac{3}{2}&\cdots&\frac{3}{2}&1&\frac{3}{2}&\cdots&\frac{3}{2}\\[0.1cm]
0&\cdots&0&\smash{\underbrace{1}_{k\text{-th entry}}}&0&\cdots&0
\end{array}\right)\\\nonumber\\
&\geq\Phi\left(\begin{array}{ccccccc}
\frac{3}{2}&\cdots&\frac{3}{2}&\frac{3}{2}&\frac{3}{2}&\cdots&\frac{3}{2}\\[0.1cm]
0&\cdots&0&\smash{\underbrace{\textstyle\frac{1}{2}}_{k\text{-th entry}}}&0&\cdots&0
\end{array}\right)\\\nonumber\\
&=\Phi\left(\left(\frac{3}{2}\;\cdots\;\frac{3}{2}\right)\boxtimes\left(\begin{array}{ccccccc}
1&\cdots&1&1&1&\cdots&1\\[0.1cm]
0&\cdots&0&\smash{\underbrace{\textstyle\frac{1}{3}}_{k\text{-th entry}}}&0&\cdots&0
\end{array}\right)\right)\\\nonumber\\
&\geq\Phi\left(\frac{3}{2}\;\cdots\;\frac{3}{2}\right)\Phi\left(1\;\cdots\;1\;\smash{\underbrace{\textstyle\frac{4}{3}}_{k\text{-th entry}}}\;1\;\cdots\;1\right)=\left(\frac{4}{3}\right)^{\beta_k}>1, \\\nonumber
\end{align}
which leads to a contradiction. We conclude that no non-degenerate monotone homomorphism $S^d\to\T\R_+$ exists. 

Finally, consider the case $\mathbb{K}=\T\R_+^{\rm op}$. Let $P\in\V_n^d$, $P\neq0$. $P$ has at least one row not containing any $0$. Assume WLOG it is the first row. Then 
\begin{equation}
P\preceq\left(\begin{array}{ccc}
p_1^{(1)}/2&\cdots&p_1^{(d)}/2\\
p_1^{(1)}/2&\cdots&p_1^{(d)}/2\\
p_2^{(1)}&\cdots&p_2^{(d)}\\
\vdots&\vdots&\vdots\\
p_n^{(1)}&\cdots&p_n^{(d)}
\end{array}\right), 
\end{equation}
hence
\begin{equation}
\Phi(P)\geq\max\left\{\Phi\left(p_1^{(1)}/2\;\cdots\;p_1^{(d)}/2\right),\Phi\left(\begin{array}{ccc}
p_1^{(1)}/2&\cdots&p_1^{(d)}/2\\
p_2^{(1)}&\cdots&p_2^{(d)}\\
\vdots&\vdots&\vdots\\
p_n^{(1)}&\cdots&p_n^{(d)}
\end{array}\right)\right\}. 
\end{equation}	
Assume all $\beta_k=0$. Then it follows from the above and \eqref{Meq:singlerowfunction} that $\Phi(P)\geq1$. Since also $1=\Phi(\|p^{(1)}\|_1\;\cdots\;\|p^{(d)}\|_1)\geq\Phi(P)$, we find that $\Phi(P)=1$. Hence $\Phi$ is degenerate, which leads to a contradiction. Thus there exists $k$ such that $\beta_k<0$. Then we can perform the same calculations as in \eqref{Meq:tropicalhomomorphismcontradiction} but now with $\beta_k<0$ and all the inequalities reversed. This leads to a contradiction. Hence there exists no non-degenerate monotone homomorphism $S^d\to\T\R_+^{\rm op}$.
\end{proof}
	
\subsection{Monotone derivations}
	
In this section, we identify the monotone derivations at the degenerate monotone homomorphisms on $S^d$, up to interchangeability (see footnote \ref{ftn:interchangeability} on page \pageref{ftn:interchangeability}): if the difference between two $\Phi$-derivations factors through $S^d/\sim$, then they are called interchangeable and we only need to consider one of these derivations in the Vergleichsstellensatz of Theorem \ref{thm:Fritz2022}. 

Recall that the degenerate monotone homomorphisms are exactly the $d$ $1$-norms of the columns $P\mapsto \|p^{(k)}\|_1$ for every $k\in[d]$. It will turn out that up to interchangeability, the only monotone derivation $\Delta$ is $\Delta(P)=0$ for all $P\in S^d$. 

\begin{proposition}
\label{Mpro:derivations}
There is a unique monotone derivation $\Delta$, up to interchangeability, at any degenerate monotone homomorphism $S^d\rightarrow\R_+$, and satisfies $\Delta(P)=0$ for all $P\in S^d$. 
\end{proposition}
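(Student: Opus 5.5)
Fix $k\in[d]$ and let $\Delta$ be a monotone derivation at $\Phi^{(k)}=\|\cdot\|_{(k)}$. The plan mirrors the homomorphism case: first pin down $\Delta$ on single full-support rows, then on the two-row matrices $\left(\begin{smallmatrix}1&\cdots&1\\ &B&\end{smallmatrix}\right)$, and finally use additivity over rows to get $\Delta(P)$ in closed form. Exactly as in \cite{farooq2024}, on rows $x\in\R_{>0}^d$ the Leibniz rule gives $\Delta(x)=x_k\sum_\ell \gamma_\ell\log x_\ell$ (a Kullback--Leibler-type derivation), with monotonicity forcing $\gamma_\ell\geq0$ for $\ell\neq k$. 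Since $\Delta(P)=\sum_\ell c_\ell\|p^{(\ell)}\|_1$ is constant on $\sim$-classes, we may subtract it and normalize $\Delta$ on single rows up to interchangeability.

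Next, set $\Delta[B]$ as the value of $\Delta$ on the matrix with rows $(1,\ldots,1)$ and $(\;B\;)$. Apply $\Delta$ to the identity \eqref{Meq:basicmatrixequation1}--\eqref{Meq:basicmatrixequation2}, using $\Phi^{(k)}[B]=1+[k\in B]$. This gives a linear relation showing $\Delta[B]$ is determined by $\Delta$ on the all-ones row, which equals $0$. Likewise, the two-sided majorization from the proof of Lemma \ref{Mlem:1or2}(ii), together with the product representation \eqref{Meq:1or2b}, yields $\Delta(\tilde x^T\boxplus x^T)$ explicitly. Extending $\Delta$ to arbitrary rows by $\delta(x):=\Delta(\tilde x^T\boxplus x^T)-\Delta(\tilde x^T)$, as in \eqref{Meq:Phihajotus1}--\eqref{Meq:Phihajotus3}, and then using additivity, we get
\[
\Delta(P)=\sum_i \delta(p_i),\qquad \delta(x)=[k\in{\rm supp}\,x]\,x_k\sum_{\ell\in{\rm supp}\,x}\gamma_\ell\log x_\ell+\text{(terms from }\Delta[{\rm supp}\,x]\text{)}.
\]

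The main step, and the expected obstacle, is showing that monotonicity forces every $\gamma_\ell=0$ for $\ell\neq k$, so that $\Delta$ is interchangeable with $0$. The idea is that a nonzero $\gamma_\ell$ would produce $D_1(p^{(k)}\|p^{(\ell)})$, which becomes infinite or ill-behaved when ${\rm supp}\,p^{(k)}\not\subseteq{\rm supp}\,p^{(\ell)}$, a situation the minimal restrictions semiring allows. Concretely, I would use the test matrix $P_\varepsilon$ with a full row $(1,\ldots,1)$ and a row that is $\varepsilon$ in column $\ell$ and $1$ elsewhere. This matrix majorizes the one where column $\ell$ gets $0$ in the second row and $1+\varepsilon$ in the first, via a merging stochastic map. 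Comparing $\Delta$ on both sides gives a term $\gamma_\ell\log\varepsilon$ on one side only, and as $\varepsilon\downarrow0$ monotonicity is violated unless $\gamma_\ell=0$. The same maneuver with a reversed map fixes the remaining $\Delta[B]$ contributions to be constant on $\sim$-classes.

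Finally, with all $\gamma_\ell=0$, the formula shows $\Delta(P)$ depends only on the column norms, hence is constant on $\sim$-classes. So $\Delta$ is interchangeable with the zero derivation, which is trivially monotone.
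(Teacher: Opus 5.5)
Your route differs from the paper's. The paper never computes $\Delta$ on full-support rows. After normalizing $\Delta(x\,\cdots\,x)=0$ and proving $\Delta[B]=0$, it gets $\Delta(P)\le0$ from $P\boxplus\bigboxplus_{i,\ell}(p^{(\ell)}_i\,\cdots\,p^{(\ell)}_i)\preceq\bigboxplus_{i,\ell}(p^{(\ell)}_i\,\cdots\,p^{(\ell)}_i)\boxtimes C_\ell$, where $C_\ell$ has rows $(1,\ldots,1)$ and $e_\ell$. It gets $\Delta(P)\ge0$ by padding $P$ with blocks $(m_\ell\,\cdots\,m_\ell)\boxtimes C_\ell$ to equal column norms. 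This avoids importing the classification of \cite{farooq2024}.

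Your route can be made to work, but its decisive step is wrong as written. $P_\varepsilon$ does \emph{not} majorize the matrix $Q$ with rows $(1,\ldots,1+\varepsilon,\ldots,1)$ and $(1,\ldots,0,\ldots,1)$. If $\sum_jT_{i,j}p^{(\ell)}_j=0$ while all $p^{(\ell)}_j>0$, then $T_{i,j}=0$ for every $j$, so the $i$-th output row is zero; a stochastic map cannot separate supports this way. The true relation is $Q\succeq P_\varepsilon$, via $T=\frac{1}{1+\varepsilon}\bigl(\begin{smallmatrix}1&\varepsilon\\ \varepsilon&1\end{smallmatrix}\bigr)$. It gives $\Delta(P_\varepsilon)\le\Delta(Q)$, i.e.\ $\gamma_\ell\log\varepsilon\le\gamma_\ell\log(1+\varepsilon)$, which under your sign $\gamma_\ell\ge0$ is no contradiction.

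The idea survives only because your sign is also wrong. With $\sum_\ell\gamma_\ell=0$, your row formula equals $-\sum_{\ell\ne k}\gamma_\ell\,x_k\log(x_k/x_\ell)$. A monotone derivation decreases under stochastic maps; on full-support matrices it is a nonnegative combination of the $D_1(p^{(k)}\|p^{(\ell)})$. Hence monotonicity forces $\gamma_\ell\le0$ for $\ell\ne k$. With both corrections the comparison reads $|\gamma_\ell|\log(1/\varepsilon)\le-|\gamma_\ell|\log(1+\varepsilon)$, which is impossible for any $\varepsilon>0$ unless $\gamma_\ell=0$. So the step is repairable, but only by reversing the majorization you claimed and fixing the sign.

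Separately, your normalization is not legitimate as stated. Subtracting $\sum_\ell c_\ell\|p^{(\ell)}\|_1$ breaks the Leibniz rule at $\Phi^{(k)}$ unless all $c_\ell=0$. To apply \cite{farooq2024} you need $\Delta(x\,\cdots\,x)=0$ for \emph{all} real $x\ge0$. You get this by adding $\delta(\|p^{(k)}\|_1)$ with $\delta(x):=-\Delta(x\,\cdots\,x)$. This $\delta$ is a derivation of $\R_+$ that monotonicity does not constrain, since constant rows of different values are incomparable, so it may be non-measurable. That is exactly the content of Lemma \ref{Mlem:interchangeability}.
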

 
Before proving this Proposition, we will use interchangeability to show that we only need to consider a certain type of derivations, namely those derivations $\Delta$ satisfying the vanishing property $\Delta(x,\ldots,x)=0$ for all $x\geq0$, as stated by the Lemma below. For the derivations of the semiring of matrices with common support studied in \cite{farooq2024}, the vanishing property was only shown for $x\in\Q$, $x\geq0$. However, one could show that it does in fact hold for all $x\geq0$ analogously to the proof of the Lemma below. 

\begin{lemma}
\label{Mlem:interchangeability}
Up to interchangeability, we may assume that any monotone derivation $\Delta$ at a degenerate monotone homomorphism satisfies 
\begin{equation}
\Delta(x\,\cdots\,x)=0 
\end{equation}
for any $x\geq0$. 
\end{lemma}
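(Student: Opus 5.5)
Fix a degenerate monotone homomorphism, i.e. $\|\cdot\|_{(k)}:P\mapsto\|p^{(k)}\|_1$ for some $k\in[d]$, and a monotone $\|\cdot\|_{(k)}$-derivation $\Delta$. The plan is to study the function $f:\R_+\to\R$, $f(x):=\Delta(x\,\cdots\,x)$, and to subtract from $\Delta$ a correction that removes $f$ while changing $\Delta$ only by a function that is constant on $\sim$-classes. Since the degenerate homomorphism takes the value $x$ on the constant row $(x\,\cdots\,x)$, the Leibniz rule and additivity give
\begin{equation}
f(xy)=xf(y)+yf(x),\qquad f(x+y)=f(x)+f(y),\qquad x,y\geq0.
\end{equation}
The first step is to show that additivity forces $f$ to be $\Q$-linear on $\R_+$. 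For the derivation rule, one has $f(1)=f(1\cdot1)=2f(1)$, so $f(1)=0$, and hence $f(q)=qf(1)=0$ for all rational $q\geq0$. This already gives the vanishing property on $\Q_{\geq0}$, which is what \cite{farooq2024} had.

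To pass to all real $x\geq0$, I would use monotonicity and the fact that constant rows are comparable with other elements of the semiring. The constant row $(x\,\cdots\,x)$ is $\sim$-equivalent only to elements with all column norms equal to $x$, so monotonicity alone does not directly compare different $x$. Instead, I would consider the difference $\Delta(x\,\cdots\,x)-\Delta(q\,\cdots\,q)$ through elements like $(x\,\cdots\,x)\boxplus(q'\,\cdots\,q')$, splitting a row into two rows via a stochastic map; these give equivalences that are already encoded in additivity. Since monotonicity does not seem to bite here, the more robust route is interchangeability.

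The additive $f$ with $f(xy)=xf(y)+yf(x)$ is a (possibly wild) derivation of $\R$ over $\Q$. I would then define
\begin{equation}
\Delta'(P):=\Delta(P)-f\big(\|p^{(k)}\|_1\big).
\end{equation}
The map $P\mapsto f(\|p^{(k)}\|_1)$ is additive, satisfies the Leibniz rule at $\|\cdot\|_{(k)}$ by the functional equation for $f$ since $\|\cdot\|_{(k)}$ is multiplicative, and it depends only on $\|P\|$. Because $P\sim Q$ iff $\|P\|=\|Q\|$ by \eqref{eq:surjectivehomomorphismproperties}, it is constant on each $\sim$-class. Hence $\Delta'$ is again a $\|\cdot\|_{(k)}$-derivation. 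It is monotone, since $P\preceq Q$ implies $\|P\|=\|Q\|$, so the subtracted term is unchanged. It is interchangeable with $\Delta$. Finally,
\begin{equation}
\Delta'(x\,\cdots\,x)=f(x)-f(x)=0.
\end{equation}
The normalization $\Delta(u)=1$ may need rescaling afterwards, but interchangeability is the only claim of the lemma.

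The main obstacle is confirming that subtracting a term depending only on $\|P\|$ is legitimate under the paper's notion of interchangeability: the difference must be constant on each $\sim$-class, and this holds exactly because $\sim$-classes coincide with fibers of the norm. Checking the Leibniz rule for the correction term is the only computation, and it is immediate from the functional equation for $f$.
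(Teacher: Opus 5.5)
Your construction is the same as the paper's: the paper sets $\delta(x)=-\Delta(x\,\cdots\,x)$ and defines $\Delta'(P)=\Delta(P)+\delta(\|p^{(k)}\|_1)$, which is your $\Delta'$ with $\delta=-f$. However, there is a gap at the one step that carries real content, namely the additivity $f(x+y)=f(x)+f(y)$.

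This does not follow from additivity of $\Delta$. The elements $(x+y\,\cdots\,x+y)$ and $(x\,\cdots\,x)\boxplus(y\,\cdots\,y)$ are different elements of $S^d$: one has a single nonzero row, the other has two. Additivity only gives $\Delta\big((x\,\cdots\,x)\boxplus(y\,\cdots\,y)\big)=f(x)+f(y)$. For a non-monotone derivation the claim is actually false. The map $\Delta(P)=\sum_i p^{(k)}_i\log p^{(k)}_i$ is additive and satisfies the Leibniz rule at $\|\cdot\|_{(k)}$, yet it gives $f(x)=x\log x$, which is not additive.

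The missing ingredient is monotonicity. Merging two rows and splitting a row into two proportional rows are both stochastic maps, so $(x\,\cdots\,x)\boxplus(y\,\cdots\,y)\succeq(x+y\,\cdots\,x+y)\succeq(x\,\cdots\,x)\boxplus(y\,\cdots\,y)$. Monotonicity of $\Delta$ in both directions then forces equal values, which gives $f(x+y)=f(x)+f(y)$. This is exactly how the paper obtains additivity of $\delta$. So your remarks that the splitting equivalences are ``already encoded in additivity'' and that ``monotonicity does not seem to bite'' have it backwards: this is precisely where monotonicity of $\Delta$ is used.

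Without this step, $P\mapsto f(\|p^{(k)}\|_1)$ need not be additive on $S^d$, so $\Delta'$ need not be a derivation. Your side claim that $f(q)=0$ for rational $q\geq0$ also rests on it.

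Once you insert the two-sided majorization argument, the rest of your proposal is correct and matches the paper:
\begin{itemize}
\item the Leibniz rule for the correction term, from the functional equation for $f$;
\item its constancy on $\sim$-classes, which are exactly the fibers of $\|\cdot\|$;
\item preservation of monotonicity;
\item $\Delta'(x\,\cdots\,x)=0$.
\end{itemize}
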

	
\begin{proof}
Let $\Delta$ be a monotone derivation at the degenerate monotone homomorphism $\Phi^{(k)}:P\mapsto \|p^{(k)}\|_1$ for some $k\in[d]$. Define the function $\delta:\R_+\to\R$ through $\delta(x)=-\Delta(x\,\cdots\,x)$. Consider $x,y\in\R_+$. Note that
\begin{equation}
\left(\begin{array}{ccc}
x&\cdots&x\\
y&\cdots&y
\end{array}\right)\rgeq \big(x+y\,\cdots\,x+y\big)\rgeq \left(\begin{array}{ccc}
x&\cdots&x\\
y&\cdots&y
\end{array}\right).
\end{equation}
Thus,
\begin{align}
\delta(x+y)&=-\Delta\left(\begin{array}{ccc}
x&\cdots&x\\
y&\cdots&y
\end{array}
\right)\\
&=-\Delta(x\,\cdots\,x)-\Delta(y\,\cdots\,y)=\delta(x)+\delta(y).
\end{align}
Moreover,
\begin{align}
\delta(xy)&=-\Delta\big((x\,\cdots\,x)\boxtimes(y\,\cdots\,y)\big)=-\Delta(x\,\cdots\,x)y-x\Delta(y\,\cdots\,y)\\
&=\delta(x)y+x\delta(y).
\end{align}

Let us define 
\begin{equation}
\Delta'(P):=\Delta(P)+\delta\big(\|p^{(k)}\|_1\big), 
\end{equation}
for $P\in S^d$. Since $\Delta$ is a monotone derivation at $\Phi^{(k)}$, so is $\Delta'$ because of the above calculations for $\delta$ and since $P\succeq Q$ implies $\|p^{(k)}\|_1=\|q^{(k)}\|_1$ for all $P,Q\in S^d$. Also, $\Delta$ and $\Delta'$ are clearly interchangeable. We now have, for all $x\geq0$,
\begin{equation}
\Delta'(x\,\cdots\,x)=\Delta(x\,\cdots\,x)+\delta(x)=\Delta(x\,\cdots\,x)-\Delta(x\,\cdots\,x)=0.
\end{equation}
\end{proof}

\begin{remark}
According to Lemma \ref{Mlem:interchangeability}, up to interchangeability, any monotone derivation $\Delta:S^d\to\R$ has the vanishing property $\Delta(p,\ldots,p)=0$ for any $p\in\R_+^n$, $n\in\N$. Indeed, denoting $p=(p_1,\ldots,p_n)$, we have
\begin{equation}
\Delta(p,\ldots,p)=\sum_{i=1}^n \Delta(p_i\,\cdots\,p_i)=0.
\end{equation}
This same result also applies to the results of \cite{farooq2024} where the assumption that $\|p\|_1\in\Q_+$ was made to guarantee this vanishing property.
\end{remark}

Recall that the degenerate monotone homomorphisms are all $\Phi^{(k)}$ for $k\in[d]$. Let $\Delta:S^d\to\mathbb{\R}$ be a monotone $\Phi^{(k)}$-derivation for some $k\in[d]$. For any $B\subseteq[d]$ we write 
\begin{equation}
    \Delta[B]:=\Delta\left(\begin{array}{ccc}
            1&\cdots&1\\
            &B&
        \end{array}\right). 
\end{equation}
The following Lemma will be used in the proof of Proposition \ref{Mpro:derivations}. 
\begin{lemma}
    \label{Mlem:derivationszero}
    Let $B\subseteq[d]$ and $\Delta:S^d\to\mathbb{\R}$ be a monotone $\Phi^{(k)}$-derivation for some $k\in[d]$. Then $\Delta[B]=0$. 
\end{lemma}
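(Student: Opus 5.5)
We mirror the proof of Lemma \ref{Mlem:1or2}, now using the Leibniz rule in place of multiplicativity. By Lemma \ref{Mlem:interchangeability} we assume throughout that $\Delta(x\,\cdots\,x)=0$ for all $x\geq0$. In particular, for the matrix consisting of two all-ones rows we get $\Delta\left(\begin{smallmatrix}1&\cdots&1\\1&\cdots&1\end{smallmatrix}\right)=0$, so $\Delta[[d]]=0$.

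Next we apply $\Delta$ to the identity \eqref{Meq:basicmatrixequation1}--\eqref{Meq:basicmatrixequation2}. Write $M_B$ for the two-row matrix with first row all ones and second row $(\;B\;)$. The identity reads $M_B^{\boxtimes2}\boxplus(\text{two all-ones rows})=M_B^{\boxplus3}$. By additivity, the right-hand side gives $3\Delta[B]$. By the Leibniz rule and the vanishing of $\Delta$ on all-ones rows, the left-hand side gives $2\Phi^{(k)}(M_B)\Delta[B]$.

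Here $\Phi^{(k)}(M_B)=1+\mathbf{1}_{k\in B}$, which is either $1$ or $2$. If $k\notin B$, the equation is $2\Delta[B]=3\Delta[B]$, so $\Delta[B]=0$. If $k\in B$, the equation is $4\Delta[B]=3\Delta[B]$, so again $\Delta[B]=0$. This already finishes the proof without invoking monotonicity.

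Before committing to this, we must verify that the identity \eqref{Meq:basicmatrixequation1}--\eqref{Meq:basicmatrixequation2} really holds as an equality in $S^d$, so that both sides may be evaluated by $\Delta$. Squaring $M_B$ produces the rows $\mathbf 1$, $(B)$, $(B)$, $(B)$, since $(B)\otimes(B)=(B)$ and rows are taken up to permutation. Adding two all-ones rows then gives three copies of $\mathbf 1$ and three copies of $(B)$, which is exactly $M_B^{\boxplus3}$.

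For the Leibniz computation we use $\Delta(M_B^{\boxtimes 2})=2\Phi^{(k)}(M_B)\Delta(M_B)$. The only potential subtlety is whether $M_B$ lies in $\V^d$. It does, because its first row has full support, so the intersection of the column supports is non-empty. No step here looks like a real obstacle; the main care needed is the bookkeeping of $\Phi^{(k)}(M_B)$.
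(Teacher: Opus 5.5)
Your proof is correct and is the paper's argument: you apply $\Delta$ to the identity \eqref{Meq:basicmatrixequation1}--\eqref{Meq:basicmatrixequation2}, use additivity, the Leibniz rule and the vanishing of $\Delta$ on all-ones rows to get $2\Phi^{(k)}[B]\Delta[B]=3\Delta[B]$, and conclude from $\Phi^{(k)}[B]\in\{1,2\}$. As a minor aside, you do not even need Lemma \ref{Mlem:interchangeability} here, since the Leibniz rule applied to $(1\,\cdots\,1)\boxtimes(1\,\cdots\,1)=(1\,\cdots\,1)$ already forces $\Delta(1\,\cdots\,1)=0$ for any $\Phi^{(k)}$-derivation.
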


\begin{proof}
Recall the equality of matrices given in \eqref{Meq:basicmatrixequation1}, \eqref{Meq:basicmatrixequation2}. Using the Leibniz rule and additivity of $\Delta$, and Lemma \ref{Mlem:interchangeability}, we find 
    \begin{equation}
        2\Phi^{(k)}[B]\Delta[B]=3\Delta[B], 
    \end{equation}
    where $\Phi^{(k)}[B]=2$, respectively $1$, depending on whether $k\in B$, respectively $k\notin B$. In both cases, the only solution is $\Delta[B]=0$. 
\end{proof}

We now prove Proposition \ref{Mpro:derivations}. 
	
\begin{proof}[Proof of Proposition \ref{Mpro:derivations}]
Fix $k\in[d]$ and assume $\Delta:S^d\to\R$ is a monotone $\Phi^{(k)}$-derivation. For $\ell=1,\ldots,d$ define
\begin{equation}
    C_\ell:=\left(\begin{array}{ccccccc}
        1&\cdots&1&1&1&\cdots&1\\[0.1cm]
        0&\cdots&0&\smash{\underbrace{1}_{\ell\text{-th entry}}}&0&\cdots&0
    \end{array}\right). 
    \vspace{0.5cm}
\end{equation}
According to Lemma \ref{Mlem:derivationszero}, $\Delta(C_\ell)=0$ for $\ell=1,\ldots,d$. Let us consider $P=\big(p^{(1)},\ldots,p^{(d)}\big)\in\mc V^d$ with $p^{(\ell)}=\big(p^{(\ell)}_1,\ldots,p^{(\ell)}_n\big)$. Define the matrix
\begin{equation}
P':=\bigboxplus_{i=1}^n\bigboxplus_{\ell=1}^d \big(p^{(\ell)}_i\,\cdots\,p^{(\ell)}_i\big)\boxtimes C_\ell.
\end{equation}
One easily sees that
\begin{equation}
P'\rgeq P'':=P\boxplus \bigboxplus_{i=1}^n\bigboxplus_{\ell=1}^d \big(p^{(\ell)}_i\,\cdots\,p^{(\ell)}_i\big).
\end{equation}
Since $\Delta(x\,\cdots\,x)=0$ for all $x\geq0$, we notice that $\Delta(P'')=\Delta(P)$ and, hence
\begin{align}
\Delta(P)&=\Delta(P'')\leq\Delta(P')=\sum_{i=1}^n\sum_{\ell=1}^d\Delta\left(\big(p^{(\ell)}_i\,\cdots\,p^{(\ell)}_i\big)\boxtimes C_\ell\right)\label{Meq:Deltais0a}\\
&=\sum_{i=1}^n\sum_{\ell=1}^d\Big(\underbrace{\Delta\big(p^{(\ell)}_i\,\cdots\,p^{(\ell)}_i\big)}_{=0}\Phi^{(k)}(C_\ell)+\Phi^{(k)}\big(p^{(\ell)}_i\,\cdots\,p^{(\ell)}_i\big)\underbrace{\Delta(C_\ell)}_{=0}\Big)\label{Meq:Deltais0b}\\
&=0.\label{Meq:Deltais0c}
\end{align}
Thus, $\Delta$ is nonpositive.
  
We finish the proof by showing that $\Delta$ is also nonnegative. Define $M:=\max_{1\leq\ell\leq d}\|p^{(\ell)}\|_1$, $m_\ell:=M-\|p^{(\ell)}\|_1$ for $\ell=1,\ldots,d$, and $m:=m_1+\cdots+m_d$. Consider
\begin{equation}
\tilde{P}:=P\boxplus\bigboxplus_{\ell=1}^d\big(m_\ell\,\cdots\,m_\ell\big)\boxtimes C_\ell, 
\end{equation}
which has columns that have the same $1$-norm $M+m$ as one easily verifies. By monotonicity, we find 
\begin{align}
0&=\Delta(M+m\,\cdots\,M+m)\leq \Delta(\tilde{P})\\
&=\Delta(P)+\sum_{\ell=1}^d\underbrace{\Delta\Big(\big(m_\ell\,\cdots\,m_\ell\big)\boxtimes C_\ell\Big)}_{=0}=\Delta(P),
\end{align}
where the terms in the sum vanish according to a calculation analogous to \eqref{Meq:Deltais0a}, \eqref{Meq:Deltais0b} and \eqref{Meq:Deltais0c}.
\end{proof}

\begin{figure}
\begin{center}
\begin{overpic}[scale=0.43,unit=1mm]{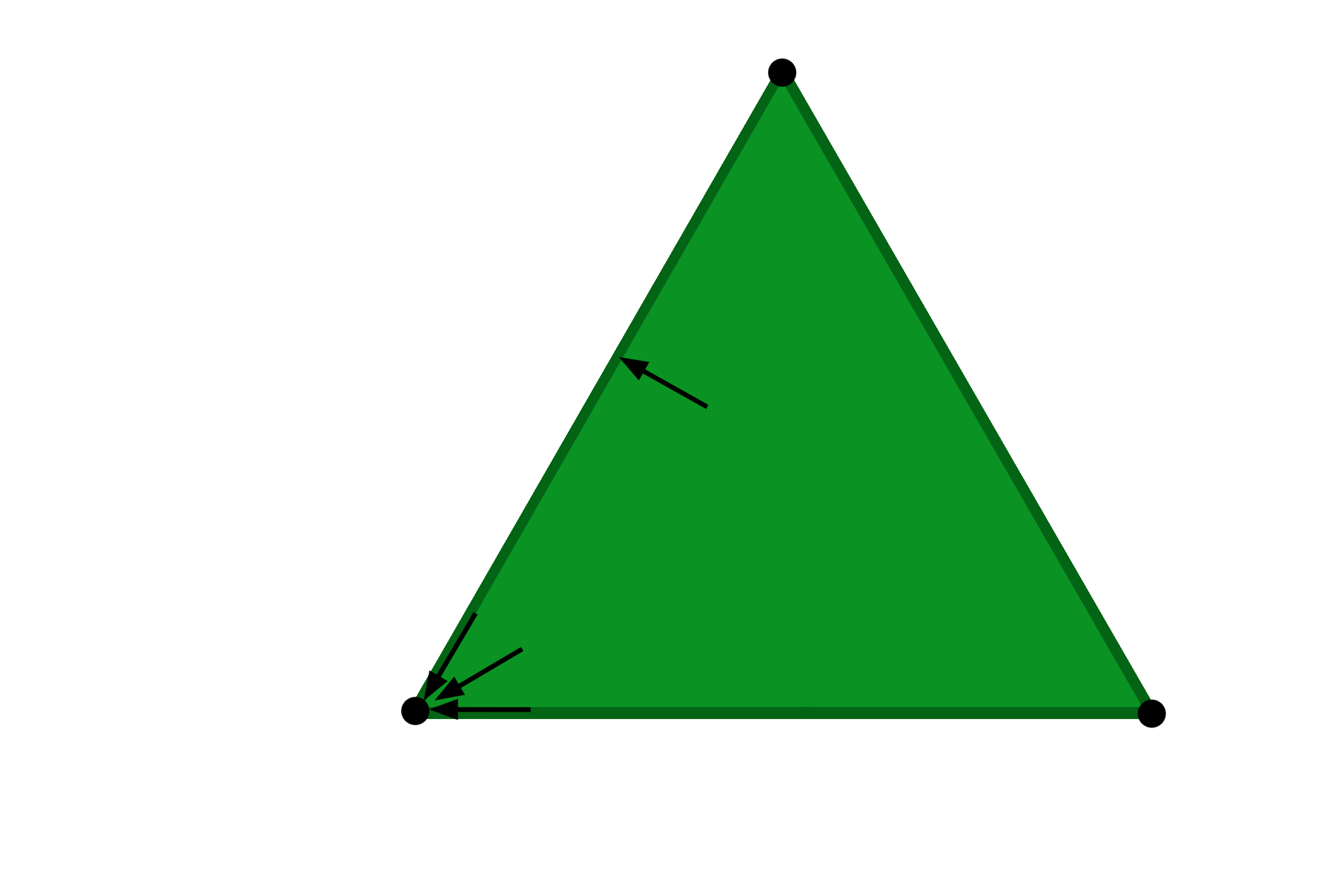}
\put(60,30){\begin{Huge}
$A_+$
\end{Huge}}
\put(32,11){\begin{Large}
$e_1$
\end{Large}}
\put(65,65){\begin{Large}
$e_3$
\end{Large}}
\put(95,13){\begin{Large}
$e_2$
\end{Large}}
\put(75,47){\begin{Large}
${\color{lighter} \Phi_{\underline{\alpha},\{1,2,3\}}}$
\end{Large}}
\put(23,46){
${\color{lighter} \Phi_{(\alpha,0,1-\alpha),\{1,2,3\}}}$
}
\put(12,13){
${\color{lighter} \Phi_{e_1,\{1,2,3\}}}$
}
\put(20,42){
${\color{darker} \Phi_{(\alpha,0,1-\alpha),\{1,3\}}}$
}
\put(17,18){
${\color{darker} \Phi_{e_1,\{1,3\}}}$
}
\put(27,7){
${\color{darker} \Phi_{e_1,\{1,2\}}}$
}
\put(0,6){
$\Phi_{e_1,\{1\}}=\Phi^{(1)}$
}
\end{overpic}
\caption{\label{Mfig:NoRestrictions} Illustration of the monotone homomorphisms of $S^3$, where the figure concentrates on the line segment from $e_1$ to $e_3$ and the vertex $e_1$: The monotone homomorphisms associated with $\underline{\alpha}$ in the interior of $A_+$ are $\Phi_{\underline{\alpha},\{1,2,3\}}$, depicted in lighter green. The smaller faces of $A_+$, i.e.,\ the three line segments and the singletons $\{e_1\}$, $\{e_2\}$, and $\{e_3\}$ inherit similar homomorphisms still with $C=\{1,2,3\}$, also depicted in lighter green. These homomorphisms can be seen as pointwise limits from the interior of $A_+$, when $\alpha_2\to0$. Additionally, the extreme point $e_1$ inherits the homomorphisms $\Phi_{e_1,\{1,3\}}$ and $\Phi_{e_1,\{1,2\}}$, depicted in darker green, as limits from the two line segments it connects, and similarly for $e_2$ and $e_3$. We also have the homomorphisms $\Phi_{(\alpha,0,1-\alpha),\{1,3\}}$, $\alpha\in[0,1]$, on the line segment between $e_1$ and $e_3$, depicted in darker green, and similarly for the other two line segments. At $e_1$, the degenerate homomorphism with $C=\{1\}$, i.e.,\ $\Phi^{(1)}$, is also present, depicted in black, and similarly for $e_2$ and $e_3$. There are no non-zero derivations at the three extreme points.}
\end{center}
\end{figure}
		
	\subsection{Power universal elements}\label{Msubsec:poweruniversalnorestr}
	
	We determine which matrices in $\V^d$ with columns that sum to $1$ are power universal. Recall that for large-sample or catalytic majorization of a matrix $P$ over $Q$, it is a requirement in the Vergleichsstellensatz (Theorem \ref{thm:Fritz2022}) that $P$ is power universal. 
	\begin{proposition}\label{Mlem:poweruniversalconditions}
		Let $U=(u^{(1)},\ldots,u^{(d)})\in \V^d$ with $u^{(k)}$ for $k\in[d]$ probability distributions. Then the following are equivalent: 
		\begin{enumerate}[(i)]
			\item $U$ is power universal. 
			\item There exists $m\in\N$ such that for all $k\in[d]$, $U^{\boxtimes m}$ has a row $(0\,\cdots\,0\underbrace{x_k}_{k\text{-th entry}}0\,\cdots\,0)$ for some $x_k>0$. 
			\item It holds that ${\rm supp}\,u^{(k)}\nsubseteq{\rm supp}\,u^{(k^\prime)}$ for all $k,k^\prime\in[d]$, $k\neq k^\prime$. 
			\item Any nondegenerate homomorphism $\Phi\in\mathcal{H}^{\rm n.d.}(S^d,\R_+^{\rm op})$ satisfies $\Phi(U)<1$. 
		\end{enumerate}
	\end{proposition}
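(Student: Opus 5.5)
I will prove the cycle (ii)$\Rightarrow$(i)$\Rightarrow$(iv)$\Rightarrow$(iii)$\Rightarrow$(ii).

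\emph{(iii)$\Rightarrow$(ii).} For each ordered pair $k\neq k'$, (iii) gives an index $i_{k,k'}\in{\rm supp}\,u^{(k)}\setminus{\rm supp}\,u^{(k')}$. Set $m:=d-1$ and, for fixed $k$, take the row of $U^{\boxtimes m}$ indexed by the tuple $(i_{k,k'})_{k'\neq k}$. Its $k$-th entry is $\prod_{k'}u^{(k)}_{i_{k,k'}}>0$. For each $\ell\neq k$, the factor with $k'=\ell$ vanishes, so the $\ell$-th entry is $0$. The same $m$ works for every $k$.

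\emph{(ii)$\Rightarrow$(i).} Let $P\rleq Q$ be nonzero, so that $\|p^{(k)}\|_1=\|q^{(k)}\|_1$. I will show $Q\rleq P\boxtimes U^{\boxtimes mN}$ for some $N$. The idea is to build a stochastic $T$ acting on $P\boxtimes U^{\boxtimes mN}$ that outputs $Q$. First, $P\boxtimes U^{\boxtimes mN}$ contains the blocks $P\boxtimes(\text{rows } x_k e_k)^{\boxtimes N}$, which are ``pure column-$k$'' pieces carrying mass $x_k^N\|p^{(k)}\|_1$ in column $k$ only. It also contains the full-support block $P\boxtimes(\text{full row of }U^{\boxtimes mN})$. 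With these pieces, $T$ has freedom to redistribute each column's mass separately. A cleaner route is to use the zero-restriction trick from the proof of Proposition \ref{Mpro:monhommin}: $Q$ is majorized by the matrix that splits each nonzero entry $q^{(k)}_i$ into its own row (a stochastic map merges them back). Each such single-entry row $q^{(k)}_ie_k$ is producible from pure column-$k$ mass, so the only obstruction is the common-support row that every element of $\V^d$ must keep. I will therefore aim to show
\begin{equation}
P\boxtimes U^{\boxtimes mN}\rgeq c\,(1\,\cdots\,1)\boxplus\bigboxplus_{i,k}q^{(k)}_ie_k^T\quad\text{with a tiny }c>0\text{ matching norms,}
\end{equation}
and then apply a stochastic map sending this to $Q$. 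That last step merges the $e_k$-rows into rows of $Q$ and spreads the $c$-row into a full-support row of $Q$, which exists since $Q\neq0$. The masses must be matched exactly, which I will handle by using powers of $U$ whose full-support part has small weight; this weight tends to $0$ as $N$ grows because (ii) forces mass off the full-support rows. This matching is the fiddly part, and it is where I expect the main obstacle.

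\emph{(i)$\Rightarrow$(iv).} Suppose $\Phi(U)\geq1$ for some non-degenerate $\Phi\in\mathcal{H}^{\rm n.d.}(S^d,\R_+^{\rm op})$. Note that $\Phi(U)\leq1$ always holds, since $U\rgeq 1$ and the order in $\R_+^{\rm op}$ is reversed; hence $\Phi(U)=1$, and then $\Phi(U^k)=1$. Power universality, applied to a pair $P\rleq Q$ with $\Phi(P)\neq\Phi(Q)$ (non-degeneracy provides one), together with the bounds $Q\rleq P\boxtimes U^k$ and $P\rleq Q$, yields $\Phi(Q)\ge\Phi(P)\ge\Phi(Q)\Phi(U)^k=\Phi(Q)$, a contradiction.

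\emph{(iv)$\Rightarrow$(iii).} Suppose ${\rm supp}\,u^{(k)}\subseteq{\rm supp}\,u^{(k')}$ with $k\neq k'$. Take $\Phi=\Phi_{e_k,\{k,k'\}}$, which is non-degenerate since $C\neq\{k\}$. Then $\Phi(U)=\sum_{i\in{\rm supp}\,u^{(k)}\cap{\rm supp}\,u^{(k')}}u^{(k)}_i=1$, contradicting (iv).
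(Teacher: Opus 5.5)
Your implications (iii)$\Rightarrow$(ii), (i)$\Rightarrow$(iv) and (iv)$\Rightarrow$(iii) are the paper's arguments. In (i)$\Rightarrow$(iv) the inequality directions are reversed: the target is $\R_+^{\rm op}$, so $P\preceq Q$ gives $\Phi(P)\ge\Phi(Q)$, and $Q\preceq P\boxtimes U^{\boxtimes k}$ gives $\Phi(Q)\ge\Phi(P)\Phi(U)^k$. With $\Phi(U)=1$ you still get $\Phi(P)=\Phi(Q)$, so this slip is harmless.

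The genuine gap is in (ii)$\Rightarrow$(i). Your reduction is sound: show that a high power of $U$ majorizes one tiny common-support row $c(1\,\cdots\,1)$ plus single-entry rows, then merge into $Q$. But the step that carries all the content is left open, and the mechanism you offer controls the wrong quantity. A stochastic map can only produce an output row proportional to $e_k$ from input rows proportional to $e_k$ (or zero rows). So every input row with at least two nonzero entries must be routed into the $c$-row. You therefore need the column-$k$ mass of \emph{all} such rows of the power of $U$ to be at most $c$, for every $k$. For $d\ge3$ this includes rows like $(a,b,0,\ldots,0)$, about which the vanishing of the full-support weight says nothing. Also, the blocks $(x_ke_k)^{\boxtimes N}$ you point to carry mass $x_k^N\to0$. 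The pure mass that actually grows comes from products in which at least one factor is a pure row.

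The missing step is supplied by the paper's device.
\begin{itemize}
\item In $\tilde U:=U^{\boxtimes m}$, merge all rows that are not multiples of some $e_k$ into a single row $(a_1,\ldots,a_d)$. For each $k$, merge all multiples of $e_k$ into one row. This gives $\tilde U\succeq V$, where $V$ has first row $(a_1,\ldots,a_d)$ followed by $\mathrm{diag}(1-a_1,\ldots,1-a_d)$.
\item Here $a_k>0$ because $\tilde U$ has a full-support row, so $V\in\V^d$. Also $a_k\le 1-x_k<1$ by (ii).
\item The product of a pure row with any row is pure or zero. A direct computation and induction therefore give $V^{\boxtimes N}\succeq V_N$, the same shape with $a_k$ replaced by $a_k^N$.
\item Choose $N$ with $\max_k a_k^N\le c$. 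Move the fraction $(c-a_k^N)/(1-a_k^N)$ of the $k$-th diagonal row into the first row. This yields exactly $c(1\,\cdots\,1)\boxplus\mathrm{diag}(1-c,\ldots,1-c)$.
\item Reduce to normalized $Q$ via $P\succeq(\|p^{(1)}\|_1\,\cdots\,\|p^{(d)}\|_1)$. Your final merge then produces $Q$ as soon as $c<\min_k q^{(k)}_{i_0}$ for a full-support row $i_0$ of $Q$.
\end{itemize}
The paper instead targets $P$'s own common-support block, via the explicit matrix $S_m$, rather than a single $c$-row. Your target is slightly simpler, but without this power computation the implication is not proved.
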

	
	\begin{proof}
		We show the implications (iii) $\Rightarrow$ (ii) $\Rightarrow$ (i) $\Rightarrow$ (iv) $\Rightarrow$ (iii). 
		
		(iii) $\Rightarrow$ (ii): Let $k\in[d]$. The matrix $U^{\boxtimes(d-1)}$ contains a row which is the element-wise product of the $d-1$ rows $(u^{(1)}_{i_{k^\prime}}\;\cdots\;u^{(d)}_{i_{k^\prime}})$ with $u^{(k)}_{i_{k^\prime}}>0,u^{(k^\prime)}_{i_{k^\prime}}=0$ for $k^\prime\in[d]\setminus\{k\}$, which exist according to (iii). This row in $U^{\boxtimes(d-1)}$ is clearly of the form $(0\,\cdots\,0\underbrace{x_k}_{k\text{-th entry}}0\,\cdots\,0)$ for some $x_k>0$. We conclude that (ii) holds for $m=d-1$. 
		
		(ii) $\Rightarrow$ (i): Assume $U\in \V_n^d$ satisfies (ii), and let $P,Q\in \V_n^d$, $Q\preceq P$. We may assume WLOG that $P,Q$ have columns of unit $1$-norm. By (ii) there exists $m_0\in\N$ such that $\tilde{U}:=U^{\boxtimes m_0}$ has a row $(0\,\cdots\,0\underbrace{x_k}_{k\text{-th entry}}0\,\cdots\,0)$ with $x_k>0$ for every $k\in[d]$. We then have 
		\begin{align}
			\tilde{U}&=\left(\begin{array}{ccc}
				\tilde{u}^{(1)}_0&\cdots&\tilde{u}^{(d)}_0\\
				1-\|\tilde{u}^{(1)}_0\|_1&\cdots&0\\
				0&\ddots&0\\
				0&\cdots&1-\|\tilde{u}^{(d)}_0\|_1
			\end{array}\right)\\
            &\succeq\left(\begin{array}{ccc}
				\|\tilde{u}^{(1)}_0\|_1&\cdots&\|\tilde{u}^{(d)}_0\|_1\\
				1-\|\tilde{u}^{(1)}_0\|_1&\cdots&0\\
				0&\ddots&0\\
				0&\cdots&1-\|\tilde{u}^{(d)}_0\|_1
			\end{array}\right), 
		\end{align}
		for some vectors $\tilde{u}_0^{(k)}\in\R_+^l$, $k\in[d]$, where $l\in\N$, and such that $1-\|\tilde{u}_0^{(k)}\|_1>0$ for all $k\in[d]$. Then for any $m\geq1$ 
		\begin{align}
			\tilde{U}^{\boxtimes m}&\succeq\left(\begin{array}{ccc}
				\|\tilde{u}^{(1)}_0\|_1&\cdots&\|\tilde{u}^{(d)}_0\|_1\\
				1-\|\tilde{u}^{(1)}_0\|_1&\cdots&0\\
				0&\ddots&0\\
				0&\cdots&1-\|\tilde{u}^{(d)}_0\|_1
			\end{array}\right)^{\boxtimes m}\\
			&\succeq\left(\begin{array}{ccc}
				\|\tilde{u}^{(1)}_0\|_1^m&\cdots&\|\tilde{u}^{(d)}_0\|_1^m\\
				1-\|\tilde{u}^{(1)}_0\|_1^m&\cdots&0\\
				0&\ddots&0\\
				0&\cdots&1-\|\tilde{u}^{(d)}_0\|_1^m
			\end{array}\right)=:\tilde{U}_m. 
		\end{align}
		Denote the stochastic matrix that maps $\tilde{U}^{\boxtimes m}$ to $\tilde{U}_m$ as $T_m$. Write
		\begin{equation}
			P=\left(\begin{array}{ccc}
				p_0^{(1)}&\cdots&p_0^{(d)}\\
				p_1^{(1)}&\cdots&p_1^{(d)}
			\end{array}\right), 
		\end{equation}
		where for $k\in[d]$ the vectors $p_0^{(k)}\in\R_{>0}^{n_0}$ form the part of common support of $P$ and $p_1^{(k)}\in\R_+^{n_1}$, and $n_0+n_1=n$. Also, define $v_0:=(1/n_0,\ldots,1/n_0)\in\R_+^{n_0}$. Next, consider 
		\begin{equation}
			S_m:=\left(\begin{array}{cccc}
				v_0&\frac{p_0^{(1)}}{1-\|\tilde{u}_0^{(1)}\|_1^m}-\frac{\|\tilde{u}_0^{(1)}\|_1^m}{1-\|\tilde{u}_0^{(1)}\|_1^m}v_0&\cdots&\frac{p_0^{(d)}}{1-\|\tilde{u}_0^{(d)}\|_1^m}-\frac{\|\tilde{u}_0^{(d)}\|_1^m}{1-\|\tilde{u}_0^{(d)}\|_1^m}v_0\\
				0&\frac{1-\|p_0^{(1)}\|_1}{1-\|\tilde{u}_0^{(1)}\|_1^m}&\cdots&0\\
				\vdots&0&\ddots&0\\
				0&0&\cdots&\frac{1-\|p_0^{(d)}\|_1}{1-\|\tilde{u}_0^{(d)}\|_1^m}
			\end{array}\right), 
		\end{equation}
		which has columns that sum to $1$ and non-negative entries for sufficiently large $m$ because $\|\tilde{u}_0^{(k)}\|_1<1$ and $p_0^{(k)}$ has positive entries for all $k\in[d]$. Then 
		\begin{align}
			P&\preceq\left(\begin{array}{ccc}
				p_0^{(1)}&\cdots&p_0^{(d)}\\
				1-\|p_0^{(1)}\|_1&\cdots&0\\
				0&\ddots&0\\
				0&\cdots&1-\|p_0^{(d)}\|_1
			\end{array}\right)=S_m\tilde{U}_m=S_mT_m\tilde{U}^{\boxtimes m}\\
			&\preceq \tilde{U}^{\boxtimes m}=(1\,\cdots\,1)\boxtimes U^{\boxtimes(m_0m)}\preceq Q\boxtimes U^{\boxtimes(m_0m)}, 
		\end{align}
		which is what we aimed to prove. 
		
		(i) $\Rightarrow$ (iv): Let $\Phi_{\underline{\alpha},C}\in\mathcal{H}^{\rm n.d.}(S^d,\R_+^{\rm op})$, and $P,Q\in \V^d$ such that $Q\preceq P$. Since $U$ is power universal, there exists $m\in\N$ such that $P\preceq Q\boxtimes U^{\boxtimes m}$. Using monotonicity and multiplicativity: 
		\begin{equation}
			\Phi_{\underline{\alpha},C}(P)\geq \Phi_{\underline{\alpha},C}(Q\boxtimes U^{\boxtimes m})=\Phi_{\underline{\alpha},C}(Q)\Phi_{\underline{\alpha},C}(U)^m\geq \Phi_{\underline{\alpha},C}(P)\Phi_{\underline{\alpha},C}(U)^m. 
		\end{equation}
		If $\Phi_{\underline{\alpha},C}(U)=1$, then $\Phi_{\underline{\alpha},C}(P)=\Phi_{\underline{\alpha},C}(Q)$. This implies that $\Phi_{\underline{\alpha},C}$ is degenerate, leading to a contradiction. Therefore, $\Phi_{\underline{\alpha},C}(U)<1$. 
		
		(iv) $\Rightarrow$ (iii): Let $k,k^\prime\in[d]$, $k\neq k^\prime$. Since $\Phi_{e_k,\{k,k^\prime\}}(U)<1$ and the columns in $U$ sum to $1$, $I(U,\{k,k^\prime\})$ cannot contain all rows $i$ in $U$ where $u^{(k)}_i>0$. Hence there exists a row $i$ such that $u^{(k)}_i>0$, but $u^{(k^\prime)}_i=0$, proving the claim. 
	\end{proof}
	
	\begin{remark}
		Note that it follows from the proof of (iv) $\Rightarrow$ (iii) above that we only have to require that monotones from a certain subset of $\mathcal{H}^{\rm n.d.}(S^d,\R_+^{\rm op})$ are strictly less than $1$, namely all those of the type $\Phi_{e_k,\{k,k^\prime\}}$, where $k,k^\prime\in[d]$, $k\neq k^\prime$. 		
	\end{remark}
	
	\subsection{Results: applying the Vergleichsstellensatz}
    \label{Msubsec:applyingvssnorestrictions}
	
	In the previous sections we have characterized all monotone homomorphisms, the monotone derivations at degenerate monotone homomorphisms, and the power universal elements in the matrix majorization semiring $S^d$. Now, we will apply the Vergleichsstellensatz in the form of Theorem \ref{thm:Fritz2022} to $S^d$ to obtain results on the large-sample and catalytic majorization of tuples of probability distributions. We obtain sufficient (and almost necessary) conditions for {\it exact} majorization in large samples and catalytically in Theorem \ref{Mthm:majorization} below. Furthermore, we derive necessary and sufficient conditions for majorization in large samples and catalytically where the majorization only happens {\it approximately} with asymptotically vanishing error in Theorem \ref{Mthm:genasymptocatalytic}.
	
	The conditions, in terms of the monotone homomorphisms, for a matrix $P\in \V^d$  to majorize another $Q\in \V^d$ in the large-sample or catalytic setting, can be expressed as 
	\begin{equation}\label{Meq:majorizationconditions}
		\Phi_{\underline{\alpha},C}(P)<\Phi_{\underline{\alpha},C}(Q)\quad\quad\forall\; \Phi_{\underline{\alpha},C}\in\mathcal{H}^{\rm n.d.}(S^d,\R_+^{\rm op}). 
	\end{equation}
    These homomorphisms $\Phi_{\underline{\alpha},C}$ are defined in \eqref{Meq:homomorphismsum} and if we adopt the convention $0^0=1$, we can rewrite these more conveniently as follows: 
    \begin{equation}
        \label{Meq:PhiR+op}
        \Phi_{\underline{\alpha},C}(P)=\sum_{i\in I(P,C)}\big(p^{(1)}_i\big)^{\alpha_1}\cdots\big(p^{(d)}_i\big)^{\alpha_d}, 
    \end{equation}
    where $\underline{\alpha}\in A_+$ (defined in \eqref{Meq:A+}), $C\subseteq[d]$ such that ${\rm supp}\,\underline{\alpha}\subseteq C$, and the sum runs only over the rows in $I(P,C)$ (defined in \eqref{Meq:Iset}). Furthermore, we exclude the degenerate homomorphisms $\Phi^{(k)}(P)=\|p^{(k)}\|_1$, which equals the sum of the entries in column $k$, for $k\in[d]$. 
	
	We note that if $P,Q\in\V^d$ have columns that sum to $1$ and satisfy \eqref{Meq:majorizationconditions}, then $P$ is automatically power universal. To see this, observe that for any $\Phi_{\underline{\alpha},C}\in\mathcal{H}^{\rm n.d.}(S^d,\R_+^{\rm op})$ we have $\Phi_{\underline{\alpha},C}(P)<\Phi_{\underline{\alpha},C}(Q)\leq1$, hence (iv) in Proposition \ref{Mlem:poweruniversalconditions} holds for $U=P$, which means $P$ is power universal. 
	
	Putting everything together, Theorem \ref{thm:Fritz2022} yields the following result. 
	\begin{theorem}\label{Mthm:majorization}
		Let $P=(p^{(1)},\ldots,p^{(d)})$, $Q=(q^{(1)},\ldots,q^{(d)})$ be tuples of probability vectors such that $\bigcap_{k=1}^d{\rm supp}\,p^{(k)}\neq\emptyset$ and $\bigcap_{k=1}^d{\rm supp}\,q^{(k)}\neq\emptyset$. If the inequalities in \eqref{Meq:majorizationconditions} hold, then we have the following: 
		\begin{enumerate}[(a)]
			\item There exist a stochastic map $T$ and a tuple of probability vectors $(r^{(1)},\ldots,r^{(d)})$, satisfying $\bigcap_{k=1}^d{\rm supp}\,r^{(k)}\neq\emptyset$, such that
			\begin{equation}
				T(r^{(k)}\otimes p^{(k)})=r^{(k)}\otimes q^{(k)}\quad\text{for }k\in[d], 
			\end{equation}
			and we may choose, for $n$ sufficiently large, 
			\begin{equation}
                \label{Meq:rform}
				r^{(k)}=\frac{1}{n+1}\bigoplus_{\ell=0}^n\left(q^{(k)}\right)^{\otimes \ell}\otimes\left(p^{(k)}\right)^{\otimes(n-\ell)}\quad\text{ for }k\in[d]. 
			\end{equation}
			\item For $n$ sufficiently large there exists a stochastic map $T_n$ such that 
			\begin{equation}
				T_n(p^{(k)})^{\otimes n}=(q^{(k)})^{\otimes n}\quad\text{for }k\in[d]. 
			\end{equation}
		\end{enumerate}
		Conversely, if either one of these holds, then the inequalities in \eqref{Meq:majorizationconditions} hold non-strictly. 
	\end{theorem}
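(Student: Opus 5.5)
The plan is to apply Theorem \ref{thm:Fritz2022} to the minimal restrictions semiring $S^d$ with $x=P$ and $y=Q$, after checking its hypotheses using the results of this section. We first verify the structural assumptions. $S^d$ is zerosumfree: a $\boxplus$-sum of matrices is zero only if each summand is zero, i.e.\ all rows vanish. It is a semidomain: a Kronecker product of two nonzero elements contains the product of two strictly positive rows, hence is nonzero. If $0\preceq P\preceq 0$, then norm preservation forces $P=0$. The norm $\|P\|=(\|p^{(1)}\|_1,\ldots,\|p^{(d)}\|_1)$ was already shown to be a surjective homomorphism satisfying \eqref{eq:surjectivehomomorphismproperties}, and its kernel is trivial by the same norm argument. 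For polynomial growth we need a power universal element, and any $U$ satisfying condition (iii) of Proposition \ref{Mlem:poweruniversalconditions} works. For $d\ge2$ one such element is the stacked matrix with a row $(1\,\cdots\,1)$ and, for each $k$, a row that is nonzero only in column $k$, normalized so that each column sums to $1$.

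Next we match conditions (i) and (ii) of Theorem \ref{thm:Fritz2022} with \eqref{Meq:majorizationconditions}. By Proposition \ref{Mpro:monhommin}, the only non-degenerate monotone homomorphisms take values in $\R_+^{\rm op}$ and are exactly the $\Phi_{\underline{\alpha},C}$ in $\mathcal{H}^{\rm n.d.}(S^d,\R_+^{\rm op})$. These have trivial kernel, since every nonzero matrix has a fully positive row and such a row lies in $I(P,C)$. In $\R_+^{\rm op}$, the requirement $\Phi(x)>\Phi(y)$ reads $\Phi(P)<\Phi(Q)$ in the usual order, which is precisely \eqref{Meq:majorizationconditions}. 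By Proposition \ref{Mpro:derivations}, up to interchangeability every monotone derivation vanishes identically, so condition (ii) as literally stated ($\Delta(x)>\Delta(y)$) cannot hold. This is the main subtlety. Fritz's Theorem 8.6 requires the derivation inequality only for derivations with $\Delta(u)=1$, and no such derivation exists here: an interchangeable modification of the zero derivation gives $\Delta(u)=\Delta(1)\cdot c$, which is $0$ after Lemma \ref{Mlem:interchangeability}. So condition (ii) is vacuous. I would make this explicit by noting that the zero derivation cannot be normalized, and that interchangeability allows one to restrict to the representatives of Lemma \ref{Mlem:interchangeability}.

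Since $P$ and $Q$ are probability tuples, $\|P\|=\|Q\|=(1,\ldots,1)$. As remarked before the statement, \eqref{Meq:majorizationconditions} gives $\Phi(P)<\Phi(Q)\le 1$ for all $\Phi$, so $P$ is power universal by Proposition \ref{Mlem:poweruniversalconditions}(iv). Theorem \ref{thm:Fritz2022}(b) then yields $P^{\boxtimes n}\succeq Q^{\boxtimes n}$ for large $n$, which is part (b).

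For part (a), rather than relying on the abstract catalyst, we use the Duan et al.\ construction $R=\bigboxplus_{\ell=0}^{n}Q^{\boxtimes \ell}\boxtimes P^{\boxtimes(n-\ell)}$, scaled by $\frac{1}{n+1}$. Then $R\boxtimes P=\bigboxplus_{\ell}Q^{\ell}P^{n+1-\ell}$. We map the $\ell=0$ block $P^{\boxtimes(n+1)}$ to $Q^{\boxtimes(n+1)}$ using (b) with $n+1$ large, and leave the other blocks unchanged. This gives $R\boxtimes Q$, and the direct sum of the stochastic maps is stochastic. Scaling by $\frac{1}{n+1}$ commutes with $T$, and $R$ has a common-support row because the tuples $P$ and $Q$ do, which is also what makes $R$ an element of $\V^d$.

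For the converse, we apply monotonicity and multiplicativity of each $\Phi_{\underline{\alpha},C}$. In case (b), $\Phi(P)^n\le\Phi(Q)^n$. In case (a), $\Phi(R)\Phi(P)\le\Phi(R)\Phi(Q)$ with $\Phi(R)>0$ by trivial kernel. Both give non-strict \eqref{Meq:majorizationconditions}.
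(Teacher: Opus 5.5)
Your proposal is correct and follows the same route as the paper: Theorem~\ref{thm:Fritz2022} applied to $S^d$ via Propositions~\ref{Mpro:monhommin}, \ref{Mpro:derivations} and \ref{Mlem:poweruniversalconditions}, with the Duan-type catalyst obtained from part (b). You also spell out the routine hypotheses (zerosumfreeness, semidomain, trivial kernels) that the paper leaves implicit. The one muddled step is the formula $\Delta(u)=\Delta(1)\cdot c$. It should say that every monotone derivation is interchangeable with $0$, hence constant on $\sim$-classes; your $u$ has unit column sums, so $u\sim 1$; and the Leibniz rule gives $\Delta(1)=2\Delta(1)$. Together these give $\Delta(u)=0$, so condition (ii) is vacuous---which is exactly why choosing a normalized $u$ matters.
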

	
	Just as in the case with equal support in \cite{farooq2024}, we have an asymptotic version of majorization. Note that now, as opposed to the exact version of majorization in the Theorem above, we have to additionally require that $P$ is power universal, which is also necessary for the approximate version in the equal-support case in \cite{farooq2024}.  
	\begin{theorem}\label{Mthm:genasymptocatalytic}
		Let $P=(p^{(1)},\ldots,p^{(d)})$, $Q=(q^{(1)},\ldots,q^{(d)})$ be tuples of probability vectors such that $\bigcap_{k=1}^d{\rm supp}\,p^{(k)}\neq\emptyset$ and $\bigcap_{k=1}^d{\rm supp}\,q^{(k)}\neq\emptyset$, and $P$ is power universal. The following are equivalent: 
		\begin{enumerate}[(i)]
			\item For all $\Phi_{\underline{\alpha},C}\in\mathcal{H}^{\rm n.d.}(S^d,\R_+^{\rm op})$ that are continuous (see \eqref{Meq:continuity}) it holds that 
            \begin{equation}\label{Meq:genasymptocatalyticcondition}
                \Phi_{\underline{\alpha},C}(P)\leq \Phi_{\underline{\alpha},C}(Q). 
            \end{equation}
			\item For every $\varepsilon>0$ there exist two tuples of probability vectors $(q_\varepsilon^{(1)},\ldots,q_\varepsilon^{(d)})$ and $(r_\varepsilon^{(1)},\ldots,r_\varepsilon^{(d)})$, satisfying $\bigcap_{k=1}^d{\rm supp}\,q_\varepsilon^{(k)}\neq\emptyset$ and $\bigcap_{k=1}^d{\rm supp}\,r_\varepsilon^{(k)}\neq\emptyset$, and a stochastic map $T_\varepsilon$ such that
			\begin{equation}
				\|q^{(k)}-q_\varepsilon^{(k)}\|_1\leq\varepsilon\quad\text{for }k\in[d], 
			\end{equation}
			and 
			\begin{equation}
				T_\varepsilon(r_\varepsilon^{(k)}\otimes p^{(k)})=r_\varepsilon^{(k)}\otimes q_\varepsilon^{(k)}\quad\text{for }k\in[d]. 
			\end{equation}
			Also, we may choose, for $n$ sufficiently large
			\begin{equation}
				r_\varepsilon^{(k)}=\frac{1}{n+1}\bigoplus_{\ell=0}^n\left(q_\varepsilon^{(k)}\right)^{\otimes\ell}\otimes\left(p^{(k)}\right)^{\otimes(n-\ell)}\quad\text{ for }k\in[d]. 
			\end{equation}
			\item For every $\varepsilon>0$ there exist a tuple of probability vectors $(q_\varepsilon^{(1)},\ldots,q_\varepsilon^{(d)})$, satisfying\\$\bigcap_{k=1}^d{\rm supp}\,q_\varepsilon^{(k)}\neq\emptyset$, such that
			\begin{equation}
				\|q^{(k)}-q_\varepsilon^{(k)}\|_1\leq\varepsilon\quad\text{for }k\in[d], 
			\end{equation}
			and for $n$ sufficiently large a stochastic map $T_{\varepsilon,n}$ such that 
			\begin{equation}
				T_{\varepsilon,n}(p^{(k)})^{\otimes n}=(q_\varepsilon^{(k)})^{\otimes n}\quad\text{for }k\in[d]. 
			\end{equation}
		\end{enumerate}
	\end{theorem}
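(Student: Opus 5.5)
We prove the cycle (ii) $\Rightarrow$ (i) $\Rightarrow$ (iii) $\Rightarrow$ (ii). The implication (iii) $\Rightarrow$ (ii) is the standard Duan-type construction: if $P^{\boxtimes n}\succeq Q_\varepsilon^{\boxtimes n}$, then the catalyst $r_\varepsilon^{(k)}$ of the stated form, with stochastic map $\mathrm{id}\oplus\cdots\oplus T_{\varepsilon,n}$, shifts the summands and gives $r_\varepsilon\boxtimes P\succeq r_\varepsilon\boxtimes Q_\varepsilon$. This is exactly as in Theorem \ref{Mthm:majorization}(a). The catalyst has a row of full support because $P$ and $Q_\varepsilon$ do.

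For (ii) $\Rightarrow$ (i), fix a continuous $\Phi_{\underline{\alpha},C}$, so that $C={\rm supp}\,\underline{\alpha}$. By monotonicity and multiplicativity, $\Phi(R_\varepsilon)\Phi(P)\le \Phi(R_\varepsilon)\Phi(Q_\varepsilon)$. Moreover $\Phi(R_\varepsilon)>0$, since $R_\varepsilon$ has a row of full support. Hence $\Phi(P)\le\Phi(Q_\varepsilon)$. Because $C={\rm supp}\,\underline{\alpha}$ and $0^{\alpha}=0$ for $\alpha>0$, the function $\Phi_{\underline{\alpha},C}$ is the sum of $\varphi_{\underline{\alpha},C}$ over all rows. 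It is therefore continuous in the entries for a fixed number of rows. Letting $\varepsilon\to0$ gives $\Phi(P)\le\Phi(Q)$. (We may pad with zero rows to a common size.)

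The main step is (i) $\Rightarrow$ (iii). We perturb $Q$ toward a fixed matrix that is strictly majorized, and then apply Theorem \ref{Mthm:majorization}(b). Concretely, set $Q_\varepsilon:=(1-\delta)Q\boxplus\delta P^{\boxtimes m}$ for suitable $m\in\N$ and small $\delta>0$. This is a tuple of probability vectors with a common-support row, and it lies within $2\delta$ of $Q$ in $1$-norm after zero-padding. We need $\Phi(P)<\Phi(Q_\varepsilon)=(1-\delta)\Phi(Q)+\delta\Phi(P)^m$ for every nondegenerate $\Phi$. By Proposition \ref{Mlem:poweruniversalconditions}(iv) we have $\Phi(P)<1$, so $\Phi(P)^m$ is in fact smaller than $\Phi(P)$ and this direction fails. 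We therefore choose a different perturbation: mix instead with a maximally "uninformative" element. Take $Q_\varepsilon:=(1-\delta)Q\boxplus\delta(1\,\cdots\,1)$. Then $\Phi(Q_\varepsilon)=(1-\delta)\Phi(Q)+\delta$. Since $\Phi(P)<1$, this exceeds $\Phi(P)$ whenever $\Phi(Q)\ge\Phi(P)$.

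The obstacle is twofold. First, (i) only controls \emph{continuous} $\Phi$. Second, the strict gap must hold uniformly, which matters because $\Phi(P)$ may approach $1$. We handle the discontinuous $\Phi_{\underline{\alpha},C}$ with $C\supsetneq{\rm supp}\,\underline{\alpha}$ as follows. Such a $\Phi$ sums over fewer rows than $\Phi_{\underline{\alpha},{\rm supp}\,\underline{\alpha}}$. The added row $(\delta\,\cdots\,\delta)$ has full support, so it is counted by every $\Phi$ and contributes $\delta$. To beat the discontinuity, we additionally replace $Q$ by a perturbation whose rows all have full support. Concretely, first mix each row of $Q$ slightly with its full-support averages: apply $q_i\mapsto(1-\eta)q_i+\eta\,\tilde q$, distributing mass so that every row has all entries positive. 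For such a $Q'$, every $I(Q',C)$ is all rows, so $\Phi_{\underline{\alpha},C}(Q')=\Phi_{\underline{\alpha},{\rm supp}\,\underline{\alpha}}(Q')$. Meanwhile $\Phi_{\underline{\alpha},C}(P)\le\Phi_{\underline{\alpha},{\rm supp}\,\underline{\alpha}}(P)$. Thus the inequalities for all $\Phi$ reduce to those for continuous $\Phi$, up to the continuity error from $Q\to Q'$.

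Uniformity of the strict inequality comes from $\Phi(P)<1$ holding on the compact family, with the degenerate points $\underline{\alpha}=e_k$ as the limiting case. Near $e_k$, the quantity $\Phi(P)$ tends to values at most $1$. There we use the $\delta$-mixing: $(1-\delta)\Phi(Q')+\delta-\Phi(P)\ge\delta(1-\Phi(P))-(\text{error})$. Controlling this near the vertices, where $1-\Phi(P)\to0$, is the delicate part. We treat it by a first-order expansion in $\underline{\alpha}$ around $e_k$, comparing the KL-type derivatives, or else by choosing the perturbation error $\eta\ll\delta$ uniformly. Once \eqref{Meq:majorizationconditions} holds for $P$ versus $Q_\varepsilon$, Theorem \ref{Mthm:majorization}(b) yields (iii).
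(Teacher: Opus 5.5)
Your easy implications are fine: (iii)$\Rightarrow$(ii) is the Duan catalyst construction. (ii)$\Rightarrow$(i) follows from monotonicity, multiplicativity, $\Phi(R_\varepsilon)>0$ and continuity of $\Phi_{\underline{\alpha},{\rm supp}\,\underline{\alpha}}$. One caveat there: ``pad with zero rows to a common size'' presumes the number of rows of $Q_\varepsilon$ stays bounded, which is not given. You should split off the rows beyond those of $Q$ and bound their total contribution by $\varepsilon$, e.g.\ via $\prod_k x_k^{\alpha_k}\le\sum_k\alpha_k x_k$.

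The genuine gap is in the main step (i)$\Rightarrow$(iii): you never establish $\Phi(P)<\Phi(Q_\varepsilon)$ for all non-degenerate $\Phi$. The argument stops at an unquantified ``continuity error from $Q\to Q'$'' and a ``delicate part'' near the vertices. For that part you only offer a KL-type expansion or ``$\eta\ll\delta$ uniformly'', and neither is carried out. The KL-type expansion is irrelevant here anyway, since $S^d$ has no non-zero derivations (Proposition~\ref{Mpro:derivations}). Continuity in the entries alone cannot close this, because it is not uniform in $\underline{\alpha}$ (think of $x^{\alpha_k}$ at $x=0$ as $\alpha_k\to0$). Moreover, your premise that $1-\Phi(P)\to0$ near the vertices is false. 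For each of the finitely many characters $C$ with $|C|\ge2$, the map $\underline{\alpha}\mapsto\Phi_{\underline{\alpha},C}(P)$ is continuous on a compact set. By Proposition~\ref{Mlem:poweruniversalconditions}(iv) it is $<1$ everywhere, including at $\underline{\alpha}=e_k$. The degenerate $\Phi^{(k)}=\Phi_{e_k,\{k\}}$ are not limits of non-degenerate homomorphisms.

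The missing idea is that the full-support perturbation costs nothing, by an exact one-sided bound from monotonicity. Take $Q_\varepsilon:=(1-\tfrac{\varepsilon}{2})Q+\tfrac{\varepsilon}{2}W$ entry-wise, with $W=(w,\ldots,w)$ and $w$ uniform. This is what your row mixing becomes if every row is mixed with the same vector and the columns stay normalized. Merging paired rows shows $(1-\tfrac{\varepsilon}{2})Q\boxplus\tfrac{\varepsilon}{2}W\succeq Q_\varepsilon$. Additivity, homogeneity and $\Phi(W)=1$ then give $\Phi(Q_\varepsilon)\ge(1-\tfrac{\varepsilon}{2})\Phi(Q)+\tfrac{\varepsilon}{2}$ for every $\Phi_{\underline{\alpha},C}$.

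For continuous $\Phi$, use $\Phi(Q)\le1$ and (i). Either $\Phi(Q)<1$, and the bound exceeds $\Phi(Q)\ge\Phi(P)$, or $\Phi(Q)=1>\Phi(P)$ by power universality.

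For discontinuous $\Phi_{\underline{\alpha},C}$, note that $Q_\varepsilon$ has no zero entries. Hence $\Phi_{\underline{\alpha},C}(Q_\varepsilon)=\Phi_{\underline{\alpha},D}(Q_\varepsilon)$ with $D={\rm supp}\,\underline{\alpha}$, and $\Phi_{\underline{\alpha},C}(P)\le\Phi_{\underline{\alpha},D}(P)$, as you say. You must, however, treat $\underline{\alpha}=e_k$ separately: there $\Phi_{e_k,\{k\}}$ is degenerate and not covered by (i). Use $\Phi_{e_k,C}(P)<1=\Phi_{e_k,C}(Q_\varepsilon)$ instead.

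This gives \eqref{Meq:majorizationconditions} for $P$ versus the single matrix $Q_\varepsilon$ with no error term. That is all Theorem~\ref{Mthm:majorization} needs: strictness for each $\Phi$ separately, with no uniform gap. Your extra $\boxplus\,\delta(1\,\cdots\,1)$ step is then unnecessary.
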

	\begin{proof}
		(i) $\Rightarrow$ (ii) and (iii): Fix any $k\in[d]$. Define $W:=(w,\ldots,w)$, where $w:=(1/n,\ldots,1/n)\in\R^n$, and 
		\begin{equation}
			Q_\varepsilon:=\left(1-\frac{\varepsilon}{2}\right)Q+\frac{\varepsilon}{2}W, 
		\end{equation}
		where $+$ means entry-wise addition of two matrices. Write $Q_\varepsilon=(q_\varepsilon^{(1)},\ldots,q_\varepsilon^{(d)})$. Let $\Phi_{\underline{\alpha},C}\in\mathcal{H}^{\rm n.d.}(S^d,\R_+^{\rm op})$ be continuous. Then 
		\begin{align}
			\Phi_{\underline{\alpha},C}&(Q_\varepsilon)=\Phi_{\underline{\alpha},C}\left(\left(1-\frac{\varepsilon}{2}\right)Q+\frac{\varepsilon}{2}W\right)\geq \Phi_{\underline{\alpha},C}\left(\left(1-\frac{\varepsilon}{2}\right)Q\boxplus\frac{\varepsilon}{2}W\right)\\
			&=\left(1-\frac{\varepsilon}{2}\right)\Phi_{\underline{\alpha},C}(Q)+\frac{\varepsilon}{2}\Phi_{\underline{\alpha},C}(W)=\Phi_{\underline{\alpha},C}(Q)+\frac{\varepsilon}{2}[1-\Phi_{\underline{\alpha},C}(Q)]\\
            &\geq \Phi_{\underline{\alpha},C}(Q)\geq \Phi_{\underline{\alpha},C}(P), 
		\end{align}
		where in the first inequality we used monotonicity, in the equality that follows we used additivity and the fact that $\Phi_{\underline{\alpha},C}(cA)=c\Phi_{\underline{\alpha},C}(A)$ for any $c>0$ and $A\in \V^d$, in the next equality we used that $\Phi_{\underline{\alpha},C}(W)=1$ since all the columns in $W$ are the same and sum to $1$, in the penultimate inequality we used that $\Phi_{\underline{\alpha},C}(Q)\leq1$, and in the final inequality we used the assumption in (i). If $\Phi_{\underline{\alpha},C}(Q)=1$, then the final inequality is strict since $\Phi_{\underline{\alpha},C}(P)<1$ by Lemma \ref{Mlem:poweruniversalconditions}. If $\Phi_{\underline{\alpha},C}(Q)<1$, then the penultimate inequality is strict. Thus, in both cases $\Phi_{\underline{\alpha},C}(P)<\Phi_{\underline{\alpha},C}(Q_\varepsilon)$.
		
		Let $\Phi_{\underline{\alpha},C}\in\mathcal{H}^{\rm n.d.}(S^d,\R_+^{\rm op})$ be discontinuous. Since $Q_\varepsilon$ has no $0$ entries, $\Phi_{\underline{\alpha},C}(Q_\varepsilon)=\Phi_{\underline{\alpha},D}(Q_\varepsilon)$, where $D\subseteq C$ consists of all $c\in C$ that satisfy $\alpha_c>0$. If $\Phi_{\underline{\alpha},D}$ is one of the degenerate homomorphisms (i.e. the $1$-norm of one of the columns), then 
        \begin{equation}
			\Phi_{\underline{\alpha},C}(P)<1=\Phi_{\underline{\alpha},D}(Q_\varepsilon)=\Phi_{\underline{\alpha},C}(Q_\varepsilon), 
		\end{equation}
        where the inequality follows from Lemma \ref{Mlem:poweruniversalconditions}. If $\Phi_{\underline{\alpha},D}$ is nondegenerate, then we find 
		\begin{equation}
			\Phi_{\underline{\alpha},C}(P)\leq \Phi_{\underline{\alpha},D}(P)<\Phi_{\underline{\alpha},D}(Q_\varepsilon)=\Phi_{\underline{\alpha},C}(Q_\varepsilon), 
		\end{equation}
		where the first inequality is due to the fact that $\Phi_{\underline{\alpha},C}$ sums over all or a subset of the rows that $\Phi_{\underline{\alpha},D}$ sums over, and the second inequality is due to the fact that $\Phi_{\underline{\alpha},D}$ is continuous and nondegenerate. 
		
		We conclude that $\Phi_{\underline{\alpha},C}(P)<\Phi_{\underline{\alpha},C}(Q_\varepsilon)$ for all $\Phi_{\underline{\alpha},C}\in\mathcal{H}^{\rm n.d.}(S^d,\R_+^{\rm op})$, hence Theorem \ref{Mthm:majorization} yields (ii) and (iii). Note that one easily sees that $\bigcap_{k=1}^d{\rm supp}\,q_\varepsilon^{(k)}\neq\emptyset$ and the fact that $\bigcap_{k=1}^d{\rm supp}\,r_\varepsilon^{(k)}\neq\emptyset$ in (ii) follows from the possible choice of the $r_\varepsilon^{(k)}$ in \eqref{Meq:rform}. 
		
		(ii) or (iii) $\Rightarrow$ (i): The fact that $\bigcap_{k=1}^d{\rm supp}\,q_\varepsilon^{(k)}\neq\emptyset$, respectively $\bigcap_{k=1}^d{\rm supp}\,r_\varepsilon^{(k)}\neq\emptyset$, implies that $(q_\varepsilon^{(1)},\ldots,q_\varepsilon^{(d)})\in S^d$, respectively $(r_\varepsilon^{(1)},\ldots,r_\varepsilon^{(d)})\in S^d$. Then, for any continuous $\Phi\in\mathcal{H}^{\rm n.d.}(S^d,\R_+^{\rm op})$ we find for any $\varepsilon>0$, using monotonicity and multiplicativity, that $\Phi(P)\leq \Phi(Q_\varepsilon)$, where $Q_\varepsilon:=(q_\varepsilon^{(1)},\ldots,q_\varepsilon^{(d)})$. For any $\varepsilon>0$, write $Q_\varepsilon=Q_\varepsilon^0\boxplus Q_\varepsilon^1$, where $Q_\varepsilon^0$ consists of the first $n_Q$ rows of $Q_\varepsilon$ and $n_Q$ is the number of rows in $Q$ (if $Q_\varepsilon$ has less than $n_Q$ rows, then we add rows of zeros to form $Q_\varepsilon^0$). Note that $\lim_{\varepsilon\rightarrow0}Q_\varepsilon^0=Q$ and the sum of every column in $Q_\varepsilon^1$ is at most $\varepsilon$. Then 
        \begin{equation}
            \Phi(Q_\varepsilon)+1=\Phi(Q_\varepsilon\boxplus(1\,\cdots\,1))=\Phi(Q_\varepsilon^0)+\underbrace{\Phi(Q_\varepsilon^1\boxplus(1\,\cdots\,1))}_{\leq1+\varepsilon}, 
        \end{equation}
        where in the inequality we used monotonicity and the fact that the rows of $Q_\varepsilon^1\boxplus(1\,\cdots\,1)$ sum to at most $1+\varepsilon$. Hence $\Phi(Q_\varepsilon)\leq\Phi(Q_\varepsilon^0)+\varepsilon$. This implies that $\Phi(P)\leq\Phi(Q_\varepsilon^0)+\varepsilon$. Taking the limit $\varepsilon\rightarrow0$ in this inequality, since $\Phi$ is a continuous function when restricted to matrices with a fixed number of rows, we deduce that $\Phi(P)\leq\Phi(Q)$, completing the proof.     
	\end{proof}

    \begin{remark}
    \label{Mrmk:minimalrestrictions}
        As mentioned in the Introduction, we could drop the requirement \eqref{Meq:baselinecond}, namely that the intersection of the supports of the columns in a matrix is non-empty, and require instead that the pairwise intersection of supports is non-empty. This would yield a semiring with even less restrictions than the one studied in this section. However, in that case, the homomorphisms and derivations appearing in the conditions of the Vergleichsstellensatz (Theorem \ref{thm:Fritz2022}) will turn out to be only the bivariate ones, namely those associated with the R\'enyi divergences between each pair of columns of the matrix. To get an idea why this is so, consider the $d=3$ case. A matrix of the form 
        \begin{equation}
            P:=\left(\begin{array}{ccc}
            *&*&0\\
            *&0&*\\
            0&*&*
            \end{array}\right), 
        \end{equation}
        where the asterisks represent positive entries, has pairwise non-empty intersections of the supports, but does not have non-empty intersection over all the supports. Any homomorphism $\Phi_{\underline{\alpha},C}$ with a character $C$ containing more than two columns will evaluate to zero on the matrix above, since the set of rows $I(P,C)$ defined in \eqref{Meq:Iset} is empty. Hence $\Phi_{\underline{\alpha},C}$ has a non-trivial kernel, and such homomorphisms are excluded from the set of homomorphisms appearing in the conditions of Theorem \ref{thm:Fritz2022}. Therefore, the only non-degenerate monotone homomorphisms with trivial kernel associated with this semiring are $\Phi_{\underline{\alpha},\{k,k'\}}$ with $k,k'\in[d]$, $k\neq k'$, $\underline{\alpha}\in A_+$ having at most two non-zero components $\alpha_k,\alpha_{k'}\geq0$, and $\alpha_{l}=0$ for all $l\neq k,k'$. These homomorphisms correspond to the R\'enyi divergences between any pair of columns, with parameter $\alpha\in[0,1)$; see also Figure \ref{Mfig:emptytriangle} for a depiction.

        Additionally, the conditions for a matrix to be power universal are even stricter in this semiring than in $S^d$. Namely, a matrix having any common support among all columns is now no longer power universal, hence we can no longer use this semiring to find majorization conditions for such a matrix.  
    \end{remark}

\begin{figure}
\begin{center}
\begin{overpic}[scale=0.43,unit=1mm]{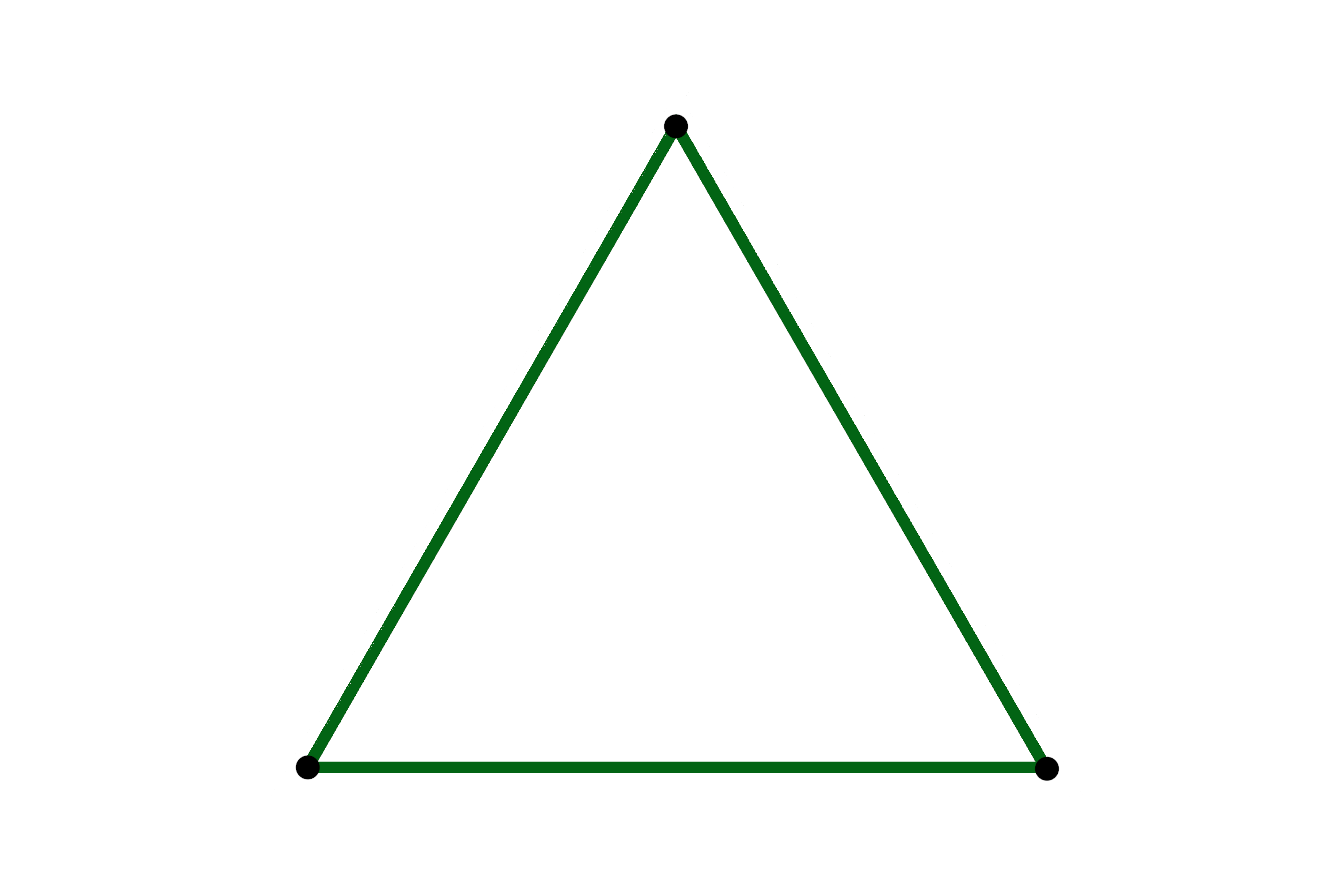}
\put(12,5){\begin{large}
$\Phi^{(1)}$
\end{large}}
\put(88,5){\begin{large}
$\Phi^{(2)}$
\end{large}}
\put(50,66){\begin{large}
$\Phi^{(3)}$
\end{large}}
\put(41,13){\begin{large}
\darker{$\Phi_{\alpha,1-\alpha,0,\{1,2\}}$}
\end{large}}
\put(70,35){\begin{large}
\darker{$\Phi_{0,\beta,1-\beta,\{2,3\}}$}
\end{large}}
\put(11,35){\begin{large}
\darker{$\Phi_{\gamma,0,1-\gamma,\{1,3\}}$}
\end{large}}
\put(21,6){
\darker{$\Phi_{e_1,\{1,2\}}$}
}
\put(10,13){
\darker{$\Phi_{e_1,\{1,3\}}$}
}
\put(74,6){
\darker{$\Phi_{e_2,\{1,2\}}$}
}
\put(83,13){
\darker{$\Phi_{e_2,\{2,3\}}$}
}
\put(38,62){
\darker{$\Phi_{e_3,\{1,3\}}$}
}
\put(55,62){
\darker{$\Phi_{e_3,\{2,3\}}$}
}
\end{overpic}
\caption{\label{Mfig:emptytriangle} A depiction of the monotone homomorphisms of the semiring discussed in Remark \ref{Mrmk:minimalrestrictions} in the case $d=3$. Only bivariate R\'{e}nyi divergences involving only two columns of the matrices are present. On the line segment connecting $e_1$ and $e_2$, e.g.,\ the homomorphisms correspond to $D_\alpha\big(p^{(1)}\big\|p^{(2)}\big)$ and $D_\alpha\big(p^{(2)}\big\|p^{(1)}\big)$ for $0\leq\alpha\leq 1/2$. Note that at the corner points we have again two nondegenerate (bivariate) homomorphisms in addition to the degenerate $\Phi^{(1)}$, $\Phi^{(2)}$, and $\Phi^{(3)}$.}
\end{center}
\end{figure}

\subsection{Minimality}\label{Msec:minimality}
First, we define a natural topology on $\mathcal{H}^{\rm n.d.}(S^d,\R_+^{\rm op})$ to make precise what is meant with an open subset. Let $C\subseteq[d]$ be a character with $|C|\geq2$. Define 
\begin{equation}
    A_C:=\left\{\,\underline{\alpha}\in A_+\,\middle|\,{\rm supp}\,\underline{\alpha}\subseteq C\,\right\}, 
\end{equation}
which is a compact subset of $\R^d$. Next, define the set of homomorphisms corresponding to the character $C$ and $\underline{\alpha}$ in $A_C$, i.e. 
\begin{equation}
    T_C:=\left\{\,\Phi_{\underline{\alpha},C}\,\middle|\,\underline{\alpha}\in A_C,\,\right\}. 
\end{equation}
Via the surjective map $A_C\rightarrow T_C$, $\underline{\alpha}\mapsto\Phi_{\underline{\alpha},C}$, $T_C$ inherits the compact and Hausdorff topology from $A_C$. The disjoint union of all such $T_C$ consists of precisely all non-degenerate monotone homomorphisms: 
\begin{equation}
    \mathcal{H}^{\rm n.d.}(S^d,\R_+^{\rm op})=\bigsqcup_{C\subseteq[d]\,:\,|C|\geq2} T_C. 
\end{equation}
Now, $\mathcal{H}^{\rm n.d.}(S^d,\R_+^{\rm op})$ can be equipped with the disjoint-union topology, making it a compact and Hausdorff space. 

Our result on the minimality of $\mathcal{H}^{\rm n.d.}(S^d,\R_+^{\rm op})$ in relation to characterizing large-sample and catalytic majorization can now be stated as follows. The proof uses an idea from the proof of Corollary 2.7 in \cite{strassen1988}. 
\begin{theorem}\label{Mthm:minimal}
     Let $O$ be a non-empty open subset of $\mathcal{H}^{\rm n.d.}(S^d,\R_+^{\rm op})$. There exist tuples of probability vectors $P=(p^{(1)},\ldots,p^{(d)})$, $Q=(q^{(1)},\ldots,q^{(d)})$ such that $\bigcap_{k=1}^d{\rm supp}\,p^{(k)}\neq\emptyset$ and $\bigcap_{k=1}^d{\rm supp}\,q^{(k)}\neq\emptyset$, $P$ is power universal, and such that \eqref{Meq:majorizationconditions} holds for all $\Phi_{\underline{\alpha},C}\in\mathcal{H}^{\rm n.d.}(S^d,\R_+^{\rm op})\setminus O$, but both (a) and (b) of Theorem \ref{Mthm:majorization} do not hold. Additionally, in Theorem \ref{Mthm:genasymptocatalytic}, \eqref{Meq:genasymptocatalyticcondition} holds for all continuous $\Phi_{\underline{\alpha},C}\in\mathcal{H}^{\rm n.d.}(S^d,\R_+^{\rm op})\setminus O$ \footnote{Note that any non-empty open $O\subseteq\mathcal{H}^{\rm n.d.}(S^d,\R_+^{\rm op})$ contains some continuous homomorphisms.}, but both (ii) and (iii) do not hold. 
\end{theorem}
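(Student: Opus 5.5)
Since $O$ is open and non-empty, it meets some $T_C$ in a non-empty open subset; such a subset contains a point $\Phi_{\underline{\alpha}_0,C}$ with $\underline{\alpha}_0$ in the relative interior of $A_C$, hence with ${\rm supp}\,\underline{\alpha}_0=C$, so it is continuous. Following Strassen's idea (Corollary 2.7 in \cite{strassen1988}), I would construct $P,Q$ for which $\Phi_{\underline{\alpha},C'}(P)<\Phi_{\underline{\alpha},C'}(Q)$ for every $\Phi_{\underline{\alpha},C'}$ outside $O$, while $\Phi_{\underline{\alpha}_0,C}(P)>\Phi_{\underline{\alpha}_0,C}(Q)$. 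The converse parts of Theorem \ref{Mthm:majorization} and of Theorem \ref{Mthm:genasymptocatalytic} (the latter via (ii) or (iii) $\Rightarrow$ (i), using continuity of $\Phi_{\underline{\alpha}_0,C}$) then show that (a), (b), (ii) and (iii) all fail.

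\emph{Construction.} Start from a power universal $U$ with all homomorphisms strictly below $1$ (Proposition \ref{Mlem:poweruniversalconditions}), for instance a matrix with a full positive row and, for each $k$, a row supported only in column $k$. Next, build a ``bump'' perturbation that raises $\Phi_{\underline{\alpha}_0,C}$ but lowers the other homomorphisms. Concretely, take $Q:=U$ and $P:=U'$, where $U'$ agrees with $U$ outside the rows with support exactly $C$, and on those rows the entries are modified, keeping column sums fixed, by a small perturbation $\delta X$. Here $X$ is a matrix of rows supported on $C$ with zero column sums, chosen so that the first-order change $\sum_i \nabla\varphi_{\underline{\alpha},C}\cdot X_i$ is positive at $\underline{\alpha}_0$ and negative for $\underline{\alpha}$ outside a neighborhood $N\subseteq O$ of $\underline{\alpha}_0$ in $A_C$. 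Only homomorphisms in $T_{C'}$ with $C'\subseteq C$ see these rows, so the others stay unchanged; to make them strict, replace the pair by $P\boxtimes U$ versus $Q$ modified appropriately, or use the catalyst trick $P:=U^{\boxtimes 2}$ perturbed, $Q:=U\boxtimes(\text{slightly less informative})$.

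\emph{Separating $\underline{\alpha}_0$ from the rest.} The derivative in direction $X$ is $\underline{\alpha}\mapsto\sum_i\sum_{k\in C}\alpha_k x_{i,k}\,\varphi_{\underline{\alpha},C}(u_i)/u_{i,k}$. If the rows $u_i$ with support $C$ are chosen generic and numerous, functions of the form $\sum_i c_i\,u_i^{\underline{\alpha}}$ can approximate any continuous function on the compact set $A_C$: they are exponential polynomials in $\underline{\alpha}$, which are dense by Stone--Weierstrass. One can therefore produce a function positive at $\underline{\alpha}_0$ and at most $-1$ off $N$, subject to the finitely many linear zero-column-sum constraints; these can be absorbed because the sum-preserving constraints correspond to the degenerate directions. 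Uniformity over the compact set $A_C\setminus N$ makes a sufficiently small $\delta$ work. Different characters $C'\subsetneq C$ see the rows with support $C$ as well, via $I(\cdot,C')$. I would handle them by adding an independent small strict improvement, namely mixing $Q$ toward the ``fully uninformative'' matrix as in the proof of Theorem \ref{Mthm:genasymptocatalytic}. This raises every homomorphism by an amount bounded below on compact sets, and the mixing weight is taken much smaller than the bump at $\underline{\alpha}_0$.

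The main obstacle is this last balancing step: a single construction must give strict inequalities uniformly on every $T_{C'}$ outside $O$, including the discontinuous ones, while keeping one reversed inequality inside $O$ and preserving column sums and the support conditions. I would try to make everything first-order in $\delta$ and use compactness of each $T_{C'}$ together with Stone--Weierstrass density of exponential sums on $A_C$.
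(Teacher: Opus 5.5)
Your reduction is the right one, and it is the paper's: it suffices to exhibit a power universal $P$ and some $Q$ with $\Phi(P)<\Phi(Q)$ for all $\Phi\notin O$ but $\Phi_0(P)>\Phi_0(Q)$ for one continuous $\Phi_0=\Phi_{\underline{\alpha}_0,C}\in O$. The converse of Theorem \ref{Mthm:majorization} and (ii)/(iii)$\Rightarrow$(i) of Theorem \ref{Mthm:genasymptocatalytic} then finish. The gap is that you never actually produce such $P,Q$, and the step your construction rests on is false as stated.

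\emph{The vertex obstruction.} Suppose the bump only redistributes mass among rows of support exactly $C$, with column sums fixed. Then for each $k\in C$ the homomorphism $\Phi_{e_k,C}(P)=\sum_{i\in I(P,C)}p^{(k)}_i$ is left exactly unchanged. So your first-order function has $g(e_k)=0$, and it cannot be negative, let alone $\le -1$, on all of $A_C\setminus N$. These $\Phi_{e_k,C}$ are non-degenerate (as $|C|\ge2$) and lie outside $O$, so the column-sum constraints are not ``absorbed by degenerate directions''. Moreover, near $e_k$ the sign of the change is governed by the derivative of $g$ at the vertex, which uniform Stone--Weierstrass approximation does not control.

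\emph{The mixing step.} Together with the characters $C'\not\subseteq C$, which see no change at all, all strictness is pushed onto the mixing step, and that step is not a harmless small correction. Mixing $Q=U$ toward the uniform matrix fills in the single-column rows that power universality forces $U$ to have. This raises $\Phi_{\underline\alpha_0,C}(Q)$ by order $\varepsilon^{1-\alpha_{0,k}}\gg\varepsilon$. So ``mixing weight much smaller than the bump at $\underline\alpha_0$'' and ``mixing beats any positive part of the bump near the vertices'' are competing requirements. Reconciling them is exactly the balancing you leave open.

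The missing idea is that no perturbation is needed. Every $\Phi_{\underline\alpha,C}$ is additive and positively homogeneous, so an arbitrary real combination $\sum_i r_i\Phi(P_i)$ is realized exactly by semiring elements. The paper applies Stone--Weierstrass once, globally, on the compact Hausdorff space $\mathcal{H}^{\rm n.d.}(S^d,\R_+^{\rm op})=\bigsqcup_{|C|\ge2}T_C$, to the evaluation functions $f_P(\Phi)=\Phi(P)$. Their real span is already an algebra, since $f_Pf_{P'}=f_{P\boxtimes P'}$; it contains $f_{(1,\ldots,1)}\equiv1$; and it separates points.

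Approximate a continuous function equal to $1$ off $O$ and $-1$ at $\Phi_0$ by $\sum_i r_if_{P_i}$. Setting $S=\bigboxplus_{r_i>0}r_iP_i$ and $R=\bigboxplus_{r_i<0}(-r_i)P_i$ gives $\Phi(S)>\Phi(R)$ off $O$ and the reverse at $\Phi_0$, for all characters at once, discontinuous homomorphisms included. Finally, $R\boxplus{\rm diag}\{a_1,\ldots,a_d\}$ and $S\boxplus{\rm diag}\{b_1,\ldots,b_d\}$ with $a_k>0$, $b_k\ge0$ equalize the column sums. Single-entry rows are invisible to every $\Phi_{\underline\alpha,C}$ with $|C|\ge2$, and the $a_k>0$ make $P=\tilde R/t$ power universal by Proposition \ref{Mlem:poweruniversalconditions}(ii).

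Your route can be salvaged along the same lines:
\begin{enumerate}[(1)]
\item Let the bump rescale whole rows of support $C$, $u_i\mapsto(1+\delta c_i)u_i$. By homogeneity this changes every $\Phi_{\underline\alpha,C'}$ with $C'\subseteq C$ by exactly $\delta\sum_i c_iu_i^{\underline\alpha}$.
\item Absorb the resulting column-sum change into the single-column rows, so that no vertex constraint arises.
\item Choose the mixing weight only after $\delta$ is fixed.
\end{enumerate}
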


\begin{proof}
For a fixed matrix $P\in S^d$, define the evaluation function 
\begin{equation}
    f_P:\mathcal{H}^{\rm n.d.}(S^d,\R_+^{\rm op})\rightarrow\R 
\end{equation}
by 
\begin{equation}
    f_P(\Phi_{\underline{\alpha},C}):=\Phi_{\underline{\alpha},C}(P), 
\end{equation}
for any $\Phi_{\underline{\alpha},C}\in\mathcal{H}^{\rm n.d.}(S^d,\R_+^{\rm op})$. For a fixed $C\subseteq[d]$ with $|C|\geq2$, one sees that $\alpha\mapsto\Phi_{\underline{\alpha},C}(P)$ is continuous on $A_C$, and therefore $f_P$ is continuous on $T_C$. Thus, by the properties of the discrete union topology, $f_P$ is continuous on $\mathcal{H}^{\rm n.d.}(S^d,\R_+^{\rm op})$. In other words, $f_P$ is in $C(\mathcal{H}^{\rm n.d.}(S^d,\R_+^{\rm op}),\R)$, the algebra of all continuous real-valued functions on $\mathcal{H}^{\rm n.d.}(S^d,\R_+^{\rm op})$. 

Next, let $\mc A$ be the $\R$-algebra generated by multiplication of the functions $f_P$, $P\in S^d$, by real numbers, and finite repetitions of point-wise addition and multiplication. Then $\mc A$ is a subalgebra of $C(\mathcal{H}^{\rm n.d.}(S^d,\R_+^{\rm op}),\R)$. Clearly, $f_{(1,\ldots,1)}(\Phi_{\underline{\alpha},C})=1$ for all $\Phi_{\underline{\alpha},C}\in\mathcal{H}^{\rm n.d.}(S^d,\R_+^{\rm op})$, hence $\mc A$ contains a non-zero constant function. One also sees that for any two distinct $\Phi_{\underline{\alpha},C},\Phi_{\underline{\alpha'},C'}\in\mathcal{H}^{\rm n.d.}(S^d,\R_+^{\rm op})$, there exists a $P\in S^d$ such that $f_P(\Phi_{\underline{\alpha},C})\neq f_P(\Phi_{\underline{\alpha'},C'})$, hence $\mc A$ separates points. Therefore, by the Stone-Weierstrass theorem, $\mc A$ is dense in $C(\mathcal{H}^{\rm n.d.}(S^d,\R_+^{\rm op}),\R)$.

Since $O$ is open, it contains a $\Phi_{\underline{\alpha_0},C_0}\in O$ that is continuous, i.e. satisfying ${\rm supp}\,\underline{\alpha_0}=C_0$ (see \eqref{Meq:continuity}). Clearly, there exists a continuous function $f:\mathcal{H}^{\rm n.d.}(S^d,\R_+^{\rm op})\rightarrow\R$ such that $f(\Phi_{\underline{\alpha},C})=1$ for all $\Phi_{\underline{\alpha},C}$ outside $O$, and $f(\Phi_{\underline{\alpha_0},C_0})=-1$. Since $\mc A$ is dense in $C(\mathcal{H}^{\rm n.d.}(S^d,\R_+^{\rm op}),\R)$, $f$ can be approximated by functions in $\mc A$ with arbitrary precision, from which it follows that there exist $r_1,\ldots,r_t\in\R$ and $P_1,\ldots,P_t\in S^d$ such that 
\begin{equation}\label{Meq:SWapproximation}
    r_1\Phi_{\underline{\alpha},C}(P_1)+\ldots+r_t\Phi_{\underline{\alpha},C}(P_t)\left\{\begin{array}{ll}>0&\text{if }\ \Phi_{\underline{\alpha},C}\in \mathcal{H}^{\rm n.d.}(S^d,\R_+^{\rm op})\setminus O\\<0&\text{if }\ \Phi_{\underline{\alpha},C}=\Phi_{\underline{\alpha_0},C_0}\end{array}\right..
\end{equation}
Here we have used the multiplicativity of the homomorphisms, which implies that $\mc A$ is the span of the set of all $f_P$, $P\in S^d$, with real coefficients. 
Define the matrices 
\begin{equation}
    S:=\bigboxplus_{i\,:\,r_i>0}r_iP_i,\quad R:=\bigboxplus_{i\,:\,r_i<0}-r_iP_i, 
\end{equation}
which are both contained in $S^d$. Then, by rearranging \eqref{Meq:SWapproximation}, if $\Phi_{\underline{\alpha},C}\in\mathcal{H}^{\rm n.d.}(S^d,\R_+^{\rm op})\setminus O$, 
\begin{equation}\label{Meq:counterexample1}
    \Phi_{\underline{\alpha},C}(S)>\Phi_{\underline{\alpha},C}(R),  
\end{equation}
and 
\begin{equation}\label{Meq:counterexample2}
    \Phi_{\underline{\alpha_0},C_0}(S)<\Phi_{\underline{\alpha_0},C_0}(R). 
\end{equation}
Note that it follows in particular that both matrices $R,S$ are non-zero. 

We may add a diagonal component to $R,S$ via 
\begin{equation}
    \tilde R:=R\boxplus{\rm diag}\{a_1,\ldots,a_d\},\quad\tilde S:=S\boxplus{\rm diag}\{b_1,\ldots,b_d\}, 
\end{equation}
with $a_1,\ldots,a_d>0$ and $b_1,\ldots,b_d\geq0$, to ensure that each column in both $\tilde R$ and $\tilde S$ sums to the same value $t$. Define the normalized matrices $P:=\tilde R/t$ and $Q:=\tilde S/t$, in $S$. Since $a_1,\ldots,a_d>0$, $P$ satisfies condition (ii) of Proposition \ref{Mlem:poweruniversalconditions}, hence $P$ is power universal. For any character $C\subseteq[d]$ with $|C|\geq2$, we have 
\begin{equation}
    \Phi_{\underline{\alpha},C}(P)=t^{-1}\Phi_{\underline{\alpha},C}(\tilde R)=t^{-1}\Phi_{\underline{\alpha},C}(R), 
\end{equation}
and similarly for $Q$. Hence, 
\begin{equation}
    \Phi_{\underline{\alpha},C}(Q)>\Phi_{\underline{\alpha},C}(P) 
\end{equation}
for all $\Phi_{\underline{\alpha},C}\in\mathcal{H}^{\rm n.d.}(S^d,\R_+^{\rm op})\setminus O$ and the reverse inequality when $\Phi_{\underline{\alpha},C}=\Phi_{\underline{\alpha_0},C_0}$. The statement of the theorem now follows from the fact that (a) or (b) in Theorem \ref{Mthm:majorization} imply the non-strict version of \ref{Meq:majorizationconditions} for the case of exact majorization, and from the implications (ii) $\Rightarrow$ (i), and (iii) $\Rightarrow$ (i) in Theorem \ref{Mthm:genasymptocatalytic} for the case of asymptotic majorization. 
\end{proof}
	
	\section{One dominating column case}\label{Msec:Asymmetric}
	
	In this section we will consider tuples of probability vectors where the support of one vector contains the support of all other vectors. Such tuples have applications in quantum thermodynamics, as we will see later. Similarly as for $S^d$, we will use the Vergleichsstellensatz given by Theorem \ref{thm:Fritz2022} to derive conditions for large-sample and catalytic majorization. 
	
	We define the \emph{dominating column matrix majorization semiring} $S^d_{\rm d.c.}$ as the subset of all (equivalence classes of) matrices in $S^d$ where the support of the last column contains all others: 
	\begin{equation}
		S^d_{\rm d.c.}:=\left\{\,\left[\left(p^{(1)},\ldots,p^{(d)}\right)\right]\in S^d\;\middle|\;{\rm supp}\,p^{(k)}\subseteq {\rm supp}\,p^{(d)}\quad\forall k\in[d-1]\,\right\}. 
	\end{equation}
 The set of individual matrices (i.e.,\ lifting the projection into equivalence classes) generating $S^d_{\rm d.c.}$ is denoted by $\mc V^d_{\rm d.c.}$. The subset of matrices in $\mc V^d_{\rm d.c.}$ with columns of length $n\in\N$ is denoted by $(\mc V_{\rm d.c.})^d_n$. In subsection \ref{Msubsec:poweruniversaldc}, we identify the power universals of $S^d_{\rm d.c.}$ thus showing that $S^d_{\rm d.c.}$ is of polynomial growth.
		
\subsection{Monotone homomorphisms}

Since $S^d_{\rm d.c.}$ is a subset of $S^d$, all monotone homomorphisms of $S^d$, when restricted to $S^d_{\rm d.c.}$, are also monotone homomorphisms of $S^d_{\rm d.c.}$. Thus, especially functions in $\mathcal{H}^{\rm n.d.}(S^d_{\rm d.c.},\R_+^{\rm op})$, which is defined in exactly the same way as in \eqref{Meq:temperateopposite}, are non-degenerate monotone homomorphisms $S^d_{\rm d.c.}\to\R_+^{\rm op}$. Note that in this case some homomorphisms that are different on $S^d$ coincide on $S^d_{\rm d.c.}$, since column $d$ always has a non-zero entry and therefore $\Phi_{\underline{\alpha},C}=\Phi_{\underline{\alpha},C\cup\{d\}}$ for any $\underline{\alpha}\in A_+$ and ${\rm supp}\,\underline{\alpha}\subseteq C\subseteq[d]$. This also means that the homomorphisms $\Phi_{e_k,\{k,d\}}$ for $k\in[d-1]$, which are nondegenerate on $S^d$, are degenerate on $S^d_{\rm d.c.}$. 

Let $\alpha>1$ and $c\in[d-1]$. Define $\varphi_{\alpha,c}:\R_+\times\R_{>0}\to\R$ by 
\begin{equation}
    \varphi_{\alpha,c}(x_c,x_d):=x_c^\alpha x_d^{1-\alpha}. 
\end{equation}
It follows from the proof of Proposition 13 in \cite{farooq2024} that $\varphi_{\alpha,c}$ is convex on $\R_{>0}^2$. Using the fact that $\alpha>0$ and the continuity of $\varphi_{\alpha,c}$, we find that $\varphi_{\alpha,c}$ is convex on its entire domain $\R_+\times\R_{>0}$. In the following, we will abuse notation by writing $\varphi_{\alpha,c}(x)$ for $x$ in a higher dimensional space $\R^d$ instead of $x\in \R_+\times\R_{>0}$, where it is understood that we ignore the coordinates not equal to $c$ or $d$. 

We now define the homomorphism $\Phi_{\alpha,c}:S^d\to\R_+$ (where we assume WLOG that $p^{(d)}_i>0$ for all $i=1,\ldots,n$) by 
\begin{equation}
\label{Meq:homdcR+}
    \Phi_{\alpha,c}(P):=\sum_{i=1}^n\varphi_{\alpha,c}\left(p_i^{(c)},p_i^{(d)}\right),\quad\quad P=(p^{(1)},\ldots,p^{(d)})\in (\V_{\rm d.c.})_{n}^d. 
\end{equation}
We denote all these homomorphisms $\Phi_{\alpha,c}$ that are non-degenerate by $\mathcal{H}^{\rm n.d.}(S^d_{\rm d.c.},\R_+)$, i.e.,
\begin{equation}
\mathcal{H}^{\rm n.d.}(S^d_{\rm d.c.},\R_+)=\big\{\Phi_{\alpha,c}\,\big|\,1<\alpha<\infty,\ c=1,\ldots,d-1\big\}.
\end{equation}
	
	\begin{proposition}
		All homomorphisms in $\mathcal{H}^{\rm n.d.}(S^d_{\rm d.c.},\R_+)$ are monotone. 
	\end{proposition}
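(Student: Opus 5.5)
The plan mirrors the monotonicity proof for $\mathcal{H}^{\rm n.d.}(S^d,\R_+^{\rm op})$, with concavity replaced by convexity and all inequalities reversed. Because the codomain is $\R_+$ with its usual order, monotonicity means that $Q\preceq P$, i.e.\ $Q=TP$ for a stochastic $T$, should imply $\Phi_{\alpha,c}(Q)\leq\Phi_{\alpha,c}(P)$. Fix $\alpha>1$ and $c\in[d-1]$. Take $P\in(\V_{\rm d.c.})^d_n$ with rows $p_1,\ldots,p_n$, and remove zero rows so that $p_j^{(d)}>0$ for all $j$. Let $T$ be a stochastic $m\times n$ matrix and set $Q=TP$ with rows $q_i=\sum_j T_{i,j}p_j$. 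Discard the rows of $Q$ that vanish identically; they contribute nothing, and they are exactly the rows with $T_i:=\sum_j T_{i,j}=0$. Every remaining row has $q_i^{(d)}=\sum_j T_{i,j}p_j^{(d)}>0$, since $T_i>0$ and all $p_j^{(d)}>0$. Hence each remaining $q_i$ lies in the domain $\R_+\times\R_{>0}$ of $\varphi_{\alpha,c}$, and $\Phi_{\alpha,c}(Q)=\sum_{i:T_i>0}\varphi_{\alpha,c}(q_i)$.

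Next, use positive homogeneity of degree one: $\varphi_{\alpha,c}(\lambda x)=\lambda\varphi_{\alpha,c}(x)$ for $\lambda>0$, because the exponents sum to $\alpha+(1-\alpha)=1$. Write $q_i=T_i\sum_j\frac{T_{i,j}}{T_i}p_j$. Convexity of $\varphi_{\alpha,c}$ on $\R_+\times\R_{>0}$, stated just before the proposition, then gives
\begin{equation}
\varphi_{\alpha,c}(q_i)\leq T_i\sum_j\frac{T_{i,j}}{T_i}\varphi_{\alpha,c}(p_j)=\sum_j T_{i,j}\varphi_{\alpha,c}(p_j).
\end{equation}
Summing over $i$ and exchanging the sums gives $\Phi_{\alpha,c}(Q)\leq\sum_j\varphi_{\alpha,c}(p_j)\sum_i T_{i,j}=\Phi_{\alpha,c}(P)$, because the columns of $T$ sum to $1$. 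The rows of $Q$ dropped above have weight zero in every $T_{i,j}$, so they add nothing to the right-hand side either.

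There is no support-bookkeeping step like the one for $\Phi_{\underline{\alpha},C}$. The sum runs over all rows, and domination by the $d$-th column guarantees that every relevant row lies in the domain of $\varphi_{\alpha,c}$. The only point needing care is convexity on the closed part of the domain where $x_c=0$. That is already stated in the paper, via continuity at $x_c\downarrow0$ with $x_d>0$ fixed, using $\alpha>0$. One also checks that $Q$ again lies in $S^d_{\rm d.c.}$: ${\rm supp}\,q^{(k)}\subseteq{\rm supp}\,q^{(d)}$ holds because $q_i^{(d)}=0$ forces $T_{i,j}=0$ for all $j$, which gives $q_i=0$. This confirms that the preorder is well defined on the subsemiring, and the proof should be routine.
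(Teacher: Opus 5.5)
Your proof is correct and follows the paper's argument: write each row of $Q$ as $T_i$ times a convex combination of rows of $P$, use degree-one homogeneity and convexity of $\varphi_{\alpha,c}$, then exchange sums using that the columns of $T$ sum to one. Your explicit handling of the rows with $T_i=0$ makes precise what the paper assumes without loss of generality (that $T_i>0$), and your check that $Q\in S^d_{\rm d.c.}$ is harmless but not needed, since the preorder is only applied to elements already in the semiring.
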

	
	\begin{proof}
		Let $\alpha>1$ and $c\in[d-1]$. Consider $P\in S^d_{\rm d.c.}$ with rows $p_1,\ldots,p_n\in\R_+^d$ and a stochastic $m\times n$-matrix $T$, and write $Q:=TP\in S^d_{\rm d.c.}$ with rows $q_1,\ldots,q_m\in\R_+^d$, and $T_i:=\sum_{j=1}^n T_{i,j}>0$. Using the convexity of $\varphi_{\alpha,c}$ and the fact that $\left(p_j\right)_d>0$ for all $j\in[n]$, we find 
		\begin{align}
			\Phi_{\alpha,c}(Q)&=\sum_{i=1}^m\varphi_{\alpha,c}(q_i)=\sum_{i=1}^m\varphi_{\alpha,c}\left(\sum_{j=1}^n T_{i,j}p_j\right)\\
            &=\sum_{i=1}^m T_i\varphi_{\alpha,c}\left(\sum_{j=1}^n \frac{T_{i,j}}{T_i}p_j\right)\leq\sum_{i=1}^m T_i\sum_{j=1}^n \frac{T_{i,j}}{T_i}\varphi_{\alpha,c}(p_j)\\
            &=\sum_{j=1}^n \varphi_{\alpha,c}(p_j)\underbrace{\sum_{i=1}^m T_{i,j}}_{=1}=\Phi_{\alpha,c}(P), 
		\end{align}
		which completes the proof. 		
	\end{proof}

Let us also define, for all $P=\big(p^{(1)},\ldots,p^{(d)}\big)\in\mc V^d_{\rm d.c.}$ with $p^{(k)}=\big(p^{(k)}_1,\ldots,p^{(k)}_n\big)$ for $k=1,\ldots,d$ (where we assume WLOG that $p^{(d)}_i>0$ for all $i=1,\ldots,n$) and $c=1,\ldots,d-1$,
\begin{equation}
\Phi^\T_c(P):=\max_{1\leq i\leq n}\frac{p^{(c)}_i}{p^{(d)}_i}.
\end{equation}
We also define
\begin{equation}
\mathcal{H}^{\rm n.d.}(S^d_{\rm d.c.},\T\R_+):=\big\{\Phi^\T_{\beta,c}:=\big(\Phi^\T_c(\cdot)\big)^\beta\,\big|\,\beta>0\big\}.
\end{equation}
The functions in $\mathcal{H}^{\rm n.d.}(S^d_{\rm d.c.},\T\R_+\}$ are easily seen to be homomorphisms $S^d_{\rm d.c.}\to\T\R_+$. The conditions for large-sample or catalytic majorization of $P$ over $Q$ require in particular that $\Phi^\T_{\beta,c}(P)>\Phi^\T_{\beta,c}(Q)$ for all $\Phi^\T_{\beta,c}\in\mathcal{H}^{\rm n.d.}(S^d_{\rm d.c.},\T\R_+)$. Notice that this is equivalent to $\Phi^\T_c(P)>\Phi^\T_c(Q)$ for all $c=1,\ldots,d-1$, and that $\Phi^\T_c(P)=D_\infty\big(p^{(c)}\|p^{(d)}\big)$, where $D_\infty$ is the max-divergence. We next show that the homomorphisms in $\mathcal{H}^{\rm n.d.}(S^d_{\rm d.c.},\T\R_+)$ are monotone.

\begin{proposition}
All homomorphisms in $\mathcal{H}^{\rm n.d.}(S^d_{\rm d.c.},\T\R_+)$ are monotone.
\end{proposition}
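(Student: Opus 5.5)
Since the tropical order on $\T\R_+$ is just the usual order on $[0,\infty)$, and $\beta>0$ makes $t\mapsto t^\beta$ increasing, it is enough to show for each fixed $c\in[d-1]$ that $P\rgeq Q$ implies $\Phi^\T_c(P)\geq\Phi^\T_c(Q)$. Raising to the power $\beta$ then gives monotonicity of every $\Phi^\T_{\beta,c}$. The argument mirrors the earlier proof for $\Phi_{\alpha,c}$, with the convexity/averaging step replaced by a ratio-of-sums (mediant) bound.

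Let $P\in(\V_{\rm d.c.})^d_n$ have rows $p_1,\ldots,p_n$. After discarding zero rows we may assume $p^{(d)}_j>0$ for all $j$. Let $T$ be a stochastic $m\times n$ matrix with $Q=TP$, and set $M:=\Phi^\T_c(P)=\max_j p^{(c)}_j/p^{(d)}_j$, so that $p^{(c)}_j\leq M p^{(d)}_j$ for every $j$. Take a row $i$ of $Q$ with $q^{(d)}_i>0$. These are exactly the rows over which $\Phi^\T_c(Q)$ is taken, since zero rows are removed and the dominating-column condition on $Q$ forces $q^{(c)}_i=0$ whenever $q^{(d)}_i=0$. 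Nonnegativity of $T$ then gives
\begin{equation}
q^{(c)}_i=\sum_{j=1}^n T_{i,j}p^{(c)}_j\leq M\sum_{j=1}^n T_{i,j}p^{(d)}_j=Mq^{(d)}_i.
\end{equation}
Dividing by $q^{(d)}_i$ and maximizing over $i$ gives $\Phi^\T_c(Q)\leq M=\Phi^\T_c(P)$, which is exactly monotonicity in the reversed sense required by the order on $\mathcal{H}^{\rm n.d.}(S^d_{\rm d.c.},\T\R_+)$ ($P\rgeq Q\Rightarrow\Phi(P)\geq\Phi(Q)$).

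Finally, the map is well defined on equivalence classes: permuting rows does not change a maximum, and zero rows are excluded. The only point needing care is the treatment of rows of $Q$ with $q^{(d)}_i=0$, and this is handled by the support domination built into $S^d_{\rm d.c.}$.
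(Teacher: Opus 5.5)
Your proof is correct and essentially matches the paper's. The paper gets the key step from quasi-convexity of $\fii^{\T}_{\beta,c}(x)=(x_c/x_d)^\beta$ (convex sublevel sets) combined with its invariance under positive rescaling of rows, and your mediant bound $q^{(c)}_i\leq Mq^{(d)}_i$ is the same fact written out explicitly. One wording correction: $\T\R_+$ carries the natural order, so the implication $P\rgeq Q\Rightarrow\Phi(P)\geq\Phi(Q)$ that you prove is ordinary monotonicity, not monotonicity ``in the reversed sense''.
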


\begin{proof}
Let $\beta>0$ and $c\in[d-1]$. Note that, denoting the rows of $P\in\mc V^d_{\rm d.c.}$ by $p_1,\ldots,p_n$, we have
\begin{equation}
\Phi^\T_{\beta,c}(P)=\sup_{1\leq j\leq n}\fii^\T_{\beta,c}(p_j), 
\end{equation}
where $\fii^\T_{\beta,c}(x_1,\ldots,x_d)=(x_c/x_d)^\beta$ for all $x_1,\ldots,x_{d-1}\geq0$ and $x_d>0$. The sublevel sets
\begin{equation}
\left\{(x_1,\ldots,x_d)\in\R_+^{d-1}\times\R_{>0}\,\middle|\,\left(\frac{x_c}{x_d}\right)^\beta\leq a\right\},\qquad a>0,
\end{equation}
are immediately seen to be convex, implying that $\fii^\T_{c,\beta}$ is quasi-convex, i.e.,\ for all $x,y\in\R_+^{d-1}\times\R_{>0}$ and $t\in(0,1)$,
\begin{equation}
\fii^\T_{c,\beta}\big(tx+(1-t)y\big)\leq\max\{\fii^\T_{c,\beta}(x),\fii^\T_{c,\beta}(y)\}.
\end{equation}

Let $P,Q\in\mc V^d_{\rm d.c.}$, $P$ with rows $p_1,\ldots,p_n$ and $Q$ with rows $q_1,\ldots,q_m$. Assume that $P\rgeq Q$ and let $T=(T_{i,j})_{i,j}$ be a stochastic matrix such that $Q=TP$. We again denote $T_i=\sum_{j=1}^n T_{i,j}$ which we may assume to be positive WLOG. Noting that $\fii^\T_{c,\beta}(cx)=\fii^\T_{c,\beta}(x)$ for all $c>0$ and $x\in\R_+^{d-1}\times\R_{>0}$ and using the quasi-convexity of $\fii^\T_{c,\beta}$, we now have
\begin{align}
\Phi^\T_{c,\beta}(Q)&=\max_{1\leq i\leq m}\fii^\T_{c,\beta}(q_i)=\max_{1\leq i\leq m}\fii^\T_{c,\beta}\left(\sum_{j=1}^n T_{i,j}p_j\right)\\
&=\max_{1\leq i\leq m}\fii^\T_{c,\beta}\left(\sum_{j=1}^n\frac{T_{i,j}}{T_i}p_j\right)\leq\max_{1\leq i\leq m}\max_{1\leq j\leq n}\fii^\T_{c,\beta}(p_j)\\
&=\max_{1\leq j\leq n}\fii^\T_{c,\beta}(p_j)=\Phi^\T_{c,\beta}(P),
\end{align}
proving the claim.
\end{proof}
	
	\begin{proposition}
		The non-degenerate monotone homomorphisms in $S_{\rm d.c.}^d$ are exactly those in the union of $\mathcal{H}^{\rm n.d.}(S^d_{\rm d.c.},\R_+)$, $\mathcal{H}^{\rm n.d.}(S^d_{\rm d.c.},\R_+^{\rm op})$, and $\mathcal{H}^{\rm n.d.}(S^d_{\rm d.c.},\T\R_+)$. There are no nondegenerate monotone homomorphisms $S^d_{\rm d.c.}\to\T\R_+^{\rm op}$.
	\end{proposition}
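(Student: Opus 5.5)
The plan is to reuse the reduction of Proposition \ref{Mpro:monhommin}. First, the one-row function. Every matrix in $S^d_{\rm d.c.}$ has a row with all entries positive, so $\fii(x)=\Phi(x_1\,\cdots\,x_d)$ is defined on $\R_{>0}^d$ and, exactly as before, takes the form $x_1^{\alpha_1}\cdots x_d^{\alpha_d}$. Here $\underline{\alpha}\in A_-,A_+,B_-,B$ according to $\mathbb{K}$.

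Next, Lemma \ref{Mlem:1or2} and Lemma \ref{Mclm:nontrivialB} still apply, but only to the two-row matrices $\left(\begin{smallmatrix}1&\cdots&1\\&B&\end{smallmatrix}\right)$ that lie in $S^d_{\rm d.c.}$, i.e. those with $d\in B$. The identities used in their proofs stay inside $S^d_{\rm d.c.}$ whenever all sets involved contain $d$. For $\mathbb{K}\in\{\R_+,\R_+^{\rm op}\}$ we therefore get a character $C\ni d$. The extension $\fii(x)=\Phi(\tilde x^T\boxplus x^T)-\fii(\tilde x)$ makes sense for every $x$ with $x_d>0$, and the computation \eqref{Meq:Phihajotus1}--\eqref{Meq:PhisumsCrowsc} gives
\[
\Phi(P)=\sum_{i:\,C\subseteq{\rm supp}\,p_i}\prod_k \big(p^{(k)}_i\big)^{\alpha_k}.
\]
For $\mathbb{K}=\R_+^{\rm op}$ this is some $\Phi_{\underline{\alpha},C}$. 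Lemma \ref{Mlem:1or2}(ii) gives $\alpha_k=0$ for $k\notin C$. Monotonicity on the relevant matrices should force $\underline{\alpha}\in A_+$; if needed, this is obtained by restricting to the common-support subsemiring and quoting \cite{farooq2024}. So these are exactly the elements of $\mathcal{H}^{\rm n.d.}(S^d_{\rm d.c.},\R_+^{\rm op})$, with the $\Phi_{e_k,\{k,d\}}$ excluded as degenerate.

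The case $\mathbb{K}=\R_+$ is the new one. Lemma \ref{Mlem:1or2}(ii) still forces $\alpha_k=0$ off $C$, and $\underline{\alpha}\in A_-$ has a single positive coordinate $\alpha_c$. I plan to test monotonicity on matrices like $\left(\begin{smallmatrix}1&\cdots&1\\ &B&\end{smallmatrix}\right)$ with the $1$ in column $c$ spread out, and on the matrices of \eqref{Meq:Phiis01}--\eqref{Meq:Phiis02}. The goal is to show that only $\alpha_c,\alpha_d\neq0$, that $c\neq d$, and that $C=\{c,d\}$. If $c=d$, the dominance of column $d$ should let me bound $\Phi(P)$ above and below by norms, as in the $\R_+$ argument of Proposition \ref{Mpro:monhommin}, which would make $\Phi$ degenerate. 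If $C\supsetneq\{c,d\}$, a matrix with a row of the form $(0,\ldots,x,\ldots,0,y)$ that can be merged into a full-support row should violate monotonicity. This gives $\Phi=\Phi_{\alpha,c}$ with $\alpha>1$, and $\alpha=1$ is degenerate.

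For the tropical cases every $B$ is trivial, so $\Phi(P)=\max_i\fii(\tilde p_i\text{-part})$ over rows with $p^{(d)}_i>0$, where $\underline{\beta}\in B_-$ or $B$. For $\T\R_+$ I would rerun the chain \eqref{Meq:tropicalhomomorphismcontradiction} with the index $k$ chosen in $[d-1]$; this kills every positive $\beta_k$ except the pattern $\beta_c>0$, $\beta_d=-\beta_c$, which yields $\Phi^\T_{\beta,c}$. For $\T\R_+^{\rm op}$ the same contradiction with reversed inequalities rules out every nonzero $\underline{\beta}$, since all rows now have $p^{(d)}>0$. The main obstacle I expect is the $\R_+$ classification, i.e. pinning down $C=\{c,d\}$ and excluding the other $A_-$ vectors; the fallback there is the direct convexity obstruction from \cite{farooq2024} on two-row matrices.
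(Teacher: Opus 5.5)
The genuine gap is in the $\T\R_+$ case. Your formula $\Phi(P)=\max_i\fii(\tilde p_i)$ is false. Take $d=2$, $\underline\beta=(1,-1)$ and $P$ with columns $(1,0)$ and $(0.9,0.1)$. Then $\Phi^\T_1(P)=1/0.9$, whereas $\fii(\tilde p_2)=\fii(1,0.1)=10$. Simply discarding rows with zeros also fails for $d\geq3$, because a row with $p^{(c)}_i>0$ and a zero in some other column $j<d$ can attain the maximum.

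The underlying problem is that triviality of every $B$ only gives $\Phi(\tilde x^T\boxplus x^T)=\fii(\tilde x)$. Since tropical addition is $\max$, which is not cancellative, this only says the row $x$ contributes \emph{at most} $\fii(\tilde x)$. The temperate step of subtracting $\fii(\tilde x)$ has no analogue, so $\Phi$ on matrices with zeros is still undetermined at that point. Two further arguments are needed, and neither is in your plan:
\begin{itemize}
\item An upper bound showing that rows with $p^{(c)}_i=0$ contribute nothing. Replace the zero in column $c$ by a small $\delta>0$ rather than by $1$. One still has $\widetilde{p_i}^\delta\boxplus p_i=\widetilde{p_i}^\delta\boxtimes\left(\begin{array}{ccc}1&\cdots&1\\&{\rm supp}\,p_i&\end{array}\right)$, so such a row contributes at most $(\delta/p^{(d)}_i)^\beta$.
\item A matching lower bound. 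Normalize the columns so that $\Phi(P)\geq1$, and use $P\boxplus E_\varepsilon\succeq P+E_\varepsilon$, where $E_\varepsilon$ has all entries $\varepsilon$ and $+$ is entrywise. Then let $\varepsilon\to0$.
\end{itemize}

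Your parameter step also does not go through as described. The chain \eqref{Meq:tropicalhomomorphismcontradiction} uses a row whose only nonzero entry is in column $k$, and such a row is not in $S^d_{\rm d.c.}$ for $k\in[d-1]$. Moreover, positive coordinates are not what must be excluded; the genuine homomorphisms have $\beta_c>0$. What must be shown is $\beta_j\geq0$ for all $j\in[d-1]$, which also rules out the positive coordinate being $\beta_d$. To see this, put $(1-t,t)$ in column $j$ and $(1/2,1/2)$ in every other column. Then $P_0\succeq P_\varepsilon$. If $\beta_j<0$, then $\Phi(P_\varepsilon)\geq 2^{\beta_j}\varepsilon^{\beta_j}\to\infty$, while $\Phi(P_0)\leq\max\{\fii(r_1),\fii(\tilde r_2)\}$ stays bounded, where $r_1,r_2$ are the rows of $P_0$.

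The same unavailability of the chain affects your $\T\R_+^{\rm op}$ argument whenever the negative coordinates of $\underline\beta$ lie in $[d-1]$. There it is simplest to obtain $\underline\beta=0$ from the support-dependence argument of \cite{farooq2024}. You can then conclude $1\leq\Phi(P)\leq\Phi(\|p^{(1)}\|_1\,\cdots\,\|p^{(d)}\|_1)=1$.

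By contrast, your $\R_+$ plan works once made precise. If $C\supsetneq\{c,d\}$ with $c\neq d$, the second row of $(1\,\cdots\,1)\boxplus(0\,\cdots\,x\,\cdots\,0\,y)$ is not counted. Merging the two rows then gives $1\geq(1+x)^{\alpha_c}(1+y)^{\alpha_d}$, which is false for large $x$. The row $(0\,\cdots\,0\,y)$ handles $c=d$ in the same way, forcing $C=\{d\}$ and hence $\underline\alpha=e_d$, which is degenerate. Lemma \ref{Mlem:1or2}(ii) then kills $\alpha_k$ off $C$. This is slightly more economical than the paper's blow-up argument. Your convexity fallback, however, would not work: every $\underline\alpha\in A_-$ is admissible on full-support matrices, so the obstruction must come from zeros in columns $1,\ldots,d-1$.
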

	
	\begin{proof}
Throughout this proof, we consider an arbitrary matrix $P=\big(p^{(1)},\ldots,p^{(d)}\big)\in\mc V^d_{\rm d.c.}$ with rows $p^{(k)}=\big(p^{(k)}_1,\ldots,p^{(k)}_n\big)$ for $k=1,\ldots,d$. We also denote the rows of $P$ by $p_1,\ldots,p_n$.

We first consider the case of a monotone homomorphism $\Phi:S^d_{\rm d.c.}\to\mathbb{K}\in\{\R_+,\R_+^{\rm op}\}$. As in the minimally restricted case $S^d$, also now $\Phi$ is associated with $\underline{\alpha}\in A_+\cup A_-\setminus\{e_1,\ldots,e_d\}$. Lemmas \ref{Mlem:1or2} and \ref{Mclm:nontrivialB} can be proven in exactly the same way (except now we only consider $B\subsetneq[d]$ such that $d\in B$) and we will use these results in the following. Hence, the notion of $\Phi$-nontrivial sets can again be defined and $\Phi$ also has a character $C\subseteq[d]$. Through the same proof as the one for Proposition \ref{Mpro:monhommin}, we know that
\begin{equation}
\Phi(P)=\sum_{i:\,C\subseteq{\rm supp}\,p_i}\prod_{k\in{\rm supp}\,p_i}\big(p_i^{(k)}\big)^{\alpha_k}
\end{equation}
and that ${\rm supp}\,\underline{\alpha}=\{k\,|\,\alpha_k\neq0\}\subseteq C$. As for the case $\mathbb{K}=\R_+^{\rm op}$, we already know that $\underline{\alpha}\in A_+$, so this case is dealt with and the set of nondegenerate monotone homomorphisms coincides with $\mathcal{H}^{\rm n.d.}(S^d_{\rm d.c.},\R_+^{\rm op})$. Thus we may concentrate on the case $\mathbb{K}=\R_+$.

Assume that $\mathbb{K}=\R_+$. From the results of \cite{farooq2024}, we know that $\underline{\alpha}\in A_-$. Let us first concentrate on the subset of those $\underline{\alpha}=(\alpha_1,\ldots,\alpha_d)$ where $\alpha_1>1$ and $\alpha_2,\ldots,\alpha_d\leq 0$. We show that $\alpha_2=\cdots=\alpha_{d-1}=0$. Consider
\begin{equation}
P_t:=\left(\begin{array}{ccccc}
1/2&1-t&1/2&\cdots&1/2\\
1/2&t&1/2&\cdots&1/2
\end{array}\right),\qquad 0\leq t\leq\frac{1}{2}.
\end{equation}
Assume $\alpha_2\neq0$, hence $\alpha_2<0$. Then, by Lemma \ref{Mlem:1or2}(ii), $2\in C$. Whenever $0\leq s\leq t\leq 1/2$, we have $P_s\rgeq P_t$. Thus, $\Phi(P_\varepsilon)\leq\Phi(P_0)$, i.e.,
\begin{equation}
\label{Meq:onlytwocolumns}
2^{-\alpha_1-\alpha_3-\cdots-\alpha_d}\left[(1-\varepsilon)^{\alpha_2}+\varepsilon^{\alpha_2}\right]\leq2^{-\alpha_1-\alpha_3-\cdots-\alpha_d}
\end{equation}
for all $\varepsilon\in(0,1/2]$. However, the left-hand side above diverges as $\varepsilon\to 0$, so $\alpha_2=0$. Similarly, $\alpha_3=\cdots=\alpha_{d-1}=0$. The same logic does not work for $\alpha_d$ because
\begin{equation}
\left(\begin{array}{cccc}
1/2&\cdots&1/2&1\\
1/2&\cdots&1/2&0
\end{array}\right)\notin\mc V^d_{\rm d.c.}, 
\end{equation}
hence we are not able to consider matrices similar to the $P_t$ for $0\leq t\leq1/2$ as above, but now with the last column equal to $(1-t,t)$. 

We know that $1,d\in C$ and we now show that there are no other columns in $C$. Assume that $C$ contains another column $k\in C\setminus\{1,d\}$. WLOG we assume that $k=2$. Then, \eqref{Meq:onlytwocolumns} holds with $\alpha_1+\alpha_d=1$ and $\alpha_3=\cdots=\alpha_{d-1}=0$, yielding $1\leq1/2$, a contradiction. Hence $C=\{1,d\}$. Therefore, in the part of $A_-$ where $\alpha_1>1$, the only non-degenerate homomorphisms $S^d_{\rm d.c.}\to\R_+$ are all $\Phi_{\alpha,1}$ for $\alpha_d>1$. In the same way, we can deal with the parts of $A_-$ where $\alpha_c>1$, $c=2,\ldots,d-1$ to show that the only homomorphisms are all $\Phi_{\alpha,c}$, for $\alpha_c>1$, $c=2,\ldots,d-1$. 

It remains to be shown what the homomorphisms are for the part of $A_-$ where $\alpha_d>1$. Assume that $\alpha_d\geq1$. 
Considering matrices
\begin{equation}
Q_t:=\left(\begin{array}{cccc}
1-t&1/2&\cdots&1/2\\
t&1/2&\cdots&1/2
\end{array}\right),\qquad 0\leq t\leq\frac{1}{2},
\end{equation}
we find that $Q_s\rgeq Q_t$ for $0\leq s\leq t\leq 1/2$, so $\Phi(Q_\varepsilon)\leq\Phi(Q_0)$ for all $\varepsilon\in(0,1/2]$. However, for $\alpha_1<0$, this again leads to a contradiction, so $\alpha_1=0$. Similarly, $\alpha_2=\cdots=\alpha_{d-1}=0$. Thus, $\alpha_d=1$ since $\alpha_1+\cdots+\alpha_d=1$. However, this case falls under the set $\mathcal{H}^{\rm n.d.}(S^d_{\rm d.c.},\R_+^{\rm op})$ which is already dealt with. Thus, the set of nondegenerate monotone homomorphisms $S^d_{\rm d.c.}\to\R_+$ is exactly $\mathcal{H}^{\rm n.d.}(S^d_{\rm d.c.},\R_+)$.

We now consider the case $\mathbb{K}=\T\R_+$. From the results of \cite{farooq2024}, we know that there is $\underline{\beta}=(\beta_1,\ldots,\beta_d)\in B_-$ such that, when the columns of $P$ have full support,
\begin{equation}
\Phi(P)=\max_{1\leq i\leq n}\prod_{k=1}^d \big(p^{(k)}_i\big)^{\beta_k}.
\end{equation}
Proceeding similarly as above for $\mathbb{K}=\R_+$ and considering the matrices $P_t$ and $Q_t$, one can again show that $\beta_d<0$ and that there is $c\in\{1,\ldots,d-1\}$ such that $\beta_c>0$ and $\beta_k=0$ for $k\in[d-1]\setminus\{c\}$. Thus, $\beta:=\beta_c=-\beta_d$.

  

Note that we may assume $P$ to have columns that sum up to $1$ because, in the general case, 
\begin{equation}
\Phi(P)=\Phi\left(\|p^{(1)}\|_1\,\cdots\,\|p^{(d)}\|_1\right)\Phi(P^\prime) 
\end{equation}
for a matrix $P'$ with columns that sum to $1$, and $\Phi\left(\|p^{(1)}\|_1\,\cdots\,\|p^{(d)}\|_1\right)$ is known since $\Phi$ is evaluated on a matrix with common support. Thus, we assume for now that $\|p^{(k)}\|=1$ for $k=1,\ldots,d$. By monotonity, this implies that $\Phi(P)\geq1$. 
  
        Let $p_1,\ldots,p_n\in\R_+^d$ denote the rows of $P$. Let $\delta>0$, which we will fix later. If a row $p_i$ contains a zero, but $(p_i)_k>0$, then we denote by $\widetilde{p_i}$ the row $p_i$ with every $0$ replaced by a $1$. If a row $p_i$ contains a zero, and $(p_i)_k=0$ is one of them, then we denote by $\widetilde{p_i}^\delta$ the row $p_i$ with $\delta$ on position $k$ and every other $0$ replaced by a $1$. Then, notice that 
		\begin{align}
			&P\boxplus\bigboxplus_{k\in{\rm supp}\,p_i\neq[d]}\widetilde{p_i}\boxplus\bigboxplus_{k\notin{\rm supp}\,p_i\neq[d]}\widetilde{p_i}^\delta&\label{Meq:PhiPdecomposedtropical1}\\
            &\qquad\qquad=\bigboxplus_{{\rm supp}\,p_i=[d]}p_i\boxplus\bigboxplus_{k\in{\rm supp}\,p_i\neq[d]}\left(\begin{array}{c}
				\widetilde{p_i}\\
				p_i
			\end{array}\right)\boxplus\bigboxplus_{k\notin{\rm supp}\,p_i\neq[d]}\left(\begin{array}{c}
				\widetilde{p_i}^\delta\\
				p_i
			\end{array}\right).&\label{Meq:PhiPdecomposedtropical2}
		\end{align}

Let us note that Lemma \ref{Mlem:1or2} still holds (with the obvious modification that $d\in B$), especially $\Phi[B]=1$ whenever $d\in B\subseteq[d]$. Using this, we have for the terms in the penultimate $\boxplus$-sum on the RHS of \eqref{Meq:PhiPdecomposedtropical1}, \eqref{Meq:PhiPdecomposedtropical2} (i.e.,\ when $i$ is such that $k\in{\rm supp}\,p_i$ or, simply, $i\in{\rm supp}\,p^{(k)}$ and ${\rm supp}\,p_i\neq [d]$)
\begin{align}
\Phi\big(\widetilde{p_i}\boxplus p_i\big)&=\Phi\left(\widetilde{p_i}\boxtimes\left(
\begin{array}{ccc}
1&\cdots&1\\
&{\rm supp}\,p_i&
\end{array}\right)\right)\\
&=\Phi(\widetilde{p_i})\Phi[{\rm supp}\,p_i]=\Phi(\widetilde{p_i})=\left(\frac{p^{(k)}_i}{p^{(d)}_i}\right)^\beta.
\end{align}
Similarly, for the terms in the final $\boxplus$-sum on the RHS of \eqref{Meq:PhiPdecomposedtropical1}, \eqref{Meq:PhiPdecomposedtropical2}, we have
\begin{equation}
\Phi(\widetilde{p_i}^\delta\boxplus p_i)=\left(\frac{\delta}{p^{(d)}_i}\right)^\beta.
\end{equation}
Then \eqref{Meq:PhiPdecomposedtropical1}, \eqref{Meq:PhiPdecomposedtropical2} implies 
		\begin{align}
			\label{Meq:Phitropicaldominatingcolumnalmost}
   \max&\left(\{\Phi(P)\}\cup\left\{\left(\frac{p^{(k)}_i}{p^{(d)}_i}\right)^\beta\,\middle|\,i:\ k\in{\rm supp}\,p_i\neq[d]\right\}\right.\\
   &\qquad\qquad\qquad\cup\left.\left\{\left(\frac{\delta}{p^{(d)}_i}\right)^\beta\,\middle|\,i:\ k\notin{\rm supp}\,p_i\neq[d]\right\}\right)\\
   &=\max\left(\left\{\left(\frac{p^{(k)}_i}{p^{(d)}_i}\right)^\beta\,\middle|\,i:\ k\in{\rm supp}\,p_i\right\}\right.\\
   &\qquad\qquad\qquad\qquad\cup\left.\left\{\left(\frac{\delta}{p^{(d)}_i}\right)^\beta\,\middle|\,i:\ k\notin{\rm supp}\,p_i\right\}\right).
		\end{align}
        By making $\delta$ suitably small, we can make sure that both maximizations above do not have an element of the form $\delta^\beta\big(p^{(d)}_i\big)^{-\beta}$, for some row $i$, as largest element. 
		
		For any $\varepsilon>0$, we have 
		\begin{equation}
			P\boxplus\left(\begin{array}{ccc}
				\varepsilon&\cdots&\varepsilon\\
				\vdots&\ddots&\vdots\\
				\varepsilon&\cdots&\varepsilon
			\end{array}\right)\rgeq P+\left(\begin{array}{ccc}
				\varepsilon&\cdots&\varepsilon\\
				\vdots&\ddots&\vdots\\
				\varepsilon&\cdots&\varepsilon
			\end{array}\right), 
		\end{equation}
		where $+$ denotes the entry-wise addition of matrices. This implies 
		\begin{equation}
			\Phi(P)\geq\max_{i\in[n]}\left(\frac{p^{(k)}_i+\varepsilon}{p^{(d)}_i+\varepsilon}\right)^\beta, 
		\end{equation}
        where we used the assumption that $\Phi(P)\geq1$. By continuity of the functions $\varepsilon\mapsto\frac{p^{(k)}_i+\varepsilon}{p^{(d)}_i+\varepsilon}$, $i\in[n]$, there exists $\varepsilon_{\rm max}$ such that the above maximum is obtained at the same $i_0$ for all $\varepsilon\leq\varepsilon_{\rm max}$. Then, taking $\varepsilon\rightarrow0$ on both sides of the inequality above, we deduce 
		\begin{equation}
			\Phi(P)\geq\left(\frac{p^{(k)}_{i_0}}{p^{(d)}_{i_0}}\right)^\beta\quad\forall\,i\in[n]. 
		\end{equation}
		This together with \eqref{Meq:Phitropicaldominatingcolumnalmost} yields 
        \begin{equation}
            \Phi(P)=\max_{i\in[n]}\left(\frac{p^{(k)}_i}{p^{(d)}_i}\right)^\beta. 
        \end{equation}
		
		Finally, we consider the case $\mathbb{K}=\T\R_+^{\rm op}$. Exactly as in the proof of Proposition 13 in \cite{farooq2024} (see Equations (89), (90)), we can show that $\Phi(P)$ only depends on the supports of the columns of $P$. More precisely, if $P,Q\in S_{\rm d.c.}^d$ satisfy ${\rm supp}\,p^{(k)}={\rm supp}\,q^{(k)}$ for all $k\in[d]$, then $\Phi(P)=\Phi(Q)$. It then follows that for any $Q\in S_{\rm d.c.}^d$, we have $\Phi(Q)\leq\Phi(\|q^{(1)}\|_1\,\cdots\,\|q^{(d)}\|_1)=\Phi(1\,\cdots\,1)=1$, since $Q\rgeq(\|q^{(1)}\|_1\,\cdots\,\|q^{(d)}\|_1)$ and the latter one-row matrix has the same support structure as $(1\,\cdots\,1)$. $P$ has at least one row not containing any $0$. Assume WLOG it is the first row. Then $P$ majorizes and is majorized by 
		\begin{equation}
			\left(\begin{array}{ccc}
				p_1^{(1)}/2&\cdots&p_1^{(d)}/2\\
				p_1^{(1)}/2&\cdots&p_1^{(d)}/2\\
				p_2^{(1)}&\cdots&p_2^{(d)}\\
				\vdots&\vdots&\vdots\\
				p_n^{(1)}&\cdots&p_n^{(d)}
			\end{array}\right), 
		\end{equation}
		hence 
        \begin{equation}
            \Phi(P)=\max\left(\underbrace{\Phi\left(p_1^{(1)}/2\;\cdots\;p_1^{(d)}/2\right)}_{=1},\underbrace{\Phi\left(\begin{array}{ccc}
				p_1^{(1)}/2&\cdots&p_1^{(d)}/2\\
				p_2^{(1)}&\cdots&p_2^{(d)}\\
				\vdots&\vdots&\vdots\\
				p_n^{(1)}&\cdots&p_n^{(d)}
			\end{array}\right)}_{\leq1}\right)=1. 
        \end{equation}
        Therefore, there are no non-degenerate monotone homomorphisms for $\mathbb{K}=\T\R_+^{\rm op}$. 
	\end{proof}

\begin{remark}
Note that, compared to the minimally restricted semiring $S^d$, some elements in $\mathcal{H}^{\rm n.d.}(S^d_{\rm d.c.},\R_+^{\rm op})$ coincide. Since the final column dominates all the other columns in all matrices in $\mc V^d_{\rm d.c.}$, it immediately follows that, whenever $\underline{\alpha}=(\alpha_1,\ldots,\alpha_{d-1},0)\in A_+$ and ${\rm supp}\,\underline{\alpha}\subseteq C\subseteq[d]$, we have $\Phi_{\underline{\alpha},C}=\Phi_{\underline{\alpha},C\cup\{d\}}$. See also Figure \ref{Mfig:DominatingColumn} for a depiction of the monotone homomorphisms in the case $d=3$. The figure also shows the derivations which we study in the next subsection.
\end{remark}

	\subsection{Monotone derivations}
	

    Recall that the degenerate monotone homomorphisms are all $\Phi^{(k)}$ for $k\in[d]$. Note that we only need to consider monotone derivations of these homomorphisms up to interchangeability. Similarly as for the minimal restrictions semiring, interchangeability allows us to show a result analogous to Lemma \ref{Mlem:interchangeability}, which we will use to characterize all monotone derivations, up to interchangeability. 
	
	\begin{proposition}
		\label{Mpro:derivationsdominatingcolumn}
		For each $k\in[d-1]$, any monotone derivation $\Delta$ of the degenerate homomorphism $\Phi^{(k)}$, up to interchangeability, satisfies $\Delta(P)=\gamma D_1(p^{(k)}\|p^{(d)})$ for all $P\in S^d_{\rm d.c}$, for some $\gamma\geq0$. The only monotone derivation $\Delta$ of the degenerate homomorphism $\Phi^{(d)}$, up to interchangeability, vanishes, i.e.,\ $\Delta(P)=0$ for all $P\in S^d_{\rm d.c}$. 
	\end{proposition}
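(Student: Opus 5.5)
The plan is to follow the template of Proposition \ref{Mpro:derivations}, adapted to the richer structure of $S^d_{\rm d.c.}$.

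\textbf{Normalization.} By the argument of Lemma \ref{Mlem:interchangeability}, which transfers verbatim since constant rows $(x\,\cdots\,x)$ lie in $S^d_{\rm d.c.}$, we may assume up to interchangeability that $\Delta(x\,\cdots\,x)=0$ for all $x\geq0$. The computation in Lemma \ref{Mlem:derivationszero}, based on \eqref{Meq:basicmatrixequation1}--\eqref{Meq:basicmatrixequation2}, then gives $\Delta[B]=0$ for every $B$ with $d\in B$.

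\textbf{Reduction to full-support matrices.} Next we decompose an arbitrary $P\in\V^d_{\rm d.c.}$. Every row $p_i$ has $p^{(d)}_i>0$. Let $\widetilde{p_i}$ denote $p_i$ with its zeros replaced by $1$. Then $\widetilde{p_i}\boxplus p_i=\widetilde{p_i}\boxtimes\left(\begin{array}{c}1\cdots1\\ {\rm supp}\,p_i\end{array}\right)$, and the Leibniz rule together with $\Delta[{\rm supp}\,p_i]=0$ gives
\begin{equation}
\Delta(p_i)=\big(\Phi^{(k)}[{\rm supp}\,p_i]-1\big)\Delta(\widetilde{p_i}).
\end{equation}
Since $\Phi^{(k)}[B]$ equals $2$ if $k\in B$ and $1$ otherwise, this yields $\Delta(P)=\sum_{i:\,k\in{\rm supp}\,p_i}\Delta(\widetilde{p_i})$. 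Here $\Delta(\widetilde{p_i})$ is the derivation evaluated on a positive single row. The common-support classification of derivations in \cite{farooq2024}, applied to the sub-semiring of matrices with full-support columns, tells us that on positive matrices $\Delta$ is, up to interchangeability, a nonnegative combination $\sum_{\ell\neq k}\gamma_\ell D_1(p^{(k)}\|p^{(\ell)})\|p^{(k)}\|_1$ plus terms vanishing on equal-norm classes. Concretely, on a single positive row this reads $\Delta(x)=\sum_{\ell\neq k}\gamma_\ell\, x_k\log(x_k/x_\ell)$ after absorbing the $\delta$-type terms. Substituting gives
\begin{equation}
\Delta(P)=\sum_{i\in{\rm supp}\,p^{(k)}}\sum_{\ell\neq k}\gamma_\ell\, p^{(k)}_i\log\big(p^{(k)}_i/\widetilde{p_i}^{(\ell)}\big).
\end{equation}

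\textbf{Killing $\gamma_\ell$ for $\ell\neq d$.} Monotonicity must now force $\gamma_\ell=0$ for all $\ell\neq d$. For $\ell\in[d-1]\setminus\{k\}$ we use a family like $P_t$ from the homomorphism proof, with column $\ell$ equal to $(1-t,t)$ and the others constant $1/2$. Since $P_s\succeq P_t$ for $s\leq t$, monotonicity requires $\Delta(P_0)\geq\Delta(P_t)$. However, the replacement $0\to1$ in $\widetilde{p}$ produces a jump at $t=0$ whose sign contradicts monotonicity unless $\gamma_\ell=0$. Alternatively, the divergence $-\log t$ as $t\downarrow0$ yields a contradiction with boundedness by $\Delta(P_0)$. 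Only $\ell=d$ survives, which gives $\Delta(P)=\gamma D_1(p^{(k)}\|p^{(d)})$ with $\gamma=\gamma_d\geq0$. The term is well defined because ${\rm supp}\,p^{(k)}\subseteq{\rm supp}\,p^{(d)}$.

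\textbf{The case $\Phi^{(d)}$.} For $k=d$ the same expansion gives $\Delta(P)=\sum_i\sum_{\ell<d}\gamma_\ell\, p^{(d)}_i\log(p^{(d)}_i/\widetilde{p_i}^{(\ell)})$. Using the matrices $Q_t$, which vary column $\ell$ through $(1-t,t)$, we get the same blow-up contradiction, so every $\gamma_\ell=0$. Equivalently, the nonpositivity/nonnegativity sandwich from Proposition \ref{Mpro:derivations} can be rerun with rows $C_\ell$ that include a positive entry in column $d$.

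\textbf{Main obstacle.} The delicate step is justifying that restricting to the full-support sub-semiring and invoking \cite{farooq2024} is legitimate. That classification is stated up to interchangeability and with the vanishing property, so the interchangeability adjustments must be made consistently on the larger semiring. The discontinuity arguments must also be run carefully so that the sign of the jump genuinely forces $\gamma_\ell=0$.
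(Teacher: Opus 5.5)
Your proposal is correct and follows the paper's proof step for step: the vanishing normalization, $\Delta[B]=0$ for $d\in B$, the reduction $\Delta(P)=\sum_{i\in{\rm supp}\,p^{(k)}}\Delta(\widetilde{p_i})$, the full-support form imported from \cite{farooq2024}, and monotonicity used to kill $\gamma_\ell$ for $\ell\in[d-1]\setminus\{k\}$. The only real difference is the test matrices: for $\gamma_\ell>0$ your family gives $\Delta(P_t)=\gamma_\ell D_1\big((\tfrac12,\tfrac12)\big\|(1-t,t)\big)\geq 0>-\gamma_\ell\log 2=\Delta(P_0)$, which contradicts $P_0\succeq P_t$ just as well as the paper's pair with columns $(3/4,1/4)$ versus $(7/8,1/8)$; your aside that the sandwich of Proposition \ref{Mpro:derivations} "can be rerun" is unsupported, because rows $e_\ell$ with $\ell<d$ are not available in $S^d_{\rm d.c.}$, but the main argument does not need it.
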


    Let $\Delta:S^d_{\rm d.c}\to\mathbb{\R}$ be a monotone $\Phi^{(k)}$-derivation for some $k\in[d]$. For any $B\subsetneq[d]$, $d\in B$, we write 
    \begin{equation}
        \Delta[B]:=\Delta\left(\begin{array}{ccc}
                1&\cdots&1\\
                &B&
            \end{array}\right). 
    \end{equation}
Before going on to the proof, note that, whenever $B\subseteq[d]$ is such that $d\in B$, then $\Delta[B]=0$. The proof is essentially the same as that of Lemma \ref{Mlem:derivationszero} and will not be repeated here.
    

\begin{proof}[Proof of Proposition \ref{Mpro:derivationsdominatingcolumn}]
Let $\Delta:S_{\rm d.c.}^d\to\mathbb{\R}$ be a monotone $\Phi^{(k)}$-derivation for some $k\in[d]$. For any $x\in\R_+^d$ with $d\in{\rm supp}\,x$ we find using the above note and the Leibniz rule for $\Delta$,
\begin{align}
\Delta\big(\tilde{x}^T\boxplus x^T\big)&=\Delta\left(\tilde{x}^T\boxtimes\left(\begin{array}{ccc}
1&\cdots&1\\
&{\rm supp}\,x&
\end{array}\right)\right)\\
&=\Delta(\tilde{x}^T)\Phi^{(k)}[{\rm supp}\,x]+\Phi^{(k)}(\tilde{x}^T)\underbrace{\Delta[{\rm supp}\,x]}_{=0}\\
&=\left\{\begin{array}{ll}
2\Delta(\tilde{x}^T)&{\rm if}\ k\in{\rm supp}\,x\\
\Delta(\tilde{x}^T)&{\rm otherwise}
\end{array}\right.. 
\end{align}
Through a similar calculation as in \eqref{Meq:Phihajotus1}, \eqref{Meq:Phihajotus2} and \eqref{Meq:Phihajotus3}, we now find that
\begin{equation}
\Delta(P)=\sum_{i\in{\rm supp}\,p^{(k)}}\Delta(\widetilde{p_i}) 
\end{equation}
for any $P=(p^{(1)},\ldots,p^{(d)})\in S_{\rm d.c.}^d$ with rows $p_i$. Again, $\widetilde{p_i}$ is the row $p_i$ where the vanishing entries are replaced with 1. It follows that $\Delta(P)$ can be seen as $\Delta$ evaluated on a matrix with non-vanishing entries, more specifically the matrix resulting from taking only the rows in $P$ corresponding to ${\rm supp}\,p^{(k)}$ and replacing all vanishing entries with $1$. Hence, we can use the results in the proof of Proposition 14 in \cite{farooq2024} for the equal-support case. Combining Equations (111) and (117) in \cite{farooq2024}, we find 
\begin{equation}
\Delta(P)=\sum_{\ell\neq k}\gamma_\ell\sum_{i\in{\rm supp}\,p^{(k)}}p_i^{(k)}\log\left(\frac{p_i^{(k)}}{\widetilde{p_i^{(\ell)}}}\right) 
\end{equation}
for some constants $\gamma_\ell\geq0$, $\ell\in[d]\setminus\{k\}$. 
        
        Next, we show that $\gamma_\ell=0$ if $\ell\neq d$. Let $\ell\in[d-1]\setminus\{k\}$. Consider the following matrices in $S_{\rm d.c.}^d$ 
        \begin{align}
            P:=\left(\begin{array}{ccccccc}
                3/4&\cdots&3/4&1&3/4&\cdots&3/4\\
                1/4&\cdots&1/4&0&1/4&\cdots&1/4
            \end{array}\right),\\
            Q:=\left(\begin{array}{ccccccc}
                7/8&\cdots&7/8&1&7/8&\cdots&7/8\\
                1/8&\cdots&1/8&0&1/8&\cdots&1/8
            \end{array}\right),
        \end{align}
        where the column containing a $0$ is the $\ell$-th column. Notice that $P\rgeq Q$. A simple calculation shows that 
        \begin{equation}
            \Delta(P)=\gamma_\ell\left(\frac{3}{4}\log\frac{3}{4}+\frac{1}{4}\log\frac{1}{4}\right)\leq\gamma_\ell\left(\frac{7}{8}\log\frac{7}{8}+\frac{1}{8}\log\frac{1}{8}\right)=\Delta(Q), 
        \end{equation}
        with strict inequality if $\gamma_\ell>0$. This would imply that $\Delta$ is not monotone. Hence, we find that $\gamma_\ell=0$ for all $\ell\in[d-1]\setminus\{k\}$. Therefore, if $k\neq d$, 
        \begin{equation}
            \Delta(P)=\gamma_d\sum_{i\in{\rm supp}\,p^{(k)}}p_i^{(k)}\log\left(\frac{p_i^{(k)}}{\widetilde{p_i^{(d)}}}\right)=\gamma_d D_1\left(p^{(k)}\|p^{(d)}\right), 
        \end{equation}
        where we used that ${\rm supp}\,p^{(k)}\subseteq {\rm supp}\,p^{(d)}$ to identify the sum as the Kullback--Leibler divergence $D_1$. Note that the monotonicity of $\Delta$ follows from the data-processing inequality satisfied by $D_1$. For the case $k=d$, we have $\gamma_\ell=0$ for all $\ell\neq d$, hence $\Delta(P)=0$ for every $P\in S_{\rm d.c.}^d$. 		
	\end{proof}

\subsection{Power universal elements}\label{Msubsec:poweruniversaldc}

The following Proposition characterizes the power universal elements in $S_{\rm d.c.}^d$. 

\begin{proposition}\label{Mpro:poweruniversaldcgeneral}
    Let $U=(u^{(1)},\ldots,u^{(d)})\in S_{\rm d.c.}^d$ with all $u^{(k)}$ for $k\in[d]$, probability distributions. Then the following are equivalent: 
    \begin{enumerate}[(i)]
        \item $U$ is power universal. 
        \item It holds that ${\rm supp}\,u^{(k)}\subsetneq{\rm supp}\,u^{(d)}$ for all $k\in[d-1]$, and ${\rm supp}\,u^{(k)}\nsubseteq{\rm supp}\,u^{(k^\prime)}$ for all $k,k^\prime\in[d-1]$, $k\neq k^\prime$. 
        \item Any nondegenerate homomorphism
        \begin{equation}
            \Phi\in\mathcal{H}^{\rm n.d.}(S^d_{\rm d.c.},\R_+)\cup\mathcal{H}^{\rm n.d.}(S^d_{\rm d.c.},\R_+^{\rm op})\cup\mathcal{H}^{\rm n.d.}(S^d_{\rm d.c.},\T\R_+)
        \end{equation}
        satisfies $\Phi(U)\neq1$. 
    \end{enumerate}
\end{proposition}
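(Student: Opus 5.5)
We will prove the cycle (ii) $\Rightarrow$ (i) $\Rightarrow$ (iii) $\Rightarrow$ (ii), mirroring the proof of Proposition \ref{Mlem:poweruniversalconditions}.

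\emph{(i) $\Rightarrow$ (iii).} This is verbatim the argument used there for (i) $\Rightarrow$ (iv). If $\Phi$ is a non-degenerate monotone homomorphism with $\Phi(U)=1$, pick $Q\preceq P$ and use $P\preceq Q\boxtimes U^{\boxtimes m}$. Multiplicativity and monotonicity, with the inequality directions appropriate to $\R_+$, $\R_+^{\rm op}$ or $\T\R_+$, then squeeze $\Phi(P)=\Phi(Q)$. Hence $\Phi$ is degenerate, a contradiction.

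\emph{(iii) $\Rightarrow$ (ii).} We test (iii) on specific homomorphisms.
\begin{enumerate}[(1)]
\item For $k\neq k'$ in $[d-1]$, the homomorphism $\Phi_{e_k,\{k,k'\}}$ is still non-degenerate on $S^d_{\rm d.c.}$. Since the columns of $U$ are normalized, $\Phi_{e_k,\{k,k'\}}(U)=\sum_{i\in{\rm supp}\,u^{(k)}\cap{\rm supp}\,u^{(k')}}u^{(k)}_i$. This is $\neq1$ exactly when ${\rm supp}\,u^{(k)}\nsubseteq{\rm supp}\,u^{(k')}$.
\item For the strict inclusion ${\rm supp}\,u^{(k)}\subsetneq{\rm supp}\,u^{(d)}$, consider $\Phi_{(0,\ldots,0,1),\{k,d\}}(U)=\sum_{i\in{\rm supp}\,u^{(k)}}u^{(d)}_i$. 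This is a non-degenerate element of $\mathcal{H}^{\rm n.d.}(S^d_{\rm d.c.},\R_+^{\rm op})$, since $C=\{k,d\}\neq\{d\}$. It equals $1$ iff ${\rm supp}\,u^{(d)}\subseteq{\rm supp}\,u^{(k)}$, which gives the claim.
\end{enumerate}
The remaining homomorphisms in (iii), namely $\Phi_{\alpha,c}$ and $\Phi^\T_{\beta,c}$, are not needed for this direction. They will, however, be needed in (ii) $\Rightarrow$ (i), so (iii) is consistent.

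\emph{(ii) $\Rightarrow$ (i).} This is the main obstacle. We follow the structure of the earlier proof.
\begin{enumerate}[(1)]
\item As there, $U^{\boxtimes(d-2)}$ contains, for each $k\in[d-1]$, a row whose only nonzero entries are in positions $k$ and $d$. Such a row is the product of rows $i_{k'}$ with $u^{(k)}_{i_{k'}}>0=u^{(k')}_{i_{k'}}$, $k'\in[d-1]\setminus\{k\}$; these rows lie in ${\rm supp}\,u^{(d)}$ automatically.
\item Strict inclusion gives a row of $U$ with only the $d$-th entry positive. Multiplying it in yields a row $(0,\ldots,0,y)$ with $y>0$.
\item A full-support row need not exist in $U$ itself, but only in the intersection of all supports, which is nonempty by membership in $S^d$.
\end{enumerate}
So $\tilde U:=U^{\boxtimes m_0}$ majorizes a matrix with one full-support row block, rows of the form $(0,\ldots,x_k,\ldots,0,z_k)$ for $k\in[d-1]$, and a row $(0,\ldots,0,y)$.

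Taking powers, the full-support block has $k$-th column mass $\|\tilde u_0^{(k)}\|_1^m\to0$, which frees up mass in each column. Given $P,Q$ with $Q\preceq P$ and normalized columns, we must construct a stochastic $S_m$ reproducing $P$. Its common-support part is handled as before via $v_0$. The entries of $P$ whose rows lie in the support of column $d$ but vanish in some columns are produced from the remaining rows. The rows $(0,\ldots,x_k,\ldots,0,z_k)$ supply mass to column $k$, necessarily accompanied by mass in column $d$, and the pure row $(0,\ldots,0,y)$ supplies the leftover mass of column $d$.

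The delicate point is a ratio constraint. A row $(0,\ldots,x_k,\ldots,0,z_k)$ can only be split into rows whose $k$-to-$d$ ratio is fixed at $x_k/z_k$. We must therefore check that the ratios $p^{(k)}_i/p^{(d)}_i$ appearing in $P$ can be matched. We would handle this by noting that $\tilde U_m$ contains, as $m$ grows, rows with arbitrarily large ratio $x_k^m/z_k^m$ after normalization. This is where non-degeneracy of $\Phi^\T_c$ and $\Phi_{\alpha,c}$ enters, since it forces $\max_i u^{(k)}_i/u^{(d)}_i>1$. Such rows can be mixed with pure-$d$ mass to achieve any smaller ratio. Verifying nonnegativity of the resulting $S_m$ for large $m$ is the step we expect to require the most care.

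Finally, conclude $P\preceq\tilde U^{\boxtimes m}\preceq Q\boxtimes U^{\boxtimes m_0m}$, which is power universality.
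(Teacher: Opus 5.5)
\textbf{What works.} Your (i)$\Rightarrow$(iii) and (iii)$\Rightarrow$(ii) are fine. The latter tests exactly the paper's $\Phi_{e_d,\{k,d\}}$ and $\Phi_{e_k,\{k,k'\}}=\Phi_{e_k,\{k,k',d\}}$. The degeneracy squeeze in the former works in all three codomains; the paper instead evaluates on an explicit test matrix, and both are valid.

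\textbf{The gap in (ii)$\Rightarrow$(i).} It sits at the step you defer, and it is not just a matter of care: your mechanism for large $k$-to-$d$ ratios does not work.
\begin{itemize}
\item The row $(0,\ldots,x_k^m,\ldots,0,z_k^m)$ carries only $x_k^m\to0$ of column $k$. The remaining column-$k$ mass lies on mixed products of the full-support and type-$k$ rows. All of it must be routed into $p^{(k)}$, and each piece drags its own column-$d$ mass along. You give no control over that mass, and mixing with pure-$d$ mass only lowers ratios.
\item $x_k>z_k$ for the merged row is not implied by $\max_i u^{(k)}_i/u^{(d)}_i>1$, because that maximum may sit on a full-support row. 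For $d=3$, take $U$ with rows $(0.99,0.5,0.01)$, $(0.01,0,0.9)$, $(0,0.5,0.09)$, which satisfies (ii). The support-$\{1,3\}$ rows of $U^{\boxtimes 2}$ merge to $x_1=0.0199<z_1=0.828$.
\item Appealing to non-degeneracy of $\Phi^{\T}_c$ and $\Phi_{\alpha,c}$ is circular in your cycle, since (iii) is unavailable when proving (ii)$\Rightarrow$(i). The inequality $\max_i u^{(k)}_i/u^{(d)}_i>1$ does follow directly from (ii), but it is not what you need.
\item Minor: $U$ itself need not contain a row $(0,\ldots,0,y)$. One appears in $U^{\boxtimes(d-1)}$ as a product of rows with $u^{(k)}_i=0<u^{(d)}_i$, one for each $k\in[d-1]$.
\end{itemize}

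\textbf{The missing idea.} The pure-$d$ row is a sink for column-$d$ mass, so you should aggregate rows rather than follow single product rows. Write the rows of $\tilde U$ as
\begin{itemize}
\item $r_0=(a_1,\ldots,a_{d-1},b_0)$;
\item $r_k$, with entry $1-a_k$ at position $k$ and $b_k$ at position $d$;
\item $r_d=(0,\ldots,0,1-b)$, where $b=b_0+\cdots+b_{d-1}<1$.
\end{itemize}
In $\tilde U^{\boxtimes m}$, column $k$ lives only on products of $r_0$'s and $r_k$'s. Merge those with at least one $r_k$ factor: the result has column-$k$ mass $1-a_k^m$ and column-$d$ mass $(b_0+b_k)^m-b_0^m\le b^m$. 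Every product containing $r_d$, or two distinct $r_k,r_{k'}$, has support $\{d\}$ and merges into a single pure-$d$ row. The $k$-th row therefore has ratio at least $(1-a_k^m)/b^m\to\infty$, regardless of how $1-a_k$ compares with $b_k$. This is the paper's $U_\ell$ in \eqref{Meq:majorizedbyU}.

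\textbf{The stochastic map.} It is then explicit:
\begin{itemize}
\item send the product row $(a_1^m,\ldots,a_{d-1}^m,b_0^m)$ to the point mass $t^{(0)}$ on a common-support row of $P$;
\item send the $k$-th row to $(p^{(k)}-a_k^m t^{(0)})/(1-a_k^m)$;
\item send the pure-$d$ row to the normalized remainder of $p^{(d)}$.
\end{itemize}
For large $m$ all three images are nonnegative. For the remainder this holds because every subtracted column-$d$ weight is at most $b^m$ and ${\rm supp}\,p^{(k)}\subseteq{\rm supp}\,p^{(d)}$.
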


\begin{proof}
    (iii) $\Rightarrow$ (ii): Note that for a monotone homomorphism $\Phi:S^d_{\rm d.c.}\to\mathbb{\R^{\rm op}}$, $\Phi(U)\neq1$ implies $\Phi(U)<1$. Since $\Phi_{e_d,\{k,d\}}(U)<1$ for $k\in[d-1]$, there is a row with a vanishing entry at $k$ and a positive value at $d$. Similarly, since $\Phi_{e_k,\{k,k^\prime,d\}}(U)<1$ for $k,k^\prime\in[d-1]$, $k\neq k^\prime$, there is a row with a vanishing entry at $k^\prime$ and a positive value at $k$. Hence, the conditions in (ii) are satisfied.  
    
    (ii) $\Rightarrow$ (i): From (ii), it follows that for each $k=2,\ldots,d-1$, $U$ has a row with positive entries at positions $1$ and $d$, and $0$ at position $k$. Then $U^{\boxtimes (d-2)}$ will contain the $\boxtimes$-product of these $d-2$ rows, which is a row with positive entries at positions $1$ and $d$, and $0$ everywhere else. Similarly, for the other columns $k^\prime=2,\ldots,d-1$, $U^{\boxtimes (d-2)}$ will contain a row with positive entries at positions $k^\prime$ and $d$ and $0$ everywhere else. Note that $U^{\boxtimes (d-1)}=U^{\boxtimes (d-2)}\boxtimes U$ will also contain rows of these specific types. From (ii) it also follows that for each $k=1,\ldots,d-1$, $U$ contains a row with a positive entry at position $d$ and $0$ at position $k$. Then $U^{\boxtimes (d-1)}$ will contain the $\boxtimes$-product of these $d-1$ rows, which is a row with a positive entry at position $d$ and $0$ everywhere else. One sees that it is possible to sum the rows in $U^{\boxtimes (d-1)}$ in such a way that 
    \begin{equation}
        \label{Meq:almostidentity}
        U^{\boxtimes (d-1)}\rgeq\left(\begin{array}{cccc}
            a_1&\cdots&a_{d-1}&b_0\\
            1-a_1&\cdots&0&b_1\\
            0&\ddots&\vdots&\vdots\\
            0&\cdots&1-a_{d-1}&b_{d-1}\\
            0&\cdots&0&1-b			
        \end{array}\right):=\tilde{U}, 
    \end{equation}
    where $b:=\sum_{i=0}^{d-1}b_i$, and $a_k,b_i,b\in(0,1)$ for $k=1,\ldots,d-1$ and $i=0,\ldots,d-1$. 
    Define the stochastic map (matrix) $T$ mapping probability vectors on $\{0,1,\ldots,d\}^2$ into probability vectors on $\{0,1.\ldots,d\}$ through
    \begin{equation}
    (Tp)_m=p_{m,m}+\sum_{j=0}^{m-1}\big(p_{j,m}+p_{m,j}\big),\qquad m=0,\ldots,d,
    \end{equation}
    for all $p=(p_{j,k})_{i,j=0}^d$ where empty sums $\sum_{j=0}^{-1}(\cdots)$ vanish. Also define matrices $U_\ell$ through $U_1=\tilde{U}$ and $U_{\ell+1}=T(\tilde{U}\boxtimes U_\ell)$ for all $\ell=1,2,\ldots$. One easily sees that $\tilde{U}^{\boxtimes\ell}\rgeq U_\ell$ and a straightforward calculation combined with an induction argument shows that, for all $\ell\in\N$, there are $b^{(\ell)}_i$, $i=0,\ldots,d-1$ such that $b^{(\ell)}_0+\cdots+b^{(\ell)}_{d-1}=b^\ell$ and
    \begin{equation}\label{Meq:majorizedbyU}
        U_\ell= \left(\begin{array}{cccc}
            a_1^\ell&\cdots&a_{d-1}^\ell&b_0^{(\ell)}\\
            1-a_1^\ell&\cdots&0&b_1^{(\ell)}\\
            0&\ddots&\vdots&\vdots\\
            0&\cdots&1-a_{d-1}^\ell&b_{d-1}^{(\ell)}\\
            0&\cdots&0&1-b^\ell		
        \end{array}\right).
    \end{equation}
    Hence, by choosing $\ell$ sufficiently large, we can make $a_k^\ell$ for $k=1,\ldots,d-1$ and $b_i^{(\ell)}$ for $i=0,\ldots, d-1$ arbitrarily small. 
    
Let $P=\big(p^{(1)},\ldots,p^{(d)}\big)\in \mc V_{\rm d.c.}^d$ with columns of unit $1$-norm, where $p^{(k)}=\big(p^{(k)}_1,\ldots,p^{(k)}_n\big)$. We may assume WLOG that $p^{(k)}_1>0$ for $k=1,\ldots,d$. Let us write $p_1:=\min_{1\leq k\leq d-1}p^{(k)}_1>0$, $t^{(0)}:=(1,0,\ldots,0)$ (with $n$ entries), and 
\begin{equation}
t^{(k)}_v:=\frac{1}{1-v}\big(p^{(k)}-vt^{(0)}\big)
\end{equation}
for $v\in(0,p_1)$ and $k=1,\ldots,d-1$. For any $v=(v_1,\ldots,v_{d-1})\in(0,p_1)^{d-1}$ and $w=(w_0,\ldots,w_{d-1})\in\R_{>0}^d$ such that $w_0+\cdots+w_{d-1}<1$, let us also define
\begin{equation}
t^{(d)}_{v,w}:=\frac{1}{1-w_0-\cdots-w_{d-1}}\big(p^{(d)}-w_0 t^{(0)}-w_1 t^{(1)}_{v_1}-\cdots-w_{d-1}t^{(d-1)}_{v_{d-1}}\big).
\end{equation}
A sufficient condition for $t^{(d)}_{v,w}$ to be a probability vector is
\begin{align}
p^{(d)}_1-w_0-\sum_{k=1}^{d-1}w_k\frac{p^{(k)}_1}{1-p_1}&\geq0,\\
p^{(d)}_i-\sum_{k=1}^{d-1}w_k\frac{p^{(k)}_i}{1-p_1}&\geq0,\qquad i=2,\ldots,n,
\end{align}
independent of $v_1,\ldots,v_{d-1}$. Thus, there is $w=(w_0,\ldots,w_{d-1})\in\R_{>0}^{d}$ such that $t^{(d)}_{v,w'}$ is a probability vector for all $v\in(0,p_1)^{d-1}$ and $w'\in\R_{>0}^d$ whose entries are upper-bounded by those of $w$. Suppose then that $\ell\in\N$ is large enough so that $a_k^\ell<p_1$ for $k=1,\ldots,d-1$ and $b^\ell<w_i$ (so that $b^{(\ell)}_i\leq b^\ell<w_i$) for $i=0,\ldots,d-1$ where $a_k$, $b_i$, and $b^{(\ell)}_i$ are as in \eqref{Meq:almostidentity} and \eqref{Meq:majorizedbyU}. Setting $v:=\big(a_1^\ell,\ldots,a_{d-1}^\ell\big)$ and $w':=\big(b_0^{(\ell)},\ldots,b_{d-1}^{(\ell)}\big)$ and $t^{(k)}:=t^{(k)}_{v}$ for $k=1,\ldots,d-1$ and $t^{(d)}:=t^{(d)}_{v,w'}$, we may set up the stochastic matrix $T=\big(t^{(0)},\ldots,t^{(d)}\big)$. It follows that we may write $TU_\ell=P$, i.e.,\ $U^{\boxtimes\ell(d-1)}\rgeq\tilde{U}^{\boxtimes\ell}\rgeq U_\ell\rgeq P$.

We now have shown that, for all $P\in\mc V^d_{\rm d.c.}$ with columns of unit $1$-norm, there is $\ell\in\N$ such that $U^{\boxtimes\ell}\rgeq P$. Let us assume next that $P,Q\in\mc V_{\rm d.c.}^d$ are such that $P\rgeq Q$. WLOG we may assume that the columns of $P$ and $Q$ have unit 1-norm. Thus, there is $\ell\in\N$ such that
\begin{equation}
Q\boxtimes U^{\boxtimes\ell}\rgeq(1\,\cdots\,1)\boxtimes U^{\boxtimes\ell}=U^{\boxtimes\ell}\rgeq P,
\end{equation}
showing that $U$ is a power universal.

(i) $\Rightarrow$ (iii): Let $P\in\mc V^d_{\rm d.c.}$ be such that all the columns of $P$ have unit 1-norm and $\Phi(P)\neq 1$ for all the nondegenerate monotone homomorphisms of $S^d_{\rm d.c.}$ into $\mb K\in\{\R_+,\R_+^{\rm op},\T\R_+,\T\R_+^{\rm op}\}$. We may choose, e.g.,
\begin{equation}
P=\left(\begin{array}{cccc}
1/2&\cdots&1/2&1/d\\
1/2&\cdots&0&1/d\\
\vdots&\ddots&\vdots&\vdots\\
0&\cdots&1/2&1/d
\end{array}\right), 
\end{equation}
for which $\Phi(P)>1$ when $\Phi\in\mc H^{\rm n.d.}(S^d_{\rm d.c.},\R_+)\cup\mc H^{\rm d.c.}(S^d_{\rm d.c.},\T\R_+)$ and $\Phi(P)<1$ otherwise. According to (i), there is $\ell\in\N$ such that $U^{\boxtimes\ell}\rgeq P$, so that $\Phi(U)^\ell\geq\Phi(P)>1$ for $\Phi\in\mc H^{\rm n.d.}(S^d_{\rm d.c.},\R_+)\cup\mc H^{\rm d.c.}(S^d_{\rm d.c.},\T\R_+)$ and $\Phi(U)^\ell\leq\Phi(P)<1$ otherwise. All in all, $\Phi(U)\neq1$ for all the monotone homomorphisms.

\end{proof}

\begin{remark}
    Note that for the implication (iii) $\Rightarrow$ (ii) above, we do not need all non-degenerate monotone homomorphisms to be unequal to $1$. It is sufficient that this holds for the homomorphisms $\Phi_{e_d,\{k,d\}}$ for $k\in[d-1]$, and $\Phi_{e_k,\{k,k^\prime,d\}}$ for $k,k^\prime\in[d-1]$, $k\neq k^\prime$. 
\end{remark}

\begin{remark}
Note that, according to item (ii) of Proposition \ref{Mpro:poweruniversaldcgeneral} and item (iii) of Proposition \ref{Mlem:poweruniversalconditions}, $U=\big(u^{(1)},\ldots,u^{(d)}\big)$ is a power universal of $S^d_{\rm d.c.}$ if and only if $\big(u^{(1)},\ldots,u^{(d-1)}\big)$ is a power universal of $S^{d-1}$ (n.b.,\ {\it not} of $S^{d-1}_{\rm d.c.}$) and ${\rm supp}\,u^{(k)}\subsetneq{\rm supp}\,u^{(d)}$ for $k=1,\ldots,d-1$. 

For $d=2$, $U=\big(u^{(1)},u^{(2)}\big)$ is a power universal of $S^d_{\rm d.c.}$ if and only if ${\rm supp}\,u^{(1)}\subsetneq{\rm supp}\,u^{(2)}$. For $d=3$, the general form of a power universal, up to the $\approx$-equivalence, is
\begin{equation}
U=\left(\begin{array}{ccc}
*&*&*\\
*&0&*\\
0&*&*
\end{array}\right), 
\end{equation}
where the asterisks represent positive entries (or vectors with positive entries).
\end{remark}

\subsection{Results: applying the Vergleichsstellensatz}

Applying the Vergleichsstellensatz in the form of Theorem \ref{thm:Fritz2022}, we can use the results from the previous sections to find the conditions for large-sample and catalytic majorization of a matrix $P$ over $Q$ in the semiring $S^d_{\rm d.c.}$. These conditions on the monotone homomorphisms and derivations can be expressed as 
\begin{align}
    \label{Meq:majorizationconditionsdominatingcolumn1}
      &\Phi_{\underline{\alpha},C}(P)<\Phi_{\underline{\alpha},C}(Q)&\forall\; \Phi_{\underline{\alpha},C}\in\mathcal{H}^{\rm n.d.}(S^d_{\rm d.c.},\R_+^{\rm op}),\\
    \label{Meq:majorizationconditionsdominatingcolumn2}
    &\Phi_{\alpha,c}(P)>\Phi_{\alpha,c}(Q)&\forall\; \Phi_{\alpha,c}\in\mathcal{H}^{\rm n.d.}(S^d_{\rm d.c.},\R_+),\\
    \label{Meq:majorizationconditionsdominatingcolumn3}
    &D_1(p^{(k)}\|p^{(d)})>D_1(q^{(k)}\|q^{(d)})&\forall k=1,\ldots,d-1,\\
    \label{Meq:majorizationconditionsdominatingcolumn4}
    &D_\infty(p^{(k)}\|p^{(d)})>D_\infty(q^{(k)}\|q^{(d)})&\forall k=1,\ldots,d-1. 
\end{align}
Recall that $D_1$ is the Kullback--Leibler divergence and $D_\infty$ the max-divergence between two columns. The homomorphisms $\Phi_{\underline{\alpha},C}\in\mathcal{H}^{\rm n.d.}(S^d_{\rm d.c.},\R_+^{\rm op})$ are exactly as in \eqref{Meq:PhiR+op}, where we again exclude all degenerate homomorphisms $\Phi^{(k)}$, $k\in[d]$ (i.e. those with either $\underline{\alpha}=e_k$ and $C=\{k\}$ for some $k\in[d]$, or $\underline{\alpha}=e_k$ and $C=\{k,d\}$ for some $k\in[d-1]$). The homomorphisms $\Phi_{\alpha,c}\in\mathcal{H}^{\rm n.d.}(S^d_{\rm d.c.},\R_+)$ were defined in \eqref{Meq:homdcR+} and can be conveniently written as 
\begin{equation}
    \Phi_{\alpha,c}(P)=\sum_{i=1}^n\big(p^{(c)}\big)^\alpha\big(p^{(d)}\big)^{1-\alpha}, 
\end{equation}
where $c\in[d-1]$ and $\alpha>1$. 

Note that just like the analogous result for the semiring $S^d$, the strict inequalities in \eqref{Meq:majorizationconditionsdominatingcolumn1} and \eqref{Meq:majorizationconditionsdominatingcolumn2} already imply that $P$ is power universal by Proposition \ref{Mpro:poweruniversaldcgeneral}, hence we do not need to require this explicitly in the conditions of the following Theorem. 

\begin{theorem}\label{Mthm:majorizationdominatingcolumn}
		Let $P=(p^{(1)},\ldots,p^{(d)})$, $Q=(q^{(1)},\ldots,q^{(d)})$ be tuples of probability vectors such that $\bigcap_{k=1}^d{\rm supp}\,p^{(k)}\neq\emptyset$ and ${\rm supp}\,p^{(k)}\subseteq{\rm supp}\,p^{(d)}$ for $k\in[d-1]$, and similarly for $Q$. If the inequalities in \eqref{Meq:majorizationconditionsdominatingcolumn1}, \eqref{Meq:majorizationconditionsdominatingcolumn2}, \eqref{Meq:majorizationconditionsdominatingcolumn3} and \eqref{Meq:majorizationconditionsdominatingcolumn4} hold, then we have the following: 
		\begin{enumerate}[(a)]
			\item There exist a stochastic map $T$ and a tuple of probability vectors $(r^{(1)},\ldots,r^{(d)})$, satisfying $\bigcap_{k=1}^d{\rm supp}\,r^{(k)}\neq\emptyset$ and ${\rm supp}\,r^{(k)}\subseteq{\rm supp}\,r^{(d)}$ for $k\in[d-1]$, such that
			\begin{equation}
				T(r^{(k)}\otimes p^{(k)})=r^{(k)}\otimes q^{(k)}\quad\text{for }k\in[d], 
			\end{equation}
			and we may choose, for $n$ sufficiently large, 
			\begin{equation}
				r^{(k)}=\frac{1}{n+1}\bigoplus_{\ell=0}^n\left(q^{(k)}\right)^{\otimes \ell}\otimes\left(p^{(k)}\right)^{\otimes(n-\ell)}\quad\text{ for }k\in[d]. 
			\end{equation}
			\item For $n$ sufficiently large there exists a stochastic map $T_n$ such that 
			\begin{equation}
				T_n(p^{(k)})^{\otimes n}=(q^{(k)})^{\otimes n}\quad\text{for }k\in[d]. 
			\end{equation}
		\end{enumerate}
		Conversely, if either one of these holds, then the inequalities \eqref{Meq:majorizationconditionsdominatingcolumn1}, \eqref{Meq:majorizationconditionsdominatingcolumn2}, \eqref{Meq:majorizationconditionsdominatingcolumn3} and \eqref{Meq:majorizationconditionsdominatingcolumn4} hold non-strictly. 
	\end{theorem}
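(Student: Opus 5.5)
The plan is to apply Theorem~\ref{thm:Fritz2022} to the preordered semiring $S^d_{\rm d.c.}$ with $x=P$ and $y=Q$, exactly as Theorem~\ref{Mthm:majorization} was obtained for $S^d$. I first verify the structural hypotheses. $S^d_{\rm d.c.}$ is a subsemiring of $S^d$: the defining support conditions are preserved by $\boxplus$ and $\boxtimes$, and the identity $(1\,\cdots\,1)$ lies in it. Zerosumfreeness and the semidomain property are inherited from $S^d$, since the Kronecker product of nonzero matrices with a common-support row has a common-support row. The norm $\|P\|=(\|p^{(1)}\|_1,\ldots,\|p^{(d)}\|_1)$ is a surjective homomorphism with trivial kernel. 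Property \eqref{eq:surjectivehomomorphismproperties} holds as before, because the one-row matrix of column sums lies in $S^d_{\rm d.c.}$. Polynomial growth follows from Proposition~\ref{Mpro:poweruniversaldcgeneral}, which exhibits power universals. Finally, $\|P\|=\|Q\|=(1,\ldots,1)$ since both are tuples of probability vectors.

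Next I check conditions (i) and (ii) of Theorem~\ref{thm:Fritz2022}. By the proposition classifying the non-degenerate monotone homomorphisms of $S^d_{\rm d.c.}$, they are exactly those in $\mathcal{H}^{\rm n.d.}(S^d_{\rm d.c.},\R_+^{\rm op})$, $\mathcal{H}^{\rm n.d.}(S^d_{\rm d.c.},\R_+)$ and $\mathcal{H}^{\rm n.d.}(S^d_{\rm d.c.},\T\R_+)$. Each of them has trivial kernel, since every nonzero element has a row with all entries positive, and that row contributes positively. Condition (i), read in the respective orders, is therefore exactly \eqref{Meq:majorizationconditionsdominatingcolumn1} and \eqref{Meq:majorizationconditionsdominatingcolumn2}, together with $\Phi^\T_{\beta,c}(P)>\Phi^\T_{\beta,c}(Q)$. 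The last inequality is equivalent to \eqref{Meq:majorizationconditionsdominatingcolumn4} after taking logarithms.

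For (ii), I fix a power universal $u$ and use Proposition~\ref{Mpro:derivationsdominatingcolumn}. Up to interchangeability, the derivations at $\Phi^{(d)}$ vanish, and the derivations at $\Phi^{(k)}$ with $k<d$ are $\gamma D_1(p^{(k)}\|p^{(d)})$. The normalization $\Delta(u)=1$ forces $\gamma=1/D_1(u^{(k)}\|u^{(d)})>0$. This is finite and nonzero because $\operatorname{supp}u^{(k)}\subsetneq\operatorname{supp}u^{(d)}$ implies $u^{(k)}\neq u^{(d)}$. Condition (ii) is then \eqref{Meq:majorizationconditionsdominatingcolumn3}.

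The delicate point is the vanishing derivation at $\Phi^{(d)}$, which cannot be normalized to $\Delta(u)=1$. I must argue that no such normalized derivation exists, so this index imposes no condition. Footnote~\ref{ftn:interchangeability} allows me to replace each derivation by an interchangeable one, that is, one differing by a function constant on each $\sim$-class. The unit-norm elements $P$, $Q$, $u$ and $1$ all lie in a single $\sim$-class, so such an interchangeable modification changes $\Delta(x)-\Delta(y)$ by nothing. Hence the strict inequality $\Delta(P)>\Delta(Q)$ is equivalent to strictness for the representative, and the unnormalizable zero class contributes no requirement. I expect this bookkeeping to be the main obstacle to state carefully.

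Having verified (i) and (ii), I conclude as follows. Part (a) of Theorem~\ref{thm:Fritz2022} gives a catalyst $R\in S^d_{\rm d.c.}$; rescaling its columns to unit norm preserves $R\boxtimes P\succeq R\boxtimes Q$, so the catalyst is a tuple of probability vectors. For part (b), the strict inequalities make $P$ power universal by Proposition~\ref{Mpro:poweruniversaldcgeneral}(iii), since each non-degenerate $\Phi$ satisfies $\Phi(P)\neq\Phi(Q)$ and $\Phi(Q)$ lies on the appropriate side of $1$. The explicit catalyst form then follows from the Duan-type construction $\sum_\ell x^\ell y^{n-\ell}$, normalized by $1/(n+1)$. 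The converse direction is the standard one: monotonicity and multiplicativity or additivity give the non-strict inequalities, using that $\Phi(R)$ is positive and finite for the catalyst $R$.
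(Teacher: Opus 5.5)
Your proposal is correct and matches the paper's argument. The paper obtains the theorem by applying Theorem~\ref{thm:Fritz2022} to $S^d_{\rm d.c.}$, using the classification of its non-degenerate monotone homomorphisms, the derivations of Proposition~\ref{Mpro:derivationsdominatingcolumn}, and Proposition~\ref{Mpro:poweruniversaldcgeneral} to see that the strict inequalities already make $P$ power universal. The bookkeeping you add is sound and makes explicit what the paper leaves implicit: the trivial kernels, the fact that a derivation interchangeable with the zero derivation at $\Phi^{(d)}$ equals $\Delta(1)=0$ at $u\sim 1$ and so cannot be normalized, and the rescaling of the catalyst to unit-norm columns.
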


    \begin{remark}
        We may derive an asymptotic version of this result, analogous to Theorem \ref{Mthm:genasymptocatalytic}. However, we would have to require the matrix $P$ to be power universal, which according to the characterization in (ii) of Proposition \ref{Mpro:poweruniversaldcgeneral} is quite restrictive. Nevertheless, for the case $d=2$, the situation is manageable, since there are essentially three interesting cases: either both columns have the same support and are different, or both columns have part of the support outside of the support of the other column, or one column has support strictly contained in the other. Majorization results (both for the exact and asymptotic setting) for the first case were derived in \cite{farooq2024}, and the second case is covered by the results in subsection \ref{Msubsec:applyingvssnorestrictions}. The third case is covered by Theorem \ref{Mthm:majorizationdominatingcolumn} (for the exact setting) above and by Theorem \ref{Mthm:asymptocatalytic} (for the asymptotic setting) in the next section, where we will derive an analogue of Theorem \ref{Mthm:genasymptocatalytic} for the case $d=2$. 
        
        Note that for $d>2$ there are many possibilities for the relationships of the supports of the columns, and each would require a different semiring to be studied. We will discuss this issue further in Section \ref{Msec:pusupports}. 
    \end{remark}

\subsection{The two-column case: dichotomies}\label{Msec:2column}

Here we specialize to the case $d=2$. In this context, any pair $\big(p^{(1)},p^{(2)}\big)$, where $p^{(1)}$ and $p^{(2)}$ are probability vectors (i.e.,\ $\big\|p^{(1)}\big\|_1=1=\big\|p^{(2)}\big\|_1$) and satisfy ${\rm supp}\,p^{(1)}\subseteq{\rm supp}\,p^{(2)}$, is called a \emph{dichotomy}. This special case will turn out to be particularly interesting in the theory of thermal majorization in the next section. We derive sufficient conditions for exact large-sample and catalytic majorization in this setting and sufficient and necessary conditions for approximate large-sample and catalytic majorization with asymptotically vanishing error in Corollary \ref{Mcor:dichotomyexact} and Theorem \ref{Mthm:asymptocatalytic}. We describe the conditions in the form of inequalities involving the traditional R\'{e}nyi divergences. We also complete the picture of asymptotic majorization in large samples and catalytically in Corollary \ref{Mcor:asymptocatalytic} to obtain a full characterization without having to make any support restrictions.

For all dichotomies $\big(p^{(1)},p^{(2)}\big)$, we have 
\begin{equation}
\frac{1}{\alpha-1}\log{\Phi_{(\alpha,1-\alpha),\{1,2\}}\big(p^{(1)}\|p^{(2)}\big)}=D_\alpha\big(p^{(1)}\big\|p^{(2)}\big), 
\end{equation}
when $\alpha\in[0,1)$, corresponding to the homomorphisms in $\mathcal{H}^{\rm n.d.}(S^d_{\rm d.c.},\R_+^{\rm op})$, and 
\begin{equation}
\frac{1}{\alpha-1}\log{\Phi_{\alpha,1}\big(p^{(1)}\|p^{(2)}\big)}=D_\alpha\big(p^{(1)}\big\|p^{(2)}\big), 
\end{equation}
when $\alpha\in(1,\infty)$, corresponding to the homomorphisms in $\mathcal{H}^{\rm n.d.}(S^d_{\rm d.c.},\R_+)$. 
We have the two degenerate homomorphisms $\Phi^{(1)}=\Phi_{(1,0),\{1\}}=\Phi_{(1,0),\{1,2\}}$ and $\Phi^{(2)}=\Phi_{(0,1),\{2\}}$. Up to a positive multiplier (and interchangeability), we have only one monotone derivation at $\Phi^{(1)}$, which is simply the Kullback--Leibler divergence, and none at $\Phi^{(2)}$. We also have essentially only one nondegenerate tropical homomorphism $\Phi^\T_1$ and
\begin{equation}
\log{\Phi^\T_1\big(p^{(1)},p^{(2)}\big)}=D_\infty\big(p^{(1)}\big\|p^{(2)}\big)
\end{equation}
for all dichotomies $\big(p^{(1)},p^{(2)}\big)$. Thus, we may unify the nondegenerate monotone homomorphisms and derivations of $S^2_{\rm d.c.}$ within the family $D_\alpha$, $0\leq\alpha\leq\infty$, of R\'{e}nyi divergences; see Figure \ref{Mfig:2d} for a depiction. The monotonicity of the homomorphisms and derivation is now encapsulated by the data processing inequality
\begin{equation}
D_\alpha\big(p^{(1)}\big\|p^{(2)}\big)\geq D_\alpha\big(Tp^{(1)}\big\|Tp^{(2)}\big),\qquad\forall\alpha\in[0,\infty],
\end{equation}
for all dichotomies $\big(p^{(1)},p^{(2)}\big)$ and stochastic matrices $T$.


We reformulate Theorem \ref{Mthm:majorizationdominatingcolumn} specified to the case of dichotomies and in terms of inequalities of R\'enyi divergences. 

\begin{corollary}\label{Mcor:dichotomyexact}
Let $\big(p^{(1)},p^{(2)}\big)$ and $\big(q^{(1)},q^{(2)}\big)$ be dichotomies. If
\begin{equation}\label{Meq:RenyiOrdered}
D_\alpha\big(p^{(1)}\big\|p^{(2)}\big)>D_\alpha\big(q^{(1)}\big\|q^{(2)}\big),\qquad\forall\alpha\in[0,\infty],
\end{equation}
then we have the following:
\begin{itemize}
\item[(a)] There is a stochastic matrix $T$ and a dichotomy $\big(r^{(1)},r^{(2)}\big)$ such that
\begin{equation}
T\big(r^{(k)}\otimes p^{(k)}\big)=r^{(k)}\otimes q^{(k)},\qquad k=1,2.
\end{equation}
There is $n\in\N$ such that we may choose
\begin{equation}\label{Meq:BinomialCatalyst}
r^{(k)}=\frac{1}{n+1}\bigoplus_{\ell=0}^n \big(q^{(k)}\big)^{\otimes\ell}\otimes\big(p^{(k)}\big)^{\otimes(n-\ell)},\quad k=1,2.
\end{equation}
\item[(b)] For any $n\in\N$ sufficiently large, there is a stochastic matrix $T_n$ such that
\begin{equation}
T_n\big(p^{(k)}\big)^{\otimes n}=\big(q^{(k)}\big)^{\otimes n},\qquad k=1,2.
\end{equation}
\end{itemize}
Conversely, if item (a) or (b) above holds, then \eqref{Meq:RenyiOrdered} holds with non-strict inequalities.
\end{corollary}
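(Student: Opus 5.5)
The plan is to specialize Theorem \ref{Mthm:majorizationdominatingcolumn} to $d=2$ and translate its four families of conditions \eqref{Meq:majorizationconditionsdominatingcolumn1}--\eqref{Meq:majorizationconditionsdominatingcolumn4} into the single family \eqref{Meq:RenyiOrdered}. For $d=2$, the characters allowed in $\mathcal{H}^{\rm n.d.}(S^2_{\rm d.c.},\R_+^{\rm op})$ are constrained by the identity $\Phi_{\underline{\alpha},C}=\Phi_{\underline{\alpha},C\cup\{2\}}$, and the degenerate ones $\Phi^{(1)},\Phi^{(2)}$ (including $\Phi_{e_1,\{1,2\}}=\Phi^{(1)}$) are removed. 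What remains are exactly the $\Phi_{(\alpha,1-\alpha),\{1,2\}}$ with $\alpha\in[0,1)$; for $\alpha=0$, the sum runs over ${\rm supp}\,p^{(1)}$, so this one gives $D_0$.

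On a dichotomy, each of these equals $\exp\big((\alpha-1)D_\alpha(p^{(1)}\|p^{(2)})\big)$. Since $\alpha-1<0$, the condition $\Phi(P)<\Phi(Q)$ is equivalent to $D_\alpha(P)>D_\alpha(Q)$. The remaining translations are as follows:
\begin{itemize}
\item In the same way, $\Phi_{\alpha,1}$ with $\alpha>1$ satisfies $\Phi_{\alpha,1}=\exp\big((\alpha-1)D_\alpha\big)$, with positive prefactor $\alpha-1$, so $\Phi_{\alpha,1}(P)>\Phi_{\alpha,1}(Q)$ is again $D_\alpha(P)>D_\alpha(Q)$.
\item Condition \eqref{Meq:majorizationconditionsdominatingcolumn3} is literally $D_1$.
\item Condition \eqref{Meq:majorizationconditionsdominatingcolumn4} is literally $D_\infty$. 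Equivalently, $\log\Phi^\T_1=D_\infty$, and strict ordering of $\Phi^\T_{\beta,1}=(\Phi^\T_1)^\beta$ for all $\beta>0$ is the same as strict ordering of $\Phi^\T_1$.
\end{itemize}
Taken together, these show that \eqref{Meq:RenyiOrdered} for all $\alpha\in[0,\infty]$ is exactly the hypothesis of Theorem \ref{Mthm:majorizationdominatingcolumn} for $d=2$.

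The support hypotheses also need checking. A dichotomy has ${\rm supp}\,p^{(1)}\subseteq{\rm supp}\,p^{(2)}$, and since $p^{(1)}\neq0$ the common support ${\rm supp}\,p^{(1)}$ is non-empty; the same holds for $Q$. Theorem \ref{Mthm:majorizationdominatingcolumn} then yields (a), with the catalyst \eqref{Meq:BinomialCatalyst}, and (b). The catalyst is itself a dichotomy, because each summand $(q^{(1)})^{\otimes\ell}\otimes(p^{(1)})^{\otimes(n-\ell)}$ is supported inside the corresponding summand for $k=2$. Power universality of $P$ is automatic, as the theorem already notes.

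For the converse, (a) or (b) gives the non-strict versions of all the homomorphism and derivation inequalities by the converse part of Theorem \ref{Mthm:majorizationdominatingcolumn}, and the same translation turns these into non-strict Rényi inequalities. As a more direct route, one can use the data processing inequality for $D_\alpha$ together with additivity under $\otimes$. For (a), this route requires cancelling $D_\alpha(r^{(1)}\|r^{(2)})$, which needs finiteness; finiteness holds for a dichotomy.

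The step that needs the most care is the bookkeeping of which $\Phi_{\underline{\alpha},C}$ survive for $d=2$, together with the endpoint $\alpha=0$. Here one has to verify that the convention $0^0=1$ combined with summing over $I(P,\{1,2\})$ gives exactly $-\log\sum_{i\in{\rm supp}\,p^{(1)}}p^{(2)}_i$, matching \eqref{Meq:RenyiDivergences}. One also has to confirm that the interchangeability reduction leaves only $D_1$ among the derivations, which is Proposition \ref{Mpro:derivationsdominatingcolumn}; the positive multiple $\gamma$ and the normalization $\Delta(u)=1$ do not affect the strict ordering.
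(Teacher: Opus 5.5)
Your proposal is correct and follows the paper's route. The corollary is exactly Theorem~\ref{Mthm:majorizationdominatingcolumn} specialized to $d=2$: the surviving homomorphisms $\Phi_{(\alpha,1-\alpha),\{1,2\}}$ ($\alpha\in[0,1)$) and $\Phi_{\alpha,1}$ ($\alpha>1$), the derivation $D_1$, and the tropical $\Phi^\T_1$ are identified with $D_\alpha$, $\alpha\in[0,\infty]$, just as the paper does in the paragraph preceding the statement, and your checks of the $\alpha=0$ endpoint, the support hypotheses, and the catalyst being a dichotomy fill in details the paper leaves implicit.
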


\begin{figure}
\begin{center}
\resizebox{\linewidth}{!}{
\begin{overpic}[scale=0.65,unit=1mm]{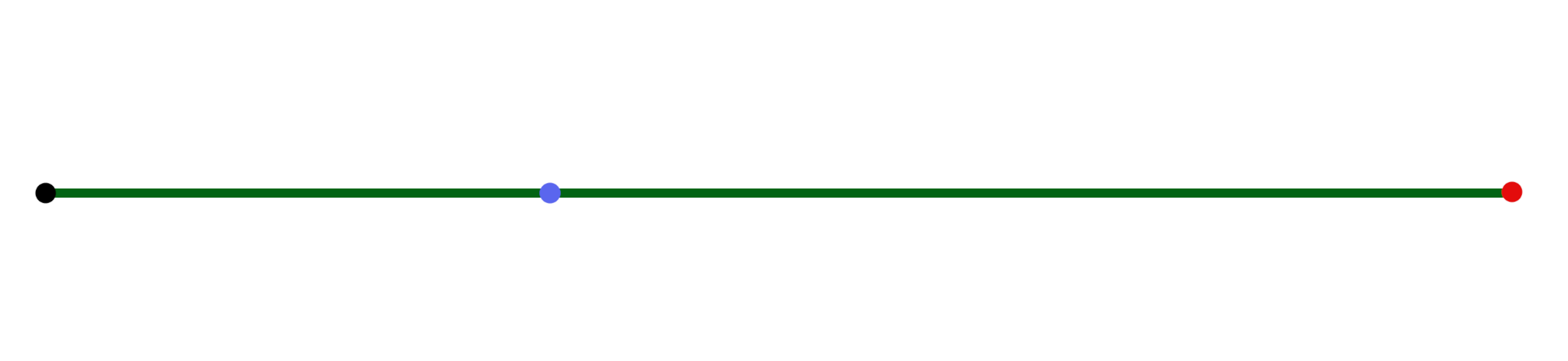}
\put(0,9){\begin{large}
$\Phi^{(2)}$
\end{large}}
\put(49,9){\begin{large}
$\Phi^{(1)}$
\end{large}}
\put(0,20){\begin{Large}
\darker{$D_0$}
\end{Large}}
\put(0,28){
\darker{$\Phi_{e_2,\{1,2\}}$}
}
\put(18,9){\begin{large}
\darker{$0<\alpha<1$}
\end{large}}
\put(18,20){\begin{Large}
\darker{$D_\alpha$}
\end{Large}}
\put(18,28){
\darker{$\Phi_{(\alpha,1-\alpha),\{1,2\}}$}
}
\put(90,9){\begin{large}
\darker{$1<\alpha<\infty$}
\end{large}}
\put(90,20){\begin{Large}
\darker{$D_\alpha$}
\end{Large}}
\put(90,28){
\darker{$\Phi_{(\alpha,1-\alpha),\{1,2\}}$}
}
\put(49,20){\begin{Large}
\sininen{$D_1$}
\end{Large}}
\put(135,9){\begin{large}
\punainen{$\alpha=\infty$}
\end{large}}
\put(140,20){\begin{Large}
\punainen{$D_\infty$}
\end{Large}}
\put(140,28){
\punainen{$\Phi^\T_1$}
}
\end{overpic}
}
\hspace{3cm}
\caption{\label{Mfig:2d} The monotone homomorphisms and derivations depicted for $S^2_{\rm d.c.}$: The line is the compactification of the half line $[0,\infty]$. Nondegenerate temperate homomorphisms are in green in the area $\alpha\in[0,1)\cup(1,\infty)$. These correspond to the R\'{e}nyi divergences $D_\alpha$. At points $\alpha=0$ and $\alpha=1$ we also have the degenerate homomorphisms $\Phi^{(2)}$ and $\Phi^{(1)}$, the latter of which is associated with the non-zero derivation $D_1$, the Kullback--Leibler divergence. At the infinity (in red) we have the tropical homomorphism corresponding to the max-divergence $D_\infty$.}
\end{center}
\end{figure}

We may also derive an asymptotic version of the above result using largely the same methods as in the proof of Theorem \ref{Mthm:genasymptocatalytic}. However, since the second probability vector dominates the first one in a dichotomy, we may remove the vanishing error completely on the second probability vector as stated at the end of the following result.

\begin{minipage}{\linewidth}
\begin{theorem}\label{Mthm:asymptocatalytic}
Consider dichotomies $\big(p^{(1)},p^{(2)}\big)$ and $\big(q^{(1)},q^{(2)}\big)$ 
and assume ${\rm supp}\,p^{(1)}\subsetneq{\rm supp}\,p^{(2)}$. The following conditions are equivalent:
\begin{itemize}
\item[(i)] For all $\alpha\in(0,\infty]$,
\begin{equation}\label{Meq:0alphainfty}
D_\alpha\big(p^{(1)}\big\|p^{(2)}\big)\geq D_\alpha\big(q^{(1)}\big\|q^{(2)}\big).
\end{equation}
\item[(ii)] For all $\varepsilon>0$, there is a stochastic matrix $T_\varepsilon$ and dichotomies $\big(q_\varepsilon^{(1)},q_\varepsilon^{(2)}\big)$ and $\big(r_\varepsilon^{(1)},r_\varepsilon^{(2)}\big)$ such that
\begin{equation}
\big\|q^{(k)}-q^{(k)}_\varepsilon\big\|_1\leq\varepsilon,\qquad k=1,2,
\end{equation}
and
\begin{equation}
T_\varepsilon\big(r_\varepsilon^{(k)}\otimes p^{(k)}\big)=r_\varepsilon^{(k)}\otimes q_\varepsilon^{(k)},\qquad k=1,2.
\end{equation}
Again, we may choose $r_\varepsilon^{(k)}$ as in \eqref{Meq:BinomialCatalyst} (with $q_\varepsilon^{(k)}$ substituted for $q^{(k)}$) for sufficiently large $n\in\N$.
\item[(iii)] For all $\varepsilon>0$, there is a dichotomy $\big(q_\varepsilon^{(1)},q_\varepsilon^{(2)}\big)$ 
such that
\begin{equation}
\big\|q^{(k)}-q_\varepsilon^{(k)}\big\|_1\leq\varepsilon,\qquad k=1,2,
\end{equation}
and for $n\in\N$ sufficiently large there is a stochastic matrix $T_{\varepsilon,n}$ such that
\begin{equation}
T_{\varepsilon,n}\big(p^{(k)}\big)^{\otimes n}=\big(q_\varepsilon^{(k)}\big)^{\otimes n},\qquad k=1,2.
\end{equation}
\end{itemize}
In statements (ii) and (iii) above, we may assume that $q_\varepsilon^{(2)}=q^{(2)}$ for all $\varepsilon>0$.
\end{theorem}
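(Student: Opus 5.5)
The plan follows the proof of Theorem \ref{Mthm:genasymptocatalytic}, with Corollary \ref{Mcor:dichotomyexact} as the exact black box. For (ii) or (iii) $\Rightarrow$ (i): since $(q^{(1)}_\varepsilon,q^{(2)}_\varepsilon)$ and the catalyst are dichotomies, the converse part of Corollary \ref{Mcor:dichotomyexact} (monotonicity and additivity of $D_\alpha$) gives $D_\alpha(p^{(1)}\|p^{(2)})\geq D_\alpha(q^{(1)}_\varepsilon\|q^{(2)}_\varepsilon)$ for all $\alpha$. We then let $\varepsilon\to0$. For $\alpha\in(0,1)$ the map $(q^{(1)},q^{(2)})\mapsto\sum_i(q^{(1)}_i)^\alpha(q^{(2)}_i)^{1-\alpha}$ is continuous, so the inequality passes to the limit. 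For $\alpha\in[1,\infty]$ we only need lower semicontinuity of $D_\alpha$ in the pair, which holds by the variational/sup representations (or by Fatou-type arguments on fixed finite supports). Hence $\liminf_\varepsilon D_\alpha(q_\varepsilon^{(1)}\|q_\varepsilon^{(2)})\geq D_\alpha(q^{(1)}\|q^{(2)})$. As in the proof of Theorem \ref{Mthm:genasymptocatalytic}, extra rows of $q_\varepsilon$ beyond those of $q$ carry mass at most $\varepsilon$ and are handled by the same splitting argument. The value $\alpha=0$ is excluded from (i) precisely because $D_0$ is not lower semicontinuous.

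For (i) $\Rightarrow$ (ii),(iii), we perturb $q$ into a target that strictly satisfies \eqref{Meq:RenyiOrdered} for all $\alpha\in[0,\infty]$, keeping $q^{(2)}_\varepsilon=q^{(2)}$. The natural choice is $q^{(1)}_\varepsilon:=(1-\tfrac{\varepsilon}{2})q^{(1)}+\tfrac{\varepsilon}{2}q^{(2)}$, which is a dichotomy with $\|q^{(1)}-q^{(1)}_\varepsilon\|_1\leq\varepsilon$. Since $\mathrm{supp}\,q^{(1)}_\varepsilon=\mathrm{supp}\,q^{(2)}$, we get $D_0(q^{(1)}_\varepsilon\|q^{(2)})=0<D_0(p^{(1)}\|p^{(2)})$, and the latter is positive because $\mathrm{supp}\,p^{(1)}\subsetneq\mathrm{supp}\,p^{(2)}$ (this is the power-universality hypothesis). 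For $\alpha\in(0,\infty]$ we use joint quasi-convexity and convexity. Because $(q^{(1)}_\varepsilon,q^{(2)})$ is the image of $\big((q^{(1)},q^{(2)})\boxplus\text{-mixed with }(q^{(2)},q^{(2)})\big)$ under a stochastic map, exactly as in the $Q_\varepsilon$ computation of Theorem \ref{Mthm:genasymptocatalytic}, we obtain the following. For $\alpha<1$, $\Phi_{(\alpha,1-\alpha),\{1,2\}}(q_\varepsilon)\geq(1-\tfrac{\varepsilon}{2})\Phi(q)+\tfrac{\varepsilon}{2}$. For $\alpha>1$, $\Phi_{\alpha,1}(q_\varepsilon)\leq(1-\tfrac\varepsilon2)\Phi_{\alpha,1}(q)+\tfrac\varepsilon2$. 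Also $D_1$ and $D_\infty$ are convex/quasi-convex in the first argument, giving $D_\alpha(q^{(1)}_\varepsilon\|q^{(2)})\leq D_\alpha(q^{(1)}\|q^{(2)})$, with strict inequality whenever $D_\alpha(q^{(1)}\|q^{(2)})>0$.

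The main obstacle is turning these non-strict inequalities into strict ones in every case. If $D_\alpha(q^{(1)}\|q^{(2)})>0$, the mixture bound is strict. For $D_1$ and $D_\infty$ this follows from strict convexity, or more simply from $D_\infty(q_\varepsilon)\leq\log((1-\tfrac\varepsilon2)e^{D_\infty(q)}+\tfrac\varepsilon2)<D_\infty(q)$. Combined with (i), this gives strict ordering with $p$. If instead $D_\alpha(q^{(1)}\|q^{(2)})=0$, then $q^{(1)}=q^{(2)}$ for $\alpha>0$, so $D_\alpha(q_\varepsilon)=0<D_\alpha(p^{(1)}\|p^{(2)})$. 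Here we use that $p^{(1)}\neq p^{(2)}$, which holds because of the strict support inclusion.

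With strictness established for all $\alpha\in[0,\infty]$, Corollary \ref{Mcor:dichotomyexact} applied to $(p^{(1)},p^{(2)})$ and $(q^{(1)}_\varepsilon,q^{(2)})$ yields (ii), with the catalyst \eqref{Meq:BinomialCatalyst}, and (iii). I expect the delicate points to be bookkeeping the strictness at $\alpha=1$ and $\alpha=\infty$, and the semicontinuity argument in the converse direction.
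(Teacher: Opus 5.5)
Your proposal is correct and follows essentially the paper's proof: the same perturbation $q^{(1)}_\varepsilon=(1-\tfrac{\varepsilon}{2})q^{(1)}+\tfrac{\varepsilon}{2}q^{(2)}$ (the paper's choice $w=q^{(2)}$), the same mixture/monotonicity bounds with a case split on whether $D_\alpha(q^{(1)}\|q^{(2)})=0$, and Corollary~\ref{Mcor:dichotomyexact} as the black box, with only local variations: your exact formula for $D_\infty(q^{(1)}_\varepsilon\|q^{(2)})$ replaces the paper's real-analyticity argument, and joint lower semicontinuity of $D_\alpha$ for $\alpha\geq1$ replaces the paper's splitting argument for $\alpha\in(1,\infty)$ plus continuity in $\alpha$ at $\alpha\in\{1,\infty\}$. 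One aside is off, though it does not affect the proof: excluding $\alpha=0$ from (i) is only cosmetic, because, although $D_0$ is indeed not lower semicontinuous in the pair, the $\alpha=0$ inequality already follows from those with $\alpha\in(0,1)$ via the pointwise limit $D_0=\lim_{\alpha\to0^+}D_\alpha$.
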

\end{minipage}

\begin{proof}
Let us assume condition (i). Let us choose $w\in\mc P_n$ which dominates $q^{(2)}$ (and, hence, automatically $q^{(1)}$). Naturally, we may choose $w=q^{(2)}$; this choice yields the final claim of the statement of the theorem, as one clearly subsequently sees. Let us denote, for brevity, $\big(p^{(1)},p^{(2)}\big)=:P$, $\big(q^{(1)},q^{(2)}\big)=:Q$, and $W:=(w,w)$ and define, for all $\varepsilon\in(0,2]$,
\begin{equation}
q^{(k)}_\varepsilon:=\left(1-\frac{\varepsilon}{2}\right)q^{(k)}+\frac{\varepsilon}{2}w,\qquad k=1,2,
\end{equation}
and
\begin{equation}
Q_\varepsilon:=\big(q^{(1)}_\varepsilon,q^{(2)}_\varepsilon\big)=\left(1-\frac{\varepsilon}{2}\right)Q+\frac{\varepsilon}{2}W.
\end{equation}
Clearly, $Q_\varepsilon$ is a dichotomy with mutually supporting columns for all $\varepsilon\in(0,2]$. Moreover, $\big\|q^{(k)}-q^{(k)}_\varepsilon\big\|_1\leq\varepsilon$, as one easily verifies. We may show that $D_\alpha\big(p^{(1)}\big\|p^{(2)}\big)>D_\alpha\big(q^{(1)}_\varepsilon\big\|q^{(2)}_\varepsilon\big)$ for all $\alpha\in[0,1)$ in the same way as in the proof of Theorem \ref{Mthm:genasymptocatalytic}.

Using the joint convexity of the Kullback--Leibler divergence, we have, for all $\varepsilon\in(0,2]$,
\begin{align}
D_1\big(q^{(1)}_\varepsilon\big\|q^{(2)}_\varepsilon\big)&\leq \left(1-\frac{\varepsilon}{2}\right)D_1\big(q^{(1)}\big\|q^{(2)}\big)+\frac{\varepsilon}{2}\underbrace{D_1(w\|w)}_{=0}\\
&\leq D_1\big(q^{(1)}\big\|q^{(2)}\big)\leq D_1\big(p^{(1)}\big\|p^{(2)}\big).
\end{align}
If $D_1\big(q^{(1)}\big\|q^{(2)}\big)=0$, then $D_1\big(q^{(1)}\big\|q^{(2)}\big)=0<D_1\big(p^{(1)}\big\|p^{(2)}\big)$ since $p^{(1)}\neq p^{(2)}$. If, on the other hand, $D_1\big(q^{(1)}\big\|q^{(2)}\big)>0$, then the penultimate inequality in the above calculation is strict. Thus, we have
\begin{equation}
D_1\big(q^{(1)}_\varepsilon\big\|q^{(2)}_\varepsilon\big)<D_1\big(p^{(1)}\big\|p^{(2)}\big),\qquad\forall\varepsilon\in(0,2].
\end{equation}

Let now $\alpha\in(1,\infty)$ and denote $\Phi_\alpha:=\Phi_{(\alpha,1-\alpha),\{1,2\}}$. Similarly as in the proof of Theorem \ref{Mthm:genasymptocatalytic}, we may evaluate, for all $\varepsilon\in(0,2]$,
\begin{align}
D_\alpha\big(q^{(1)}_\varepsilon\big\|q^{(2)}_\varepsilon\big)&\leq\frac{1}{\alpha-1}\log{\bigg(\Phi_\alpha(Q)-\frac{\varepsilon}{2}\underbrace{\big(\Phi_\alpha(Q)-1\big)}_{\geq0}\bigg)}\\
&\leq\frac{1}{\alpha-1}\log{\Phi_\alpha(Q)}=D_\alpha\big(q^{(1)}\big\|q^{(2)}\big)\leq D_\alpha\big(p^{(1)}\big\|p^{(2)}\big).
\end{align}
If $\Phi_\alpha(Q)=1$, then $\Phi_\alpha(Q)=1<\Phi_\alpha(P)$ and the final inequality above is strict. If $\Phi_\alpha(Q)>1$, then the penultimate inequality above is strict. Thus,
\begin{equation}
D_\alpha\big(q^{(1)}_\varepsilon\big\|q^{(2)}_\varepsilon\big)<D_\alpha\big(p^{(1)}\big\|p^{(2)}\big),\qquad\forall\varepsilon\in(0,2].
\end{equation}

Let us then look at the tropical homomorphism $\Phi^\T_1$. Using monotonicity, we have, for all $\varepsilon\in(0,2]$,
\begin{align}
\Phi^\T_1(Q_\varepsilon)&=\Phi^\T_1\left(\left(1-\frac{\varepsilon}{2}\right)Q+\frac{\varepsilon}{2}W\right)\leq\Phi^\T_1\left(\left(1-\frac{\varepsilon}{2}\right)Q\boxplus\frac{\varepsilon}{2}W\right)\\
&=\max\big\{\Phi^\T_1(Q),1\big\}=\Phi^\T_1(Q)\leq\Phi^\T_1(P).
\end{align}
Especially, $\Phi^\T_1(Q_\varepsilon)\leq\Phi^\T_1(Q)$, i.e.,
\begin{equation}
\max_{1\leq i\leq n}\frac{\left(1-\frac{\varepsilon}{2}\right)q^{(1)}_i+\frac{\varepsilon}{2}w_i}{\left(1-\frac{\varepsilon}{2}\right)q^{(2)}_i+\frac{\varepsilon}{2}w_i}\leq\max_{1\leq i\leq n}\frac{q^{(1)}_i}{q^{(2)}_i}
\end{equation}
for all $\varepsilon\in(0,2]$. When $0<\varepsilon\leq\varepsilon'\leq 2$, then $Q_\varepsilon\rgeq Q_{\varepsilon'}$, implying that the LHS above is non-increasing in $\varepsilon$ and coincides with the RHS when $\varepsilon\to0$. Thus, if we have equality with some $\varepsilon_0\in(0,2]$, then we have equality for all $\varepsilon\in[0,\varepsilon_0]$. Let us assume this. Since the LHS is a pointwise maximum of a finite set of real-analytic functions, the maximum is reached with the same index $i$ for all $\varepsilon\in[0,\varepsilon_0]$. The real-analyticity of the LHS now implies that the LHS is constant for all $\varepsilon\in[0,2]$, i.e.,\ $\Phi^\T_1(Q)=1=\Phi^\T_1(W)$. Thus, if $\Phi^\T_1(Q_\varepsilon)=\Phi^\T_1(Q)$ for some $\varepsilon\in(0,2]$, this is the case for all $\varepsilon\in[0,2]$ and $\Phi^\T_1(Q)=1<\Phi^\T_1(P)$. On the other hand, if $\Phi^\T_1(Q)>1$ then $\Phi^\T(Q_\varepsilon)<\Phi^\T(Q)$, so that $\Phi^\T_1(Q_\varepsilon)<\Phi^\T_1(Q)\leq\Phi^\T_1(P)$. All in all,
\begin{equation}
D_\infty\big(q^{(1)}_\varepsilon\big\|q^{(2)}_\varepsilon\big)<D_\infty\big(p^{(1)}\big\|p^{(2)}\big),\qquad\forall\varepsilon\in(0,2].
\end{equation}
Thus, the conditions of Corollary \ref{Mcor:dichotomyexact} are satisfied for $Q_\varepsilon$ and $P$, and conditions (ii) and (iii) of the claim follow.

Let us then assume that (ii) or (iii) holds. The proof that $D_\alpha\big(p^{(1)}\big\|p^{(2)}\big)\geq D_\alpha\big(q^{(1)}\big\|q^{(2)}\big)$ for $\alpha\in(0,1)$ follows from the same argument as in the end of the proof of Theorem \ref{Mthm:genasymptocatalytic}. To show the inequality for $\alpha\in(1,\infty)$, we use almost the same argument: consider any $D_\alpha$ with $\alpha\in(1,\infty)$, which corresponds to the monotone homomorphism $\Phi_{\alpha,1}$. Then we use similar notation and reasoning as in the proof of Theorem \ref{Mthm:genasymptocatalytic}, except that $\Phi=\Phi_{\alpha,1}$ and the direction of the inequalities is reversed, and we note that 
\begin{equation}
    \Phi_{\alpha,1}(Q_\varepsilon^1\boxplus(1\,\cdots\,1))\geq(1+\varepsilon)^{1-\alpha}, 
\end{equation}
where we used monotonicity of $\Phi_{\alpha,1}$ and that the sum of each column in $Q_\varepsilon^1\boxplus(1\,\cdots\,1)$ is between $1$ and $1+\varepsilon$. Finally, the inequalities $D_\alpha\big(p^{(1)}\big\|p^{(2)}\big)\geq D_\alpha\big(q^{(1)}\big\|q^{(2)}\big)$ for $\alpha\in\{1,\infty\}$ follow from the same inequalities for $\alpha\in(0,1)\cup(1,\infty)$ and the continuity of $D_\alpha$ in $\alpha$. 
\end{proof}

Combining the above result with Theorem 22 in \cite{farooq2024} gives immediately the following corollary.

\begin{corollary}\label{Mcor:asymptocatalytic}
Denote by $\mc P_n$ the set of probability vectors with $n$ components and consider probability vectors $p^{(2)},q^{(2)}\in\mc P_n$ of full support and some additional $p^{(1)},q^{(1)}\in\mc P_n$. Also assume that $p^{(1)}\neq p^{(2)}$. The following conditions are equivalent:
\begin{itemize}
\item[(i)] For all $\alpha\geq 1/2$,
\begin{equation}\label{Meq:jointcond}
D_\alpha\big(p^{(1)}\big\|p^{(2)}\big)\geq D_\alpha\big(q^{(1)}\big\|q^{(2)}\big),\qquad D_\alpha\big(p^{(2)}\big\|p^{(1)}\big)\geq D_\alpha\big(q^{(2)}\big\|q^{(1)}\big).
\end{equation}
\item[(ii)] For all $\varepsilon>0$, there is a stochastic matrix $T_\varepsilon$ and dichotomies $\big(q_\varepsilon^{(1)},q_\varepsilon^{(2)}\big)$ and $\big(r_\varepsilon^{(1)},r_\varepsilon^{(2)}\big)$ such that
\begin{equation}
\big\|q^{(k)}-q^{(k)}_\varepsilon\big\|_1\leq\varepsilon,\qquad k=1,2,
\end{equation}
and
\begin{equation}
T_\varepsilon\big(r_\varepsilon^{(k)}\otimes p^{(k)}\big)=r_\varepsilon^{(k)}\otimes q_\varepsilon^{(k)},\qquad k=1,2.
\end{equation}
Again, we may choose $r_\varepsilon^{(k)}$ as in \eqref{Meq:BinomialCatalyst} (with $q_\varepsilon^{(k)}$ substituted for $q^{(k)}$) for sufficiently large $n\in\N$.
\item[(iii)] For all $\varepsilon>0$, there is a dichotomy $\big(q_\varepsilon^{(1)},q_\varepsilon^{(2)}\big)$ with probability vectors in $\mc P_n$
such that
\begin{equation}
\big\|q^{(k)}-q_\varepsilon^{(k)}\big\|_1\leq\varepsilon,\qquad k=1,2,
\end{equation}
and for $n\in\N$ sufficiently large there is a stochastic matrix $T_{\varepsilon,n}$ such that
\begin{equation}
T_{\varepsilon,n}\big(p^{(k)}\big)^{\otimes n}=\big(q_\varepsilon^{(k)}\big)^{\otimes n},\qquad k=1,2.
\end{equation}
\end{itemize}
In statements (ii) and (iii), we may assume that $q_\varepsilon^{(2)}=q^{(2)}$ for all $\varepsilon>0$.
\end{corollary}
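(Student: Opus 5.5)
The plan is a case split according to the support relation between $p^{(1)}$ and $p^{(2)}$, with $p^{(2)}$ and $q^{(2)}$ of full support, so both pairs are dichotomies and ${\rm supp}\,p^{(1)}\subseteq{\rm supp}\,p^{(2)}$. \emph{Case 1:} ${\rm supp}\,p^{(1)}={\rm supp}\,p^{(2)}$, i.e. $p^{(1)}$ also has full support. Since $p^{(1)}\neq p^{(2)}$, this is exactly the setting of Theorem 22 in \cite{farooq2024}. That theorem gives the equivalence of asymptotic large-sample and catalytic majorization with the two-sided conditions \eqref{Meq:jointcond} for $\alpha\geq1/2$. I only need to check two things: that the approximating pairs there can be chosen as dichotomies with $q_\varepsilon^{(2)}=q^{(2)}$ (mixing $q^{(1)}$ with $q^{(2)}$, exactly as in the proof of Theorem \ref{Mthm:asymptocatalytic}), and that the catalysts are of the form \eqref{Meq:BinomialCatalyst}. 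The latter follows from the Duan construction referred to after Theorem \ref{thm:Fritz2022}.

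\emph{Case 2:} ${\rm supp}\,p^{(1)}\subsetneq{\rm supp}\,p^{(2)}$. Then $D_\alpha(p^{(2)}\|p^{(1)})=\infty$ for all $\alpha\geq1/2$, because $\alpha\geq1$ requires domination and for $\alpha\in[1/2,1)$ the formula in \eqref{Meq:RenyiDivergences} is finite; so I must use the reverse-order identity $D_\alpha(p\|q)=\frac{\alpha}{1-\alpha}D_{1-\alpha}(q\|p)$ for $\alpha\in(0,1)$. With this, the second family of inequalities in \eqref{Meq:jointcond} for $\alpha\in[1/2,1)$ is equivalent to the first family for $\alpha\in(0,1/2]$. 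For $\alpha\geq1$, the second family is trivial on the $p$-side, since both sides are $+\infty$ or the left side is infinite. Hence (i) reduces to $D_\alpha(p^{(1)}\|p^{(2)})\geq D_\alpha(q^{(1)}\|q^{(2)})$ for all $\alpha\in(0,\infty)$. By continuity in $\alpha$ this extends to $\alpha=\infty$, and it is exactly condition (i) of Theorem \ref{Mthm:asymptocatalytic}. That theorem then yields (ii) and (iii), including $q_\varepsilon^{(2)}=q^{(2)}$.

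For the converse directions (ii)$\Rightarrow$(i) and (iii)$\Rightarrow$(i), the standard argument is monotonicity plus additivity of each $D_\alpha$, giving $D_\alpha(p\|\cdot)\geq D_\alpha(q_\varepsilon\|\cdot)$, followed by $\varepsilon\to0$. Lower semicontinuity of $D_\alpha$ in its arguments is enough here. In Case 2 this reduces to Theorem \ref{Mthm:asymptocatalytic}, and the reverse-order inequalities again follow from the identity above.

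The step I expect to be the main obstacle is the bookkeeping in Case 2: handling $\alpha=1/2$ and the infinite values correctly, and making sure the $q$-side inequalities with $q^{(1)}$ not of full support are not violated. When $q^{(1)}$ lacks full support, $D_\alpha(q^{(2)}\|q^{(1)})=\infty$ for $\alpha\geq1$, but the $p$-side is also $\infty$, so this is consistent. When $q^{(1)}$ has full support, the reverse inequality for $\alpha\geq1$ is automatic because the $p$-side is infinite.
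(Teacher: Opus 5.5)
Your route matches the paper's. You split on whether ${\rm supp}\,p^{(1)}={\rm supp}\,p^{(2)}$ and use Theorem 22 of \cite{farooq2024} in the common-support case. In the strict-inclusion case you use $D_\alpha(p^{(2)}\|p^{(1)})=\infty$ for $\alpha\geq1$ and $D_\alpha(p\|q)=\frac{\alpha}{1-\alpha}D_{1-\alpha}(q\|p)$ to reduce (i) to condition (i) of Theorem \ref{Mthm:asymptocatalytic}. (Your opening sentence of Case 2 says $\alpha\geq 1/2$, which contradicts its own justification.) Case 2 is fine.

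The gap is in Case 1. Theorem 22 of \cite{farooq2024} is a statement about the common-support semiring, so it needs \emph{both} pairs to have common support, i.e.\ also ${\rm supp}\,q^{(1)}={\rm supp}\,q^{(2)}$; the paper assumes exactly this before invoking it. If $p^{(1)}$ has full support but $q^{(1)}$ does not, neither that theorem nor Theorem \ref{Mthm:asymptocatalytic} applies, since the latter needs ${\rm supp}\,p^{(1)}\subsetneq{\rm supp}\,p^{(2)}$. Your last paragraph only treats a non-full-support $q^{(1)}$ inside Case 2.

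This subcase is settled by showing that all three statements fail, as the paper does for (i).
\begin{itemize}
\item Statement (i) fails at $\alpha=1$, since $D_1(p^{(2)}\|p^{(1)})<\infty=D_1(q^{(2)}\|q^{(1)})$.
\item Statement (iii) then fails by your DPI, additivity and lower-semicontinuity argument at $\alpha=1$.
\item For (ii) that argument breaks down. The catalyst is only required to be a dichotomy, so possibly ${\rm supp}\,r_\varepsilon^{(1)}\subsetneq{\rm supp}\,r_\varepsilon^{(2)}$, and then $D_\alpha(r_\varepsilon^{(2)}\|r_\varepsilon^{(1)})=\infty$ for $\alpha\geq1$ cannot be cancelled. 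Work instead with $\alpha\in[1/2,1)$, where the catalyst term is finite, and obtain $D_\alpha(p^{(2)}\|p^{(1)})\geq D_\alpha(q^{(2)}\|q^{(1)})$ after letting $\varepsilon\to0$. Now let $\alpha\to1^-$. The left side stays bounded by $D_1(p^{(2)}\|p^{(1)})$. The right side, $D_\alpha(q^{(2)}\|q^{(1)})=\frac{1}{\alpha-1}\log\sum_{i\in{\rm supp}\,q^{(1)}}(q^{(2)}_i)^\alpha(q^{(1)}_i)^{1-\alpha}$, diverges because $\sum_{i\in{\rm supp}\,q^{(1)}}q^{(2)}_i<1$. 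This is essentially the limit computation the paper uses to rule out (i).
\end{itemize}

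The same caveat shows that your blanket claim is false as stated: monotonicity, additivity and lower semicontinuity do not by themselves give (ii)$\Rightarrow$(i) for the reverse-order divergences with $\alpha\geq1$. This is harmless in Case 2, where those inequalities are trivial. In Case 1 you need an extra step. Since $p^{(1)}$ has full support, rows of $r_\varepsilon\otimes q_\varepsilon$ with vanishing first entry can only receive mass from rows of $r_\varepsilon\otimes p$ with vanishing first entry. Comparing second-column masses then forces the stochastic map to act block-diagonally, so the catalyst can be restricted to ${\rm supp}\,r_\varepsilon^{(1)}$ and becomes of full support.
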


\begin{proof}
Let us first assume that ${\rm supp}\,p^{(1)}={\rm supp}\,p^{(2)}=\{1,\ldots,n\}$ and ${\rm supp}\,q^{(1)}={\rm supp}\,q^{(2)}=\{1,\ldots,n\}$. In this case, the desired result is given by Theorem 22 in \cite{farooq2024}. Strictly speaking, only the equivalence of statements (i) and (ii) is shown there, but adding (iii) can be done in the same way that we have already employed earlier.

Let us now assume that ${\rm supp}\,p^{(1)}\subsetneq{\rm supp}\,p^{(2)}$ or ${\rm supp}\,q^{(1)}\subsetneq{\rm supp}\,q^{(2)}$. Let us first show that the latter case is out of the question if the first one does not hold. So, assume that ${\rm supp}\,p^{(1)}={\rm supp}\,p^{(2)}$ while ${\rm supp}\,q^{(1)}\subsetneq{\rm supp}\,q{(2)}$. If we assume (i), we obtain from the second set of inequalities that
\begin{equation}
1=\sum_{i\in{\rm supp}\,p^{(1)}}p^{(2)}_i\leq\sum_{i\in{\rm supp}\,q^{(1)}}q^{(2)}_i<1
\end{equation}
as $\alpha\to 1$, a contradiction. Thus, we only have to consider the case where ${\rm supp}\,p^{(1)}\subsetneq{\rm supp}\,p^{(2)}$. In this case, the second set of the inequalities in \eqref{Meq:jointcond} are irrelevant when $\alpha\geq1$. Indeed, $D_\alpha\big(p^{(2)}\big\|p^{(1)}\big)=\infty$ when $\alpha\geq1$. Thus we may ignore this set of inequalities. For $0<\alpha\leq1/2$, we have $D_\alpha\big(p^{(1)}\big\|p^{(2)}\big)=\frac{1-\alpha}{\alpha}D_{1-\alpha}\big(p^{(2)}\big\|p^{(1)}\big)$. This means that, if (i) holds, then $D_\alpha\big(p^{(1)}\big\|p^{(2)}\big)\geq D_\alpha\big(q^{(1)}\big\|q^{(2)}\big)$ for all $\alpha>0$. Naturally this also holds for $\alpha=0$ as a pointwise limit of the above. Finally, condition (i) is equivalent with \eqref{Meq:0alphainfty}, and we are in the situation of Theorem \ref{Mthm:asymptocatalytic} providing the proof for the remainder of the claim.
\end{proof}

\section{Some applications}

In this section, we discuss some applications and corollaries of the results we have derived in this work. The first immediate result deals with the optimal rate of transforming a given statistical experiment into another. Our second field of applications is quantum thermodynamics and the characterization of asymptotic catalytic thermal majorization of quantum states commuting with the thermal (Gibbs) state. Together with the results of \cite{farooq2024}, we give a coherent proof for the statements made in \cite{Brandao2014}.

\subsection{Achievable rates}\label{Msec:rates}
We now apply Proposition \ref{prop:optimalconversionrate} on the optimal conversion rate to the setting of Section \ref{Msec:2column} dealing with dichotomies. For the semiring of dichotomies with a dominating column $S_{\rm d.c.}^2$, the set of divergences $\mc D(S_{\rm d.c.}^2)$ defined as in Section \ref{sec:optimalconversionrate} consists of the familiar R\'{e}nyi divergences as defined in \eqref{Meq:RenyiDivergences}. For this, we choose the power universal 
\begin{equation}
u=\left(\begin{array}{cc}
1&1/2\\
0&1/2
\end{array}\right),
\end{equation}
in the definition of the divergences. 

Suppose now that $P=\big(p^{(1)},p^{(2)}\big)$ and $Q=\big(q^{(1)},q^{(2)}\big)$ are dichotomies such that ${\rm supp}\,p^{(1)}\subsetneq{\rm supp}\,p^{(2)}$ (i.e.,\ $P$ is a power universal of the $d=2$ dominating column semiring). The rate \eqref{eq:optimalrategeneral} now takes the form
\begin{equation}\label{Meq:dominatingcolumndichotomyrate}
r(P\to Q)=\min_{\alpha\in[0,\infty]}\frac{D_\alpha\big(p^{(1)}\big\|p^{(2)}\big)}{D_\alpha\big(q^{(1)}\big\|q^{(2)}\big)}.
\end{equation}
Naturally, similar results also hold in the higher $d$ case, but these look quite complex. Let us see however what the optimal rate is in the case of dichotomies with full support studied as a special case in \cite{farooq2024}: Given dichotomies $P=\big(p^{(1)},p^{(2)}\big)$ and $Q=\big(q^{(1)},q^{(2)}\big)$ where ${\rm supp}\,p^{(1)}={\rm supp}\,p^{(2)}$, ${\rm supp}\,q^{(1)}={\rm supp}\,q^{(2)}$, and $p^{(1)}\neq p^{(2)}$, we have
\begin{equation}
r(P\to Q)=\min\left\{\frac{D_\alpha\big(p^{(1)}\big\|p^{(2)}\big)}{D_\alpha\big(q^{(1)}\big\|q^{(2)}\big)},\frac{D_\alpha\big(p^{(2)}\big\|p^{(1)}\big)}{D_\alpha\big(q^{(2)}\big\|q^{(1)}\big)}\,\middle|\,\alpha\in[1/2,\infty]\right\}.
\end{equation}
We note that in this case the minimization has to be performed on a larger set, where we take into account the different parametrizations in $\alpha$ between this expression and in \eqref{Meq:dominatingcolumndichotomyrate}. In the $d=2$ case of the minimal restrictions semiring discussed in Section \ref{Msec:findings}, we notice similarly that the minimization task is smaller: If $P=\big(p^{(1)},p^{(2)}\big)$ and $Q=\big(q^{(1)},q^{(2)}\big)$ are pairs of probability distributions such that ${\rm supp}\,p^{(1)}\cap{\rm supp}\,p^{(2)}\neq\emptyset\neq{\rm supp}\,q^{(1)}\cap{\rm supp}\,q^{(2)}$, ${\rm supp}\,p^{(1)}\nsubseteq{\rm supp}\,p^{(2)}$, and ${\rm supp}\,p^{(2)}\nsubseteq{\rm supp}\,p^{(1)}$ (i.e.,\ $P$ is a power universal of the minimal restrictions semiring),
\begin{equation}
r(P\to Q)=\min\left\{\frac{D_\alpha\big(p^{(1)}\big\|p^{(2)}\big)}{D_\alpha\big(q^{(1)}\big\|q^{(2)}\big)},\frac{D_\alpha\big(p^{(2)}\big\|p^{(1)}\big)}{D_\alpha\big(q^{(2)}\big\|q^{(1)}\big)}\,\middle|\,\alpha\in[0,1/2]\right\}.
\end{equation}

\subsection{Thermal majorization}\label{Msec:thermal}

Let us consider a finite-dimensional quantum thermodynamic system. We assume that the system's steady state is described by the Gibbs state $\gamma_\beta$ at inverse temperature $\beta>0$, i.e.,
\begin{equation}
\gamma_\beta=\frac{1}{Z}e^{-\beta H}, 
\end{equation}
where $Z=\tr{e^{-\beta H}}$ and $H$ is the Hamiltonian of the system. Let $\{h_i\}_{i=0}^{d-1}$ be the orthonormal basis of the system's Hilbert space that diagonalizes $H$, i.e.,\ $H=\sum_{i=0}^{d-1}E_i|h_i\>\<h_i|$ where $E_i$ are the eigenvalues of $H$. This means
\begin{equation}
\gamma_\beta=\sum_{i=0}^{d-1} \lambda^\beta_i |h_i\>\<h_i|, 
\end{equation}
where $\lambda^\beta_i=e^{-\beta E_i}/Z$. We define the probability vector $\lambda^\beta:=(\lambda^\beta_0,\ldots,\lambda^\beta_{d-1})$. From now on, we assume that $E_i>0$ for $i=0,\ldots,d-1$; we are free to assume this as the zero energy can be shifted if necessary.

We are interested now in the question when a state of this thermodynamic system can be transformed into another with a Gibbs-preserving channel. Recall that a quantum channel (completely positive trace-preserving linear map) $\Phi$ whose input and output systems coincide with this thermodynamic system is Gibbs-preserving if $\Phi(\gamma_\beta)=\gamma_\beta$. When $\Phi(\rho)=\sigma$ where $\Phi$ is a Gibbs-preserving channel, we say that $\rho$ {\it thermally majorizes} $\sigma$ and denote $\rho\rgeq_\beta\sigma$. Thermal majorization of $\rho$ over $\sigma$ in large samples means that there is $n\in\N$ such that $\rho^{\otimes n}\rgeq_\beta\sigma^{\otimes n}$, i.e.,\ there is a channel $\Phi_n$ on $n$ copies of the system such that $\Phi_n(\rho^{\otimes n})=\sigma^{\otimes n}$ and $\Phi_n(\gamma_\beta^{\otimes n})=\gamma_\beta^{\otimes n}$. Catalytic thermal majorization means that there are states $\tau^{(1)},\tau^{(2)}$ on possibly another system (e.g.,\ a heat bath), where we may interpret $\tau^{(2)}$ as the Gibbs state of the other system, and a channel $\Phi$ such that $\Phi(\rho\otimes\tau^{(1)})=\sigma\otimes\tau^{(1)}$ and $\Phi(\gamma_\beta\otimes\tau^{(2)})=\gamma_\beta\otimes\tau^{(2)}$. We are particularly interested in asymptotic catalytic thermal majorization: for any $\varepsilon>0$, there are catalysts $\tau_\varepsilon^{(1)},\tau_\varepsilon^{(2)}$ on some system (which may depend on $\varepsilon$), a state $\sigma_\varepsilon$ on our system of interest such that $\|\sigma-\sigma_\varepsilon\|_1<\varepsilon$, and a channel $\Phi_\varepsilon$ such that $\Phi_\varepsilon(\rho\otimes\tau_\varepsilon^{(1)})=\sigma_\varepsilon\otimes\tau_\varepsilon^{(1)}$ and $\Phi_\varepsilon(\gamma_\beta\otimes\tau_\varepsilon^{(2)})=\gamma_\beta\otimes\tau_\varepsilon^{(2)}$.

We concentrate on the case where the states commute with the Gibbs state. If the energy spectrum is nondegenerate, i.e.,\ there are no repeating eigenvalues $E_i$, then $[\rho,\gamma_\beta]=0$ if and only if $\rho$ diagonalizes in the energy eigenbasis, i.e.,
\begin{equation}\label{Meq:diagonalizable}
\rho=\sum_{i=0}^{d-1}\lambda_i |h_i\>\<h_i|.
\end{equation}
Even if the energy spectrum is degenerate, we will focus on states which are diagonalizable in the energy eigenbasis, i.e.,\ those $\rho$ as in \eqref{Meq:diagonalizable} with $\lambda_i\geq0$, $\lambda_0+\cdots+\lambda_{d-1}=1$. To explicitly refer to the state, we will associate the above $\{h_i\}_i$-diagonal state $\rho$ with the probability vector $\lambda^\rho=(\lambda^\rho_0,\ldots,\lambda^\rho_{d-1})$ where
\begin{equation}
\lambda^\rho_i=\lambda_i=\<h_i|\rho\,h_i\>,\qquad i=0,\ldots,d-1.
\end{equation}


If $\rho$ and $\sigma$ are $\{h_i\}_i$-diagonal, then one can easily see that $\Phi(\rho)=\sigma$ for a channel $\Phi$ if and only if $T\lambda^\rho=\lambda^\sigma$ where $T=(T_{i,j})_{i,j=0}^{d-1}$ is a stochastic matrix given by $T_{i,j}=\<h_i|\Phi(|h_j\>\<h_j|)h_i\>$. This means that, for $\{h_i\}_i$-diagonal states $\rho$ and $\sigma$, we have $\rho\rgeq_\beta\sigma$ if and only if there is a stochastic matrix $T$ such that $T\lambda^\rho=\lambda^\sigma$ and $T\lambda^\beta=\lambda^\beta$, i.e.,
\begin{equation}
\big(\lambda^\rho,\lambda^\beta\big)\rgeq (\lambda^\sigma,\lambda^\beta).
\end{equation}
Thus, we arrive at matrix majorization in the case $d=2$ where the second column dominates the first one since we assume that each energy $E_i$ in the energy spectrum occurs with non-vanishing probability $e^{-\beta E_i}/Z$. Moreover, we have that $\rho$ thermally majorizes $\sigma$ in the asymptotic catalytic regime if and only if, for all $\varepsilon>0$, there are catalytic probability vectors $r_\varepsilon^{(1)},r_\varepsilon^{(2)}$ (of undetermined finite length), $\lambda_\varepsilon\in\mc P_d$ such that $\|\lambda^\sigma-\lambda_\varepsilon\|_1<\varepsilon$ and a stochastic matrix $T_\varepsilon$ such that $T_\varepsilon(\lambda^\rho\otimes r_\varepsilon^{(1)})=\lambda_\varepsilon\otimes r_\varepsilon^{(1)}$ and $T_\varepsilon(\lambda^\beta\otimes r_\varepsilon^{(2)})=\lambda^\beta\otimes r_\varepsilon^{(2)}$. Thus, asymptotic catalytic thermal majorization can be recast as asymptotic catalytic matrix majorization which we have exhaustively characterized in Section \ref{Msec:2column}.

Consider now states $\rho$ and $\sigma$ both diagonalizable in the energy eigenbasis $\{h_i\}_i$. If $\rho$ and $\sigma$ are both of full rank (as is the Gibbs state $\gamma_\beta$), then, according to Corollary 24 of \cite{farooq2024}, $\rho$ asymptotically catalytically thermally majorizes $\sigma$ if and only if $D_\alpha(\lambda^\rho\|\lambda^\beta)\geq D_\alpha(\lambda^\sigma\|\lambda^\beta)$ and $D_\alpha(\lambda^\beta\|\lambda^\rho)\geq D_\alpha(\lambda^\beta\|\lambda^\sigma)$ for all $\alpha\geq 1/2$. Now we can also address the case of general rank. If $\rho$ is not of full rank, then, according to Proposition \ref{Mpro:poweruniversaldcgeneral}, $(\lambda^\rho,\lambda^\beta)$ is a power universal of $S^2_{\rm d.c.}$ and, according to Theorem \ref{Mthm:asymptocatalytic}, $\rho$ asymptotically catalytically thermally majorizes $\sigma$ if and only if only $D_\alpha(\lambda^\rho\|\lambda^\beta)\geq D_\alpha(\lambda^\sigma\|\lambda^\beta)$ for all $\alpha\geq0$. In this case, we have $D_\alpha(\lambda^\beta\|\lambda^\rho)=\infty$ for $\alpha\geq 1$. Because also
\begin{align}
D_\alpha(\lambda^\beta\|\lambda^\rho)&=\frac{\alpha}{1-\alpha}D_{1-\alpha}(\lambda^\rho\|\lambda^\beta)\geq \frac{\alpha}{1-\alpha}D_{1-\alpha}(\lambda^\sigma\|\lambda^\beta)=D_\alpha(\lambda^\beta\|\lambda^\sigma)
\end{align}
for all $\alpha\in (0,1)$, we see that the asymptotic catalytic thermal majorization of the state $\rho$ of not full rank over $\sigma$ is equivalent with exactly the same inequalities as in the full-rank case. The case where $\rho$ asymptotically catalytically thermally majorizes $\sigma$ while $\rho$ is of full support and $\sigma$ is not is impossible. We can see this as follows: Suppose that $\rho$ asymptotically catalytically thermally majorizes $\sigma$. Because $D_\alpha$ are monotones, we now have that $D_\alpha(\lambda^\rho\|\lambda^\beta)\geq D_\alpha(\lambda^\sigma\|\lambda^\beta)$ for all $\alpha\geq0$. As $\alpha\to0$, it follows that
\begin{equation}
\sum_{i\in{\rm supp}\,\lambda^\rho} e^{-\beta E_i}/Z\leq \sum_{i\in{\rm supp}\,\lambda^\sigma} e^{-\beta E_i}/Z.
\end{equation}
Since $\rho$ has full support, the left-hand side equals $1$, forcing $\sigma$ to also have full support.

Note that, whenever $\rho$ diagonalizes in the energy eigenbasis, then $D_\alpha(\lambda^\rho\|\lambda^\beta)=\tilde{D}_\alpha(\rho\|\gamma_\beta)$ where $\tilde{D}_\alpha$ is the `sandwiched' quantum R\'{e}nyi divergence, i.e. $\tilde{D}_\alpha(\rho\|\tau)$ equals 
\begin{equation}\label{Meq:sandwiched}
\left\{\begin{array}{ll}
-\log{\tr{\sigma\,{\rm supp}\,\rho}}&\text{if}\ \alpha=0\ {\rm and}\ {\rm supp}\,\rho\cap{\rm supp}\,\sigma\neq\{0\}\\
\frac{1}{\alpha-1}\log{\tr{\big(\sigma^{\frac{1-\alpha}{2\alpha}}\rho\sigma^{\frac{1-\alpha}{2\alpha}}\big)^\alpha}}&\text{if}\ \alpha\in(0,1) \ {\rm and}\ {\rm supp}\,\rho\cap{\rm supp}\,\sigma\neq\{0\}\\
\tr{\rho(\log{\rho}-\log{\sigma})}&\text{if}\ \alpha=1\ {\rm and}\ {\rm supp}\,\rho\subseteq{\rm supp}\,\sigma\\
\frac{1}{\alpha-1}\log{\tr{\big(\sigma^{\frac{1-\alpha}{2\alpha}}\rho\sigma^{\frac{1-\alpha}{2\alpha}}\big)^\alpha}}&\text{if}\ \alpha\in(1,\infty)\ {\rm and}\ {\rm supp}\,\rho\subseteq{\rm supp}\,\sigma\\
\log{\big\|\sigma^{-1/2}\rho\sigma^{-1/2}\big\|}&\text{if}\ \alpha=\infty\ {\rm and}\ {\rm supp}\,\rho\subseteq{\rm supp}\,\sigma\\
\infty&{\rm otherwise}
\end{array}\right.,
\end{equation}
where $\|\cdot\|$ is the operator norm and we use ${\rm supp}\,\rho$ for both the supporting subspace of $\rho$ and for the orthogonal projection onto this subspace. Let us define the free energies $F^\pm_{\alpha,\beta}$ through
\begin{equation}
F^+_{\alpha,\beta}(\rho):=\tilde{D}_\alpha(\rho\|\gamma_\beta),\quad F^-_{\alpha,\beta}(\rho):=\tilde{D}_\alpha(\gamma_\beta\|\rho)
\end{equation}
for all $\alpha\geq1/2$. Thus, the above considerations yield the following result.
\begin{corollary}\label{Mcor:thermal}
Suppose that the eigen-energies $E_i>0$ for all $i$, and $\rho$ and $\sigma$ diagonalize in the energy eigenbasis. Then $\rho$ asymptotically catalytically thermally majorizes $\sigma$ if and only if
\begin{equation}
F^\pm_{\alpha,\beta}(\rho)\geq F^\pm_{\alpha,\beta}(\sigma),\qquad\forall\alpha\geq\frac{1}{2}.
\end{equation}
\end{corollary}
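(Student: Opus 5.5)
The plan is to reduce Corollary \ref{Mcor:thermal} to the dichotomy results already proved, via the identification $\rho\rgeq_\beta\sigma$ (asymptotically catalytically) $\iff$ $(\lambda^\rho,\lambda^\beta)$ asymptotically catalytically majorizes $(\lambda^\sigma,\lambda^\beta)$. That identification was explained just before the statement: for energy-diagonal states a channel acts through the stochastic matrix $T_{i,j}=\<h_i|\Phi(|h_j\>\<h_j|)h_i\>$. Conversely, a stochastic matrix defines a classical (measure-and-prepare) channel on diagonal states, and catalysts may be taken diagonal. Since $\lambda^\beta$ has full support, we are in the setting of Corollary \ref{Mcor:asymptocatalytic} with $p^{(2)}=q^{(2)}=\lambda^\beta$. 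The final clause of that corollary, that one may take $q_\varepsilon^{(2)}=q^{(2)}$, guarantees that the approximating target keeps $\lambda^\beta$ exactly, so Gibbs preservation is exact. The $\ell_1$ closeness of $\lambda_\varepsilon$ to $\lambda^\sigma$ translates into trace-norm closeness of the corresponding diagonal states.

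Next I will translate the divergence conditions. For diagonal states, $\tilde D_\alpha(\rho\|\gamma_\beta)=D_\alpha(\lambda^\rho\|\lambda^\beta)$ and $\tilde D_\alpha(\gamma_\beta\|\rho)=D_\alpha(\lambda^\beta\|\lambda^\rho)$. This holds because commuting operators reduce the sandwiched expression to the classical sum, with matching support conventions. Hence the condition $F^\pm_{\alpha,\beta}(\rho)\geq F^\pm_{\alpha,\beta}(\sigma)$ for all $\alpha\geq1/2$ is exactly condition (i) of Corollary \ref{Mcor:asymptocatalytic}.

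The corollary requires $p^{(1)}\neq p^{(2)}$, so I split off the degenerate case $\rho=\gamma_\beta$. In that case every $F^\pm_{\alpha,\beta}(\rho)$ vanishes. Nonnegativity of the divergences then forces $F^+_{\alpha,\beta}(\sigma)=0$, hence $\sigma=\gamma_\beta$, and the equivalence is trivial in both directions, using the identity channel in one direction and monotonicity in the other.

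The main point needing care is the support bookkeeping when $\rho$ or $\sigma$ is not of full rank. There are three cases. If both are full rank, the result is Theorem 22 / Corollary 24 of \cite{farooq2024}, absorbed into Corollary \ref{Mcor:asymptocatalytic}. If $\rho$ is rank-deficient, $(\lambda^\rho,\lambda^\beta)$ is a power universal of $S^2_{\rm d.c.}$ by Proposition \ref{Mpro:poweruniversaldcgeneral}. Theorem \ref{Mthm:asymptocatalytic} then gives the characterization by $D_\alpha(\lambda^\rho\|\lambda^\beta)$ for $\alpha\geq0$, and the skew-symmetry $D_\alpha(\lambda^\beta\|\lambda^\rho)=\frac{\alpha}{1-\alpha}D_{1-\alpha}(\lambda^\rho\|\lambda^\beta)$ together with $D_\alpha(\lambda^\beta\|\lambda^\rho)=\infty$ for $\alpha\geq1$ shows these conditions match the $F^\pm$ conditions for $\alpha\geq1/2$. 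If $\rho$ is full rank and $\sigma$ is not, the $\alpha\to0$ limit argument shows majorization is impossible. The $F^-$ condition likewise fails, since $F^-_{\alpha,\beta}(\sigma)=\infty$ for $\alpha\geq1$ while $F^-_{\alpha,\beta}(\rho)<\infty$. So both sides of the equivalence are false in this case, which completes it.
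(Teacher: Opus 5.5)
Your proposal is correct and follows the paper's own argument: reduce to dichotomy majorization with $\lambda^\beta$ held fixed, identify $\tilde D_\alpha$ with the classical $D_\alpha$ on energy-diagonal states, and split into three cases. These are the full-rank case (Corollary 24 of \cite{farooq2024}), the rank-deficient case (Proposition~\ref{Mpro:poweruniversaldcgeneral}, Theorem~\ref{Mthm:asymptocatalytic} and the skew-symmetry $D_\alpha(\lambda^\beta\|\lambda^\rho)=\frac{\alpha}{1-\alpha}D_{1-\alpha}(\lambda^\rho\|\lambda^\beta)$), and the excluded case handled by the $\alpha\to0$ limit. Your extra care with $\rho=\gamma_\beta$, with the choice $q_\varepsilon^{(2)}=q^{(2)}$ that makes Gibbs preservation exact, and with checking that the $F^-$ condition also fails when only $\sigma$ is rank-deficient fills in points the paper leaves implicit; like the paper, though, you take for granted rather than prove that arbitrary quantum catalysts can be replaced by diagonal ones in the necessity direction.
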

Note also that, when $\rho$ and $\sigma$ are as in Corollary \ref{Mcor:thermal} above, then the optimal rate at which i.i.d.\ copies of $\rho$ can be transformed into $\sigma$ with Gibbs-preserving channels is
\begin{equation}
r\left(\rho\overset{\beta}{\to}\sigma\right)=\min\left\{\frac{F^+_{\alpha,\beta}(\rho)}{F^+_{\alpha,\beta}(\sigma)},\frac{F^-_{\alpha,\beta}(\rho)}{F^-_{\alpha,\beta}(\sigma)}\,\middle|\,\alpha\in[1/2,\infty]\right\},
\end{equation}
a direct consequence of the results of Section \ref{Msec:rates}. As mentioned in the Introduction, although this result was previously claimed in \cite{Brandao2014}, our present proof avoids some technical issues in the original proof. We would also like to highlight the fact that, instead of using the result on vector majorization proven in \cite{klimesh2007inequalities}, our semiring methods allow us to directly study the problem from the matrix majorization view without having to embed the problem in the vector majorization form as was done in \cite{Brandao2014}. 

Our results also allow us to study multiple thermal majorization: for $d$-tuples $\big(\rho^{(1)},\ldots,\rho^{(d)}\big)$ and $\big(\sigma^{(1)},\ldots,\sigma^{(d)}\big)$ of states, we denote
\begin{equation}
\big(\rho^{(1)},\ldots,\rho^{(d)}\big)\rgeq_\beta \big(\sigma^{(1)},\ldots,\sigma^{(d)}\big)
\end{equation}
if there is a Gibbs-preserving channel $\Phi$ (i.e.,\ $\Phi(\gamma_\beta)=\gamma_\beta$) such that $\Phi\big(\rho^{(k)}\big)=\sigma^{(k)}$ for $k=1,\ldots,d$. This majorization can naturally also be studied in the large-sample, catalytic, and asymptotic regimes, the latter one arguably being the most physically relevant case. If we assume that all the states $\rho^{(k)}$ and $\sigma^{(k)}$ diagonalize in the energy eigenbasis, we are naturally again back to matrix majorization with the dominating column $\lambda^\beta$. We point out though that, in order to derive necessary and sufficient conditions for asymptotic thermal majorization in this case, $\big(\lambda^{\rho^{(1)}},\ldots,\lambda^{\rho^{(d)}},\lambda^\beta\big)$ has to be a power universal of the semiring $S^{d+1}_{\rm d.c.}$. Conditions for this are quite strict, as we have seen in Proposition \ref{Mpro:poweruniversaldcgeneral}. Thus, there remains plenty of unanswered questions in multiple thermal majorization.

\section{More General Support Conditions and Power Universality}\label{Msec:pusupports}

In this section, we will briefly touch upon matrix majorization semirings of matrices that have different conditions on their columns' supports than those considered in this thesis and in \cite{farooq2024}. We show how one can define such semirings in general, and conjecture what form the power universals take and what the corresponding monotone homomorphisms and derivations are. 

For any matrix $P\in\V^d$ and a pair of columns $k,k'\in[d]$ we may impose one of the following conditions on the supports of these columns in $P$: 
\begin{itemize}
    \item[$\subeq$:] ${\rm supp}\,p^{(k)}\subseteq{\rm supp}\,p^{(k^\prime)}$; 
    \item[$\supeq$:] ${\rm supp}\,p^{(k)}\supseteq{\rm supp}\,p^{(k^\prime)}$; 
    \item[$\eq$:] ${\rm supp}\,p^{(k)}={\rm supp}\,p^{(k^\prime)}$; 
    \item[$\no$:] no condition on the supports. 
\end{itemize}
Define a \emph{collection of support conditions} $\Xi$ to be a choice $\Xi(k,k')\in\{\subeq,\supeq,\eq,\no\}$ for every pair of columns $k,k'\in[d]$, $k\neq k'$, of one of the above conditions. Naturally, we require that 
\begin{align}
    &\Xi(k_1,k_2)=\subeq\,\Leftrightarrow\,\Xi(k_2,k_1)=\supeq,&\\
    &\Xi(k_1,k_2)=\eq\,\Leftrightarrow\,\Xi(k_2,k_1)=\eq,\ \Xi(k_1,k_2)=\no\,\Leftrightarrow\,\Xi(k_2,k_1)=\no,&
\end{align}
for all distinct $k_1,k_2\in[d]$. Note that if ${\rm supp}\,p^{(k_1)}\subseteq{\rm supp}\,p^{(k_2)}$ and ${\rm supp}\,p^{(k_2)}\subseteq{\rm supp}\,p^{(k_3)}$ for $k_1,k_2,k_3\in[d]$, then we must also have ${\rm supp}\,p^{(k_1)}\subseteq{\rm supp}\,p^{(k_3)}$. Consequently, we also require that the conditions are compatible in this sense, i.e., for all distinct $k_1,k_2,k_3\in[d]$, if 
\begin{equation}
    \big(\Xi(k_1,k_2),\Xi(k_2,k_3)\big)\in\{(\subeq,\subeq),(\subeq,\eq),(\eq,\subeq)\}, 
\end{equation}
then $\Xi(k_1,k_3)=\subeq$, and the same with $\subeq$ replaced with $\supeq$, and also if $\Xi(k_1,k_2)=\eq=\Xi(k_2,k_3)$, then $\Xi(k_1,k_3)=\eq$. 

Denote by $\V^d_\Xi$ all matrices in $\V^d$ satisfying the support conditions set by $\Xi$, i.e., 
\begin{equation}
    \V^d_\Xi:=\big\{\,P\in \V^d\,\big|\,P\text{ satisfies condition }\Xi(k,k')\ \forall\,k,k'\in[d],k\neq k'\,\big\}. 
\end{equation}
One easily sees that $\V^d_\Xi$ is closed under the operations $\boxplus$ and $\boxtimes$. Also, it is closed under the application of any stochastic matrix. Indeed, let $P\in \V^d_\Xi$, $T$ a stochastic matrix, and set $Q:=TP$. Assume ${\rm supp}\,p^{(k)}\subseteq{\rm supp}\,p^{(k^\prime)}$ for a pair of columns $k,k'$. We show that also ${\rm supp}\,q^{(k)}\subseteq{\rm supp}\,q^{(k^\prime)}$, from which our claim follows. Let $i\in{\rm supp}\,q^{(k)}$. Then 
\begin{equation}
    q^{(k)}_i=\sum_jT_{i,j}p^{(k)}_j>0, 
\end{equation}
hence there exists $j$ such that $T_{i,j},p^{(k)}_j>0$. Then also $p^{(k')}_j>0$, whence $q^{(k')}_i>0$. Therefore, $i\in{\rm supp}\,q^{(k')}$. We then define the semiring $S^d_\Xi:=\V^d_\Xi/\!\approx$ similarly as before. 

If we make the choice $\Xi(k,k')=\eq$ for all distinct $k,k'\in[d]$, then the resulting semiring $S^d_\Xi$ will contain the smallest possible number of matrices. This semiring is the common support matrix majorization semiring from \cite{farooq2024}. The semiring obtained from the support conditions $\Xi(k,k')=\no$ for all distinct $k,k'\in[d]$, will contain the largest possible number of matrices, and is the minimal restrictions matrix majorization semiring from Section \ref{Msec:findings}. The semiring obtained from choosing $\Xi(k,k')=\subeq$ and $\Xi(k',k)=\supeq$ when $k\in[d-1]$, $k'=d$, and $\Xi(k,k')=\no$ otherwise, is the dominating column matrix majorization semiring from Section \ref{Msec:Asymmetric}. 

Extrapolating the conditions we found for a matrix to be power universal in Lemma 12 in \cite{farooq2024}, Proposition \ref{Mlem:poweruniversalconditions}, and Proposition \ref{Mpro:poweruniversaldcgeneral} (the case of common supports, minimal restrictions, and one dominating column, respectively), we conjecture that they generalize to the semirings $S^d_\Xi$ for any collection of support conditions $\Xi$, as follows. 

\begin{conjecture}\label{con:puconditiongeneral}
Let $\Xi$ be a collection of support conditions and $U\in S^d_\Xi$ with normalized columns. Then $U$ is power universal in $S^d_\Xi$ if and only if it satisfies the following for every pair of columns $k,k'\in[d]$, $k\neq k'$, according to the condition $\Xi(k,k')$: 

\vspace{1mm}
\begin{center}
    \begin{tabular}{|c|c|}
        \hline
        Condition $\Xi(k,k')$&Condition on $U\in S_\Xi^d$\\
        \hline
        \textup{$\subeq$}&${\rm supp}\,u^{(k)}\subsetneq{\rm supp}\,u^{(k^\prime)}$\\
        \textup{$\supeq$}&${\rm supp}\,u^{(k)}\supsetneq{\rm supp}\,u^{(k^\prime)}$\\
        \textup{$\eq$}&${\rm supp}\,u^{(k)}={\rm supp}\,u^{(k^\prime)}$ and $u^{(k)}\neq u^{(k^\prime)}$\\
        \textup{$\no$}&${\rm supp}\,u^{(k)}\nsubseteq{\rm supp}\,u^{(k^\prime)}$ and ${\rm supp}\,u^{(k^\prime)}\nsubseteq{\rm supp}\,u^{(k)}$\\
        \hline
    \end{tabular}
\end{center}
\end{conjecture}

Furthermore, we may reasonably conjecture that the homomorphisms and derivations corresponding to $S_\Xi^d$ are similar to the ones found for the matrix majorization semirings studied in this thesis. We expect that the homomorphisms are precisely those for parameters $\underline{\alpha}$ or $\underline{\beta}$ that are well-defined, i.e. those where there does not appear a factor $\left(p^{(k)}_i\right)^\alpha$ or $\left(p^{(k)}_i\right)^\beta$ with $\alpha,\beta<0$ where $p^{(k)}_i$ vanishes for at least one non-zero matrix $P$ in $S_\Xi^d$. Also, we expect that the derivations will be precisely those that can be reached as point-wise limits of the divergences corresponding to these homomorphisms. 

Consider a situation in which we have two matrices $P,Q\in \V^d$ and we want to know whether $P$ majorizes $Q$ in the large-sample or catalytic setting. Since $P$ should be power universal in order to apply our conjectured results on majorization, we should choose the support conditions $\Xi$ such that $P$ is power universal in $S_\Xi^d$. Consulting Conjecture \ref{con:puconditiongeneral}, if $P$ has no repeated columns, this is always possible for a unique choice of $\Xi$, as we will now argue. Note that we may assume that $P$ does not have two columns that are equal: if the same two columns are equal in $Q$ as well, then we can delete duplicate columns in $P$ and $Q$ and consider matrices with less columns $d$, and if the same two columns are not equal in $Q$, then majorization is clearly impossible. Therefore, any two columns $k,k'\in[d]$ of $P$ satisfy exactly one of the conditions on the right of the table in Conjecture \ref{con:puconditiongeneral}. Choosing the corresponding condition $\Xi(k,k')$ on the left of the table then uniquely defines $\Xi$ such that $P$ is power universal in $S_\Xi^d$. Since $S_\Xi^d$ is closed under the application of stochastic maps, we may also assume that $Q\in S_\Xi^d$: otherwise there is no possibility of $P$ majorizing $Q$ in the large-sample or catalytic setting. The homomorphisms and derivations corresponding to $S_\Xi^d$ (as conjectured in the previous paragraph) now give sufficient and (almost) necessary conditions for large-sample and catalytic majorization. 

\chapter{Pure-Component Classical-Quantum State Pairs}\label{cha:flat}

This chapter is based on the author's work \cite{verhagen2025alphaz}. We continue our study of monotones characterizing large-sample and catalytic majorization of tuples of states, and take a first step into the realm of quantum states. Instead of (classical) matrix majorization, we will consider majorization of pairs of so-called flat quantum states and certain generalizations of them. Our results lead to the first operational interpretation of the $\alpha$-$z$ relative entropies, which turn out to be the relevant monotones characterizing large-sample and catalytic majorization of these pairs of states. 

\section{Introduction}

As explained in Chapter \ref{cha:introduction}, studying majorization of tuples, or even pairs, of general quantum states in the setting of the preordered semiring techniques that we have been applying, is not straightforward. Studying pairs of cq-states with pure components, however, does turn out to be a tractable endeavor. Recall that $(\rho,\sigma)$ is a pair of pure-component cq-states if there exists a classical register given by fixed orthonormal flags $\ket{i}$, for $i=1,\ldots,n$, $n\geq1$, such that 
\begin{equation}\label{eq:pairform}
\rho=\sum_{i=1}^n p_i \ketbrar{i}\otimes\ketbrar{\alpha_i},\quad\sigma=\sum_{i=1}^n q_i \ketbrar{i}\otimes\ketbrar{\beta_i}, 
\end{equation}
where $(p_1,\ldots,p_n)$ is a probability distribution and $\ket{\alpha_i}$ is a normalized vector when $p_i>0$ and the zero vector when $p_i=0$, for $i=1,\ldots,n$, and similarly for the $q_i$ and $\ket{\beta_i}$. We say that $\rho$ and $\sigma$ of \eqref{eq:pairform} {\it have some overlap} if there is $i$ such that $\langle\alpha_i|\beta_i\rangle\neq 0$. We also say that $\rho$ and $\sigma$ are {\it non-parallel} if $|\braket{\alpha_i}{\beta_i}|<1$ for all $i$. For technical reasons, we will exclusively consider pairs of cq-states with pure components as in \eqref{eq:pairform} that have some overlap, and denote this collection of pairs by $\mc F$. 

In this setting, the set of relative entropies (corresponding to the monotone homomorphisms and derivations) giving sufficient and (almost) necessary conditions for exact or asymptotic, large-sample or catalytic majorization, are within the set of the $\alpha$-$z$ relative entropies $D_{\alpha,z}$. The $\alpha$-$z$ relative entropies for any pair of quantum states $(\rho,\sigma)$, introduced in \cite{Jaksic2012} and \cite{Audenaert_Datta_2015} (see also a new generalization in \cite{Hiai_Jencova_2024}), are defined as 
\begin{equation}\label{eq:alphazdivergence}
    D_{\alpha,z}(\rho\|\sigma):=\frac{1}{\alpha-1}\log\tr\left[\left(\sigma^\frac{1-\alpha}{2z}\rho^\frac{\alpha}{z}\sigma^\frac{1-\alpha}{2z}\right)^z\right], 
\end{equation}
as well as certain limits $\alpha\rightarrow1$, $z\rightarrow\infty$, of this expression, and with some restrictions on the supports of $\rho$, $\sigma$. The question of what the precise range of $(\alpha,z)$ is where the $D_{\alpha,z}$ satisfy the data-processing inequality, or equivalently where monotonicity of the corresponding homomorphisms and derivations holds, has been investigated in a number of papers \cite{Audenaert_Datta_2015,Hiai2013,Carlen_et_al_2018} and conclusively answered in \cite{Zhang2020}. 

The relative entropies giving conditions for exact or asymptotic, large-sample or catalytic majorization of pairs of states as in \eqref{eq:pairform} are essentially those $D_{\alpha,z}$ with parameters $\alpha$ and $z$ such that 
\begin{equation}\label{eq:parameterset}
    \alpha\in(0,1)\text{ and }z\geq\max\{\alpha,1-\alpha\}, 
\end{equation}
together with certain extensions to $\alpha=0,1$ and $z\rightarrow\infty$. Note that the $D_{\alpha,z}$ with $\alpha>1$ do not play a role here because they would diverge on the pairs of states we consider, since the component states are pure. To our knowledge, this is the first time an operational interpretation of the $\alpha$-$z$ relative entropies, where $\alpha$ and $z$ are truly independent of each other, has been demonstrated. Let us note however that the special cases of the so-called `sandwiched' quantum relative entropies \cite{Muller-Lennert_et_al_2013,Wilde2013StrongCF}, where $\alpha=z$, and the Petz-type relative entropies \cite{Petz_85,Petz_86}, where $z=1$, have their own fields of applications and operational interpretations.

Before presenting our results formally, we make some additional definitions. For a pair of states $(\rho,\sigma)\in\mc F$, and $\alpha\in[0,1]$, $z\geq\max\{\alpha,1-\alpha\}$, define     
\begin{equation}\label{eq:alphazexplicit}
    \hat{D}_{\alpha,z}(\rho\|\sigma):=-\frac{1}{z+1}\log\left(\sum_{i=1}^n(p_i)^\alpha(q_i)^{1-\alpha}|\braket{\alpha_i}{\beta_i}|^{2z}\right), 
\end{equation}
and 
\begin{equation}\label{eq:DT}
    \hat{D}^\T(\rho\|\sigma):=-\log\left(\max_{i=1,\ldots,n}|\braket{\alpha_i}{\beta_i}|^{2}\right). 
\end{equation}
In \eqref{eq:alphazexplicit} for $\alpha=0$ or $1$, we adopt the convention that $0^0:=0$. When $\alpha\in(0,1)$, the $\hat{D}_{\alpha,z}$ coincide with the $\alpha$-$z$ relative entropies $D_{\alpha,z}$ as in \eqref{eq:alphazdivergence}, when evaluated on pairs of states in $\mc F$, except for a different prefactor in front of the $\log$ for technical reasons. Since the $D_{\alpha,z}$ satisfy DPI for $\alpha\in(0,1)$, $z\geq\max\{\alpha,1-\alpha\}$, the same holds for $\hat{D}_{\alpha,z}$, where  we use the continuity of $\hat{D}_{\alpha,z}$ in $\alpha$ to establish the DPI when $\alpha=0,1$. $\hat{D}^\T$ is the point-wise limit $z\rightarrow\infty$ of $\hat{D}_{\alpha,z}$, for any $\alpha\in[0,1]$ as one may rather easily verify. As a consequence, also $\hat{D}^\T$ satisfies DPI. See Figure \ref{fig:alphazregion} for a depiction of the family of all these relative entropies. 

\begin{figure}
\begin{overpic}[scale=0.4,unit=1mm]{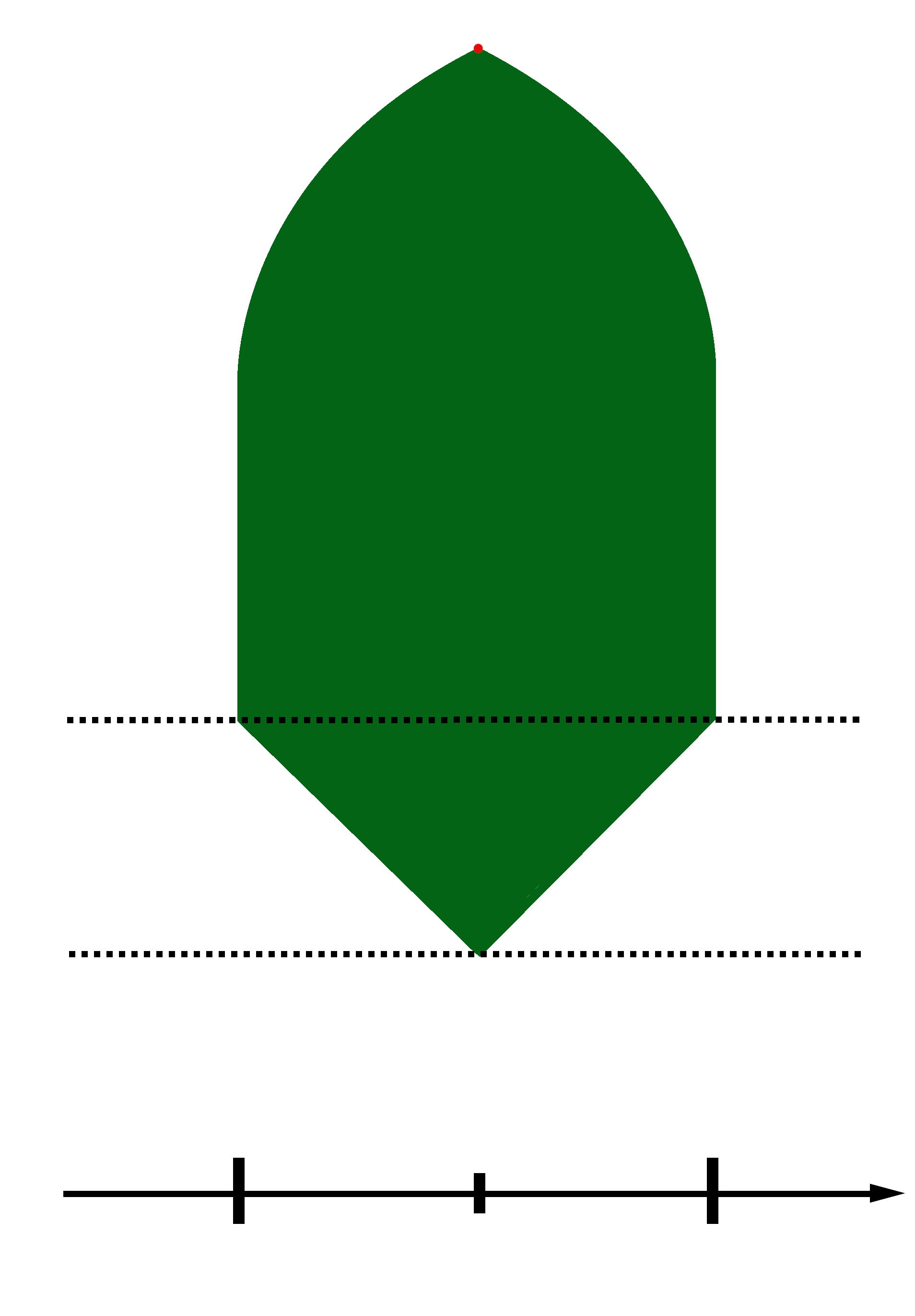}
\put(0,43){$z=1$}
\put(0,27){$z=1/2$}
\put(0,10){$z=0$}
\put(11,2){$\alpha=0$}
\put(28,2){$\alpha=1/2$}
\put(45,2){$\alpha=1$}
\put(22,57){\begin{Huge}
$\textcolor{white}{\hat{D}_{\alpha,z}}$
\end{Huge}}
\put(28.5,73){\begin{large}
\rotatebox{90}{$\textcolor{white}{\xrightarrow{z\rightarrow\infty}}$}
\end{large}}
\put(30,91){\begin{large}
\punainen{$\hat{D}^\T$}
\end{large}}
\end{overpic}
\caption{\label{fig:alphazregion} A depiction of the family of all relative entropies given by \eqref{eq:alphazexplicit} and \eqref{eq:DT}. For $\alpha\in(0,1)$ and $z\geq\max\{\alpha,1-\alpha\}$, $\hat{D}_{\alpha,z}$ equals the $\alpha$-$z$ relative entropies given by \eqref{eq:alphazdivergence}, up to a prefactor. With this different prefactor, the $\alpha$-$z$ divergences can be extended easily to the segments $\alpha=0$ or $\alpha=1$, and $z\geq1$. The additional relative entropy $\hat{D}^\T=\lim_{z\rightarrow\infty}\hat{D}_{\alpha,z}$, independent of $\alpha$, turns the set of all relative entropies into a compact space in the topology of point-wise convergence.}
\end{figure}
    
A result by Uhlmann \cite{Uhlmann1985} states that a pair of pure states $(\ketbrar{\alpha},\ketbrar{\beta})$ majorizes another pair of pure states $(\ketbrar{\alpha'},\ketbrar{\beta'})$ if and only if $|\braket{\alpha}{\beta}|\leq|\braket{\alpha'}{\beta'}|$. Note that in the expressions \eqref{eq:alphazexplicit} and \eqref{eq:DT}, the fidelity between pure states $|\braket{\alpha_i}{\beta_i}|$ also makes an appearance. 

Recall that a pair $(\rho,\sigma)\in\mc F$ is called non-parallel if 
\begin{equation}\label{eq:pu1}
    |\braket{\alpha_i}{\beta_i}|<1\quad\forall\;i\in\{1,\ldots,n\}. 
\end{equation}
We introduce the following additional property for a pair $(\rho,\sigma)\in\mc F$: 
\begin{equation}\label{eq:pu2}
    \exists\;i\in\{1,\ldots,n\}\ \text{s.t.}\ \braket{\alpha_i}{\beta_i}=0\ \text{and}\ \ket{\alpha_i},\ket{\beta_i}\neq0.  
\end{equation}

The following theorem gives sufficient and almost necessary conditions for \emph{exact} large-sample and catalytic majorization. This result is established in Section \ref{sec:exactproof}.  

\begin{theorem}\label{thm:LS}
    Let $(\rho,\sigma)$, $(\rho',\sigma')$ be two pairs of states in $\mc F$, where additionally $(\rho,\sigma)$ satisfies \eqref{eq:pu2}. If 
    \begin{align}
        \label{eq:LScondition1}&\hat{D}_{\alpha,z}(\rho\|\sigma)>\hat{D}_{\alpha,z}(\rho'\|\sigma')\ \text{when}\ \alpha\in[0,1],\ z\geq\max\{\alpha,1-\alpha\}, &\\
        \label{eq:LScondition2}&\hat{D}^\mathbb{T}(\rho\|\sigma)>\hat{D}^\mathbb{T}(\rho'\|\sigma'), 
    \end{align}            
    then $(\rho,\sigma)$ majorizes $(\rho',\sigma')$ both in the large-sample setting and in the catalytic setting, with a pair of catalyst states in $\mc F$. For the catalytic result, $(\rho,\sigma)$ does not need to satisfy \eqref{eq:pu2}. 
    
    Conversely, if $(\rho,\sigma)$ majorizes $(\rho',\sigma')$ in large samples or catalytically, then the above inequalities \eqref{eq:LScondition1} and \eqref{eq:LScondition2} hold non-strictly. 
\end{theorem}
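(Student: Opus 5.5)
The plan is to apply the Vergleichsstellensatz, Theorem \ref{thm:Fritz2022}, to a preordered semiring $S_{\mc F}$ built from (not necessarily normalized) pairs in $\mc F$. An element is represented by a finite list of triples $(p_i,q_i,c_i)$ with $p_i,q_i\geq 0$ and $c_i:=|\braket{\alpha_i}{\beta_i}|^2\in[0,1]$. This suffices because, by Uhlmann's result and unitary invariance, each block $(p_i\ketbrar{\alpha_i},q_i\ketbrar{\beta_i})$ is determined up to equivalence by these three numbers. We quotient by permutations of blocks and by deleting zero blocks. The sum $\boxplus$ is concatenation, and $\boxtimes$ acts blockwise by $(p,q,c)\boxtimes(p',q',c')=(pp',qq',cc')$. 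The preorder is convertibility by a channel. The norm is $\|(\rho,\sigma)\|=(\tr\rho,\tr\sigma)$, so $d=2$. Zerosumfreeness, the semidomain property and \eqref{eq:surjectivehomomorphismproperties} are checked as in Chapter \ref{cha:matrix}: pairs of equal norm are both $\succeq$ the one-block element $(\tr\rho,\tr\sigma,1)$, obtained by measuring the flag and tracing out.

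\textbf{Step 1 (homomorphisms).} On single blocks, a homomorphism $\Phi$ gives a function $\varphi(p,q,c)$ that is multiplicative in each argument. Reasoning as in \eqref{Meq:singlerowfunction}, and using that $(p,q,1)$ are the classical rows, I expect $\varphi(p,q,c)=p^{\alpha}q^{1-\alpha}c^{z}$ for real parameters. The constraint $\varphi(p,q,0)\in\{0,\ldots\}$ together with the splitting identities of Lemma \ref{Mlem:1or2} handles the blocks with $c=0$ or with a vanishing vector; this is where the convention $0^0=0$ at $\alpha\in\{0,1\}$ comes from. Monotonicity must then pin down the admissible $(\alpha,z)$. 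I would test monotonicity on explicit two-block channels, for example channels that merge two blocks into one qubit block or dephase a block. I expect this to force $\alpha\in[0,1]$, $z\geq\max\{\alpha,1-\alpha\}$ for $\R_+^{\rm op}$, and to show that $\R_+$ admits nothing non-degenerate. For tropical targets I expect only $\max_i c_i^{\beta}$, which gives $\hat D^{\T}$. Sufficiency of monotonicity then follows from the known DPI for $D_{\alpha,z}$ (\cite{Zhang2020}) together with continuity in $\alpha$. The excluded cases are the degenerate ones $\Phi=\tr\rho$ and $\Phi=\tr\sigma$.

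\textbf{Step 2 (derivations).} At $\tr\rho$ and $\tr\sigma$, I would show that every monotone derivation vanishes up to interchangeability. The argument copies Lemma \ref{Mlem:interchangeability}, Lemma \ref{Mlem:derivationszero} and Proposition \ref{Mpro:derivations}. We can split off orthogonal blocks $(x,0,0)$ and $(0,y,0)$, which lie in the semiring precisely because overlapping blocks are not required everywhere. This sandwiches $\Delta(P)$ between $0$ and $0$. The KL-type limits at $\alpha=1,0$ are therefore not needed.

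\textbf{Step 3 (power universality and conclusion).} I would show that \eqref{eq:pu2}, together with the strict inequalities, makes $(\rho,\sigma)$ power universal. The orthogonal nonzero block lets tensor powers produce arbitrarily good "certificate" blocks, analogous to the proof of Proposition \ref{Mlem:poweruniversalconditions}. The strict conditions \eqref{eq:LScondition1} give non-parallelism. This yields (b). For the catalytic statement (a), Theorem \ref{thm:Fritz2022}(a) does not require power universality of $x$; it only requires that some power universal $u$ exists, for which any pair satisfying \eqref{eq:pu2} will do. The catalyst lies in $\mc F$ by construction. The converse direction is just monotonicity plus additivity.

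\textbf{Main obstacle.} The crux is Step 1: showing that monotone homomorphisms can exist only for $z\geq\max\{\alpha,1-\alpha\}$. This requires constructing explicit cq-channels whose action on pure qubit blocks violates monotonicity outside that region. I would first try a measure-and-prepare map combining two blocks, and then optimize over its parameters.
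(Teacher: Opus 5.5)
Your plan is essentially the paper's proof. Like the paper, you apply Theorem \ref{thm:Fritz2022} to the minimal-restrictions semiring $S_{\rm m.r.}$ with norm $(\tr A,\tr B)$, obtain $\Phi_{\alpha,z}$ and $\Phi^{\T}$ by combining the classical $d=2$ classification with a fidelity power $|\braket{\alpha}{\beta}|^{2z}$, kill the derivations with a measure-and-prepare sandwich, and get power universality from \eqref{eq:pu1} and \eqref{eq:pu2} (Proposition \ref{prop:pumr}). The step you flag as the crux is in fact immediate: measuring $(\ketbrar{\psi(\pi/4)},\ketbrar{0})$ in the computational basis gives $2^{-z}\leq 2^{-\alpha}$, and swapping the two states gives $2^{-z}\leq 2^{-(1-\alpha)}$.

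Two small corrections are needed. First, non-parallelism of $(\rho,\sigma)$ comes from the tropical condition \eqref{eq:LScondition2}, not from \eqref{eq:LScondition1}; the latter can hold strictly even when $(\rho,\sigma)$ has a parallel block. Second, a pair satisfying \eqref{eq:pu2} but having a parallel block has $\Phi^{\T}=1$ and is therefore not power universal, so the auxiliary power universal you use in the catalytic case must satisfy \eqref{eq:pu1} as well.
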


As a consequence of this theorem, the following result gives necessary and sufficient conditions for \emph{asymptotic} large-sample and catalytic majorization. The proof can be found in Section \ref{sec:approximate}. 

\begin{theorem}\label{thm:approximateLS}
    Let $(\rho,\sigma)$, $(\rho',\sigma')$ be two pairs of states in $\mc F$, where $\rho$, $\sigma$ are non-parallel (i.e. satisfy \eqref{eq:pu1}) and $\rho'$, $\sigma'$ do not commute. Consider the following statements: 
    \begin{enumerate}[(i)]
        \item $D_{\alpha,z}(\rho\|\sigma)\geq D_{\alpha,z}(\rho'\|\sigma')$ when $\alpha\in(0,1)$ and $z>\max\{\alpha,1-\alpha\}$. 
        \item For each $\varepsilon>0$, there exist a state $\rho'_\varepsilon$ such that $F(\rho'_\varepsilon,\rho')\geq1-\varepsilon$, and a quantum channel $\mc{C}_{\varepsilon,n}$ for $n\geq n_\varepsilon$ large enough such that $\mc{C}_{\varepsilon,n}(\rho^{\otimes n})=(\rho'_\varepsilon)^{\otimes n}$ and $\mc{C}_{\varepsilon,n}(\sigma^{\otimes n})=(\sigma')^{\otimes n}$. 
        \item For each $\varepsilon>0$, there exist a state $\rho'_\varepsilon$ such that $F(\rho'_\varepsilon,\rho')\geq1-\varepsilon$, a quantum channel $\mc C_\varepsilon$, and a catalyzer pair $(\tau_\varepsilon$, $\omega_\varepsilon)\in\mc F$ such that 
        \begin{equation}
            \mc C_\varepsilon(\rho\otimes\tau_\varepsilon)=\rho'_\varepsilon\otimes\tau_\varepsilon\quad\text{and}\quad \mc C_\varepsilon(\sigma\otimes\omega_\varepsilon)=\sigma'\otimes\omega_\varepsilon. 
        \end{equation}
        Statements (i) and (iii) are equivalent. If $(\rho,\sigma)$ furthermore satisfies \eqref{eq:pu2}, then (i), (ii) and (iii) are equivalent. 
    \end{enumerate}        
\end{theorem}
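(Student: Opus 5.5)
The plan is to derive Theorem \ref{thm:approximateLS} from Theorem \ref{thm:LS}, in the same way Theorem \ref{Mthm:genasymptocatalytic} was derived from Theorem \ref{Mthm:majorization}. For (i) $\Rightarrow$ (ii),(iii) I would perturb only the first target state. Set
\begin{equation}
\rho'_\varepsilon:=(1-t)\rho'+t\,\sigma',
\end{equation}
keeping $\sigma'$ fixed, with $t=t(\varepsilon)$ small enough that $F(\rho'_\varepsilon,\rho')\geq1-\varepsilon$.

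This mixture is not pure-component, so it does not lie in $\mc F$. I would therefore instead mix inside each flag, taking $\ket{\alpha'_i}$ towards $\ket{\beta'_i}$, i.e. $\ket{\alpha'_{i,t}}\propto\ket{\alpha'_i}+s\ket{\beta'_i}$. This increases every overlap $|\braket{\alpha'_{i,t}}{\beta'_i}|$ by Uhlmann's result, so $(\rho',\sigma')\succeq(\rho'_\varepsilon,\sigma')$ pairwise. It also moves $p'_i$ slightly towards $q'_i$, realized classically by mixing the flag distribution. The target is
\begin{equation}
\hat D_{\alpha,z}(\rho'_\varepsilon\|\sigma')<\hat D_{\alpha,z}(\rho'\|\sigma')\leq\hat D_{\alpha,z}(\rho\|\sigma)
\end{equation}
for all $(\alpha,z)$, and $\hat D^\T(\rho'_\varepsilon\|\sigma')<\hat D^\T(\rho\|\sigma)$. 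Then Theorem \ref{thm:LS} gives catalytic majorization, and also large-sample majorization when \eqref{eq:pu2} holds.

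The verification splits by region of the parameter set. In the interior, $\alpha\in(0,1)$ and $z>\max\{\alpha,1-\alpha\}$, the assumption (i) gives $\geq$. Strictness comes from the overlap increase: $\rho',\sigma'$ do not commute, so some $0<|\braket{\alpha'_i}{\beta'_i}|<1$ with $p'_i,q'_i>0$, and that term of the sum grows strictly.

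The boundary parameters are $\alpha\in\{0,1\}$, $z=\max\{\alpha,1-\alpha\}$, and $\hat D^\T$. For these I would get non-strict inequalities for $(\rho,\sigma)$ versus $(\rho',\sigma')$ by continuity of $\hat D_{\alpha,z}$ in $(\alpha,z)$ and the limit $z\to\infty$. I would then upgrade to strict in two ways. One is that the perturbation strictly decreases the value whenever the target value is not already extremal. The other is that non-parallelism \eqref{eq:pu1} of $(\rho,\sigma)$ forces $\hat D(\rho\|\sigma)>0$, and in particular $\hat D^\T(\rho\|\sigma)>0$, which handles the extremal case.

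I expect the main obstacle to be the points $\alpha=0$ or $1$, where the convention $0^0=0$ makes $\hat D_{\alpha,z}$ discontinuous under support changes. This is the analogue of the discontinuous homomorphisms in Theorem \ref{Mthm:genasymptocatalytic}. I would choose the perturbation so that it keeps the flag supports $\{i:p'_i>0\}$ and $\{i:q'_i>0\}$ unchanged, or enlarges only the former. Then the sum for $(\rho'_\varepsilon,\sigma')$ dominates that for $(\rho',\sigma')$ term by term, and strict monotonicity follows as before. If $\hat D$ at $\alpha=0$ were not controlled by (i) at all, I would reduce to the case where it equals a continuous limit, comparing supports as in Corollary \ref{Mcor:asymptocatalytic}.

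For the converse, (ii) or (iii) $\Rightarrow$ (i), apply the data-processing inequality and additivity to get $D_{\alpha,z}(\rho\|\sigma)\geq D_{\alpha,z}(\rho'_\varepsilon\|\sigma')$ for the interior parameters. Then let $\varepsilon\to0$, using continuity of $D_{\alpha,z}$ in the first argument for fixed $\sigma'$ and $\alpha\in(0,1)$, since the trace functional is continuous there.
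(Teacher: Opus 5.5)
Your overall route is the paper's. You perturb only $\rho'$ by turning each $\ket{\alpha'_i}$ towards $\ket{\beta'_i}$ inside its flag, keep $\sigma'$ fixed, obtain strict versions of \eqref{eq:LScondition1} and \eqref{eq:LScondition2}, and apply Theorem~\ref{thm:LS}, using \eqref{eq:pu1} when $\hat D^\T(\rho'\|\sigma')=0$. The converse is DPI, additivity and a limit.

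The step that fails is the extra move of the flag weights $p'_i$ towards $q'_i$. It is not ``realized classically'': any stochastic map on the flag register also mixes the pure components of different flags, and in general no channel fixing $\sigma'$ changes only the weights. In fact this change can increase the divergences. Take $\rho'=\tfrac34\oplus\tfrac14\ketbrar{\alpha}$ and $\sigma'=\tfrac14\oplus\tfrac34\ketbrar{\beta}$ with $|\braket{\alpha}{\beta}|^2<\tfrac13$. Replacing $p'$ by $(1-t)p'+tq'$ gives a derivative of $\Phi_{1/2,1}$ at $t=0$ equal to $(3|\braket{\alpha}{\beta}|^2-1)/(4\sqrt3)<0$. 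The reason is that the classical component (overlap $1$, $p'_1>q'_1$) loses weight, and the rotation cannot compensate there. So your claimed term-by-term domination is false. Wherever (i) is tight, the perturbed target can violate the strict inequality Theorem~\ref{thm:LS} needs, unless $t$ happens to be small compared with the rotation.

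Simply drop this step. The paper takes $\rho'_\varepsilon=\bigoplus_i p'_i\ketbrar{\alpha'_{\varepsilon,i}}$ with the weights untouched. Here $\ket{\alpha'_{\varepsilon,i}}$ is a real rotation by a small angle $\eta$, in the basis where $\braket{\alpha'_i}{\beta'_i}>0$. If you use $\ket{\alpha'_i}+s\ket{\beta'_i}$ instead, you must match the phase of $s$ to that of $\braket{\alpha'_i}{\beta'_i}$, otherwise the overlap can decrease.

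With the weights fixed, the obstacle you anticipate at $\alpha\in\{0,1\}$ disappears. Since $z\geq\max\{\alpha,1-\alpha\}\geq\tfrac12$, every index with $p_i=0$, $q_i=0$ or orthogonal vectors carries the factor $|\braket{\alpha_i}{\beta_i}|^{2z}=0$. So for a fixed pair, $\Phi_{\alpha,z}$ is continuous in $(\alpha,z)$ on the whole closed range, and (i) extends to that range and to $\hat D^\T$. The rotation then strictly raises every term with $0<|\braket{\alpha'_i}{\beta'_i}|<1$ (where $p'_i,q'_i>0$) and decreases no other term, uniformly in $(\alpha,z)$. The discontinuity under support changes lives in the state argument, which you never move.

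In the converse, note that $\rho'_\varepsilon$ may live on a larger, $\varepsilon$-dependent space $\Hil'_\varepsilon\supseteq\Hil'$. Continuity of $D_{\alpha,z}$ in its first argument on a fixed space therefore does not apply directly. The paper first applies the pinching $\tau\mapsto P_\varepsilon\tau P_\varepsilon+P_\varepsilon^\perp\tau P_\varepsilon^\perp$, which fixes $\sigma'$. It then uses DPI and $\oplus$-additivity of $\Psi_{\alpha,z}(\tau,\omega)=\tr\big[\big(\omega^{\frac{1-\alpha}{2z}}\tau^{\frac{\alpha}{z}}\omega^{\frac{1-\alpha}{2z}}\big)^z\big]$, together with $\Psi_{\alpha,z}(\cdot,0)=0$. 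Only then does it let $\varepsilon\to0$ on the fixed space $\Hil'$.
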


Note that the state $\rho'_\varepsilon$ in statements (ii) and (iii) above is allowed to come from a Hilbert space that is larger than the one associated to $\rho'$. Additionally, $(\rho'_\varepsilon,\sigma')$ need not be a pair in $\mc F$. However, from the construction of $\rho'_\varepsilon$ in the proof of the implications (i) $\Longrightarrow$ (ii) and (i) $\Longrightarrow$ (iii), it follows that $\rho'_\varepsilon$ can be chosen such that $(\rho'_\varepsilon,\sigma')\in\mc F$, and $\rho'_\varepsilon$ and $\rho'$ have the same classical probability distribution associated to them, but with rotated pure-state components. 

It might be possible to relax the assumption \eqref{eq:pu2} in Theorems \ref{thm:LS} and \ref{thm:approximateLS}, but this requires the study of a slightly different semiring. We will discuss this issue later. Note that, for the results on large-sample majorization, Theorem \ref{thm:approximateLS} requires $(\rho,\sigma)$ to satisfy both \eqref{eq:pu1} and \eqref{eq:pu2}, while Theorem \ref{thm:LS} only explicitly requires \eqref{eq:pu2}. However, \eqref{eq:pu1} is automatically satisfied for $(\rho,\sigma)$ in Theorem \ref{thm:LS} since the strict inequality \eqref{eq:LScondition2} implies \eqref{eq:pu1}. 

One may wonder whether the set of relative entropies that give sufficient conditions for large-sample and catalytic majorization is bigger than necessary. The following theorem, similar to Theorem \ref{Mthm:minimal} for the minimal restrictions matrix majorization semiring, tells us that this is in fact not the case in the context of asymptotic majorization, since the set of relative entropies is minimal in the sense that we cannot remove any open subset from the range of parameters 
\begin{equation}\label{eq:range}
    R:=\big\{\,(\alpha,z)\in\R^2\,|\,\alpha\in(0,1),\ z>\max\{\alpha,1-\alpha\}\,\big\}
\end{equation}
appearing in condition (i) of Theorem \ref{thm:approximateLS}. The proof is given in Section \ref{sec:minimal}. 
\begin{theorem}\label{thm:minimal}
     Let $O$ be a non-empty open subset of $R$. There exist pairs $(\rho,\sigma)$, $(\rho',\sigma')$ in $\mc F$, where $\rho$, $\sigma$ are non-parallel and satisfy \eqref{eq:pu2}, and $\rho'$, $\sigma'$ do not commute, such that 
     \begin{equation}\label{eq:smallercondition}
         D_{\alpha,z}(\rho\|\sigma)\geq D_{\alpha,z}(\rho'\|\sigma')\ \text{when}\ (\alpha,z)\in R\setminus O, 
     \end{equation}
     but both (ii) and (iii) of Theorem \ref{thm:approximateLS} do not hold. 
\end{theorem}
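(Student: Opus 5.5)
The plan follows the proof of Theorem \ref{Mthm:minimal}, a Stone--Weierstrass argument adapted from Strassen. On the semiring of pairs in $\mc F$ (with $\oplus$ and $\otimes$ as addition and multiplication), I would work directly with the quantities
\begin{equation}
\Phi_{\alpha,z}(\rho,\sigma):=\sum_{i=1}^n p_i^\alpha q_i^{1-\alpha}|\braket{\alpha_i}{\beta_i}|^{2z},
\end{equation}
which are additive under $\oplus$ and multiplicative under $\otimes$. By \eqref{eq:alphazexplicit} we have $D_{\alpha,z}=\frac{1}{\alpha-1}\log\Phi_{\alpha,z}$ on normalized pairs. Since $\alpha-1<0$, the inequality $D_{\alpha,z}(\rho\|\sigma)\geq D_{\alpha,z}(\rho'\|\sigma')$ is equivalent to $\Phi_{\alpha,z}(\rho,\sigma)\leq\Phi_{\alpha,z}(\rho',\sigma')$.

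First, I would choose a point $(\alpha_0,z_0)\in O$ and a compact neighbourhood $K\subseteq O$ of it. On the compact set $\overline{R}\cup\{\infty\}$, i.e. $\alpha\in[0,1]$, $z\geq\max\{\alpha,1-\alpha\}$, with the point at infinity corresponding to $\hat D^\T$, the evaluation maps $f_{(\rho,\sigma)}:(\alpha,z)\mapsto\Phi_{\alpha,z}(\rho,\sigma)$ are continuous. Their real span is an algebra, because $\Phi$ is multiplicative, and it contains the constants via the pair $((1),(1))$. It separates points: the one-row pairs with weights $(p,q)$ and overlap $c$ give $p^\alpha q^{1-\alpha}c^{2z}$, which determines $\alpha$ and $z$. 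Stone--Weierstrass then gives real coefficients $r_j$ and pairs $(\rho_j,\sigma_j)$ such that $\sum_j r_j\Phi_{\alpha,z}(\rho_j,\sigma_j)$ is positive off $O$ (on the compactification) and negative at $(\alpha_0,z_0)$. Splitting the sum into its positive and negative parts produces two unnormalized $\oplus$-sums $S$ and $R'$, with $\Phi_{\alpha,z}(S)>\Phi_{\alpha,z}(R')$ off $O$ and the reverse inequality at $(\alpha_0,z_0)$.

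Next, I would normalize and adjust the pairs to meet the hypotheses. I set $(\rho,\sigma):=R'\oplus X$ and $(\rho',\sigma'):=S\oplus Y$, rescaled to unit trace. The extra blocks $X,Y$ must do three things:
\begin{itemize}
\item equalize the classical weights of both components;
\item not change any $\Phi_{\alpha,z}$ for $\alpha\in(0,1)$;
\item ensure that $(\rho,\sigma)$ is non-parallel and satisfies \eqref{eq:pu2}, and that $\rho'$ and $\sigma'$ do not commute.
\end{itemize}
For $X$ I would use rows with $p_i>0$ and $q_i=0$, and vice versa; these contribute $0$ to $\Phi_{\alpha,z}$ for $\alpha\in(0,1)$ and fix the weights. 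For \eqref{eq:pu2} I would add a row with orthogonal pure states, which also contributes $0$. I would also make every row of $R'$ non-parallel by replacing each $\rho_j$ with a generator whose overlaps are below $1$. This is possible because the approximating generators can be chosen with overlaps $<1$: one-row pairs with $c<1$ already separate points. Finally, non-commutativity of $(\rho',\sigma')$ holds whenever some row of $S$ has overlap strictly between $0$ and $1$, which I again arrange through the choice of generators. A common rescaling multiplies every $\Phi_{\alpha,z}$ by the same positive constant, so the strict inequalities are preserved. This yields \eqref{eq:smallercondition}, together with $D_{\alpha_0,z_0}(\rho\|\sigma)<D_{\alpha_0,z_0}(\rho'\|\sigma')$.

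The conclusion comes from the direction (ii) or (iii) $\Rightarrow$ (i) of Theorem \ref{thm:approximateLS}. That direction forces $D_{\alpha_0,z_0}(\rho\|\sigma)\geq D_{\alpha_0,z_0}(\rho'\|\sigma')$, which contradicts the strict reverse inequality. Hence neither (ii) nor (iii) holds.

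The main obstacle is the compactification needed for Stone--Weierstrass. $R$ itself is not compact, so the positivity of the approximant must be controlled at $\alpha\in\{0,1\}$, on the boundary $z=\max\{\alpha,1-\alpha\}$, and as $z\to\infty$, where all the functions collapse to $\hat D^\T$. I would work on the closure of $R$ in the compactified parameter space and require positivity on the whole complement of $K$. Because this is a closed set containing all of $R\setminus O$, the non-strict inequality \eqref{eq:smallercondition} follows, and in fact strictly. The delicate point is checking that the $f_{(\rho,\sigma)}$ extend continuously to $z=\infty$ and to $\alpha=0,1$, using the conventions of \eqref{eq:alphazexplicit}. I expect this to go through, because the extended functions are of the form $\sum_i p_i^\alpha q_i^{1-\alpha}c_i^{2z}$ with $c_i\leq1$.
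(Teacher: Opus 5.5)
Your overall strategy matches the paper's: Stone--Weierstrass on a compactified parameter set, splitting into positive and negative parts, padding with an orthogonal block, and concluding from (ii) or (iii) $\Rightarrow$ (i) of Theorem~\ref{thm:approximateLS}. However, the step you single out as the main obstacle is resolved incorrectly. A one-point compactification does not work for arbitrary generators. If $(A,B)$ has blocks with $|\langle\alpha_i|\beta_i\rangle|=1$, then $\lim_{z\to\infty}\Phi_{\alpha,z}(A,B)=\sum_{i:\,|\langle\alpha_i|\beta_i\rangle|=1}p_i^\alpha q_i^{1-\alpha}$, which depends on $\alpha$. For example, the classical one-row pair with weights $p\neq q$ gives $f=p^\alpha q^{1-\alpha}$, constant in $z$. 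So $f_{(A,B)}$ has no continuous extension to a single point, and your justification ``$c_i\le 1$'' fails exactly at $c_i=1$.

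The paper avoids this by adjoining the whole segment $\{(\alpha,\infty):\alpha\in[0,1]\}$ via $(\alpha,z)\mapsto(\alpha,z/(z+1))$, on which every $f_{(A,B)}$ is continuous. Your version can be repaired in line with your later remark. Use as generators only the non-parallel pairs together with the constant $f_{(1,1)}$. The non-parallel pairs are closed under $\oplus$ and $\otimes$, and their $f$ tend to $0$ uniformly in $\alpha$. These generators still separate points. Positivity of the approximant at $\infty\notin O$ then forces the coefficient of the constant to be positive, so $R'$ consists of non-parallel blocks only. This replaces the paper's step of slightly rotating the parallel components of the negative side, which can only decrease its $\Phi_{\alpha,z}$.

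Second, non-commutativity of $(\rho',\sigma')$ does not follow from the choice of generators alone, since a priori $S$ might consist of the constant block only. You need an argument, for instance the paper's. If $S$ were commuting, $z\mapsto\Phi_{\alpha_0,z}(S)$ would be constant. Meanwhile $z\mapsto\Phi_{\alpha_0,z}(R')$ is strictly decreasing, since $R'$ is non-parallel with some overlap in $(0,1)$. Shrink $O$ to a small ball so that some $z_1<z_0$ has $(\alpha_0,z_1)\in R\setminus O$. Then $\Phi_{\alpha_0,z_1}(S)>\Phi_{\alpha_0,z_1}(R')$ is incompatible with the reverse inequality at $z_0$. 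Simpler still, add one and the same non-parallel block with overlap in $(0,1)$ to both $S$ and $R'$; this leaves all differences $\Phi_{\alpha,z}(S)-\Phi_{\alpha,z}(R')$ unchanged.

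Also, the closed set on which you require positivity should be the complement of the interior of $K$, not the complement of $K$. With these repairs the remainder of your argument goes through.
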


The optimal conversion rate $r\big((\rho,\sigma)\to(\rho',\sigma')\big)$ between two pairs in $\mc F$ is given by the following result, which is a direct consequence of Proposition \ref{prop:optimalconversionrate}. 

\begin{theorem}\label{thm:rates}
    For pairs $(\rho,\sigma)$ and $(\rho',\sigma')$ of states in $\mc F$, where $\rho$, $\sigma$ are non-parallel and satisfy \eqref{eq:pu2}, the optimal conversion rate from $(\rho, \sigma)$ to $(\rho', \sigma')$ is 
    \begin{align}
    &r\big((\rho,\sigma)\to(\rho',\sigma')\big)=\inf\left\{\,\frac{D_{\alpha,z}(\rho\|\sigma)}{D_{\alpha,z}(\rho'\|\sigma')}\,\middle|\,\begin{array}{l}\alpha\in(0,1),\\z>\max\{\alpha,1-\alpha\}\end{array}\,\right\}&\label{eq:rate1}\\
    &\quad=\min\left\{\,\frac{\hat{D}_{\alpha,z}(\rho\|\sigma)}{\hat{D}_{\alpha,z}(\rho'\|\sigma')}\,\middle|\,\begin{array}{l}\alpha\in[0,1],\\z\geq\max\{\alpha,1-\alpha\}\end{array}\,\right\}\cup\left\{\frac{\hat{D}^\T(\rho\|\sigma)}{\hat{D}^\T(\rho'\|\sigma')}\right\}.&\label{eq:rate2} 
    \end{align}
\end{theorem}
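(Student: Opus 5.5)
The plan is to obtain Theorem \ref{thm:rates} as a direct instance of Proposition \ref{prop:optimalconversionrate}, applied to the preordered semiring of (classes of) pairs in $\mc F$. This semiring is constructed later in the chapter, and its non-degenerate monotone homomorphisms and monotone derivations are identified there (the results that underlie Theorem \ref{thm:LS}). First I fix the semiring $S$ with norm $\|(\rho,\sigma)\|=(\tr\rho,\tr\sigma)$, which satisfies the hypotheses of Theorem \ref{thm:Fritz2022}, and I choose a power universal $u$ used to normalize the divergences. Then I record that the family $\mc D(S)$ of Section \ref{sec:optimalconversionrate} consists of precisely two kinds of divergences: the $\hat D_{\alpha,z}$ with $\alpha\in[0,1]$ and $z\geq\max\{\alpha,1-\alpha\}$, coming from the temperate homomorphisms and the derivations at the endpoints $\alpha=0,1$, and the tropical $\hat D^\T$. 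These may appear up to positive rescaling, which is harmless in a ratio. Under this choice of normalization the $-\frac{1}{z+1}$ prefactor arises naturally, and a normalization constant cancels in $D(x)/D(y)$ in any case.

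Next I check the hypotheses of Proposition \ref{prop:optimalconversionrate} for $x=(\rho,\sigma)$ and $y=(\rho',\sigma')$.
\begin{itemize}
\item $x\succeq 1$ and $y\succeq 1$ hold trivially, since the trace channel maps any normalized pair to $(1,1)$.
\item $x$ is power universal; this should follow from the later characterization of power universals, which is exactly where non-parallelity \eqref{eq:pu1} together with \eqref{eq:pu2} enters.
\item $D(x)>0$ for every $D\in\mc D(S)$, which then follows from power universality as remarked after the Proposition.
\end{itemize}
The Proposition then gives $r=\min_{D\in\mc D(S)} D(x)/D(y)$, and this is precisely \eqref{eq:rate2}. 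The minimum is attained by the compactness of $\mc D(S)$.

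It remains to pass from \eqref{eq:rate2} to \eqref{eq:rate1}. On the open range $R$, $\hat D_{\alpha,z}=\frac{1-\alpha}{z+1}D_{\alpha,z}$. The prefactor is the same for both pairs, so the ratios coincide on $R$. The infimum over $R$ is therefore at least the minimum in \eqref{eq:rate2}. For the reverse inequality, I argue by density and continuity: every point of the compact parameter set (the closure of $R$ together with the point at infinity $\hat D^\T$) is a limit of points in $R$. The maps $(\alpha,z)\mapsto\hat D_{\alpha,z}(\cdot)$ are continuous there, including the boundary $z=\max\{\alpha,1-\alpha\}$, the edges $\alpha\in\{0,1\}$, and $z\to\infty$ converging to $\hat D^\T$. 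Because the denominators are positive on the relevant set, or the ratio is $+\infty$, the ratio is lower semicontinuous as an extended-valued function, so the infimum over the dense subset $R$ equals the minimum over its closure.

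The main obstacle I expect is this last limiting step, specifically the points where the target divergence $\hat D_{\alpha,z}(\rho'\|\sigma')$ vanishes, which is possible for instance at $\alpha=0$ or $\alpha=1$. Near such points the ratio may blow up, or it may be approached from $R$ with a finite value that differs from the value assigned at the boundary. I would handle this using the numerators, which are bounded below by a positive constant (by $D(x)>0$ and compactness). Near a zero of the denominator the ratios therefore tend to $+\infty$ and do not affect the infimum, so the minimum is attained where denominators are positive and continuity applies.
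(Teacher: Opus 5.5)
Your proposal is correct and follows the paper's route: Theorem \ref{thm:rates} is obtained by applying Proposition \ref{prop:optimalconversionrate} to $S_{\rm m.r.}$, reading off $\mc D(S_{\rm m.r.})$ from Proposition \ref{prop:monhom} and the power universality of $(\rho,\sigma)$ from Proposition \ref{prop:pumr}, and your density/continuity argument for passing from the minimum in \eqref{eq:rate2} to the infimum in \eqref{eq:rate1} fills in what the paper only alludes to via compactness. Two slips to fix: the divergences $\hat D_{0,z},\hat D_{1,z}$ ($z\geq1$) come from the non-degenerate homomorphisms $\Phi_{0,z},\Phi_{1,z}$, not from derivations, since Proposition \ref{prop:monhom} shows there are no non-zero monotone derivations; and lower semicontinuity is the wrong property for ``infimum over a dense set equals minimum over the closure''---you need continuity at a minimizer, which your last paragraph does supply, because numerators bounded below by a positive constant make the ratio continuous as a map into $(0,\infty]$.
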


The additional relative entropies $\hat{D}_{\alpha,z}$ for $\alpha=0,1$ and the limit point $\hat{D}^\T$ make the set of relative entropies appearing in \eqref{eq:rate2} compact in the topology of pointwise convergence \cite[Proposition 8.5]{fritz2022}. This is why in \eqref{eq:rate2} we are able to take the minimum, instead of taking the infimum when using the original definition of the $\alpha$-$z$ relative entropies $D_{\alpha,z}$ in \eqref{eq:rate1}. 

The framework of majorization we study in this work generalizes the classical notion of the comparison of statistical experiments in the context of matrix majorization for $d=2$ in the previous chapter, characterized by the existence of a stochastic map between pairs of finite probability distributions. This classical majorization can be equivalently characterized by the majorization of Lorenz curves \cite{Blackwell53}. The latter notion was generalized to the quantum case using quantum relative Lorenz curves in \cite{Buscemi2017}, where it was called \emph{quantum relative majorization}. Our notion of majorization characterized by the existence of a quantum channel between pairs of quantum states implies quantum relative majorization. In the case of pairs of qubit states the two notions coincide \cite{Alberti1980}, but in general, they are not equivalent, unlike in the classical case \cite{matsumoto2014}. 

This chapter is organized as follows: In Section \ref{sec:definingthesemirings} we formally define the semirings containing pairs as in \eqref{eq:pairform}. Then we characterize power universal elements in Section \ref{sec:pu} and identify the relevant monotone homomorphisms and derivations in Section \ref{sec:homomorphismsderivations}. In Section \ref{sec:results}, we use these technical results to derive our main theorems on large-sample or catalytic, and exact or asymptotic, majorization, and the minimality of the range of parameters $(\alpha,z)$.

\section{Defining the Semirings}\label{sec:definingthesemirings}
We consider the set $V$ consisting of all pairs of positive semi-definite matrices that are mutually block-diagonalizable into blocks of rank at most $1$. More precisely, $(A,B)\in V$ when $A,B$ are positive semi-definite operators on a finite-dimensional Hilbert space
\begin{equation}
    \Hil=\bigoplus_{i=1}^{n}\Hil_i, 
\end{equation}
where $\Hil_1,\ldots,\Hil_n$, are $n\geq1$ Hilbert spaces (these spaces may all vary and also their number $n$ may vary), such that $A,B$ admit a decomposition  
\begin{equation}\label{eq:Vpair}
    A=\bigoplus_{i=1}^{n}\ketbra{\alpha_i}{\alpha_i},\quad B=\bigoplus_{i=1}^{n}\ketbra{\beta_i}{\beta_i}, 
\end{equation}
where $\ket{\alpha_i},\ket{\beta_i}\in\Hil_i$ are vectors (possibly equal to the zero-vector). 

We define the set of \emph{minimal restrictions} pairs $V_{\rm m.r.}\subsetneq V$ to be those $(A,B)\in V$ such that either $A$ and $B$ are both the $0$-operator in their associated Hilbert space, or $A$ and $B$ have some overlap in the sense that 
\begin{equation}\label{eq:cscondition}
    \braket{\alpha_i}{\beta_i}\neq0\text{ for some }i\in[n]. 
\end{equation}
Also, we define the set of \emph{everywhere overlapping} pairs $V_{\rm e.o.}\subsetneq V_{\rm m.r.}$ to be those $(A,B)\in V_{\rm m.r.}$ such that, 
for any $i\in[n]$, we have either 
\begin{equation}
    \braket{\alpha_i}{\beta_i}\neq0 
\end{equation}
or 
\begin{equation}
    \ket{\alpha_i}=0=\ket{\beta_i}. 
\end{equation}
Note that any pair of flat states (in particular, pair of pure states) that are at least partially overlapping is in $V$. 

$V$ is closed under the $\boxplus$ and $\boxtimes$ operations, defined by 
\begin{equation}
    (A,B)\boxplus(C,D):=(A\oplus C,B\oplus D),\quad(A,B)\boxtimes(C,D):=(A\otimes C,B\otimes D), 
\end{equation}
as one easily sees. We say that $(A,B)\in V$ with associated Hilbert space $\Hil$ majorizes $(C,D)\in V$ with associated Hilbert space $\Hil'$ and write $(A,B)\succeq(C,D)$ if and only if there exists a quantum channel $\Cha:\Lin(\Hil)\rightarrow\Lin(\Hil')$ such that $\Cha(A)=C$ and $\Cha(B)=D$. 

We define the equivalence relation $\approx$ as follows: Let $(A,B),(A',B')\in V$ with associated Hilbert spaces $\Hil$, respectively $\Hil'$, and write $(A,B)\approx(A',B')$ if and only if they are equal up to a permutation of the component Hilbert spaces $\Hil_i$, adding/removing copies of $(0,0)$, or by transformations of the form 
\begin{equation}
    V_i\ketbrar{\alpha_i}V_i^*=W_i\ketbrar{\alpha'_i}W_i^*,\quad V_i\ketbrar{\beta_i}V_i^*=W_i\ketbrar{\beta'_i}W_i^*, 
\end{equation}
where the $V_i,W_i$ are isometries from $\Hil_i,\Hil'_i$ into another Hilbert space. 
For $(A,B)\in V$, we denote by $[(A,B)]$ the $\approx$-equivalence class of $(A,B)$. Define $S_{\rm m.r.}:=V_{\rm m.r.}/\!\approx$. The operations $\boxplus$ and $\boxtimes$ induce well-defined operations on $S_{\rm m.r.}$. These operations are now also commutative: for $\boxplus$, this follows from the freedom to permute the Hilbert spaces $\Hil_i$, and for $\boxtimes$, this can be seen by considering unitaries that swap two vectors in a tensor product. Also, the preorder $\succeq$ induces a preorder on $S_{\rm m.r.}$. In conclusion, $(S_{\rm m.r.},\boxplus,0:=[(0,0)],\boxtimes,1:=[(1,1)],\succeq)$ is a preordered semiring. We may also define the preordered semiring based on $S_{\rm e.o.}=V_{\rm e.o.}/\!\approx$ in the same way. In the sequel, we will often treat elements of $V_{\rm m.r.}$ or $V_{\rm e.o.}$ as elements of the semirings $S_{\rm m.r.}$ or $S_{\rm e.o.}$, i.e.,\ we usually ignore the $\approx$-equivalence classes, and simply write $(A,B)=(A',B')$ when $(A,B)\approx(A',B')$ for $(A,B),(A',B')\in V_{\rm m.r.}$ or $V_{\rm e.o.}$. We will also treat functions (such as the monotone homomorphisms) on the semirings as functions on $V_{\rm m.r.}$ or $V_{\rm e.o.}$; we simply assume that they are constant on each equivalence class. 

In the rest of this chapter, we will usually write a pair $(A,B)$ in $V_{\rm m.r.}$ or $V_{\rm e.o.}$ as follows: 
\begin{equation}\label{eq:Vpair2}
    A=\bigoplus_{i=1}^{n}p_i\ketbra{\alpha_i}{\alpha_i},\quad B=\bigoplus_{i=1}^{n}q_i\ketbra{\beta_i}{\beta_i}, 
\end{equation}
where, for $i=1,\ldots,n$, the vector $\ket{\alpha_i}$ is normalized when $p_i>0$ and equal to the zero vector when $p_i=0$, and similarly for the $\ket{\beta_i}$, $q_i$. Note that when $(A,B)$ is a pair of quantum states, meaning that $\tr[A]=1=\tr[B]$, then $p_1+\ldots+p_n=1=q_1+\ldots+q_n$, i.e. the $p_i$ and the $q_i$ are both finite probability distributions. In this case, the pair $(A,B)$ is  essentially the same as the pair $(\rho,\sigma)$ in \eqref{eq:pairform}, but without the classical flags $\ketbrar{i}$ and using the direct sum instead of the ordinary sum, since the component states $\ketbrar{\alpha_i}$ are now associated to different Hilbert spaces $\Hil_i$. One sees that such pairs are precisely those in $\mc F$, i.e. cq-states with pure components that have some overlap. 

Note that our current semiring $S_{\rm e.o.}$ contains the full-support matrix majorization semiring from \cite{farooq2024} with $d=2$. The semiring $S_{\rm m.r.}$ contains the minimal restrictions matrix majorization semiring for $d=2$ from Chapter \ref{cha:matrix}, Section \ref{Msec:findings}. More precisely, in both $S_{\rm e.o.}$ and $S_{\rm m.r.}$, the earlier studied matrix majorization semirings appear as subsemirings, whose elements correspond to those $(A,B)$ in $S_{\rm e.o.}$ or $S_{\rm m.r.}$ where $A$ and $B$ commute. This means that we already know something about the monotone homomorphisms of $S_{\rm e.o.}$ and $S_{\rm m.r.}$ on commuting pairs, and we will use this knowledge in our further analysis. 

We introduce some notation that will make the connection with the previously studied matrix majorization semirings for $d=2$ from Chapter \ref{cha:matrix} more clear. Let $(A,B)$ be a pair as in \eqref{eq:Vpair2}. If it has a component $(p_i\ketbrar{\alpha_i},q_i\ketbrar{\beta_i})$ such that $|\braket{\alpha_i}{\beta_i}|=1$ for some $i$, then we may regard this as a ``classical'' component and we simply denote it as $(p_i,q_i)$, which is allowed by the equivalence relation defined on the semirings. Also, when $\braket{\alpha_i}{\beta_i}=0$, but $\ket{\alpha_i},\ket{\beta_i}\neq0$, we denote it as $(p_i,0)\boxplus(0,q_i)$. Similarly, when $\ket{\alpha_i}=0$, respectively $\ket{\beta_i}=0$, we denote the component as $(0,q_i)$, respectively $(p_i,0)$. As an example, see \eqref{eq:preparation} later on. 


\section{Power Universal Elements}\label{sec:pu}
We will identify elements in the semirings $S_{\rm m.r.}$ and $S_{\rm e.o.}$ that are power universal. First, we present the following useful result due to Uhlmann \cite{Uhlmann1985}. 
\begin{proposition}[Uhlmann]\label{propo:Uhlmann}
    Let $\ket{\alpha_1},\ket{\beta_1}\in\Hil_1$ and $\ket{\alpha_2},\ket{\beta_2}\in\Hil_2$ be normalized vectors in some Hilbert spaces $\Hil_1,\Hil_2$. Then 
    \begin{equation}
        (\,\ketbrar{\alpha_1},\ketbrar{\beta_1}\,)\succeq(\,\ketbrar{\alpha_2},\ketbrar{\beta_2}\,)\quad\Longleftrightarrow\quad|\braket{\alpha_1}{\beta_1}|\leq|\braket{\alpha_2}{\beta_2}|. 
    \end{equation}
\end{proposition}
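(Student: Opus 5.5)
The plan is to prove the two directions separately. Write $c_1:=|\braket{\alpha_1}{\beta_1}|$ and $c_2:=|\braket{\alpha_2}{\beta_2}|$. By applying suitable unitaries on each side, which are channels with channel inverses, we may replace each pair by a canonical form inside a fixed two-dimensional space spanned by $\ket{0},\ket{1}$:
\begin{equation}
\ket{\alpha}=\sqrt{\tfrac{1+c}{2}}\ket{0}+\sqrt{\tfrac{1-c}{2}}\ket{1},\quad \ket{\beta}=\sqrt{\tfrac{1+c}{2}}\ket{0}-\sqrt{\tfrac{1-c}{2}}\ket{1}.
\end{equation}
These vectors satisfy $\braket{\alpha}{\beta}=c\geq0$. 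A phase on $\ket{\beta}$ does not change the projector $\ketbrar{\beta}$, so assuming the overlap is real and nonnegative is harmless.

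For the direction ($\Rightarrow$), we use the monotonicity of the fidelity $F(\rho,\sigma)=\|\sqrt\rho\sqrt\sigma\|_1$ under channels. This is the standard data-processing inequality for fidelity, which can be derived via Uhlmann's purification theorem or from the monotonicity of $D_{1/2}$. For pure states, $F(\ketbrar{\alpha},\ketbrar{\beta})=|\braket{\alpha}{\beta}|$. Suppose a channel $\Cha$ maps the first pair to the second. Then $c_2=F(\Cha(\ketbrar{\alpha_1}),\Cha(\ketbrar{\beta_1}))\geq F(\ketbrar{\alpha_1},\ketbrar{\beta_1})=c_1$.

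For the direction ($\Leftarrow$), assume $c_1\leq c_2$ and construct the channel explicitly. The case $c_1=c_2$ is immediate: an isometry between the canonical forms does the job, and conjugation by an isometry is a channel. For $c_1<c_2$, we apply a dephasing-type channel on the two-dimensional space:
\begin{equation}
\mathcal{N}_\lambda(X)=\lambda X+(1-\lambda)ZXZ,
\end{equation}
where $Z=\mathrm{diag}(1,-1)$. This channel shrinks the off-diagonal entries by the factor $2\lambda-1$. Applied to $\ketbrar{\alpha_1}$, it gives a mixed state, not the desired pure state, so dephasing alone is not enough. Instead, we use a measure-and-prepare channel, or equivalently the following unambiguous-discrimination-style isometry. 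Define a linear map $V$ on the span of $\ket{\alpha_1},\ket{\beta_1}$, with values in $\Hil_2\otimes\C^2$, by
\begin{equation}
V\ket{\alpha_1}=\sqrt{\tfrac{c_1}{c_2}}\ket{\alpha_2}\ket{0}+\sqrt{1-\tfrac{c_1}{c_2}}\ket{\alpha_2}\ket{1},\quad V\ket{\beta_1}=\sqrt{\tfrac{c_1}{c_2}}\ket{\beta_2}\ket{0}-\sqrt{1-\tfrac{c_1}{c_2}}\ket{\beta_2}\ket{1}.
\end{equation}
$V$ preserves the Gram matrix, since the image inner product is $c_2\bigl(\tfrac{c_1}{c_2}-(1-\tfrac{c_1}{c_2})\bigr)$. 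This is not quite right, so the coefficients will be adjusted. The flexible route is the following: choose an ancilla with states $\ket{e_\alpha},\ket{e_\beta}$ satisfying $\braket{e_\alpha}{e_\beta}=c_1/c_2\in[0,1]$. Map $\ket{\alpha_1}\mapsto\ket{\alpha_2}\otimes\ket{e_\alpha}$ and $\ket{\beta_1}\mapsto\ket{\beta_2}\otimes\ket{e_\beta}$. The inner products match, $c_2\cdot c_1/c_2=c_1$, and the norms are preserved. When $c_2>0$ the Gram matrices coincide, so this extends to an isometry $V$ from $\mathrm{span}\{\ket{\alpha_1},\ket{\beta_1}\}$, and further to an isometry on all of $\Hil_1$. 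The channel is $\Cha(X)=\tr_{\rm anc}[VXV^*]$, and it maps the first pair exactly to the second. If $c_2=0$ then $c_1=0$, and the equal-overlap case applies.

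The main step is the Gram-matrix argument. Two families of vectors with identical Gram matrices are related by an isometry between their spans, and that isometry extends to the whole space once the target space is enlarged if necessary. Tracing out the ancilla afterwards makes the channel land on the pure target pair.
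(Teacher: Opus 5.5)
\emph{Comparison with the thesis.} The thesis does not prove this proposition; it imports it from Uhlmann \cite{Uhlmann1985}, so there is no in-paper argument to compare with. Your proof is correct and self-contained.

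Necessity is the data-processing inequality for the fidelity. In the thesis's notation this is the monotonicity of $\Phi_{1/2,1/2}$, since $\Phi_{1/2,1/2}(\ketbrar{\alpha},\ketbrar{\beta})=|\braket{\alpha}{\beta}|$.

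Sufficiency is a Stinespring-type isometry $\ket{\alpha_1}\mapsto\ket{\alpha_2}\otimes\ket{e_\alpha}$, $\ket{\beta_1}\mapsto\ket{\beta_2}\otimes\ket{e_\beta}$ with $\braket{e_\alpha}{e_\beta}=c_1/c_2$, followed by a partial trace. This is exactly the two-state case of the Gram-matrix criterion quoted in Chapter~\ref{cha:open}. If $M$ is the Gram matrix of $(\ket{e_\alpha},\ket{e_\beta})$, your construction realizes $G(\text{target})\circ M=G(\text{source})$.

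The same picture also gives necessity without quoting fidelity monotonicity. Let $W$ be a Stinespring isometry of the channel. Purity of both outputs forces $W\ket{\alpha_1}=\ket{\alpha_2}\otimes\ket{e_\alpha}$ and $W\ket{\beta_1}=\ket{\beta_2}\otimes\ket{e_\beta}$ for some unit vectors. Hence $c_1=c_2\,|\braket{e_\alpha}{e_\beta}|\leq c_2$. This makes the whole argument elementary, and it carries over verbatim to $d$-tuples of pure states.

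\emph{Write-up fixes.}
\begin{itemize}
\item Delete the abandoned dephasing attempt and your first choice of $V$. Its image overlap is $2c_1-c_2$, not $c_1$, and it should not appear in the final argument.
\item The only normalization the construction needs is a phase on $\ket{\beta_i}$ making $\braket{\alpha_i}{\beta_i}=c_i\geq0$. The ``unitary onto a fixed two-dimensional space'' is unnecessary, and it is not literally a unitary when $\dim\Hil_i\neq2$.
\item When you extend $V$ from $\mathrm{span}\{\ket{\alpha_1},\ket{\beta_1}\}$ to all of $\Hil_1$, enlarge the ancilla itself, e.g.\ to dimension at least $\dim\Hil_1$. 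Do not add a direct summand to the target, or the partial trace over the ancilla is no longer defined.
\item Alternatively, avoid the extension altogether. Use $X\mapsto\tr_{\rm anc}[VPXPV^*]+\tr[(1-P)X]\,\omega$, where $P$ is the projection onto the span and $\omega$ is any fixed state on $\Hil_2$.
\end{itemize}
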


Using this proposition, we first prove a characterization for certain elements in $S_{\rm e.o.}$ that are power universal. 

\begin{proposition}\label{prop:puS}
A pair $(A,B)$ of (normalized) quantum states of the form \eqref{eq:Vpair2} such that $0<|\braket{\alpha_i}{\beta_i}|<1$ for all $i=1,\ldots,n$ is a power universal of $S_{\rm e.o.}$.
\end{proposition}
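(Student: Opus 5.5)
The plan is to verify the defining property \eqref{eq:puproperty} directly. Write $U:=(A,B)$ and set $c_i:=|\braket{\alpha_i}{\beta_i}|\in(0,1)$.

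\textbf{Reduction.} Take nonzero $x\preceq y$ in $S_{\rm e.o.}$. Then $x$ and $y$ have equal traces $(a,b)$, since every channel is trace preserving. Applying the trace channel gives $x\succeq(a,b)$, a one-component classical pair. Hence $x\boxtimes U^{\boxtimes k}\succeq (a,b)\boxtimes U^{\boxtimes k}$. It therefore suffices to show $U^{\boxtimes k}\succeq Y$ for the normalised pair $Y:=(a^{-1},b^{-1})\boxtimes y$ and some $k$, because then $(a,b)\boxtimes U^{\boxtimes k}\succeq (a,b)\boxtimes Y=y$. So the core claim is: for every normalised everywhere-overlapping pair $Y=\bigoplus_j(r_j\ketbrar{\gamma_j},s_j\ketbrar{\delta_j})$ with $r_j,s_j>0$ and $0<|\braket{\gamma_j}{\delta_j}|$, there is $k$ with $U^{\boxtimes k}\succeq Y$.

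\textbf{Using the overlaps of $U^{\boxtimes k}$.} The blocks of $U^{\boxtimes k}$ are pure pairs with weights $(p_{i_1}\cdots p_{i_k},\,q_{i_1}\cdots q_{i_k})$ and overlaps $c_{i_1}\cdots c_{i_k}\le c_{\max}^k\to0$. By Proposition \ref{propo:Uhlmann}, each block can be sent blockwise to any pure pair with larger overlap. Overlap is therefore a free resource once it is small enough. The other tool is a projective measurement inside a single block. Applied to a pure pair $(\ketbrar{\alpha},\ketbrar{\beta})$ with tiny overlap, measuring in an orthonormal basis $\{e_m\}$ splits the block into several sub-blocks with weights $(|\braket{e_m}{\alpha}|^2,|\braket{e_m}{\beta}|^2)$. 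By choosing the basis near $\{\ket\alpha,\ket{\alpha}^\perp\}$ and rotating, these weight ratios can be tuned over an open set. Combining the two tools, a sufficiently small-overlap block can be processed into a direct sum of sub-blocks. Each sub-block is again a rank-one pure pair, obtained from a basis vector by a suitable further isometry, and has prescribed weights and overlap. This gives considerable freedom in the weight pattern.

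\textbf{Matching weights exactly (main obstacle).} The target blocks are rank one, so no leftover mass may be dumped into them; the total weight vectors must match exactly. I would handle this by an openness and continuity argument. First, the set of weight-and-overlap patterns $\{(r_j,s_j,\text{overlap}_j)\}_j$ reachable from a single small-overlap pure block contains a neighbourhood of the ``perfectly distinguishable'' configuration. Second, $U^{\boxtimes k}$ is classically, i.e.\ in its weights, close to a pair of nearly orthogonal distributions. This is because for large $k$ the classical pair $(p,q)^{\otimes k}$, combined with the overlaps, behaves like a perfectly distinguishing experiment. One can then coarse-grain $U^{\boxtimes k}$ into a two-block pair $(t_1\ketbrar{\cdot},u_1\ketbrar{\cdot})\boxplus(t_2\ketbrar{\cdot},u_2\ketbrar{\cdot})$ with $t_1\approx1$ and $u_2\approx1$, by measuring and regrouping blocks via channels. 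This pair, by the splitting tool, majorizes $Y$.

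I expect this exact-matching step, choosing the measurement bases so that the total weights equal $(r_j)$ and $(s_j)$ on the nose while keeping every overlap nonzero, to be where the real work lies. I would attack it with an intermediate-value/dimension-count argument on the measurement angles.
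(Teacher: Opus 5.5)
Your reduction to the normalised core claim $U^{\boxtimes k}\succeq Y$ is correct and matches the closing step of the paper's proof. The gap is in how you propose to prove that core claim.

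Your splitting tool is a measurement in an orthonormal basis $\{e_m\}$. It leaves both states of outcome $m$ proportional to $\ketbrar{e_m}$, so every sub-block it produces is classical (overlap $1$). No isometry or channel applied afterwards can bring the overlap back below $1$ (Proposition \ref{propo:Uhlmann}).

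This is fatal, not merely inconvenient. Since $\sum_m\braket{\alpha}{e_m}\braket{e_m}{\beta}=\braket{\alpha}{\beta}\neq0$, some outcome has both weights positive. The image $(x\,\omega,y\,\omega)$ of that piece must lie below both components of $Y=(A_Y,B_Y)$, which forces ${\rm supp}\,\omega\subseteq{\rm supp}\,A_Y\cap{\rm supp}\,B_Y$. That intersection is $\{0\}$ whenever every block of $Y$ has $|\braket{\gamma_j}{\delta_j}|<1$, which is impossible for a state $\omega$.

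The same objection applies to your plan of first coarse-graining $U^{\boxtimes k}$, by measuring, into a nearly perfectly distinguishing two-block pair. If its blocks are classical, the resulting commuting pair cannot majorize such a $Y$. Finally, the step you single out as the main obstacle, exact matching of the weights, is left to an unspecified intermediate-value argument, so the proposal does not yet contain a proof.

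The missing idea is a non-projective instrument. It must split a pure block with exact prescribed weights while keeping the pieces pure and nearly orthogonal.

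The paper first uses blockwise Uhlmann channels to get $U\succeq(\ketbrar{\alpha},\ketbrar{\beta})$, where $(\alpha,\beta)$ is the block of largest overlap. Thus $F:=|\braket{\alpha}{\beta}|^2\in(0,1)$, and you never have to handle all $n^k$ blocks of $U^{\boxtimes k}$.

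For a target $Y=\bigoplus_i(p'_i\ketbrar{\alpha'_i},q'_i\ketbrar{\beta'_i})$, it takes $m$ with $F^mp'_i<q'_i$ for all $i$. On ${\rm span}\{\ket{\alpha}^{\otimes m},\ket{\beta}^{\otimes m}\}$ it then uses the POVM
\[
E_i=p'_i\ketbrar{\alpha}^{\otimes m}+\frac{q'_i-F^mp'_i}{1-F^m}\ketbrar{\alpha_m^\perp},
\]
which satisfies $\langle\alpha^{\otimes m}|E_i|\alpha^{\otimes m}\rangle=p'_i$ and $\langle\beta^{\otimes m}|E_i|\beta^{\otimes m}\rangle=q'_i$ on the nose.

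The L\"uders instrument $\tau\mapsto\bigoplus_iE_i^{1/2}\tau E_i^{1/2}$ then yields pure sub-blocks with overlaps $\sqrt{p'_i/q'_i}\,|\braket{\alpha}{\beta}|^m\to0$. A blockwise Uhlmann channel finishes the construction. The exact weights are built into the eigenvalues of $E_i$, so no openness or continuity argument is needed.
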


Note that the above condition for the power universals of $S_{\rm e.o.}$ essentially means that all the components of the two cq-states are, in a sense, purely quantum.

\begin{proof}
Let us first consider a simpler case where we show that a pair $(\ketbrar{\alpha},\ketbrar{\beta})$ of pure states such that $0<|\braket{\alpha}{\beta}|<1$ is a power universal of $S_{\rm e.o.}$. To this end, let us fix $(A',B')\in S_{\rm e.o.}$ such that $\tr[A']=1=\tr[B']$, where
\begin{equation}\label{eq:a'b'}
    A'=\bigoplus_{i=1}^{n'}p'_i\ketbrar{\alpha'_i},\quad B'=\bigoplus_{i=1}^{n'}q'_i\ketbrar{\beta'_i}.
\end{equation}
Since we are within $S_{\rm e.o.}$, we may assume that $p'_i,q'_i>0$ for all $i=1,\ldots,n'$. We next show that 
\begin{equation}
    (\ketbrar{\alpha},\ketbrar{\beta})^{\boxtimes m}\succeq (A',B')
\end{equation}    
for $m\in\N$ sufficiently large. 

Let us denote $F:=|\braket{\alpha}{\beta}|^2$. Let $m_0\in\N$ be such that $F^m p'_i<q'_i$ for all $m\geq m_0$ and $i=1,\ldots,n'$. Let $\ket{\alpha_m^\perp}$ be a unit vector in the 2-dimensional Hilbert space $\mc M_m$ spanned by $\ket{\alpha}^{\otimes m}$ and $\ket{\beta}^{\otimes m}$ which is perpendicular to $\ket{\alpha}^{\otimes m}$ for all $m\in\N$. It follows that we may define, for all $m\geq m_0$, the positive-operator-valued measure $E^m=(E^m_i)_{i=1}^{n'}$ on $\mc M_m$ through
\begin{equation}
E^m_i=p'_i\ketbrar{\alpha}^{\otimes m}+\frac{q'_i-F^mp'_i}{1-F^m}\ketbrar{\alpha_m^\perp}.
\end{equation}
In what follows, we write, e.g.,\ $|\alpha^{\otimes m}\rangle:=|\alpha\rangle^{\otimes m}$. Through a straightforward calculation, we find that
\begin{equation}
\left\langle\alpha^{\otimes m}\middle|E^m_i\middle|\alpha^{\otimes m}\right\rangle=p'_i,\qquad \left\langle\beta^{\otimes m}\middle|E^m_i\middle|\beta^{\otimes m}\right\rangle=q'_i.
\end{equation}
Let us define the unit vectors $\big|\tilde{\beta}_i^m\big\rangle=(q'_i)^{-1/2}(E^m_i)^{1/2}\ket{\beta}^{\otimes m}$. It is easily seen that
\begin{equation}
\big\langle\alpha^{\otimes m}\big|\tilde{\beta}_i^m\big\rangle
=\sqrt{p'_i/q'_i}\braket{\alpha}{\beta}^m\to 0
\end{equation}
as $m\to\infty$. Since $|\braket{\alpha'_i}{\beta'_i}|>0$ for $i=1,\ldots,n'$, there is $m_1\in\N$, $m_1\geq m_0$, such that, for any $m\geq m_1$ and $i=1,\ldots,n'$, $|\big\langle\alpha^{\otimes m}\big|\tilde{\beta}_i^m\big\rangle|\leq|\braket{\alpha'_i}{\beta'_i}|$. Then, according to Proposition \ref{propo:Uhlmann}, for any $m\geq m_1$ and $i=1,\ldots,n'$, there is a channel $\mc C^m_i$ such that $\mc C^m_i(\ketbrar{\alpha}^{\otimes m})=\ketbrar{\alpha'_i}$ and $\mc C^m_i\big(\big|\tilde{\beta}_i^m\big\rangle\big\langle\tilde{\beta}_i^m\big|\big)=\ketbrar{\beta'_i}$. Thus, when $m\geq m_1$, we may define the channel $\mc C_m$ through
\begin{equation}
\mc C_m(\tau)=\bigoplus_{i=1}^{n'}\mc C_i^m\big((E^m_i)^{1/2}\tau(E_i^m)^{1/2}\big),
\end{equation}
and it is seen through a simple calculation that
\begin{align}
\mc C_m(\ketbrar{\alpha}^{\otimes m})&=\bigoplus_{i=1}^{n'}p'_i\ketbrar{\alpha'_i}=A',\\
\mc C_m(\ketbrar{\beta}^{\otimes m})&=\bigoplus_{i=1}^{n'}q'_i\ketbrar{\beta'_i}=B'.
\end{align}

Let us next assume that the state pair $(A,B)$ is as in \eqref{eq:Vpair2} with $0<|\braket{\alpha_i}{\beta_i}|<1$ for $i=1,\ldots,n$. We are free to assume that $|\braket{\alpha_1}{\beta_1}|^2$ is the greatest of the fidelities $|\braket{\alpha_i}{\beta_i}|^2$. According to Proposition \ref{propo:Uhlmann}, we have channels $\mc D_i$ such that $\mc D_i(\ketbrar{\alpha_i})=\ketbrar{\alpha_1}$ and $\mc D_i(\ketbrar{\beta_i})=\ketbrar{\beta_1}$ for $i=1,\ldots,n$; for $i=1$ we may naturally choose the identity map $\mc D_1={\rm id}$. Let $P_i$ be the projection of the Hilbert space onto the $i$'th component in \eqref{eq:Vpair2} and define the channel $\mc D$ by $\mc D(\tau)=\sum_{i=1}^n\mc D_i(P_i\tau P_i)$. It follows that $\mc D(A)=\ketbrar{\alpha_1}$ and $\mc D(B)=\ketbrar{\beta_1}$, so that $(A,B)$ majorizes $(\ketbrar{\alpha_1},\ketbrar{\beta_1})$. Since, by assumption, $0<|\braket{\alpha_1}{\beta_1}|<1$, we know from what we have shown thus far that $(\ketbrar{\alpha_1},\ketbrar{\beta_1})^{\boxtimes m}\succeq (A',B')$ for any normalized element $(A',B')\in S_{\rm e.o.}$ for sufficiently large $m\in\N$. 
Thus, also for any normalized $(A',B')\in S_{\rm e.o.}$,
\begin{equation}\label{eq:apu2}
(A,B)^{\boxtimes m}\succeq(\ketbrar{\alpha_1},\ketbrar{\beta_1})^{\boxtimes m}\succeq (A',B')
\end{equation}
for $m\in\N$ large enough. 

We are finally able to show that the above $(A,B)$ is a power universal. For this, consider pairs $(A_1,B_1),(A_2,B_2)\in S_{\rm e.o.}$ such that $(A_1,B_1)\preceq(A_2,B_2)$. This clearly means especially that $\tr[A_1]=\tr[A_2]$ and $\tr[B_1]=\tr[B_2]$. Define $A':=\tr[A_2]^{-1}A_2$ and $B':=\tr[B_2]^{-1}B_2$. Since we have already proven that, for $m\in\N$ large enough, we have \eqref{eq:apu2}, we now have
\begin{align}
(A_1,B_1)\boxtimes(A,B)^{\boxtimes m}&=(A_1\otimes A^{\otimes m},B_1\otimes B^{\otimes m})\\
&\succeq\big(\tr[A_1]A^{\otimes m},\tr[B_1]B^{\otimes m}\big)\\
&=\big(\tr[A_2]A^{\otimes m},\tr[B_2]B^{\otimes m}\big)\\
&\succeq\big(\tr[A_2]A',\tr[B_2]B'\big)=(A_2,B_2)
\end{align}
for $m\in\N$ large enough. This shows that $(A,B)$ is a power universal.
\end{proof}

Using the above result, we may easily characterize certain power universals of $S_{\rm m.r.}$. These are essentially of the same type as the power universals in $S_{\rm e.o.}$, but with an additional orthogonal component. 

\begin{proposition}\label{prop:pumr}
A pair $(A,B)$ of (normalized) quantum states of the form \eqref{eq:Vpair2} such that $|\braket{\alpha_i}{\beta_i}|<1$ for all $i=1,\ldots,n$, and $\braket{\alpha_i}{\beta_i}=0$ for some $i$ with $\ket{\alpha_i},\ket{\beta_i}\neq0$, is a power universal of $S_{\rm m.r.}$.
\end{proposition}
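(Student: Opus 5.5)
Following the template of Proposition \ref{prop:puS}, we first reduce to a canonical small pair, then show its tensor powers majorize every normalized element of $S_{\rm m.r.}$, and finally obtain power universality by the same trace-rescaling argument as at the end of that proof.

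\textbf{Reduction.} Split the components of $(A,B)$ into three groups: the orthogonal component $i_0$ (with $\braket{\alpha_{i_0}}{\beta_{i_0}}=0$ and both vectors nonzero), components with $0<|\braket{\alpha_i}{\beta_i}|<1$, and components where one vector vanishes. Since $(A,B)\in V_{\rm m.r.}$, at least one component has nonzero overlap. Pick one such component and call its fidelity $F:=|\braket{\alpha_1}{\beta_1}|^2\in(0,1)$. Apply Proposition \ref{propo:Uhlmann} componentwise: every overlapping component is mapped to $(\ketbrar{\alpha_1},\ketbrar{\beta_1})$ (possible after choosing $F$ maximal), and every orthogonal or one-sided component is coarse-grained to the classical rows $(p,0)$ or $(0,q)$. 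This yields
\begin{equation}
(A,B)\succeq (a\ketbrar{\alpha_1},b\ketbrar{\beta_1})\boxplus(c,0)\boxplus(0,e)=:U,
\end{equation}
with $a,b,c,e>0$, where $c,e>0$ is guaranteed by the orthogonal component.

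\textbf{Powers of $U$.} Expanding $U^{\boxtimes m}$, we discard cross terms by merging them into the classical $(c_m,0)$ and $(0,e_m)$ rows via coarse graining. Doing so gives
\begin{equation}
U^{\boxtimes m}\succeq (a^m\ketbrar{\alpha_1}^{\otimes m},b^m\ketbrar{\beta_1}^{\otimes m})\boxplus(1-a^m,0)\boxplus(0,1-b^m),
\end{equation}
which mirrors the matrix $\tilde U_m$ in Proposition \ref{Mlem:poweruniversalconditions}. Now fix a normalized target $(A',B')\in S_{\rm m.r.}$. Write it as an overlapping part $(A'_0,B'_0)=\bigoplus_{i\in I_0} (p'_i\ketbrar{\alpha'_i},q'_i\ketbrar{\beta'_i})$, taken over the indices with nonzero overlap (this set is nonempty), together with a remainder $R'$. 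Every component of $R'$ is either orthogonal or one-sided, so $R'$ is majorized by a combination of the classical rows $(x,0)\boxplus(0,y)$.

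\textbf{Building the channel.} The key step, and the one I expect to be the main obstacle, is to mimic the matrix $S_m$ from the proof of Proposition \ref{Mlem:poweruniversalconditions} in a quantum way. On the 2-dimensional span $\mc M_m$, we use the POVM construction of Proposition \ref{prop:puS}, but with total weights $\sum_{i\in I_0}E^m_i\le \mathbb 1$ chosen so that
\begin{equation}
\langle\alpha^{\otimes m}|E^m_i|\alpha^{\otimes m}\rangle\,a^m=p'_i\,\theta,\qquad \langle\beta^{\otimes m}|E^m_i|\beta^{\otimes m}\rangle\,b^m=q'_i\,\theta',
\end{equation}
for a small fixed fraction of the mass of $(A'_0,B'_0)$. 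Equivalently, as in $S_m$, we use a uniform-split trick: send part of the quantum block to the overlapping outputs, with the Uhlmann channels applicable once $F^m$ is small, and send the residual outcome (the complement POVM element) together with the classical rows $(1-a^m,0)$ and $(0,1-b^m)$ onto the remaining mass of $(A'_0,B'_0)$ and onto $R'$. The delicate point is checking nonnegativity of all weights. This requires, first, that $a^m,b^m\to0$ (true, since $c,e>0$ force $a,b<1$), so that the classical rows can supply arbitrary amounts to the $(x,0)$ and $(0,y)$ parts; and second, that the overlapping outputs receive exactly $p'_i,q'_i$. I would handle the latter by routing a tiny amount $a^m\lambda_i$ of the quantum block to each $i\in I_0$ with fidelity below $|\braket{\alpha'_i}{\beta'_i}|$ via Uhlmann's channel, and then topping up with classical-mass mixtures. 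Here the mixing is the subtle part: a classical $(x,0)$ combined with $(0,y)$ does not produce an overlapping pure pair. I would therefore instead let the quantum block carry the full overlapping part. This works because $a^m$ is tiny while $p'_i$ is fixed, so I would first use the classical rows to build a pair $(s,t)\boxplus\ldots$ of flags, and instead choose the POVM on $\mc M_m\oplus$(classical flags) jointly, using $F^m\to0$ as in Proposition \ref{prop:puS}. Once $U^{\boxtimes m}\succeq(A',B')$ holds for all large $m$, the trace-rescaling argument from Proposition \ref{prop:puS} completes the proof.
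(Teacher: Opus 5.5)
Your first two steps are sound and coincide with the paper's. You send every overlapping block to the highest-fidelity one via Proposition \ref{propo:Uhlmann}, coarse-grain the orthogonal and one-sided blocks to flags, and then discard the cross terms of $U^{\boxtimes m}$. This gives exactly the paper's pair $(a^m\ketbrar{\alpha_1}^{\otimes m}\oplus(1-a^m)\oplus0,\ b^m\ketbrar{\beta_1}^{\otimes m}\oplus0\oplus(1-b^m))$.

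The gap is in the last step, which is where the content of the proposition lies, and your reasoning there fails in two places. First, channels preserve trace. The quantum block has traces $a^m,b^m\to0$, while the overlapping part of the target has fixed traces $s=\tr[A'_0]>0$ and $t=\tr[B'_0]>0$. So the quantum block cannot ``carry the full overlapping part'' once $a^m<s$. That $a^m$ is tiny is exactly why this fails, not why it works, and the same obstruction forces your fraction $\theta$ to shrink like $a^m$. Second, your claim that $(x,0)\boxplus(0,y)$ ``does not produce an overlapping pure pair'' is false. The two flags are orthogonal, hence perfectly distinguishable, so a measure-and-prepare channel can output $\ketbrar{\alpha}$ on the first flag and $\ketbrar{\beta}$ on the second, in the same block, for any unit vectors $\ket{\alpha},\ket{\beta}$. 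Merging the two rows gives the parallel pair $(x,y)$, but preparing from them does not.

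This preparation step is the missing idea. The paper first simplifies the target. Take $(A'_0,B'_0)\in S_{\rm e.o.}$ and the remainder $R'$ as in your notation. The proof of Proposition \ref{prop:puS} gives $(\ketbrar{\alpha},\ketbrar{\beta})\succeq(s^{-1}A'_0,t^{-1}B'_0)$ for any pure pair with small enough nonzero overlap. Hence $(s\ketbrar{\alpha}\oplus(1-s)\oplus0,\,t\ketbrar{\beta}\oplus0\oplus(1-t))\succeq(A',B')$, with $R'$ prepared from the flags. Now choose $m$ so that $a^m<s$, $b^m<t$ and $|\braket{\alpha_1}{\beta_1}|^m\leq|\braket{\alpha}{\beta}|$. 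On $\mc M_m$, apply Uhlmann's channel sending $\ket{\alpha_1}^{\otimes m},\ket{\beta_1}^{\otimes m}$ to $\ket{\alpha},\ket{\beta}$. On the flags, apply the preparation channel $1\oplus0\mapsto\frac{s-a^m}{1-a^m}\ketbrar{\alpha}\oplus\frac{1-s}{1-a^m}\oplus0$ and $0\oplus1\mapsto\frac{t-b^m}{1-b^m}\ketbrar{\beta}\oplus0\oplus\frac{1-t}{1-b^m}$. The two contributions add up to exactly $s\ketbrar{\alpha}$ and $t\ketbrar{\beta}$ in the same block. Your closing idea of a ``joint POVM on $\mc M_m\oplus$(classical flags)'' could only succeed through this same mechanism. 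As stated it is not a construction, so your proposal does not establish $U^{\boxtimes m}\succeq(A',B')$.
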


\begin{proof}
Let $(A,B)$ be a pair as stated in the claim. By using the equivalence relation defined on $S_{\rm m.r.}$, we may assume that there are $\ell$ with $1\leq\ell<n$, unit vectors $\ket{\alpha_i}$ and $\ket{\beta_i}$ such that $0<|\braket{\alpha_i}{\beta_i}|<1$, $i=1,\ldots,\ell$, positive probabilities $p_1,\ldots,p_\ell$ and $q_1,\ldots,q_\ell$ with $P:=p_1+\cdots+p_\ell<1$ and $Q:=q_1+\cdots+q_\ell<1$, and non-zero vectors $p,q\in\R_+^{n-\ell}$, $p=(p_{\ell+1},\ldots,p_{n})$, $q=(q_{\ell+1},\ldots,q_n)$, with $\|p\|_1+P=1=\|q\|_1+Q$ such that
\begin{align}
A&=\left(\bigoplus_{i=1}^\ell p_i\ketbrar{\alpha_i}\right)\oplus p,\\
B&=\left(\bigoplus_{i=1}^\ell q_i\ketbrar{\beta_i}\right)\oplus q,
\end{align}
and additionally 
\begin{equation}
{\rm supp}\,p\cap{\rm supp}\,q=\emptyset,
\end{equation}
i.e.,\ $p$ and $q$ are disjoint.

It follows that, for each $m\in\N$, 
\begin{align}
A^{\otimes m}&=\left(\bigoplus_{i_1,\ldots,i_m=1}^\ell p_{i_1}\cdots p_{i_m}\ketbrar{\alpha_{i_1}}\otimes\cdots\otimes\ketbrar{\alpha_{i_m}}\right)\oplus A^{(m)},\\
B^{\otimes m}&=\left(\bigoplus_{i_1,\ldots,i_m=1}^\ell q_{i_1}\cdots q_{i_m}\ketbrar{\beta_{i_1}}\otimes\cdots\otimes\ketbrar{\beta_{i_m}}\right)\oplus B^{(m)}
\end{align}
for some operators $A^{(m)}$ and $B^{(m)}$ where $A^{(m)}$ is a tensor product of $p$ (seen as a diagonal operator) with some other operator and $B^{(m)}$ is a tensor product of $q$ with some other operator. Since $p$ and $q$ have disjoint supports, $A^{(m)}$ and $B^{(m)}$ commute, so we may identify them with some sequences $p^{(m)}$ and $q^{(m)}$ (classical blocks) of non-negative numbers with disjoint supports. We may assume that the block $i=1$ has the highest fidelity among $|\braket{\alpha_i}{\beta_i}|^2$ for $i=1,\ldots,\ell$.
Then, by Proposition \ref{propo:Uhlmann}, there are channels $\mc C_{i_1,\ldots,i_m}$ such that
\begin{align}
\mc C_{i_1,\ldots,i_m}\big(\ketbrar{\alpha_{i_1}}\otimes\cdots\otimes\ketbrar{\alpha_{i_m}}\big)&=\ketbrar{\alpha_1}^{\otimes m},\\
\mc C_{i_1,\ldots,i_m}\big(\ketbrar{\beta_{i_1}}\otimes\cdots\otimes\ketbrar{\beta_{i_m}}\big)&=\ketbrar{\beta_1}^{\otimes m}.
\end{align}
Let $P_{i_1,\ldots,i_m}$ be the projections onto the $\ell^m$ first blocks and $Q_j$, $j=\ell^m+1,\ldots,n^m$ be the projections onto the last (classical) blocks. Let us also denote by $\big({\rm supp}\,p^{(m)}\big)^c$ the complement of the support of $p^{(m)}$ among the $n^m-\ell^m$ final classical blocks. Naturally, this complement contains the support of $q^{(m)}$. Let us set up the quantum channel $\mc C_m$
\begin{align}
\mc C_m(R)=&\sum_{i_1,\ldots,i_m=1}^\ell \mc C_{i_1,\ldots,i_m}(P_{i_1,\ldots,i_m}RP_{i_1,\ldots,i_m})\\
&\oplus\,\left(\sum_{j\in{\rm supp}\,p^{(m)}}\tr[Q_jRQ_j]\right)\,\oplus\,\left(\sum_{k\in\big({\rm supp}\,p^{(m)}\big)^c}\tr[Q_k RQ_k]\right).
\end{align}
We now have 
\begin{align}
\mc C_m(A^{\otimes m})&=P^m\ketbrar{\alpha_1}^{\otimes m}\oplus(1-P^m)\oplus 0=:A_m,\\
\mc C_m(B^{\otimes m})&=Q^m\ketbrar{\beta_1}^{\otimes m}\oplus 0\oplus(1-Q^m)=:B_m,
\end{align}
so that $(A,B)^{\boxtimes m}\succeq (A_m,B_m)$.

Let $(A',B')$ be a pair of operators within $S_{\rm m.r.}$ which we intend to majorize with a high enough power of the input $(A,B)$. Let us assume that this pair has the form of \eqref{eq:a'b'}, except that now not all $p'_i$ and $q'_i$ need to be non-zero. However, there needs to be a block $i_0$ such that $p'_{i_0},q'_{i_0}>0$ and $\braket{\alpha'_{i_0}}{\beta'_{i_0}}\neq0$. By reordering, we may assume that $i_0=1$. Let us now upper bound this target pair $(A',B')$. We may write $A'=A_1\oplus A_2$ and $B'=B_1\oplus B_2$ where $(A_1,B_1)\in S_{\rm e.o.}$ and $A_2$ and $B_2$ have blocks that are completely disjoint. Let $s:=\tr[A_1]$ and $t:=\tr[B_1]$. We now claim that $(\tilde{A},\tilde{B})\succeq (A',B')$ where
\begin{align}
\tilde{A}&=s\ketbrar{\alpha}\oplus(1-s)\oplus 0,\\
\tilde{B}&=t\ketbrar{\beta}\oplus 0\oplus(1-t)
\end{align}
where $\ket{\alpha}$ and $\ket{\beta}$ are some non-orthogonal unit vectors. Indeed, from the proof Proposition \ref{prop:puS}, we know that (when $|\braket{\alpha}{\beta}|$ is sufficiently small) \sloppy{${(\ketbrar{\alpha},\ketbrar{\beta})\succeq (\tr[A_1]^{-1}A_1,\tr[B_1]^{-1}B_1)}$}, and from the remaining orthogonal part \sloppy{${((1-s)\oplus0,0\oplus(1-t))}$} of $(\tilde{A},\tilde{B})$ we can easily find a channel that creates $(A_2,B_2)$. We next show that there is $m\in\N$ such that $(A,B)^{\boxtimes m}\succeq (A',B')$ by showing the intermediary $(A_m,B_m)\succeq(\tilde{A},\tilde{B})$ for sufficiently large $m\in\N$.

Let us denote the 2-dimensional space spanned by $\ket{\alpha_1}^{\otimes m}$ and $\ket{\beta_1}^{\otimes m}$ by $\mc M_m$ and the projection of $\mc M_m\oplus\C\oplus\C$ onto $\mc M_m$ by $R_m$. The projection orthogonal to this we denote by $R_m^\perp$. Using again Proposition \ref{propo:Uhlmann}, we find that, for $m\in\N$ large enough, there is a channel $\mc D_m$ such that $\mc D_m(\ketbrar{\alpha_1}^{\otimes m})=\ketbrar{\alpha}$ and $\mc D_m(\ketbrar{\beta_1}^{\otimes m})=\ketbrar{\beta}$. Let us assume that $m\in\N$ is sufficiently large so that $P^m<s$ and $Q^m<t$. 
We may now define a channel $\mc E_m$ on the last two blocks of $(A_m,B_m)$ (which together can be regarded as a pair of qubits) such that
\begin{align}
\mc E_m(1\oplus 0)&=\frac{s-P^m}{1-P^m}\ketbrar{\alpha}\oplus\frac{1-s}{1-P^m}\oplus0,\\
\mc E_m(0\oplus 1)&=\frac{t-Q^m}{1-Q^m}\ketbrar{\beta}\oplus0\oplus\frac{1-t}{1-Q^m}.
\end{align}
Define the channel $\mc C_m$ through
\begin{equation}
\mc C_m(\tau)=\mc D_m(R_m\tau R_m)+\mc E_m(R_m^\perp\tau R_m^\perp).
\end{equation}
It now follows from a straightforward calculation that $\mc C_m(A_m)=\tilde{A}$ and $\mc C_m(B_m)=\tilde{B}$. Thus, we may deduce that $(A,B)^{\boxtimes m}\succeq(A',B')$ for $m\in\N$ large enough. The fact that $(A,B)$ is a power universal now follows through the same simple logic as in the end of the proof of Proposition \ref{prop:puS}.
\end{proof}
    
\section{Identifying the Monotone Homomorphisms and Derivations}\label{sec:homomorphismsderivations}
We will identify all non-degenerate monotone homomorphisms with trivial kernel and non-zero monotone derivations associated with the semiring $S_{\rm m.r.}$, see Proposition \ref{prop:monhom}. We will also derive partial results for $S_{\rm e.o.}$, summarized in Remark \ref{rmk:Seo}. According to Theorem \ref{thm:Fritz2022}, sufficient conditions for large-sample and catalytic majorization can be stated in terms of these homomorphisms and derivations. In the following calculations and proofs we will repeatedly use vectors
\begin{equation}\label{eq:anglevectors}
|\psi(\theta)\rangle:=\cos{\theta}\,|0\rangle+\sin{\theta}\,|1\rangle\in\C^2,\qquad\theta\in[0,2\pi),
\end{equation}
where $\{|0\rangle,|1\rangle\}$ is some fixed orthonormal basis of $\C^2$. The next proposition shows that, compared with the earlier classical results on matrix majorization, a new parameter $z$ appears due to the quantum nature of the pairs of states. 


\begin{proposition}\label{prop:z}
    Let $\Phi:S_{*}\rightarrow\K$ be a monotone homomorphism, where $S_{*}\in\{S_{\rm m.r.},S_{\rm e.o.}\}$ and $\K\in\{\R_+,\R_+^{\rm op},\T\R_+,\T\R_+^{\rm op}\}$. There exists $z\in\R$ such that 
    \begin{equation}
        \Phi(\ketbrar{\alpha},\ketbrar{\beta})=|\braket{\alpha}{\beta}|^{2z}
    \end{equation}
    for all normalized $\ket{\alpha},\ket{\beta}\in\Hil$ such that $\braket{\alpha}{\beta}\neq0$, and $\Hil$ is a Hilbert space. 
\end{proposition}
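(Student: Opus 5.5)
Since a pair of pure states is fixed by $\approx$ up to isometries, Proposition~\ref{propo:Uhlmann} gives both $(\ketbrar{\alpha},\ketbrar{\beta})\succeq(\ketbrar{\alpha'},\ketbrar{\beta'})$ and the reverse whenever $|\braket{\alpha}{\beta}|=|\braket{\alpha'}{\beta'}|$. Monotonicity in both directions then forces $\Phi$ to take the same value on such pairs. We may therefore define $f:(0,1]\to\K$ by $f(t):=\Phi(\ketbrar{\alpha},\ketbrar{\beta})$ for any normalized pair with $|\braket{\alpha}{\beta}|^2=t$; a concrete choice is $\ket{\alpha}=\ket{\psi(0)}$, $\ket{\beta}=\ket{\psi(\theta)}$ from \eqref{eq:anglevectors}. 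All these pairs lie in $S_{\rm e.o.}\subseteq S_{\rm m.r.}$, so the argument is the same for both semirings. The case $t=1$ is the classical pair $(1,1)$, which is the unit, so $f(1)=1$.

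The next step is multiplicativity. Because $(\ketbrar{\alpha}\otimes\ketbrar{\alpha'},\ketbrar{\beta}\otimes\ketbrar{\beta'})$ is again a pure pair, with overlap $\braket{\alpha}{\beta}\braket{\alpha'}{\beta'}$, multiplicativity of $\Phi$ gives $f(st)=f(s)f(t)$ for all $s,t\in(0,1]$. In each of the four semirings the multiplication is ordinary multiplication. In particular $f(t)^2=f(t^2)$, and choosing $s$ with $s^2=t$ gives $f(t)=f(\sqrt t)^2\geq0$.

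The main obstacle is ruling out $f(t)=0$ and excluding the pathological solutions of the multiplicative Cauchy equation. For the former: if $f(t_0)=0$ for some $t_0<1$, multiplicativity makes $f$ vanish on $(0,t_0]$, and then on all of $(0,1)$ via $f(t)^n=f(t^n)$. I would try to get a contradiction from the fact that the kernel is an ideal combined with monotonicity. Proposition~\ref{prop:puS} shows such pairs majorize arbitrary normalized elements in large powers, so $\Phi(x)\leq 0$ or $\geq 0$ comparisons, together with $\Phi(1)=1$, should force contradictions with the ordering. For instance, $(\ketbrar{\alpha},\ketbrar{\beta})^{\boxtimes m}\succeq(1,1)$ via the trace channel, combined with the reverse-type relations from power universality. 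If I cannot exclude this case in general, I would note that the later applications only need homomorphisms with trivial kernel, and treat this as the degenerate branch. For the latter: by Proposition~\ref{propo:Uhlmann}, $t\le t'$ implies $(\ketbrar{\alpha},\ketbrar{\beta})\succeq(\ketbrar{\alpha'},\ketbrar{\beta'})$, so $f$ is monotone on $(0,1]$, nondecreasing or nonincreasing according to $\K$.

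Finally, set $g(x):=\log f(e^{-x})$ on $[0,\infty)$. This $g$ is additive and monotone, hence linear, say $g(x)=-zx$, which gives $f(t)=t^{z}$. Rewriting in terms of $|\braket{\alpha}{\beta}|$ yields the form $|\braket{\alpha}{\beta}|^{2z}$ stated in the proposition.
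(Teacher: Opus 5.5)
Your route is the paper's route. You reduce to a function $f$ of the fidelity, obtain $f(st)=f(s)f(t)$ from $\boxtimes$, and use Proposition~\ref{propo:Uhlmann} as the regularity condition for the Cauchy equation. You use monotonicity of $f$ where the paper uses boundedness on $[1/2,1]$; either works once $f>0$.

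The genuine gap is the step you flagged, and no argument can close it. On the semigroup $(0,1]$, the equation $f(st)=f(s)f(t)$ with $f(1)=1$ has, besides $t^z$, the solution $f=\chi_{\{1\}}$. For $\K\in\{\R_+^{\rm op},\T\R_+^{\rm op}\}$ this solution is realized by genuine monotone homomorphisms:
\begin{itemize}
\item Into $\T\R_+^{\rm op}$, take $\Phi(A,B):=\max_i\chi_{\{1\}}\big(|\braket{\alpha_i}{\beta_i}|\big)$, which equals $1$ exactly when ${\rm supp}\,A\cap{\rm supp}\,B\neq\{0\}$. It is additive, and multiplicative because a block of a tensor product is parallel iff both factor blocks are. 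It is monotone: a common support vector $v$ gives $A,B\geq\epsilon\ketbrar{v}$ for some $\epsilon>0$, hence $\Cha(A),\Cha(B)\geq\epsilon\,\Cha(\ketbrar{v})\neq0$, so the output pair again has a common support vector.
\item Into $\R_+^{\rm op}$, take $\Phi(A,B)=\sum_{i:\,|\braket{\alpha_i}{\beta_i}|=1}p_i^\alpha q_i^{1-\alpha}$. This is the pointwise limit $z\to\infty$ of $\Phi_{\alpha,z}$, i.e.\ the functions $f_{(A,B)}(\alpha,\infty)$ in the proof of Theorem~\ref{thm:minimal}, so it is a monotone homomorphism.
\end{itemize}
Both vanish on every pure pair with $0<|\braket{\alpha}{\beta}|<1$, so no real $z$ exists for them. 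Your power-universality idea only yields vacuous inequalities such as $\Phi(y)\geq\Phi(x)\Phi(u)^k=0$.

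So the proposition as stated needs an extra hypothesis, and your fallback is the right repair. Note, though, that these examples are non-degenerate, since $(1,1)\preceq(\ketbrar{0},\ketbrar{\psi(\pi/4)})$ while their values are $1$ and $0$. It is therefore the kernel condition, not degeneracy, that removes them.
\begin{itemize}
\item For $\K\in\{\R_+,\T\R_+\}$ nothing extra is needed. Your monotonicity observation makes $f$ nonincreasing with $f(1)=1$, so $f\geq1$.
\item For $\K\in\{\R_+^{\rm op},\T\R_+^{\rm op}\}$, assume $\Phi$ has trivial kernel. Then $f>0$, because pure pairs with nonzero overlap are nonzero elements of $S_{\rm e.o.}\subseteq S_{\rm m.r.}$. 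Your argument with $g(x)=\log f(e^{-x})$, which is additive and monotone and hence linear, then completes the proof.
\end{itemize}
This is exactly the generality in which the result is used, namely Proposition~\ref{prop:monhom} and condition (i) of Theorem~\ref{thm:Fritz2022}, so nothing downstream breaks.

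The paper's own proof has the same hole. The regular-solution theorem it cites concerns $(0,\infty)$, where a single zero forces $\varphi\equiv0$. On $(0,1]$, a zero at some $t_0<1$ only propagates to $(0,1)$, so $\varphi(1)=1$ does not exclude $\chi_{\{1\}}$.
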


\begin{proof}
    Fix orthonormal vectors $\ket{0}$ and $\ket{1}$ that span the subspace $\mc M$ where $\ket{\alpha}$ and $\ket{\beta}$ reside. For any normalized $|\alpha^\perp\rangle\in\mc M$ perpendicular to $\ket{\alpha}$ we have 
    \begin{equation}
        |\braket{\alpha}{\beta}|^2+|\langle\alpha^\perp|\beta\rangle|^2=1. 
    \end{equation}
    Hence 
    \begin{equation}
        |\braket{\alpha}{\beta}|=\cos\theta,\quad|\langle\alpha^\perp|\beta\rangle|=\sin\theta,
    \end{equation}
    for some $\theta\in[0,\pi/2]$. By multiplying with an appropriate phase, we may choose $|\alpha^\perp\rangle$ such that 
    \begin{equation}
        \braket{\alpha}{\beta}\langle\beta|\alpha^\perp\rangle=\cos\theta\sin\theta. 
    \end{equation}
    Via the unitary operation defined by $\ket{\alpha}\mapsto\ket{0},|\alpha^\perp\rangle\mapsto\ket{1}$, the pair $(\ketbra{\alpha}{\alpha},\ketbra{\beta}{\beta})$ is equivalent (according to the equivalence relation $\approx$ defined in Section \ref{sec:definingthesemirings}) to the pair 
    \begin{equation}\label{eq:pairstandardform}
        (\ketbra{0}{0},\cos^2\theta\ketbra{0}{0}+\cos\theta\sin\theta\ketbra{0}{1}+\cos\theta\sin\theta\ketbra{1}{0}+\sin^2\theta\ketbra{1}{1}). 
    \end{equation}
    From this, we see that $\Phi(\ketbrar{\alpha},\ketbrar{\beta})$ only depends on the fidelity $\cos^2\theta=|\braket{\alpha}{\beta}|^2$. Therefore, we can define $\varphi:(0,1]\rightarrow\R$, 
    \begin{equation}
        \varphi(x):=\Phi(\ketbrar{\alpha},\ketbrar{\beta}), 
    \end{equation}
    where $\ket{\alpha},\ket{\beta}$ are any normalized vectors such that $|\braket{\alpha}{\beta}|^2=x$. Note that 
    \begin{align}
        &(\ketbrar{\alpha_1},\ketbrar{\beta_1})\boxtimes(\ketbrar{\alpha_2},\ketbrar{\beta_2})\\            &\quad\quad=((\ket{\alpha_1}\otimes\ket{\alpha_2})(\bra{\alpha_1}\otimes\bra{\alpha_2}),(\ket{\beta_1}\otimes\ket{\beta_2})(\bra{\beta_1}\otimes\bra{\beta_2})) 
    \end{align}
    and 
    \begin{equation}
        |(\bra{\alpha_1}\otimes\bra{\alpha_2})(\ket{\beta_1}\otimes\ket{\beta_2})|^2=|\braket{\alpha_1}{\beta_1}|^2|\braket{\alpha_2}{\beta_2}|^2. 
    \end{equation}
    Thus, by multiplicativity of $\Phi$, the function $\varphi$ is multiplicative as well, i.e. 
    \begin{equation}
        \varphi(xy)=\varphi(x)\varphi(y)\text{ for all }x,y\in(0,1]. 
    \end{equation}    
    Using Proposition \ref{propo:Uhlmann}, for any $\ket{\alpha},\ket{\beta}$ such that 
    \begin{equation}
        |\braket{\alpha}{\beta}|^2\geq\frac{1}{2}=|\braket{0}{\psi(\pi/4)}|^2, 
    \end{equation}
    where $|\psi(\pi/4)\rangle$ is defined as in \eqref{eq:anglevectors}, we have 
    \begin{equation}
        (1,1)\preceq(\ketbrar{\alpha},\ketbrar{\beta})\preceq(\ketbrar{0},\ketbrar{\psi(\pi/4)}). 
    \end{equation}
    Hence for all $x\in[1/2,1]$, we have $\varphi(x)\leq1$ if $\K=\R_+^{\rm op},\T\R_+^{\rm op}$, and $\varphi(x)\leq\varphi(1/2)$ if $\K=\R_+,\T\R_+$. Therefore, $\varphi$ is bounded from above on a set of positive measure, which establishes that $\varphi$ is a regular solution of the multiplicative version of Cauchy's Functional Equation \cite[Chapter 3, Proposition 6]{Aczel1989}: there exists $z\in\R$ such that $\varphi(x)=x^z$ or $\varphi$ vanishes everywhere. The latter case is impossible since $\varphi(1)=\Phi(1,1)=1$ by the properties of a homomorphism. 
\end{proof}

\begin{remark}\label{rmk:classical}
Let us summarize some results that were established in Chapter \ref{cha:matrix} on the possible monotone homomorphisms and derivations on commuting pairs $(A,B)$ in $S_{\rm m.r.}$, i.e.,\ a pair $(A,B)$ such as in \eqref{eq:Vpair2} where the component spaces $\Hil_i\simeq\C$. We will use these results extensively in the analysis of our current semiring $S_{\rm m.r.}$. 

The commuting pairs in $S_{\rm m.r.}$ correspond precisely to the elements in the semiring denoted in Chapter \ref{cha:matrix}, Section \ref{Msec:findings} as $S^2$, the minimal restrictions matrix majorization semiring of length $2$. Hence any non-degenerate monotone homomorphism or monotone derivation on $S_{\rm m.r.}$, when restricted to these commuting pairs, must correspond to a (possibly degenerate) monotone homomorphism or monotone derivation on $S^2$. 

We now list these homomorphisms and derivations. Let $\Phi:S_{\rm m.r.}\rightarrow\K$, $\K\in\{\R_+,\R_+^{\rm op},\T\R_+,\T\R_+^{\rm op}\}$, be a monotone homomorphism. When $\K=\R_+^{\rm op}$, $\Phi$ restricted to a commuting pair $(A,B)$ takes the form 
\begin{equation}\label{eq:Phitemperate}
\Phi(A,B)=\sum_{i=1}^n p_i^\alpha q_i^{1-\alpha}
\end{equation}
for some $\alpha\in(0,1)$, or 
\begin{equation}\label{eq:Phicharacter}
\Phi(A,B)=\sum_{i:p_i>0} q_i,\quad\text{or}\quad\Phi(A,B)=\sum_{i:q_i>0} p_i, 
\end{equation}
or $\Phi$ restricts to one of the two degenerate homomorphisms $\Phi(A,B)=\tr[A]$ or $\tr[B]$.\footnote{Note that the two cases in \eqref{eq:Phicharacter} correspond to the character $C=\{1,2\}$, i.e. only those $i$ for which $p_i>0$ and $q_i>0$ are summed over.} When $\K=\R_+$, $\Phi$ always restricts to one of these latter two degenerate cases. When $\mb K=\T\R_+$ or $\T\R_+^{\rm op}$, the only form $\Phi$ can take on a commuting pair is the highly degenerate $\Phi(A,B)=1$ when $(A,B)\neq 0$ and $\Phi(0,0)=0$. (However, it will turn out that the new parameter $z$ in Proposition \ref{prop:z} makes the extension of this homomorphism to $S_{\rm m.r.}$ \emph{non}-degenerate in the case $\K=\T\R_+^{\rm op}$.) Finally, the derivations corresponding to the two degenerate cases $\Phi(A,B)=\tr[A]$ or $\tr[B]$, up to interchangeability (see footnote \ref{ftn:interchangeability} on page \pageref{ftn:interchangeability}), must vanish on commuting pairs.
\end{remark}

Using the additivity and multiplicativity of any homomorphism $\Phi:S_{\rm m.r.}\rightarrow\K$ in the cases $\K=\R_+,\R_+^{\rm op}$, we can decompose $\Phi(A,B)$, where $(A,B)\in S_{\rm m.r.}$ as in \eqref{eq:Vpair2}, as follows: 
\begin{equation}\label{eq:decompositiontemperate}
    \Phi(A,B)=\sum_{i=1}^n\Phi(p_i,q_i)\Phi(\ketbrar{\alpha_i},\ketbrar{\beta_i}). 
\end{equation}
And similarly for the cases $\K=\T\R_+,\T\R_+^{\rm op}$: 
\begin{equation}\label{eq:decompositiontropical}
    \Phi(A,B)=\max_{i=1,\ldots,n}\Phi(p_i,q_i)\Phi(\ketbrar{\alpha_i},\ketbrar{\beta_i}). 
\end{equation}

Note that if $\ket{\alpha_i},\ket{\beta_i}$ are orthogonal or exactly one of them is the zero-vector, then $(\ketbrar{\alpha_i},\ketbrar{\beta_i})$ is not part of $S_{\rm m.r.}$. Similarly for $(p_i,q_i)$ if either $p_i$ or $q_i$ is zero. Hence, some terms in the decompositions \eqref{eq:decompositiontemperate} and \eqref{eq:decompositiontropical} are not well-defined. We will address this issue in the first part of the proof of Proposition \ref{prop:monhom}. 

By Proposition \ref{prop:z}, we know the value of all the ``quantum'' terms $\Phi(\ketbrar{\alpha_i},\ketbrar{\beta_i})$ in \eqref{eq:decompositiontemperate} and \eqref{eq:decompositiontropical}. Also, from the results on commuting pairs as summarized in Remark \ref{rmk:classical}, we know the value of all the ``classical'' terms $\Phi(p_i,q_i)$, as well. Therefore, the value of $\Phi(A,B)$ for any $(A,B)\in S_{\rm m.r.}$ follows. What remains is to determine exactly which of these homomorphisms are non-degenerate and monotone, which will yield the allowed range of the parameters $\alpha$ and $z$, and investigating the existence of non-zero monotone derivations. Our results are contained in the following proposition. 



\begin{proposition}\label{prop:monhom}
The non-degenerate monotone homomorphisms $\Phi:S_{\rm m.r.}\to\mb K$, where $\K\in\{\R_+,\R_+^{\rm op},\T\R_+,\T\R_+^{\rm op}\}$, with trivial kernel are exactly the following: 
\begin{itemize}
\item $\mb K=\R_+^{\rm op}$: $\Phi$ is of the form 
\begin{equation}\label{eq:temperate}
    \Phi_{\alpha,z}(A,B):=\sum_{i=1}^n(p_i)^\alpha(q_i)^{1-\alpha}|\braket{\alpha_i}{\beta_i}|^{2z}, 
\end{equation}    
for any $\alpha\in[0,1]$ and $z\geq\max\{\alpha,1-\alpha\}$. 
\item $\mb K=\T\R_+^{\rm op}$: $\Phi$ is of the form 
\begin{equation}\label{eq:tropicalz}
    \Phi_z^\T(A,B):=\max_{i=1,\ldots,n}|\braket{\alpha_i}{\beta_i}|^{2z}, 
\end{equation}    
for any $z>0$. 
\item $\mb K\in\{\R_+,\T\R_+\}$: There are no non-degenerate monotone homomorphisms with trivial kernel. 
\end{itemize}
Moreover, there are no non-zero monotone derivations.
\end{proposition}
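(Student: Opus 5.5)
I would begin with the decomposition \eqref{eq:decompositiontemperate} and \eqref{eq:decompositiontropical}, but first fix the degenerate components. A block with $\braket{\alpha_i}{\beta_i}=0$ and both vectors nonzero is written as $(p_i,0)\boxplus(0,q_i)$. A block with a single zero vector is a classical pair $(p_i,0)$ or $(0,q_i)$. These cannot stand alone in $S_{\rm m.r.}$, so I handle them as in the proof of Proposition \ref{Mpro:monhommin}: add a full-support row, use the row-augmentation trick $\Phi(\tilde x\boxplus x)-\Phi(\tilde x)$, and invoke the character analysis of Lemmas \ref{Mlem:1or2} and \ref{Mclm:nontrivialB} for $d=2$.

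By Remark \ref{rmk:classical}, on commuting pairs $\Phi$ equals $\sum_i p_i^\alpha q_i^{1-\alpha}$ summed over rows compatible with the character $C$ (or the tropical/degenerate analogue). Combining this with Proposition \ref{prop:z} gives the candidate forms. For $\K=\R_+^{\rm op}$ we get $\Phi(A,B)=\sum_{i\in I}p_i^\alpha q_i^{1-\alpha}|\braket{\alpha_i}{\beta_i}|^{2z}$ with $\alpha\in[0,1]$. For the tropical cases we get $\max_i|\braket{\alpha_i}{\beta_i}|^{2z}$. The degenerate classical restrictions $\tr A$ and $\tr B$ combined with some $z$ are also candidates, and they can become non-degenerate once $z\neq0$.

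The next step pins down the character. I compare a pure pair of small overlap with a pair containing an orthogonal block. Via Proposition \ref{propo:Uhlmann}, $(\ketbrar{0},\ketbrar{\psi(\theta)})$ majorizes both $(\cos^2\theta,\cos^2\theta)\boxplus(\sin^2\theta,0)\boxplus(0,\sin^2\theta)$-type measurement outputs and the classical pieces. This should force the summation set to be all $i$ with $\braket{\alpha_i}{\beta_i}\neq0$, consistent with the trivial-kernel requirement. It should also rule out the one-sided characters for $\alpha\in(0,1)$; at $\alpha=0,1$ the convention $0^0=0$ reproduces exactly the characters $\sum_{p_i>0}q_i$.

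Constraints on $z$ come from two monotonicity tests. First, measuring $(\ketbrar{0},\ketbrar{\psi(\theta)})$ in the computational basis gives $(1,0)\boxplus(0,0)$ versus $(\cos^2\theta,\sin^2\theta)$. Then $\Phi$ of the classical output is $\cos^{2(1-\alpha)}\theta$, and monotonicity for $\R_+^{\rm op}$ requires $\cos^{2z}\theta\le\cos^{2(1-\alpha)}\theta$, i.e.\ $z\ge1-\alpha$. Measuring in the basis adapted to $\psi(\theta)$ gives $z\ge\alpha$ symmetrically. For $\R_+$ and $\T\R_+$ these same tests give inequalities that, combined with the degenerate classical restrictions, force degeneracy or a nontrivial kernel, ruling those codomains out. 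For $\T\R_+^{\rm op}$, monotonicity needs $z\ge0$, and non-degeneracy needs $z\neq0$.

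Sufficiency rests on the known data-processing inequality for $D_{\alpha,z}$ in the range $\alpha\in(0,1)$, $z\ge\max\{\alpha,1-\alpha\}$ \cite{Zhang2020}. Since $\Phi_{\alpha,z}=\exp(-(z+1)\hat D_{\alpha,z})$-type quantities are powers of $\tr[(\sigma^{(1-\alpha)/2z}\rho^{\alpha/z}\sigma^{(1-\alpha)/2z})^z]$ on $\mc F$, they are monotone. The endpoints $\alpha=0,1$ follow by pointwise limits, and the tropical case follows as $z\to\infty$ limits via power rescaling. The subtlety is non-normalized pairs and the orthogonal blocks; these are handled by homogeneity and by checking that the discontinuous pieces behave as in the classical monotonicity proof.

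For derivations, Remark \ref{rmk:classical} says they vanish on commuting pairs up to interchangeability. On a pure pair, the Leibniz rule with $\varphi(xy)$ gives $\delta(xy)=\delta(x)+\delta(y)$, since $\Phi$ equals $1$ at the degenerate homomorphism. So $\Delta=c\log F$. Monotonicity under the measurement test above, comparing against the vanishing classical value, forces $c=0$.

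The main obstacle is the first step: rigorously handling orthogonal and partially-zero blocks while keeping the character analysis compatible with the quantum components.
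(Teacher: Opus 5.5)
There is a genuine gap. Every monotonicity test in your plan moves from a quantum pair to a more classical one: Uhlmann between pure pairs, a basis measurement, and unambiguous discrimination. Such tests constrain $\Phi$ on quantum blocks from only one side. The idea you are missing is the reverse direction, which the paper uses: a measure-and-prepare channel that builds a quantum block out of one-sided classical blocks,
\[
(1/2,1/2)\boxplus(1/2,0)\boxplus(0,1/2)\succeq(1/2,1/2)\boxplus\big(\tfrac12\ketbrar{0},\tfrac12\ketbrar{\psi(\pi/4)}\big).
\]
Without it, your exclusions do not go through. Take $\K=\T\R_+$ and the map $\Phi(A,B)=\max_{i:\,\braket{\alpha_i}{\beta_i}\neq0}|\braket{\alpha_i}{\beta_i}|^{2z}$ with $z<0$. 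It is a non-degenerate homomorphism with trivial kernel, and it passes all of your tests. Uhlmann only needs $z\le0$, and your measurement outputs are nonzero commuting pairs with value $1\le|\braket{\alpha}{\beta}|^{2z}$. Only the relation above, which gives $1\ge\max\{1,2^{-z}\}$, rules it out. This is exactly the case the paper cannot exclude for $S_{\rm e.o.}$ (Remark~\ref{rmk:Seo}), where one-sided blocks are unavailable.

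Your derivation argument has the same defect. With $\Delta=c\log F$ on pure pairs and $\Delta=0$ on commuting pairs, the measurement test gives $c\log\cos^2\theta\ge0$, i.e.\ only $c\le0$. The inequality $c\ge0$ needs the preparation direction. The relation above gives $0\ge\tfrac{c}{2}\log\tfrac12$; the paper instead uses $(A,B)\boxplus(1,1)\preceq(1,1)\boxplus(\tr[A],0)\boxplus(0,\tr[B])$.

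You also correctly flag the candidates that equal $\tr[A]$ or $\tr[B]$ on commuting pairs with $z\neq0$, but you never eliminate them.
\begin{itemize}
\item In $\R_+^{\rm op}$ with $z>0$, such a $\Phi$ is at most $\tr[A]$, with equality on commuting pairs. So every test $X\succeq Y$ with $Y$ commuting is automatically satisfied.
\item In $\R_+$ with $z<0$, your measurement tests hold as well.
\end{itemize}
The paper removes these candidates with the same preparation relation: it gives $1\le\tfrac12+\tfrac12 2^{-z}$, reversed for $\R_+$.

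Alternatively, they are not homomorphisms at all. Take $X=(\ketbrar{\alpha}\oplus\ketbrar{\alpha},\ketbrar{\beta}\oplus0)=(\ketbrar{\alpha},\ketbrar{\beta})\boxtimes(1\oplus1,1\oplus0)$, which gives $\Phi(X)=2F^z$. On the other hand, $X\boxplus(1,1)\approx(\ketbrar{\alpha},\ketbrar{\beta})\boxplus(1\oplus1,0\oplus1)$ gives $\Phi(X)=F^z+1$. Together these force $z=0$. Your row-augmentation step would have to surface this consistency check, but you do not mention it.

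Finally, $(\ketbrar{0},\ketbrar{\psi(\theta)})$ does not majorize $(\cos^2\theta,\cos^2\theta)\boxplus(\sin^2\theta,0)\boxplus(0,\sin^2\theta)$. That would violate monotonicity of $\Phi_{1/2,1/2}$, since it would require $\cos\theta\le\cos^2\theta$. Unambiguous discrimination only reaches inconclusive weight $\cos\theta$.
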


\begin{remark}    
    Note that when $\alpha=0$ or $1$, it may occur that the expression \eqref{eq:temperate} contains a factor $0^0$ when $p_i=0$ or $q_i=0$ for some $i$. However, since in this case also the corresponding $\ket{\alpha_i}$ or $\ket{\beta_i}$ equals the zero vector, there is no ambiguity: the entire expression $(p_i)^\alpha(q_i)^{1-\alpha}|\braket{\alpha_i}{\beta_i}|^{2z}$ should be considered to be equal to $0$. 

    One checks that for pairs $(A,B)\in S_{\rm m.r.}$, we have 
    \begin{equation}
        \Phi_{\alpha,z}(A,B)=\tr\left[\left(B^{\frac{1-\alpha}{2z}}A^{\frac{\alpha}{z}}B^{\frac{1-\alpha}{2z}}\right)^z\right], 
    \end{equation}
    where the right-hand side is the expression appearing inside the logarithm in the definition of the $\alpha$-$z$ relative entropy in \eqref{eq:alphazdivergence}. 

    For two pairs $(A,B),(A',B')\in S_{\rm m.r.}$, note that $\Phi_z^\T(A,B)>\Phi_z^\T(A',B')$ for all $z>0$ if and only if this inequality holds for one such $z$. Since the conditions for majorization given by Theorem \ref{thm:Fritz2022} are stated in terms of such inequalities, it is sufficient for our purposes to consider only $\Phi_z^\T$ for the choice $z=1$. Hence, we define 
    \begin{equation}\label{eq:tropical}
        \Phi^\T(A,B):=\Phi_1^\T(A,B)=\max_{i=1,\ldots,n}|\braket{\alpha_i}{\beta_i}|^{2}=\|({\rm supp}\,A)({\rm supp}\,B)\|_\infty^2. 
    \end{equation}
    Here, the right-hand side gives an alternative expression for the homomorphism, where $\|\cdot\|_\infty$ is the operator norm and, e.g.,\ ${\rm supp}\,A$ is the support projection of the positive semidefinite $A$. 
\end{remark}

\begin{proof}[Proof of Proposition \ref{prop:monhom}]
Before proceeding to the characterization of all relevant monotone homomorphisms and derivations, we will first show how to interpret the decompositions \eqref{eq:decompositiontemperate} and \eqref{eq:decompositiontropical} when there are $i$ such that $\ket{\alpha_i},\ket{\beta_i}$ are orthogonal or exactly one of them is the zero-vector. 

Fix a monotone homomorphism $\Phi:S_{\rm m.r.}\rightarrow\K$, where $\K\in\{\R_+,\R_+^{\rm op},\allowbreak\T\R_+,\T\R_+^{\rm op}\}$, and let $(A,B)\in S_{\rm m.r.}$. Note that we may assume that there exists a representation as in \eqref{eq:Vpair2} such that there is no $j$ for which $\braket{\alpha_j}{\beta_j}=0$ and $\ket{\alpha_j},\ket{\beta_j}\neq0$. Indeed, if there is such $j$, then the pair $(p_j\ketbrar{\alpha_j},q_j\ketbrar{\beta_j})$ can be replaced by $(p_j\oplus0,0\oplus q_j)$ under the equivalence relation $\approx$. 

Due to \eqref{eq:cscondition}, there exists $k\in[n]$ such that $|\braket{\alpha_k}{\beta_k}|\neq0$. Assume there is a $j_1$ such that $\ket{\alpha_{j_1}}\neq0$ and $\ket{\beta_{j_1}}=0$, or a $j_2$ such that $\ket{\alpha_{j_2}}=0$ and $\ket{\beta_{j_2}}\neq0$. In the following, we will show that $\Phi(A,B)$ may be expressed as \eqref{eq:decompositiontemperate} or \eqref{eq:decompositiontropical}, where we set 
\begin{equation}\label{eq:define0trA}
    \Phi(p_{j_1}\ketbrar{\alpha_{j_1}},0)=p_{j_1}\Phi(\ketbrar{\alpha_k},\ketbrar{\beta_k})\text{ and }\Phi(0,q_{j_2}\ketbrar{\beta_{j_2}})=0
\end{equation}
if $\K=\R_+$ or $\R_+^{\rm op}$ and $\Phi$ restricts to $\tr[A']$ on commuting pairs $(A',B')$, 
\begin{equation}\label{eq:define0trB}
    \Phi(p_{j_1}\ketbrar{\alpha_{j_1}},0)=0\text{ and }\Phi(0,q_{j_2}\ketbrar{\beta_{j_2}})=q_{j_2}\Phi(\ketbrar{\alpha_k},\ketbrar{\beta_k})
\end{equation}
if $\K=\R_+$ or $\R_+^{\rm op}$ and $\Phi$ restricts to $\tr[B']$ on commuting pairs $(A',B')$, and 
\begin{equation}\label{eq:define0otherwise}
    \Phi(p_{j_1}\ketbrar{\alpha_{j_1}},0)=0\text{ and }\Phi(0,q_{j_2}\ketbrar{\beta_{j_2}})=0
\end{equation}
otherwise. Note that, although \eqref{eq:define0trA}, \eqref{eq:define0trB} seem to depend on the choice of $k$ due to the appearance of the term $\Phi(\ketbrar{\alpha_k},\ketbrar{\beta_k})$, we will show later that $z=0$ when $\Phi$ restricts to $\tr[A']$ or $\tr[B']$ on commuting pairs $(A',B')$, hence $\Phi(\ketbrar{\alpha_k},\ketbrar{\beta_k})=1$, independent of $k$. 

Assume there is a $j_1$ such that $\ket{\alpha_{j_1}}\neq0$ and $\ket{\beta_{j_1}}=0$; the analysis of the case of $j_2$ such that $\ket{\alpha_{j_2}}=0$ and $\ket{\beta_{j_2}}\neq0$ is analogous. Denote by $\substack{\succeq\\\preceq}$ majorization in both directions. There exist a quantum channel that splits block $k$, and another channel that merges two blocks, which implies 
\begin{align}
    \big(p_k\ketbrar{\alpha_k},q_k\ketbrar{\beta_k}\big)\substack{\succeq\\\preceq}&\big(p_k/2\ketbrar{\alpha_k},q_k/2\ketbrar{\beta_k}\big)\\
    &\boxplus\big(p_k/2\ketbrar{\alpha_k},q_k/2\ketbrar{\beta_k}\big). 
\end{align}
Denote by $(\tilde A,\tilde B)\in S_{\rm m.r.}$ the pair $(A,B)$ with blocks $j_1$ and $k$ removed. Using the majorization relation above, we have 
\begin{align}
    \big(A,B\big)\substack{\succeq\\\preceq}&\big(p_k/2\ketbrar{\alpha_k}\oplus \tilde A,q_k/2\ketbrar{\beta_k}\oplus\tilde B\big)\label{eq:split1}\\
    &\boxplus\big(p_k/2\ketbrar{\alpha_k}\oplus p_{j_1}\ketbrar{\alpha_{j_1}},q_k/2\ketbrar{\beta_k}\oplus0\big).\label{eq:split2}
\end{align}
Since $\ket{\beta_{j_1}}=0$, we may assume that $\ket{\alpha_{j_1}}=\ket{\alpha_k}$ under the equivalence relation $\approx$. Then, notice that 
\begin{align}
    &\big(p_k/2\ketbrar{\alpha_k}\oplus p_{j_1}\ketbrar{\alpha_{j_1}},q_k/2\ketbrar{\beta_k}\oplus0\big)\label{eq:specialblock1}\\
    &\quad\quad=\big(\ketbrar{\alpha_k},\ketbrar{\beta_k}\big)\boxtimes\big(p_k/2\oplus p_{j_1},q_k/2\oplus0\big).\label{eq:specialblock2}
\end{align}

Now assume $\K=\R_+$ or $\R_+^{\rm op}$. Assume further that the restriction of $\Phi$ on commuting pairs $(A',B')$ does not equal $\tr[A']$. Considering the remaining possible restrictions (see Remark \ref{rmk:classical}), it follows from \eqref{eq:specialblock1}, \eqref{eq:specialblock2} that 
\begin{align}
    &\Phi\big(p_k/2\ketbrar{\alpha_k}\oplus p_{j_1}\ketbrar{\alpha_{j_1}},q_k/2\ketbrar{\beta_k}\oplus0\big)\\
    &\quad\quad=\Phi\big(\ketbrar{\alpha_k},\ketbrar{\beta_k}\big)\Phi\big(p_k/2\oplus p_{j_1},q_k/2\oplus0\big)\\
    &\quad\quad=\Phi\big(\ketbrar{\alpha_k},\ketbrar{\beta_k}\big)\Phi\big(p_k/2,q_k/2\big)\\
    &\quad\quad=\Phi\big(p_k/2\ketbrar{\alpha_k},q_k/2\ketbrar{\beta_k}\big). 
\end{align}
Using this equality and \eqref{eq:split1}, \eqref{eq:split2}, via splitting and merging of blocks, we find 
\begin{align}
    \Phi(A,B)&=\Phi\big(p_k/2\ketbrar{\alpha_k}\oplus\tilde A,q_k/2\ketbrar{\beta_k}\oplus\tilde B\big)\\
    &\quad+\Phi\big(p_k/2\ketbrar{\alpha_k},q_k/2\ketbrar{\beta_k}\big)\\        
    &=\Phi\big(p_k\ketbrar{\alpha_k}\oplus\tilde A,q_k\ketbrar{\beta_k}\oplus\tilde B\big).
\end{align}
Note that $(p_k\ketbrar{\alpha_k}\oplus\tilde A,q_k\ketbrar{\beta_k}\oplus\tilde B)$ equals $(A,B)$ with the pair $(p_{j_1}\ketbrar{\alpha_{j_1}},0)$ removed. Hence, we may set $\Phi(p_{j_1}\ketbrar{\alpha_{j_1}},0)=0$. Assume now that $\Phi$ restricts to $\tr[A']$ on commuting pairs $(A',B')$. Then it follows from \eqref{eq:specialblock1}, \eqref{eq:specialblock2} that 
\begin{align}
    &\Phi\big(p_k/2\ketbrar{\alpha_k}\oplus p_{j_1}\ketbrar{\alpha_{j_1}},q_k/2\ketbrar{\beta_k}\oplus0\big)\\
    &\quad\quad=\Phi\big(\ketbrar{\alpha_k},\ketbrar{\beta_k}\big)(p_k/2+p_{j_1})\\
    &\quad\quad=\Phi\big(\ketbrar{\alpha_k},\ketbrar{\beta_k}\big)(\Phi\big(p_k/2,q_k/2\big)+p_{j_1})\\        &\quad\quad=\Phi\big(p_k/2\ketbrar{\alpha_k},q_k/2\ketbrar{\beta_k}\big)+p_{j_1}\Phi\big(\ketbrar{\alpha_k},\ketbrar{\beta_k}\big). 
\end{align}
Now it follows from \eqref{eq:split1}, \eqref{eq:split2} that 
\begin{equation}
    \Phi(A,B)=\Phi\big(p_k\ketbrar{\alpha_k}\oplus\tilde A,q_k\ketbrar{\beta_k}\oplus\tilde B\big)+p_{j_1}\Phi(\ketbrar{\alpha_k},\ketbrar{\beta_k}), 
\end{equation}
allowing us to set $\Phi(p_{j_1}\ketbrar{\alpha_{j_1}},0)=p_{j_1}\Phi(\ketbrar{\alpha_k},\ketbrar{\beta_k})$. Analogously one can show that, when there is $j_2$ such that $\ket{\alpha_{j_2}}=0$ and $\ket{\beta_{j_2}}\neq0$, and $\Phi$ does not restrict to $\tr[B']$ on commuting pairs $(A',B')$, we may set $\Phi(0,q_{j_2}\ketbrar{\beta_{j_2}})=0$, and otherwise $\Phi(0,q_{j_2}\ketbrar{\beta_{j_2}})=q_{j_2}\Phi(\ketbrar{\alpha_k},\ketbrar{\beta_k})$. We conclude that \eqref{eq:decompositiontemperate} with the conventions \eqref{eq:define0trA}, \eqref{eq:define0trB}, \eqref{eq:define0otherwise} are correct for $\K=\R_+,\R_+^{\rm op}$. 

Next, assume $\K=\T\R_+$ or $\T\R_+^{\rm op}$. Recall that $\Phi$ must restrict to $\Phi(A,B)=1$ for commuting pairs $(A,B)\neq 0$ and $\Phi(0,0)=0$. Assume there is a $j_1$ such that $\ket{\alpha_{j_1}}\neq0$ and $\ket{\beta_{j_1}}=0$. Using \eqref{eq:specialblock1}, \eqref{eq:specialblock2}, we now have 
\begin{equation}
    \Phi\big(p_k/2\ketbrar{\alpha_k}\oplus p_{j_1}\ketbrar{\alpha_{j_1}},q_k/2\ketbrar{\beta_k}\oplus0\big)=\Phi\big(\ketbrar{\alpha_k},\ketbrar{\beta_k}\big). 
\end{equation}    
Then, using similar arguments as above using \eqref{eq:split1}, \eqref{eq:split2}, and remembering that addition in $\T\R_+,\T\R_+^{\rm op}$ is defined by taking the maximum of two numbers, we may conclude that 
\begin{equation}
    \Phi(A,B)=\Phi\big(p_k\ketbrar{\alpha_k}\oplus\tilde A,q_k\ketbrar{\beta_k}\oplus\tilde B\big). 
\end{equation}
The case of $j_2$ such that $\ket{\alpha_{j_2}}=0$ and $\ket{\beta_{j_2}}\neq0$ can be handled analogously. We conclude that \eqref{eq:decompositiontropical} with the convention \eqref{eq:define0otherwise} is correct for $\K=\T\R_+,\T\R_+^{\rm op}$. 

In the next part of the proof, we will first exclude certain ranges that the $\alpha$ and $z$ parameters may be in by providing counterexamples to the monotonicity property of homomorphisms. Similarly, we will use monotonicity to show that there are no non-zero monotone derivations. Finally, we prove that the homomorphisms that remain are in fact monotone. 

Fix a monotone homomorphism $\Phi:S_{\rm m.r.}\rightarrow\K$, where $\K\in\{\R_+,\R_+^{\rm op},\allowbreak\T\R_+,\T\R_+^{\rm op}\}$. From the results in Propositions \ref{propo:Uhlmann} and \ref{prop:z} on majorization of pairs of pure states, and using the monotonicity of $\Phi$, it follows that the parameter $z$ must be non-positive when ${\mb K}=\R_+,\T\R_+$ and non-negative when ${\mb K}=\R_+^{\rm op},\T\R_+^{\rm op}$. 

Let us assume $\mb K=\R_+^{\rm op}$ first. In the following, we will consider case-by-case the different ways that $\Phi$ restricts to commuting pairs. 

The quantum channel arising from measuring a qubit in a fixed basis $\ket{0},\ket{1}$ yields the following majorization relation: 
\begin{equation}
\label{eq:alphazrelationship1}
    \left(\,\ketbrar{\psi(\pi/4)},\ketbrar{0}\,\right)\succeq(1/2,1)\boxplus(1/2,0), 
\end{equation}
where $|\psi(\theta)\rangle$ is defined as in \eqref{eq:anglevectors}. Consider the case that $\Phi$ is of the form \eqref{eq:Phitemperate} with $\alpha\in(0,1)$ when restricted to commuting pairs. Evaluating $\Phi$ on both sides of the inequality \eqref{eq:alphazrelationship1}, it follows by monotonicity that 
\begin{equation}
\label{eq:alphazrelationship2}
    \frac{1}{2^z}\leq\frac{1}{2^\alpha}. 
\end{equation}
By switching around the states in the two pairs in \eqref{eq:alphazrelationship1}, we have similarly
\begin{equation}
\label{eq:alphazrelationship3}
    \frac{1}{2^z}\leq\frac{1}{2^{1-\alpha}}. 
\end{equation}    
Hence \sloppy{${z\geq\max\{\alpha,1-\alpha\}}$}. It follows by the decomposition \eqref{eq:decompositiontemperate} with the convention \eqref{eq:define0otherwise} that $\Phi$ must be as in \eqref{eq:temperate} with $\sloppy{\alpha\in(0,1)}$ and ${z\geq\max\{\alpha,1-\alpha\}}$. 

Next, consider the case that $\Phi$ is one of \eqref{eq:Phicharacter} when restricted to a commuting pair. When it restricts to the second form in \eqref{eq:Phicharacter}, monotonicity and \eqref{eq:alphazrelationship1} imply that 
\begin{equation}
    \frac{1}{2^z}\leq\frac{1}{2}. 
\end{equation}
Hence, $z\geq1$. Similarly, when $\Phi$ restricts to the first form in \eqref{eq:Phicharacter}, by switching around the states in the two pairs in \eqref{eq:alphazrelationship1}, we find $z\geq1$. It follows for these latter two cases that $\Phi$ is equal to \eqref{eq:temperate} with $\alpha=0$ or $1$, and $z\geq1=\max\{\alpha,1-\alpha\}$. 

A measure-and-prepare channel establishes the following majorization relation: 
\begin{align}\label{eq:preparation}
&(1/2,1/2)\boxplus(1/2,0)\boxplus(0,1/2)\\
&\qquad\qquad\succeq(1/2,1/2)\boxplus(1/2\ketbrar{0},1/2\ketbrar{\psi(\pi/4)}). 
\end{align}
Consider the remaining case that $\Phi$ is equal to $\tr[A]$ or $\tr[B]$ on commuting pairs. By monotonicity, it follows from \eqref{eq:preparation} that 
\begin{equation}\label{eq:preparationinequality}
    1\leq\frac{1}{2}+\frac{1}{2}\frac{1}{2^z}.
\end{equation} 
Since $z\geq0$, this implies that $z=0$. Hence, using the convention \eqref{eq:define0trA} or \eqref{eq:define0trB}, $\Phi$ is one of the degenerate homomorphisms $\Phi(A,B)=\tr[A]$ or $\tr[B]$ for all $(A,B)\in S_{\rm m.r.}$. 

Assume now that ${\mb K}=\R_+$. Then $\Phi$ equals $\tr[A]$ or $\tr[B]$ on commuting pairs. Monotonicity and \eqref{eq:preparation} now yield \eqref{eq:preparationinequality} with the inequality reversed. Since $z\leq0$, it follows that $z=0$. Again, $\Phi$ is one of the degenerate cases $\Phi(A,B)=\tr[A]$ or $\tr[B]$ for all $(A,B)\in S_{\rm m.r.}$. 

Next, assume ${\mb K}=\T\R_+^{\rm op}$. Then $\Phi$ is the highly degenerate $\Phi(A,B)=1$ when $(A,B)\neq 0$ and $\Phi(0,0)=0$ on commuting pairs. Also, $z\geq0$. Using the decomposition \eqref{eq:decompositiontropical} with the convention \eqref{eq:define0otherwise}, we see that when $z=0$ we get the degenerate homomorphism that is equal to $1$ for all non-zero elements in $S_{\rm m.r.}$. When $z>0$, we find that $\Phi$ must be as in \eqref{eq:tropicalz}. 

Finally, assume ${\mb K}=\T\R_+$. Then $\Phi$ is again the highly degenerate $\Phi(A,B)=1$ when $(A,B)\neq 0$ and $\Phi(0,0)=0$ on commuting pairs. The majorization relation \eqref{eq:preparation} implies that 
\begin{equation}
\label{eq:zrelationship2}
    1\geq\max\left\{1,\frac{1}{2^z}\right\}, 
\end{equation}    
Since $z\leq0$, this implies $z=0$. Hence, using the decomposition \eqref{eq:decompositiontropical} with the convention \eqref{eq:define0otherwise}, we get the degenerate homomorphism that is equal to $1$ for all non-zero elements in $S_{\rm m.r.}$. 

We now turn to studying the monotone derivations on $S_{\rm m.r.}$ at the two degenerate homomorphisms $\Phi(A,B)=\tr[A]$ or $\tr[B]$ for all $(A,B)\in S_{\rm m.r.}$. Let $\Delta:S_{\rm m.r.}\rightarrow\R$ be a monotone derivation at $\Phi(A,B)=\tr[A]$, \sloppy{${(A,B)\in S_{\rm m.r.}}$}. Observe that for any $(A,B)\in S_{\rm m.r.}$, 
\begin{equation}
    (\tr[A],\tr[B])\boxplus(1,1)\preceq(A,B)\boxplus(1,1)\preceq(1,1)\boxplus(\tr[A],0)\boxplus(0,\tr[B]), 
\end{equation}
where the second majorization is achieved by using a measure-and-prepare channel. By interchangeability, we may assume that $\Delta$ vanishes on commuting pairs. Since the outer left and outer right sides of the above inequalities are commuting pairs, monotonicity and additivity of $\Delta$ yields 
\begin{equation}
    0=\Delta((A,B)\boxplus(1,1))=\Delta(A,B)+\underbrace{\Delta(1,1)}_{=0}. 
\end{equation}
Hence, $\Delta$ vanishes everywhere on $S_{\rm m.r.}$. The same conclusion holds for any derivation at the other degenerate monotone homomorphism $\Phi(A,B)=\tr[B]$, $(A,B)\in S_{\rm m.r.}$. 

Finally, we show that the maps in \eqref{eq:temperate} and \eqref{eq:tropicalz} are in fact monotone homomorphisms with trivial kernel. Additivity and multiplicativity are immediate from the decompositions \eqref{eq:decompositiontemperate}, \eqref{eq:decompositiontropical} together with the multiplicativity of the factors appearing in the decompositions. Also, they all have a trivial kernel, since for every non-zero $(A,B)\in S_{\rm m.r.}$ there is at least one $k$ such that $\braket{\alpha_k}{\beta_k}\neq0$. The fact that they are all monotone can be seen as follows. The homomorphisms $\Phi_{\alpha,z}$, $\Phi^\T$ give rise to the relative entropies $\hat{D}_{\alpha,z}$, $\hat{D}^\T$ defined in \eqref{eq:alphazexplicit}, \eqref{eq:DT}. Since DPI is known to be satisfied for the $\alpha$-$z$ relative entropies when $\alpha\in(0,1)$ and $z\geq\max\{\alpha,1-\alpha\}$ \cite{Zhang2020} (and hence also for our version with a different prefactor), the pointwise limits 
\begin{equation}
    \hat D^\T=\lim_{z\rightarrow\infty}\hat{D}_{\alpha,z}\ \text{(independent of}\ \alpha\text{)}, 
\end{equation}
and, for $z\geq1$, 
\begin{equation}
    \hat D_{0,z}=\lim_{\alpha\rightarrow0}\hat{D}_{\alpha,z},\quad\hat D_{1,z}=\lim_{\alpha\rightarrow1}\hat{D}_{\alpha,z} 
\end{equation}    
also satisfy DPI. Since these relative entropies satisfy DPI and the prefactors in front of $\log$ are negative, the homomorphisms that they are associated with (namely $\Phi_{\alpha,z}$ for $\alpha\in[0,1]$, $z\geq\max\{\alpha,1-\alpha\}$, and $\Phi^\T$) are monotone according to the opposite ordering of $\R_+^{\rm op}$, $\T\R_+^{\rm op}$. Since 
\begin{equation}
    \Phi_z^\T(A,B)=\left(\Phi^\T(A,B)\right)^z
\end{equation}
for all $(A,B)\in S_{\rm m.r.}$ and $z>0$, we find that $\Phi_z^\T$ for any $z>0$ is also monotone. 
\end{proof}

\begin{remark}\label{rmk:Seo}
For the semiring $S_{\rm e.o.}$ we do not have a complete picture: we know that all monotone homomorphisms in Proposition \ref{prop:monhom} associated with $S_{\rm m.r.}$ are also associated with $S_{\rm e.o.}$, but there might be more. Namely, it is still an open question whether the homomorphisms for $\K=\R_+$ or $\R_+^{\rm op}$ that have the same form as \eqref{eq:temperate}, but with parameters $\alpha=0$, $z<1$, or $\alpha=1$, $z<1$, are monotone or not. The same question of monotonicity can also be asked for the homomorphisms for $\K=\T\R_+$ that are similar to \eqref{eq:tropicalz}, but with $z<0$. Additionally, there might exist monotone derivations that do not vanish everywhere. In the following, we will show that we can rule out the existence of any $\Phi$ other than the aforementioned additional ones or the ones associated with $S_{\rm m.r.}$, using techniques similar to the proof of Proposition \ref{prop:monhom}. We suspect that these additional homomorphisms are in fact not monotone either, since they correspond to parameters $\alpha,z$ for which the $\alpha$-$z$ relative entropy is known not to satisfy DPI for general pairs of quantum states (with some conditions on the support). However, in principle, DPI might be satisfied when restricting to the specific pairs of states we consider in this work. 

Fix a monotone homomorphism $\Phi:S_{\rm e.o.}\rightarrow\K$, where $\K\in\{\R_+,\R_+^{\rm op},\allowbreak\T\R_+,\T\R_+^{\rm op}\}$. First, let us summarize results from \cite{farooq2024} on the monotone homomorphisms and derivations associated to the \emph{matrix majorization semiring of length} $2$, which consists precisely of all commuting pairs in $S_{\rm e.o.}$. When $\K=\R_+^{\rm op}$, $\Phi$ restricts to \eqref{eq:Phitemperate} for some $\alpha\in[0,1]$, on commuting pairs. When $\K=\R_+$, $\Phi$ restricts to \eqref{eq:Phitemperate} for some $\alpha\in(-\infty,0]\cup[1,\infty)$. Note that for $\K=\R_+^{\rm op}$ or $\R_+$ the cases $\alpha=0$, respectively $\alpha=1$, correspond to the degenerate homomorphisms $\tr[B]$, respectively $\tr[A]$, for a commuting pair $(A,B)$. When $\K=\T\R_+$, $\Phi$ on commuting pairs restricts to 
\begin{equation}\label{eq:Phitropicalcommuting}
\Phi(A,B)=\max_{1\leq i\leq n}\left(\frac{p_i}{q_i}\right)^\beta, 
\end{equation}
for some $\beta\in\R$. For $\beta=0$, this corresponds to the highly degenerate $\Phi(A,B)=1$ for $(A,B)\neq 0$ a commuting pair and $\Phi(0,0)=0$. When $\K=\T\R_+^{\rm op}$, $\Phi$ always restricts to this highly degenerate case on commuting pairs. The two degenerate homomorphisms $\tr[A],\tr[B]$ for commuting pairs $(A,B)$ both have non-zero monotone derivations associated with them (similar in form to the Kullback--Leibler divergence), but we will not use these here. 

Note that the pair on the right-hand side in \eqref{eq:alphazrelationship1} is not inside $S_{\rm e.o.}$. However, for the case $\K=\R_+^{\rm op}$ and $\alpha\in(0,1)$, we can still deduce that $\Phi$ is of the form \eqref{eq:temperate} with $z\geq\max\{\alpha,1-\alpha\}$. To see this, consider again measuring a qubit in a fixed basis $\ket{0},\ket{1}$: 
\begin{equation}
\label{eq:alphazrelationship1eo}
    \left(\,\ketbrar{\psi(\pi/4)},\ketbrar{\psi(\theta)}\,\right)\succeq(1/2,\cos^2\theta)\boxplus(1/2,\sin^2\theta), 
\end{equation}
where $\theta\in(0,\pi/2)$, which ensures that the pair on the right-hand side above is in $S_{\rm e.o.}$. Assume $\K=\R_+^{\rm op}$ and $\alpha\in(0,1)$. Using monotonicity of $\Phi$, \eqref{eq:alphazrelationship1eo} yields 
\begin{equation}
\label{eq:alphazrelationship2eo}
    \left(\cos(\theta-\pi/4)\right)^{2z}\leq\frac{1}{2^\alpha}\left((\cos\theta)^{2-2\alpha}+(\sin\theta)^{2-2\alpha}\right) 
\end{equation}
for all $\theta\in(0,\pi/2)$, and by switching around the states in the pairs in \eqref{eq:alphazrelationship1eo}, we have similarly 
\begin{equation}
\label{eq:alphazrelationship3eo}
    \left(\cos(\theta-\pi/4)\right)^{2z}\leq\frac{1}{2^{1-\alpha}}\left((\cos\theta)^{2\alpha}+(\sin\theta)^{2\alpha}\right). 
\end{equation}
Taking the limit $\theta\rightarrow\pi/2$ in the above two inequalities allows us to conclude that $\sloppy{z\geq\max\{\alpha,1-\alpha\}}$ for $\alpha\in(0,1)$. Note that for $\alpha=0$ or $1$, we are not able to use this method to conclude that $z\geq1=\max\{\alpha,1-\alpha\}$. 

Next, assume $\K=\R_+$ and $\alpha\in(-\infty,0)\cup(1,\infty)$. Note that \eqref{eq:alphazrelationship2eo} and \eqref{eq:alphazrelationship3eo} hold with the inequality reversed. Assume $\alpha>1$. When taking the limit $\theta\rightarrow\pi/2$, the left-hand side of \eqref{eq:alphazrelationship2eo} converges to $\frac{1}{2^z}$, but the right-hand side diverges to positive infinity, leading to a contradiction. When $\alpha<0$, a similar argument using \eqref{eq:alphazrelationship3eo} leads to a contradiction. This rules out the existence of any monotone homomorphism on $S_{\rm e.o.}$ associated with $\alpha$ outside $[0,1]$. 

Now, we assume $\K=\T\R_+$ and $\Phi$ is of the form \eqref{eq:Phitropicalcommuting} for some $\beta\neq0$ when restricted to commuting pairs. Now \eqref{eq:alphazrelationship1eo} implies 
\begin{equation}
\label{eq:TR+}
    \left(\cos(\theta-\pi/4)\right)^{2z}\geq\frac{1}{2^\beta}\max\left\{(\cos\theta)^{-2\beta},(\sin\theta)^{-2\beta}\right\} 
\end{equation}
for all $\theta\in(0,\pi/2)$. Assume $\beta>0$. Then for $\theta$ restricted to the smaller interval $[\pi/4,\pi/2)$, 
\begin{equation}
    \left(\cos(\theta-\pi/4)\right)^{2z}\geq\frac{1}{2^\beta}(\cos\theta)^{-2\beta}. 
\end{equation}
When taking the limit $\theta\rightarrow\pi/2$, the left-hand side converges to $\frac{1}{2^z}$, but the right-hand side diverges to positive infinity, leading to a contradiction. When $\beta<0$, an analogous argument where we switch around the states in the pairs in \eqref{eq:alphazrelationship1eo} leads to a contradiction. Note that we cannot rule out the case $\beta=0$ using this method. Hence there might exist monotone homomorphisms that are similar to \eqref{eq:tropicalz}, but with $z<0$. 

Finally, assume $\K=\T\R_+^{\rm op}$. Analogously as in the case of the semiring $S_{\rm m.r.}$, we can conclude that any non-degenerate $\Phi$ must be of the form \eqref{eq:tropicalz} for some $z>0$.  
\end{remark}

\begin{remark}
    Proposition \ref{prop:puS} tells us that any $(A,B)\in S_{\rm e.o.}$, where $A,B$ have trace $1$ and satisfy $|\braket{\alpha_i}{\beta_i}|<1$ for all $i=1,\ldots,n$, is power universal. We can now show that these are in fact all possible power universals in $S_{\rm e.o.}$, using the tropical monotone homomorphism $\Phi^\T$ defined in \eqref{eq:tropical}. 

    To see this, let $(A,B)\in S_{\rm e.o.}$ be power universal. Then there exists $m\in\N$ such that $(A,B)^{\boxtimes m}\succeq(\ketbrar{0},\ketbrar{\psi(\pi/4)})$. Hence, 
    \begin{equation}
        \left(\Phi^\T(A,B)\right)^m\leq\Phi^\T(\ketbrar{0},\ketbrar{\psi(\pi/4)})=1/2<1. 
    \end{equation}
    Therefore, $\Phi^\T(A,B)<1$, from which it follows that $|\braket{\alpha_i}{\beta_i}|<1$ for all $i=1,\ldots,n$. 
\end{remark}

\section{Results}\label{sec:results}
We next show how our main results Theorems \ref{thm:LS}, \ref{thm:approximateLS} and \ref{thm:minimal} follow from the analysis of the semiring $S_{\rm m.r.}$ in the previous section. 

\subsection{Exact Majorization}\label{sec:exactproof}

Our result Theorem \ref{thm:LS} is a direct application of the Vergleichsstellensatz in the form of Theorem \ref{thm:Fritz2022}. Note that the monotone homomorphisms associated with $S_{\rm m.r.}$ (see Proposition \ref{prop:monhom}) give rise to the relative entropies \eqref{eq:alphazexplicit} and \eqref{eq:DT}. 

The surjective homomorphism with trivial kernel $\|\cdot\|:S\to\R_{>0}^2\cup\{(0,0)\}$ needed in the application of Theorem \ref{thm:Fritz2022} is simply given by $\|(A,B)\|=(\tr[A],\tr[B])$. The first property in \eqref{eq:surjectivehomomorphismproperties} follows from the trace preservation property of quantum channels, and the second property from the following majorization relations: 
\begin{equation}
    (A,B)\succeq(\tr[A],\tr[B])=(\tr[A'],\tr[B'])\preceq(A',B')
\end{equation}
for any $(A,B),(A',B')\in S_{\rm m.r.}$ satisfying $\|(A,B)\|=\|(A',B')\|$. 

Note that \eqref{eq:pu1} is automatically satisfied when condition \eqref{eq:LScondition2} is met: the latter implies that $\hat D^\T(\rho\|\sigma)>0$, which is equivalent to $\max_{i=1,\ldots,n}|\braket{\alpha_i}{\beta_i}|^{2}<1$, which in turn implies \eqref{eq:pu1}. By Proposition \ref{prop:pumr}, for $(\rho,\sigma)$ in the statement of the theorem to be power universal it thus suffices to only require it satisfies \eqref{eq:pu2}, besides the inequalities \eqref{eq:LScondition1}, \eqref{eq:LScondition2}. 

Our main result, Theorem \ref{thm:LS}, now follows immediately from Propositions \ref{prop:monhom} and \ref{prop:pumr}, and Theorem \ref{thm:Fritz2022}. 

\subsection{Asymptotic Majorization}\label{sec:approximate}

As a consequence of Theorem \ref{thm:LS}, we can formulate a similar result where the inequalities in terms of relative entropies are only required to be satisfied \emph{non-strictly}. In this case, large-sample and catalytic majorization only hold asymptotically, in the sense that we reach only one of the two states in the output pair exactly and the other state approximately, up to arbitrarily small error. This result is contained in Theorem \ref{thm:approximateLS}, which we prove here. 


\begin{proof}[Proof of Theorem \ref{thm:approximateLS}]
    First, note that satisfying the inequalities in part (i) of the theorem is equivalent to $\hat D_{\alpha,z}(\rho\|\sigma)\geq \hat D_{\alpha,z}(\rho'\|\sigma')$ for all $\alpha\in[0,1]$, $z\geq\max\{\alpha,1-\alpha\}$, and additionally $\hat D^\T(\rho\|\sigma)\geq \hat D^\T(\rho'\|\sigma')$. 

    Also, by Proposition \ref{prop:pumr}, $(\rho,\sigma)$ is power universal if it satisfies \eqref{eq:pu1} and \eqref{eq:pu2}. Note that Theorem \ref{thm:Fritz2022} only requires $(\rho,\sigma)$ to be power universal in the large-sample setting. This is why we require \eqref{eq:pu2} to hold for equivalence of all conditions (i), (ii), (iii). We require property \eqref{eq:pu1} to hold both in the large-sample and catalytic setting since our construction in this proof relies on it. However, since we will invoke Theorem \ref{thm:LS}, $(\rho,\sigma)$ has to satisfy \eqref{eq:pu1} anyway if condition \eqref{eq:LScondition2} is to be met, as explained in Section \ref{sec:exactproof}. 
    
    (i) $\Rightarrow$ (ii) and (iii): We write
    \begin{equation}
    \rho'=\bigoplus_{i=1}^{n'}p'_i\ketbrar{\alpha'_i},\quad\sigma'=\bigoplus_{i=1}^{n'}q'_i\ketbrar{\beta'_i}, 
    \end{equation}
    with all $\ket{\alpha'_i},\ket{\beta'_i}$ normalized or equal to $0$. Let $J\subseteq[n']$ be the subset of all $i$ where $0<|\braket{\alpha'_i}{\beta'_i}|<1$, which is non-empty since $\rho'$, $\sigma'$ do not commute. One shows analogously as in the discussion above \eqref{eq:pairstandardform} that each pair $(\ketbrar{\alpha'_i},\ketbrar{\beta'_i})$, $i\in J$, is equivalent (according to the equivalence relation $\approx$ defined in Section \ref{sec:definingthesemirings}) to the pair 
    \begin{align}
        \big(\cos^2\theta_i&\ketbrar{e_{i,1}}+\cos\theta_i\sin\theta_i\ketbra{e_{i,1}}{e_{i,2}}\\
        &+\cos\theta_i\sin\theta_i\ketbra{e_{i,2}}{e_{i,1}}+\sin^2\theta_i\ketbrar{e_{i,2}},\ketbrar{e_{i,1}}\big), 
    \end{align}
    for some angle $\theta_i\in(0,\pi/2)$, and $\ket{e_{i,1}},\ket{e_{i,2}}$ an orthonormal basis of a $2$-dimensional Hilbert space. Choose $\eta\in(0,\pi/2)$ such that $\eta<\theta_i$ for all $i\in J$ and $\cos\eta\geq\sqrt{1-\varepsilon}$. For each $i\in J$, we define 
    \begin{equation}
        \ket{\alpha'_{\varepsilon,i}}:=\cos(\theta_i-\eta)\ket{e_{i,1}}+\sin(\theta_i-\eta)\ket{e_{i,2}}, 
    \end{equation}
    and for each $i\in[n']\setminus J$, we define $\ket{\alpha'_{\varepsilon,i}}:=\ket{\alpha'_i}$. Consider 
    \begin{equation}
        \rho'_\varepsilon:=\bigoplus_{i=1}^{n'}p'_i\ketbrar{\alpha'_{\varepsilon,i}}. 
    \end{equation}
    One computes that $|\braket{\alpha'_{\varepsilon,i}}{\alpha'_i}|=\cos(\eta)\geq\sqrt{1-\varepsilon}$ for all $i\in J$, hence the fidelity between $\rho'_\varepsilon$ and $\rho'$ satisfies 
    \begin{equation}
        F(\rho'_\varepsilon,\rho')=\left(\sum_{i=1}^{n'} p'_i|\braket{\alpha'_{\varepsilon,i}}{\alpha'_i}|\right)^2\geq1-\varepsilon. 
    \end{equation}

    Also, for all $i\in J$ 
    \begin{equation}
        |\braket{\alpha'_i}{\beta'_i}|=\cos(\theta_i)<\cos(\theta_i-\eta)=|\braket{\alpha'_{\varepsilon,i}}{\beta'_i}|, 
    \end{equation}
    hence for all $\alpha\in[0,1]$, $z\geq\max\{\alpha,1-\alpha\}$ 
    \begin{equation}
        \hat D_{\alpha,z}(\rho\|\sigma)\geq \hat D_{\alpha,z}(\rho'\|\sigma')>\hat D_{\alpha,z}(\rho'_\varepsilon\|\sigma'). 
    \end{equation}
    Assume $\hat D^\T(\rho'\|\sigma')>0$, i.e. $\max_{i=1,\ldots,n'}|\braket{\alpha'_i}{\beta'_i}|^{2}<1$. Then this maximum is obtained by some subspace in $J$, from which it follows that $\max_{i=1,\ldots,n'}|\braket{\alpha'_i}{\beta'_i}|^{2}<\max_{i=1,\ldots,n'}|\braket{\alpha'_{\varepsilon,i}}{\beta'_i}|^{2}$, whence $\hat D^\T(\rho\|\sigma)\geq \hat D^\T(\rho'\|\sigma')>\hat D^\T(\rho'_\varepsilon\|\sigma')$. In case $\hat D^\T(\rho'\|\sigma')=0$, it follows immediately that $\hat D^\T(\rho\|\sigma)>\hat D^\T(\rho'_\varepsilon\|\sigma')$, since $\hat D^\T(\rho\|\sigma)>0$ by \eqref{eq:pu1}. 

    The pairs $(\rho,\sigma)$ and $(\rho'_\varepsilon,\sigma')$ thus satisfy the strict inequalities \eqref{eq:LScondition1} and \eqref{eq:LScondition2} of Theorem \ref{thm:LS}, hence (ii) and (iii) follow. 

    (ii) or (iii) $\Rightarrow$ (i): Using the tensor additivity and the DPI property of the $D_{\alpha,z}$, large-sample majorization as in (ii) implies that for all $\varepsilon>0$ and all $\alpha\in(0,1)$, $z>\max\{\alpha,1-\alpha\}$ 
    \begin{equation}\label{eq:approximateinequality}
        D_{\alpha,z}(\rho\|\sigma)\geq D_{\alpha,z}(\rho'_\varepsilon\|\sigma') 
    \end{equation}
    for a state $\rho'_\varepsilon$ satisfying $F(\rho'_\varepsilon,\rho')\geq1-\varepsilon$. Similarly, the same is implied by catalytic majorization as in (iii). 

    We may assume that $\rho'_\varepsilon$ resides in a Hilbert space $\Hil'_\varepsilon$ possibly larger than $\Hil'$ where $\rho'$ and $\sigma'$ operate. Let $P_\varepsilon$ be the projection of $\Hil'_\varepsilon$ onto $\Hil'$ and $P_\varepsilon^\perp$ be the projection onto the orthogonal complement of $\Hil'$ within $\Hil'_\varepsilon$. Set up the channel $\mc D_\varepsilon$, $\mc D_\varepsilon(\tau)=P_\varepsilon\tau P_\varepsilon+P_\varepsilon^\perp\tau P_\varepsilon^\perp$. Denoting $P_\varepsilon\rho'_\varepsilon P_\varepsilon=:{\rho'_\varepsilon}^0$ and $P_\varepsilon^\perp\rho'_\varepsilon P_\varepsilon^\perp=:{\rho'_\varepsilon}^1$, we have $\mc D_\varepsilon(\rho'_\varepsilon)={\rho'_\varepsilon}^0\oplus{\rho'_\varepsilon}^1$ and $\mc D_\varepsilon(\sigma')=\sigma'$ when we view $\sigma'$ as a state on $\Hil'_\varepsilon$.
    Note that $\lim_{\varepsilon\rightarrow0}({\rho'_\varepsilon}^0,\sigma')=(\rho',\sigma')$. 
    
    Next, for $\alpha\in(0,1)$ and $z>\max\{\alpha,1-\alpha\}$, consider the expression 
    \begin{equation}
        \Psi_{\alpha,z}(\tau,\omega):=\tr\left[\left(\omega^\frac{1-\alpha}{2z}\tau^\frac{\alpha}{z}\omega^\frac{1-\alpha}{2z}\right)^z\right] 
    \end{equation}
    inside the $\log$ of $D_{\alpha,z}$ as in \eqref{eq:alphazdivergence}. One checks that this is additive under $\oplus$, i.e. $\Psi_{\alpha,z}(\rho_1\oplus\rho_2,\sigma_1\oplus\sigma_2)=\Psi_{\alpha,z}(\rho_1,\sigma_1)+\Psi_{\alpha,z}(\rho_2,\sigma_2)$, and is non-decreasing when applying a quantum channel by the DPI of $D_{\alpha,z}$. Then, these properties and \eqref{eq:approximateinequality} imply 
    \begin{equation}
        \Psi_{\alpha,z}(\rho,\sigma)\leq\Psi_{\alpha,z}(\rho'_\varepsilon,\sigma')\leq\Psi_{\alpha,z}({\rho'_\varepsilon}^0,\sigma')+\underbrace{\Psi_{\alpha,z}({\rho'_\varepsilon}^1,0)}_{=0}. 
    \end{equation}
    Since the $\Psi_{\alpha,z}$ are continuous on pairs of fixed dimension, and $\lim_{\varepsilon\rightarrow0}({\rho'_\varepsilon}^0,\sigma')=(\rho',\sigma')$, we conclude that the inequalities in part (i) hold. 
\end{proof}


\subsection{Minimality of the Family of Relative Entropies}\label{sec:minimal}

Here we prove Theorem \ref{thm:minimal}, in much the same way as Theorem \ref{Mthm:minimal} on the minimal-restrictions matrix majorization semiring, using an idea from the proof of Corollary 2.7 in \cite{strassen1988}. 
\begin{proof}[Proof of Theorem \ref{thm:minimal}]
    Consider the set $\bar R$, containing the parameter range $R$, defined by 
    \begin{equation}\label{eq:Rbar}
        \bar R:=\big\{\,(\alpha,z)\,|\,\alpha\in[0,1],\ z\geq\max\{\alpha,1-\alpha\}\,\big\}\cup\big\{\,(\alpha,\infty)\,|\,\alpha\in[0,1]\,\big\}, 
    \end{equation}
    and define the injective map $\varphi:\bar R\rightarrow\R^2$ by 
    \begin{equation}
        \varphi(\alpha,z)=\left\{\begin{array}{ll}\big(\alpha,\frac{z}{z+1}\big)&\text{if }z\neq\infty\\
        (\alpha,1)&\text{if }z=\infty\end{array}\right..
    \end{equation}
    We define a topology on $\bar R$ by requiring that $U$ is open in $\bar R$ if and only if $U=\varphi^{-1}(V)$ for some open $V$ in $\R^2$ with the Euclidean topology. The subspace topology that $R$ inherits from $\bar R$ coincides with the Euclidean topology. One checks that the image $\varphi(\bar R)$ is a compact subset of $\R^2$, hence $\bar R$ is compact and Hausdorff. 

    For any $(A,B)\in S_{\rm m.r.}$, the evaluation function $f_{(A,B)}:\bar R\rightarrow\R$ is defined for $(\alpha,z)\in\bar R$ when $z\neq\infty$ by 
    \begin{equation}
        f_{(A,B)}(\alpha,z):=\Phi_{\alpha,z}(A,B)=\sum_{i=1}^n(p_i)^\alpha(q_i)^{1-\alpha}|\braket{\alpha_i}{\beta_i}|^{2z}
    \end{equation}
    and when $z=\infty$ by 
    \begin{equation}
        f_{(A,B)}(\alpha,\infty):=\lim_{z\rightarrow\infty}\Phi_{\alpha,z}(A,B)=\sum_{i\in[n]\,:\,|\braket{\alpha_i}{\beta_i}|=1}(p_i)^\alpha(q_i)^{1-\alpha}, 
    \end{equation} 
    where we used that all $|\braket{\alpha_i}{\beta_i}|\leq1$ to compute the limit. Observe that $f_{(A,B)}\circ\varphi^{-1}:\varphi(\bar R)\rightarrow\R$ is continuous since it is a composition of continuous functions on $\varphi(\bar R)\setminus([0,1]\times\{1\})$, and equals the limiting value on $[0,1]\times\{1\}$. Hence, $f_{(A,B)}$ is continuous. 

    Next, let $\mc A$ be the $\R$-algebra generated by multiplication of the functions $f_{(A,B)}$, $\sloppy{(A,B)\in S_{\rm m.r.}}$, by real numbers, and finite repetitions of point-wise addition and multiplication. Then $\mc A$ is a subalgebra of $C(\bar R,\R)$, the set of all continuous real-valued functions on $\bar R$. 
    
    Observe that $f_{(1,1)}(\alpha,z)=1$ for all $(\alpha,z)\in\bar R$, hence $\mc A$ contains a non-zero constant function. Clearly, for every two distinct $(\alpha,z),(\alpha',z')\in\bar R$ there exists a pair $\sloppy{(A,B)\in S_{\rm m.r.}}$ such that $f_{(A,B)}(\alpha,z)\neq f_{(A,B)}(\alpha',z')$, hence $\mc A$ separates points. Therefore, by the Stone-Weierstrass theorem, $\mc A$ is dense in $C(\bar R,\R)$. 

    The non-empty open subset $O$ in the statement of the theorem contains an open ball, and without loss of generality we assume $O$ to be equal to an open ball. Clearly, there exists a continuous function $f:\bar R\rightarrow\R$ such that $f(\alpha,z)=1$ for all $(\alpha,z)$ outside $O$, and $f(\alpha_0,z_0)=-1$ for some $(\alpha_0,z_0)$ in $O$. Since $\mc A$ is dense in $C(\bar R,\R)$, $f$ can be approximated by functions in $\mc A$ with arbitrary precision, from which it follows that there exist $r_1,\ldots,r_t\in\R$ and $(A_1,B_1),\ldots,(A_t,B_t)\in S_{\rm m.r.}$ such that 
    \begin{equation}\label{eq:SWapproximation}
        r_1\Phi_{\alpha,z}(A_1,B_1)+\ldots+r_t\Phi_{\alpha,z}(A_t,B_t)\left\{\begin{array}{ll}>0&\text{if }\ (\alpha,z)\in R\setminus O\\<0&\text{if }\ (\alpha,z)=(\alpha_0,z_0)\end{array}\right..
    \end{equation}
    Here we have used the multiplicativity of the homomorphisms, which implies that $\mc A$ is the span of the set of all $f_{(A,B)}$, $(A,B)\in S_{\rm m.r.}$, with real coefficients. 
    Define the pairs 
    \begin{align}
        &(A',B'):=\left(\bigoplus_{i\,:\,r_i>0}r_iA_i,\bigoplus_{i\,:\,r_i>0}r_iB_i\right),\\
        &(A,B):=\left(\bigoplus_{i\,:\,r_i<0}-r_iA_i,\bigoplus_{i\,:\,r_i<0}-r_iB_i\right). 
    \end{align}
    Then, by rearranging \eqref{eq:SWapproximation}, if $(\alpha,z)\in R\setminus O$, 
    \begin{equation}\label{eq:counterexample1}
        \Phi_{\alpha,z}(A',B')>\Phi_{\alpha,z}(A,B),  
    \end{equation}
    and 
    \begin{equation}\label{eq:counterexample2}
        \Phi_{\alpha_0,z_0}(A',B')<\Phi_{\alpha_0,z_0}(A,B). 
    \end{equation}
    Note that it follows in particular that both pairs $(A,B),(A',B')$ are non-zero. 
    
    If $(A,B)$ is not non-parallel, i.e. there are $i$ such that $|\braket{\alpha_i}{\beta_i}|=1$, then for such $i$ we may rotate $\ket{\alpha_i},\ket{\beta_i}$ to ensure that $|\braket{\alpha_i}{\beta_i}|<1$. By doing so, $\Phi_{\alpha,z}(A,B)$ may decrease, hence \eqref{eq:counterexample1} remains true. By ensuring that the rotations are sufficiently small, \eqref{eq:counterexample2} also remains true. We conclude that we may assume that $(A,B)$ is non-parallel. 

    If $(A',B')$ is a commuting pair, i.e. $|\braket{\alpha_i}{\beta_i}|=1$ for all $i$, then $\Phi_{\alpha_0,z}(A',B')$ does not depend on $z$. However, $\Phi_{\alpha_0,z}(A,B)$ is strictly decreasing in $z$. We may assume that $O$ is a sufficiently small open ball such that there exists $z_1\in\R$ with $(\alpha_0,z_1)\in R\setminus O$ and $z_1<z_0$. There also exists $z_2\in\R$ with $(\alpha_0,z_2)\in R\setminus O$ and $z_0<z_2$. Then \eqref{eq:counterexample1} implies that both $\Phi_{\alpha_0,z_1}(A,B)$ and $\Phi_{\alpha_0,z_2}(A,B)$ are strictly less than the constant $\Phi_{\alpha_0,z_0}(A',B')$. Since $\Phi_{\alpha_0,z_0}(A,B)>\Phi_{\alpha_0,z_0}(A',B')$, this contradicts the fact that $\Phi_{\alpha_0,z}(A,B)$ is strictly decreasing in $z$. In conclusion, $(A',B')$ is not a commuting pair. 
    
    We add an orthogonal component to both pairs as follows: 
    \begin{align}
        &(\tilde{A},\tilde{B}):=(A,B)\boxplus(a\ketbrar{0},b\ketbrar{1}),\\
        &(\tilde{A'},\tilde{B'}):=(A',B')\boxplus(a'\ketbrar{0},b'\ketbrar{1}), 
    \end{align}
    with $a,b>0$, $a',b'\geq0$ chosen such that 
    \begin{equation}
        T:=\tr{\tilde{A}}=\tr{\tilde{B}}=\tr{\tilde{A'}}=\tr{\tilde{B'}}. 
    \end{equation}
    The pairs $(\rho,\sigma):=(\tilde{A},\tilde{B})/T$ and $(\rho',\sigma'):=(\tilde{A'},\tilde{B'})/T$ are in $\mc F$, and the former is non-parallel and the latter is not commutative. Since $a,b>0$, $(\rho,\sigma)$ additionally satisfies \eqref{eq:pu2}. Note that 
    \begin{align}
        &\Phi_{\alpha,z}(\rho,\sigma)=T^{-1}\Phi_{\alpha,z}(\tilde{A},\tilde{B})=T^{-1}\Phi_{\alpha,z}(A,B),\\
        &\Phi_{\alpha,z}(\rho',\sigma')=T^{-1}\Phi_{\alpha,z}(\tilde{A'},\tilde{B'})=T^{-1}\Phi_{\alpha,z}(A',B')
    \end{align}    
    for all $(\alpha,z)\in R$. Hence, it follows from \eqref{eq:counterexample1}, \eqref{eq:counterexample2} that 
    \begin{equation}\label{eq:contradiction}
        \Phi_{\alpha,z}(\rho',\sigma')>\Phi_{\alpha,z}(\rho,\sigma) 
    \end{equation}
    for all $(\alpha,z)\in R\setminus O$ and the reverse inequality for $(\alpha,z)=(\alpha_0,z_0)$.
   
    Now, \eqref{eq:contradiction} for all $(\alpha,z)\in R\setminus O$ implies the conditions \eqref{eq:smallercondition}. However, (ii) and (iii) in Theorem \ref{thm:approximateLS} are both false: if either were true, then they would imply (i), which contradicts the fact that \eqref{eq:contradiction} with the reverse inequality is true for $(\alpha,z)=(\alpha_0,z_0)$. 
\end{proof}


\chapter{Outlook and Open Problems}\label{cha:open}

In this thesis, we study the convertibility of tuples of classical or quantum states by a single channel, in the setting of large-samples or when using catalyst states to make the conversion possible. Divergences, which are additive functions on the state-tuples that are monotone under the application of a channel, characterize large-sample and catalytic majorization: loosely speaking, if these divergences are always larger when evaluated on one tuple compared to another tuple, then the former tuple majorizes the latter in the large-sample and catalytic setting. Using the preordered semiring techniques developed in \cite{fritz2022}, we identified the precise family of these divergences in two different settings: one classical involving tuples of finite probability distributions that have different supports, and one quantum, involving pairs of cq-states with pure components. Additionally, these collections of divergences were shown to be minimal in some cases when it comes to characterizing large-sample and catalytic majorization. Our results also yielded optimal conversion rates between the tuples of states considered. Our results demonstrate that applying preordered semiring techniques to problems in (quantum) information theory can be very fruitful. We anticipate these techniques will yield more new and interesting results in the future. In the remainder of this section, we will discuss some open problems connected with the work in this thesis. 

For the minimal restrictions matrix majorization semiring and the minimal restrictions pure-component cq-state pairs semiring, we were able to show that the set of monotone homomorphisms characterizing large-sample and catalytic majorization was minimal: after removing a non-empty open subset from the set of all homomorphisms, the remaining set is too small to give sufficient conditions for majorization. See Theorems \ref{Mthm:minimal} and \ref{thm:minimal}. We suspect that a similar result is also true for the matrix majorization semirings of common support or of the dominating column types, using a similar proof technique involving the Stone-Weierstrass Theorem. However, in these semirings, the value of $\Phi(P)$ for a fixed matrix $P$ may diverge when the associated vector $\underline{\alpha}$ grows large. Instead of considering the set of homomorphisms and derivations, we could consider the set of all divergences associated with the semiring. This is known to be a compact set, hence an application of the Stone-Weierstrass Theorem would be possible. However, since in this case we would be dealing with divergences instead of homomorphisms, the proof strategy employed in Theorems \ref{Mthm:minimal} and \ref{thm:minimal} would have to be adapted. One possible solution could be to use the lattice version of Stone-Weierstrass instead, as in the proof of Lemma 7.14 in \cite{fritz2023}. It might be possible to adapt this proof to the general setting of preordered semirings considered in Chapter \ref{cha:VSS}. In order to accomplish this, one would have to find a way to deal with derivations, which were not considered in \cite{fritz2023}. We note that in all semirings discussed in this thesis, there were either no derivations, or the set of all derivations was nowhere dense inside the set of all divergences. This fact might aid in establishing a minimality result similar to Theorems \ref{Mthm:minimal} and \ref{thm:minimal} applicable to more general semirings. 

One issue with our results on sufficient conditions for the majorization of a matrix $P$ over another matrix $Q$ (both with normalized columns), is that in the large-sample or catalytic setting, and both in the exact and asymptotic regime, $P$ has to be a power universal of the preordered semiring that we are studying. In the case of exact large-sample or catalytic majorization (Theorems \ref{Mthm:majorization} and \ref{Mthm:majorizationdominatingcolumn}), the strict ordering of the values of the homomorphisms already imply that $P$ is power universal. However, in the case of asymptotic large-sample or catalytic majorization (Theorems \ref{Mthm:genasymptocatalytic} and \ref{Mthm:asymptocatalytic}), we have to require $P$ to be power universal. Being power universal is typically quite restrictive, as can be seen in Propositions \ref{Mlem:poweruniversalconditions} and \ref{Mpro:poweruniversaldcgeneral}. However, as shown in Section \ref{Msec:pusupports}, we can define a semiring depending on the precise relations the supports of the columns of $P$ have. If Conjecture \ref{con:puconditiongeneral} is correct, then this can be done in a way such that $P$ is power universal. Then, we could obtain results, similar to the other matrix majorization semirings studied, giving sufficient and (almost) necessary conditions for large-sample and catalytic majorization that apply to the chosen matrix $P$. 

A similar problem also appears in the case of pairs of pure-component cq-states: here, the input pair is required to be power universal for large-sample majorization in both the exact and asymptotic setting. For the semiring $S_{\rm m.r.}$, the conditions \eqref{eq:pu1} and \eqref{eq:pu2} are sufficient in order for a pair to be power universal. It seems to be possible to relax this condition, by removing the need for \eqref{eq:pu2} from the proof of Proposition \ref{prop:pumr} and/or a deeper study of the semiring $S_{\rm e.o.}$ of everywhere overlapping pairs. In Remark \ref{rmk:Seo} we found that the monotone homomorphisms associated to $S_{\rm e.o.}$ are those associated to $S_{\rm m.r.}$ as well as possibly additional ones for $\alpha=0,1$ and $z<1$, a tropical one associated with the limit $z\rightarrow-\infty$, and potentially also non-trivial derivations. Currently we are unsure whether these extra homomorphisms are monotone or not. However, the $\alpha$-$z$ relative entropies for $\alpha=0,1$, $z<1$ have been shown to violate DPI for general pairs of states. The counterexamples found in the literature do not involve pairs of states of the form \eqref{eq:pairform} that we study. Hence, the possibility remains that these homomorphisms are in fact monotone within our restricted setting. 

From a practical point of view, the notion of asymptotic convertibility of tuples of states (see Theorems \ref{Mthm:genasymptocatalytic}, \ref{Mthm:asymptocatalytic} and \ref{thm:approximateLS}), as opposed to exact majorization, is the most relevant case, both in the large-sample and catalytic setting. After all, it is often enough to know that large-sample or catalytic majorization only holds approximately with vanishing error. The mathematical methods applied in this thesis are geared towards identifying exact majorization in large samples or catalytically. The results in the asymptotic picture are then derived from these exact results. Potentially, this process might be introducing extra conditions which are not needed in the asymptotic picture, such as the requirement that the input tuple is power universal, as discussed above. Finding conditions directly for the desired asymptotic results without having to take a detour through the theory for exact majorization would obviously be of high interest. This would require developing a mathematical framework for preordered semirings that have a built-in topology from the start, and showing a result analogous to the Vergleichsstellensatz as in Theorem \ref{thm:Fritz2022} giving conditions for asymptotic large-sample and catalytic majorization. 

Obviously, generalizing our results on pairs of pure-component cq-states to the case of general pairs of quantum states would be of very high interest. This would involve identifying all divergences characterizing large-sample and catalytic majorization. Using the barycentric results from \cite{Haapasalo2025}, this would give a complete picture of all possible divergences for pairs of general quantum states. This has been an open problem that has proven to be very difficult to tackle. The methods we employ in identifying the divergences rely on reducing the problem to pairs (or tuples) of a smaller size. For the matrix majorization semirings, the problem reduces to studying monotones on single-row matrices, and in the setting of pure-component cq-state pairs, it reduces to studying monotones on pairs of pure states. For general pairs of quantum states, there is no obvious way to make this reduction. An intermediate step could be to study pairs of cq-states with components of dimension at most equal to some suitably small $d\geq2$. In such cases, identifying monotones reduces to studying pairs of dimension at most $d$. For qubit components, i.e. $d=2$, any pair of qubits is characterized, up to a change of basis, by just three real numbers. Hence a proof along the lines of that for Proposition \ref{prop:z}, where we used that pairs of pure states can be characterized by just their fidelity, might be tractable. 

Recall that Proposition \ref{propo:Uhlmann}, which gives necessary and sufficient conditions for one-shot majorization between two pairs of pure states, was used as an ingredient of our analysis of the semiring of pure-component cq-state pairs. There is a generalization of this proposition for one-shot majorization between tuples of $d\geq3$ pure states, also due to Uhlmann \cite{Uhlmann1985}. Namely, consider two tuples of $d$ pure states $A:=\big(\ketbrar{\alpha_1},\ldots,\ketbrar{\alpha_d}\big)$ and $B:=\big(\ketbrar{\beta_1},\ldots,\ketbrar{\beta_d}\big)$. The Gram matrix of $A$ is defined as $(G(A))_{i,j}:=\braket{\alpha_i}{\alpha_j}$, and similarly for $B$. Then, $A$ majorizes $B$ if and only if there exists a positive semi-definite matrix $M$ such that $G(B)\circ M=G(A)$, where $\circ$ is the Hadamard product (i.e. entry-wise multiplication). This result leads us to attempt to generalize the findings in Chapter \ref{cha:flat} on large-sample and catalytic majorization of pairs of pure-component cq-states as in \eqref{eq:pairform}, to tuples of $d\geq3$ states of this form. Namely, we can consider tuples $(\rho^{(1)},\ldots,\rho^{(d)})$, with $d\geq3$, where 
\begin{equation}\label{eq:dtuple}
\rho^{(k)}=\sum_{i=1}^n p_i^{(k)}\ketbrar{i}\otimes\big|\alpha_i^{(k)}\big\rangle\big\langle\alpha_i^{(k)}\big|
\end{equation}
are cq-states with pure components, for $k=1,\ldots,d$. One could try to use the results on matrix majorization involving $d$-tuples of finite probability distributions, in combination with Uhlmann's result on $d$-tuples of pure states stated above, to derive multivariate quantum relative entropies that give conditions for large-sample and catalytic majorization of $d$-tuples such as \eqref{eq:dtuple}. In this setting, the problem of identifying all monotone homomorphisms reduces to analyzing the values they take on tuples of $d$ pure states. In the $d=2$ case we have shown these to be $z$-th powers of the fidelity between the two pure states (see Proposition \ref{prop:z}). Our preliminary results suggest that in the general $d$ case these will be products of the fidelities between the states in each of the $d(d-1)/2$ pairs of pure states in the tuple, raised to a power that can be different for each pair.

\bibliographystyle{ultimate}
\bibliography{bibliography}

\end{document}